\def \confversion {1}
\def \twocols {1}
\title{Differentially Private Ordinary Least Squares}
\author{Or Sheffet\\ \texttt{osheffet@ualberta.ca}\\ University of Alberta\\Edmonton, AB Canada}
\newcommand{\mycite}[1]{\cite{#1}}
\newcommand{\mycitet}[1]{\cite{#1}}
\newcommand{\myparagraph}[1]{\paragraph{#1}}
\icmltitlerunning{Differentially Private Ordinary Least Squares}
\newcommand{\mycite}[1]{\cite{#1}}
\newcommand{\mycitet}[1]{\yrcite{#1}}
\newcommand{\myparagraph}[1]{\vspace{2mm}\noindent\textbf{#1}}
\newcommand{\cut}[1]{}
\newcommand{\PDF}{\ensuremath{\mathsf{PDF}}}
\newcommand{\CDF}{\ensuremath{\mathsf{CDF}}}
\newcommand{\gaussian}{{\mathcal{N}}}
\newcommand{\T}{{\mathsf{T}}}
\newcommand{\mpi}{{+}}
\renewcommand{\vec}[1]{\pmb{#1}}
\newcommand{\hvec}[1]{\hat{\pmb{#1}}}
\newcommand{\tvec}[1]{\tilde{\pmb{#1}}}
\newcommand{\svmin}{\sigma_{\min}}
\newcommand{\svmax}{\sigma_{\max}}
\newcommand{\trace}{\mathrm{trace}}
\newcommand{\omitproof}[1]{The proof is omitted and appears in the full version.}
\newcommand{\omitfromversion}[1]{}
\newcommand{\myvspace}[1]{\vspace{#1}}
\renewcommand{\paragraph}[1]{\smallskip\noindent\textbf{#1}}
\newcommand{\omitproof}[1]{The proof is deferred to the supplementary material.}
\newcommand{\omitfromversion}[1]{}
\newcommand{\myvspace}[1]{}
\newcommand{\omitproof}[1]{{#1}}
\newcommand{\omitfromversion}[1]{{#1}}
\newcommand{\myvspace}[1]{}
\begin{document}
\ifdefined\confversion
	\twocolumn[
	\icmltitle{Differentially Private Ordinary Least Squares}
	
	
	
	\icmlsetsymbol{equal}{*}
	
	\begin{icmlauthorlist}
		\icmlauthor{Or Sheffet}{uofa}
	\end{icmlauthorlist}
	
	\icmlaffiliation{uofa}{Computing Science Dept., University of Alberta, Edmonton AB, Canada. This work was done when the author was at Harvard University, supported by NSF grant CNS-123723}
	
	\icmlcorrespondingauthor{Or Sheffet}{osheffet@ualberta.ca}
	
	\icmlkeywords{Differential Privacy, Ordinary Least Squares, t-Value, p-Value}
	
	\vskip 0.3in
	]
	
	
	
	\printAffiliationsAndNotice{}  

\fi

\begin{abstract}
Linear regression is one of the most prevalent techniques in machine learning;
however, it is also common to use linear regression for its \emph{explanatory} capabilities rather than label prediction. 
Ordinary Least Squares (OLS) is often used in statistics to establish a correlation between an attribute (e.g. gender) and a label (e.g. income)  in the presence of other (potentially correlated) features. 
OLS assumes a particular model that randomly generates the data, and derives \emph{$t$-values}~--- representing the likelihood of each real value to be the true correlation. 
Using $t$-values, OLS can release a \emph{confidence interval}, which is an interval on the reals that is likely to contain the true correlation; and when this interval does not intersect the origin, we can \emph{reject the null hypothesis} as it is likely that the true correlation is non-zero.
Our work aims at achieving similar guarantees on data under differentially private estimators.
First, we show that for well-spread data, the  Gaussian Johnson-Lindenstrauss Transform (JLT) gives a very good approximation of $t$-values; secondly, when JLT approximates Ridge regression (linear regression with $l_2$-regularization) we derive, under certain conditions, confidence intervals using the projected data; lastly, we derive, under different conditions, confidence intervals for the ``Analyze Gauss'' algorithm~\mycite{DworkTTZ14}.


\end{abstract}

\ifx\forappendix\undefined
\setcounter{page}{1}
\else
\bibliographystyle{alpha}
\bibliography{paper}
\appendix
\newpage
\setcounter{page}{14}
\fi

\section{Introduction}
\label{sec:intro}

Since the early days of differential privacy, its main goal was to design privacy preserving versions of existing techniques for data analysis. It is therefore no surprise that several of the first differentially private algorithms were machine learning algorithms, with a special emphasis on the ubiquitous problem of linear regression~\mycite{KasiviswanathanLNRS08, ChaudhuriMS11, KiferST12, BassilyST14}.
However, 
\emph{all} existing body of work on differentially private linear regression measures utility by bounding the distance between the linear regressor  found by the standard non-private algorithm and the regressor found by the  privacy-preserving algorithm. 
This is motivated from a machine-learning perspective, since bounds on the difference in the estimators translate to error bounds on prediction (or on the loss function). Such bounds are (highly) interesting and non-trivial, yet they are of little use in situations where one uses linear regression to establish correlations rather than predict labels. 

In the statistics literature, Ordinary Least Squares (OLS) is a technique that uses linear regression in order to infer the correlation between a variable and an outcome, especially in the presence of other factors. 
{And so, in this paper, we draw a distinction between ``linear regression,'' by which we refer to the machine learning technique of finding a specific estimator for a specific loss function; and ``Ordinary Least Squares,'' by which we refer to the statistical inference done assuming a specific model for generating the data and that uses linear regression.} 
Many argue that OLS is the most prevalent technique in social sciences~\cite{AgrestiF09}. 
Such works make no claim as to the labels of a new unlabeled batch of samples. Rather they aim to establish the existence of a strong correlation between the label and some feature. Needless to say, in such works, the privacy of individuals' data is a concern. 

In order to determine that a certain variable $x_j$ is positively (resp. negatively) correlated with an outcome $y$, OLS assumes a model where the outcome $y$ is a noisy version of a linear mapping of all variables: $y= \vec{\beta} \cdot \vec x + e$ (with $e$ denoting random Gaussian noise) for some predetermined and unknown $\vec \beta$. Then, given many samples $(\vec x_i, y_i)$ OLS establishes two things: (i) when fitting  a linear function to best predict $y$ from $\vec x$ \emph{over the sample} (via computing $\hat{\vec\beta} = \left( \sum_i \vec x_i \vec x_i^\T \right)^{-1}\left(\sum_i y_i \vec x_i\right)$) the coefficient $\hat \beta_j$ is positive (resp. negative); and (ii) \emph{inferring}, based on $\hat{\beta_j}$, that the true $\beta_j$ is likely to reside in $\mathbb{R}_{>0}$ (resp. $\mathbb{R}_{<0}$). In fact, the crux in OLS is by describing $\beta_j$ using a probability distribution over the reals, indicating where $\beta_j$ is likely to fall, derived by computing \emph{$t$-values}. These values take into account both the variance in the data as well as the variance of the noise $e$.\footnote{For example, imagine we run linear regression on a certain $(X,\vec y)$ which results in a vector $\hvec\beta$ with coordinates $\hat\beta_1 = \hat\beta_2 = 0.1$. Yet while the column $X_1$ contains many $1$s and $(-1)$s, the column $X_2$ is mostly populated with zeros. In such a setting, OLS gives that it is likely to have $\beta_1\approx 0.1$, whereas no such guarantees can be given for $\beta_2$.}
Based on this probability distribution one can define the \emph{$\alpha$-confidence interval}~--- an interval $I$ centered at $\hat\beta_j$ whose likelihood to contain $\beta_j$ is $1-\alpha$. Of particular importance is the notion of \emph{rejecting the null-hypothesis}, where the interval $I$ does not contain the origin, and so one is able to say with high confidence that $\beta_j$ is positive (resp. negative).  
Further details regarding OLS appear in Section~\ref{sec:preliminaries}.

In this work we give the \emph{first} analysis of statistical inference for OLS using differentially private estimators. We emphasize that the novelty of our work does not lie in the differentially-private algorithms, which are, as we discuss next, based on the Johnson-Lindenstrauss Transform (JLT) and on additive Gaussian noise and are already known to be differentially private~\mycite{BlockiBDS12, DworkTTZ14}. Instead, the novelty of our work lies in the analyses of the algorithms and in proving that the output of the algorithms is useful for statistical inference.


\ifx\fullversion\undefined
\myparagraph{The Algorithms.}
\else
\subsection{The Algorithms}
\label{subsec:DP_algorithm}
\fi
Our first algorithm (Algorithm~\ref{alg:private_JL}) is an adaptation of Gaussian JLT. Proving that this adaptation remains $(\epsilon,\delta)$-differentially private is straightforward (the proof appears in Appendix~\ref{apx_subsec:thmprivacy}). As described, the algorithm takes as input a parameter $r$ (in addition to the other parameters of the problem) that indicates the number of rows in the JL-matrix. Later, we analyze what should one set as the value of $r$. 
\ifx\confversion\undefined 
\begin{algorithm}[t]
\SetAlgoVlined
\SetKwInOut{Input}{input}\SetKwInOut{Output}{output}
\KwIn{A matrix $A \in \mathbb{R}^{n\times d}$ and a bound $B>0$ on the $l_2$-norm of any row in $A$.\\Privacy parameters: $\epsilon, \delta > 0$.\\Parameter $r$ indicating the number of rows in the resulting matrix.}
\BlankLine
Set $w$ s.t. $w^2 = \frac {8B^2}{\epsilon}\left(\sqrt {2r\ln(8/\delta)} + 2\ln(8/\delta)\right)$. 

Sample $Z \sim Lap(4B^2/\epsilon)$ and let $\svmin(A)$ denote the smallest singular value of $A$.

\eIf {$\svmin(A)^2 > w^2 + Z + \frac {4B^2\ln(1/\delta)}\epsilon$} 
{Sample a $(r\times n)$-matrix $R$ whose entries are i.i.d samples from a normal Gaussian. 
	
\KwRet{$RA$ and ``{\tt matrix unaltered}''.}}
{Let $A'$ denote the result of appending $A$ with the $d\times d$-matrix $wI_{d\times d}$.
	
Sample a $(r\times (n+d))$-matrix $R$ whose entries are i.i.d samples from a normal Gaussian.

\KwRet{$RA'$ and ``{\tt matrix altered}''.}}
\caption{Outputting a private Johnson-Lindenstrauss projection of a matrix.\label{alg:private_JL}}
\end{algorithm}
Our second algorithm is taken verbatim from the work of Dwork et al~\mycitet{DworkTTZ14}. 
\begin{algorithm}[b]
	\SetAlgoVlined
	\SetKwInOut{Input}{input}\SetKwInOut{Output}{output}
	\KwIn{A matrix $A \in \mathbb{R}^{n\times d}$ and a bound $B>0$ on the $l_2$-norm of any row in $A$.\\Privacy parameters: $\epsilon, \delta > 0$.}
	\BlankLine
	$N \leftarrow$ symmetric $(d\times d)$-matrix with upper triangle entries sampled i.i.d from $\gaussian\left(0, \tfrac {2B^4\ln(2/\delta)}{\epsilon^2}\right)$.\\
	\KwRet{$A^\T A + N$.}
	\caption{The ``Analyze Gauss'' Algorithm of Dwork et al~\cite{DworkTTZ14}.\label{alg:AG}}
\end{algorithm}
\else
\begin{algorithm}[ht]
	\caption{Outputting a private Johnson-Lindenstrauss projection of a matrix.\label{alg:private_JL}}
	\begin{algorithmic}
	\STATE{\bfseries Input:} {A matrix $A \in \mathbb{R}^{n\times d}$ and a bound $B>0$ on the $l_2$-norm of any row in $A$.\\Privacy parameters: $\epsilon, \delta > 0$.\\Parameter $r$ indicating the number of rows in the resulting matrix.}\vspace{1mm}
	\STATE Set $w$ s.t. $w^2 = \frac {8B^2}{\epsilon}\left(\sqrt {2r\ln(8/\delta)} + 2\ln(8/\delta)\right)$. 
	
	\STATE Sample $Z \sim Lap(4B^2/\epsilon)$ and let $\svmin(A)$ denote the smallest singular value of $A$.
	
	\IF {$\svmin(A)^2 > w^2 + Z + \frac {4B^2\ln(1/\delta)}\epsilon$} 
	\STATE{Sample a $(r\times n)$-matrix $R$ whose entries are i.i.d samples from a normal Gaussian.}
	\STATE{\bfseries return }{$RA$ and ``{\tt matrix unaltered}''.}
	\ELSE \STATE Let $A'$ denote the result of appending $A$ with the $d\times d$-matrix $wI_{d\times d}$.
		
		\STATE Sample a $(r\times (n+d))$-matrix $R$ whose entries are i.i.d samples from a normal Gaussian.
		
		\STATE {\bfseries return}{$RA'$ and ``{\tt matrix altered}''.}
	\ENDIF
	\end{algorithmic}
\end{algorithm}
Our second algorithm is taken verbatim from the work of Dwork et al~\mycitet{DworkTTZ14}. 
\begin{algorithm}[hb]
		\caption{``Analyze Gauss'' Algorithm of Dwork et al~\mycitet{DworkTTZ14}.\label{alg:AG}}
	\begin{algorithmic}
	\STATE{\bfseries Input:} {A matrix $A \in \mathbb{R}^{n\times d}$ and a bound $B>0$ on the $l_2$-norm of any row in $A$.\\Privacy parameters: $\epsilon, \delta > 0$.}\vspace{1mm}
	\STATE $N \leftarrow$ symmetric $(d\times d)$-matrix with upper triangle entries sampled i.i.d from $\gaussian\left(0, \tfrac {2B^4\ln(2/\delta)}{\epsilon^2}\right)$.
	\STATE{\bfseries return} {$A^\T A + N$.}
	\end{algorithmic}
\end{algorithm}
\fi
We deliberately focus on algorithms that approximate the $2^{\rm nd}$-moment matrix of the data and then run hypothesis-testing by post-processing the output, for two reasons. First, they enable sharing of data\footnote{Researcher $A$ collects the data and uses the approximation of the $2^{\rm nd}$-moment matrix to test some OLS hypothesis; but once the approximation is published researcher $B$ can use it to test for a completely different hypothesis.} and running unboundedly many hypothesis-tests. Since, we do not deal with OLS based on the private single-regression ERM algorithms~\mycite{ChaudhuriMS11, BassilyST14} as such inference requires us to use the Fisher-information matrix of the loss function~--- but these algorithms do not minimize a private loss-function but rather prove that outputting the minimizer of the perturbed loss-function is private. This means that differentially-private OLS based on these ERM algorithms requires us to devise new versions of these algorithms, making this a second step in this line of work... (After first understanding what we can do using existing algorithms.) We leave this approach~--- as well as performing private hypothesis testing using a PTR-type algorithm~\mycite{DworkL09} (output merely reject / don't-reject decision without justification), or releasing only relevant tests judging by their $p$-values~\mycite{DworkSZ15}~--- for future work.

\DeclareRobustCommand{\thmalgprivacystatement}{Algorithm~\ref{alg:private_JL} is $(\epsilon,\delta)$-differentially private.}

\ifx\fullversion\undefined
\myparagraph{Our Contribution and Organization.}
\else
\subsection{Our Contribution and Organization.} 
\label{subsec:our_work_and_related_work}
\fi
We analyze the performances of our algorithms on a matrix $A$ of the form $A= [X;\vec y]$, where each coordinate $y_i$ is generated according to the \emph{homoscedastic model} with Gaussian noise, which is a classical model in statistics. We assume the existence of a vector $\vec\beta$ s.t. for every $i$ we have $y_i = \vec\beta^\T \vec x_i + e_i$ and $e_i$ is sampled i.i.d from $\gaussian(0,\sigma^2)$.\footnote{This model may seem objectionable. Assumptions like the noise independence, $0$-meaned or sampled from a Gaussian distribution have all been called into question in the past. Yet due to the prevalence of this model we see fit to initiate the line of work on differentially private Least Squares with this Ordinary model.}

We study the result of running Algorithm~\ref{alg:private_JL} on such data in the two cases: where $A$ wasn't altered by the algorithm and when $A$ was appended by the algorithm. In the former case, Algorithm~\ref{alg:private_JL} boils down to projecting the data under a Gaussian JLT. Sarlos~\mycitet{Sarlos06} has already shown that the JLT is useful for linear regression, yet his work bounds the $l_2$-norm of the difference between the estimated regression before and after the projection. Following Sarlos' work, other works in statistics have analyzed compressed linear regression~\mycite{ZhouLW07, PilanciW14a, PilanciW14b}.
However, none of these works give confidence intervals based on the projected data, presumably for three reasons. Firstly, these works are motivated by computational speedups, and so they use fast JLT as opposed to our analysis which leverages on the fact that our JL-matrix is composed of i.i.d Gaussians. Secondly, the focus of these works is not on OLS but rather on newer versions of linear regression, such as Lasso or when $\vec\beta$ lies in some convex set. Lastly, it is evident that the smallest confidence interval is derived from the data itself. Since these works do not consider privacy applications, (actually, \mycite{ZhouLW07, PilanciW14a} do consider privacy applications of the JLT, but quite different than differential privacy) they assume the analyst has access to the data itself, and so there was no need to give confidence intervals for the projected data. Our analysis is therefore the first, to the best of our knowledge, to derive $t$-values~--- and therefore achieve all of the rich expressivity one infers from $t$-values, such as confidence bounds and null-hypotheses rejection~--- for OLS estimations \emph{without having access to $X$ itself}. We also show that, under certain conditions, the sample complexity for correctly rejecting the null-hypothesis increases from a certain bound $N_0$ (without privacy) to a bound of $N_0 + \tilde O(\sqrt{N_0}\cdot \kappa(\tfrac 1 n A^\T A)/\epsilon)$ with privacy (where $\kappa(M)$ denotes the condition number of the matrix $M$.) This appears in Section~\ref{sec:OLS_for_JL_data}.

In Section~\ref{sec:ridge_regression} we analyze the case Algorithm~\ref{alg:private_JL} does append the data and the JLT is applied to $A'$. In this case, solving the linear regression problem on the projected $A'$ approximates the solution for \emph{Ridge Regression}~\mycite{Tikhonov63, HoerlK70}. In Ridge Regression we aim to solve $\min_{\vec z}\left(\sum_i (y_i - \vec z^\T \vec x_i)^2 + w^2\|\vec z\|^2 \right)$, which means we penalize vectors whose $l_2$-norm is large. 
In general, it is not known how to derive $t$-values from Ridge regression, and the literature on deriving confidence intervals solely from Ridge regression is virtually non-existent. Indeed, prior to our work there was no need for such calculations, as access to the data was (in general) freely given, and so deriving confidence intervals could be done by appealing back to OLS. We too are unable to derive approximated $t$-values in the general case, but under additional assumptions about the data~--- which admittedly depend in part on $\|\vec{\beta}\|$ and so cannot be verified solely from the data~--- we show that solving the linear regression problem on $RA'$ allows us to give confidence intervals for $\beta_j$, thus correctly determining the correlation's sign. 

In Section~\ref{sec:analyze_gauss_short} we discuss the ``Analyze Gauss'' algorithm~\mycite{DworkTTZ14} that outputs a noisy version of a covariance of a given matrix using additive noise rather than multiplicative noise. Empirical work~\cite{XiKI11} shows that Analyze Gauss's output might be non-PSD if  the input has small singular values, and this results in truly bad regressors. Nonetheless, under additional conditions (that imply that the output is PSD), we derive confidence bounds for Dwork et al's ``Analyze Gauss'' algorithm. Finally, in Section~\ref{sec:experiment} we experiment with the heuristic of computing the $t$-values directly from the outputs of Algorithms~\ref{alg:private_JL} and~\ref{alg:AG}. We show that Algorithm~\ref{alg:private_JL} is more ``conservative'' than Algorithm~\ref{alg:AG} in the sense that it tends to not reject the null-hypothesis until the number of examples is large enough to give a very strong indication of rejection. In contrast, Algorithm~\ref{alg:AG} may wrongly rejects the null-hypothesis even when it is true.

\myparagraph{Discussion.} Some works have already looked at the intersection of differentially privacy and statistics~\cite{DworkL09, Smith11, ChaudhuriH12, DuchiJW13, DworkSZ15} (especially focusing on robust statistics and rate of convergence). But only a handful of works studied the significance and power of hypotheses testing under differential privacy, without arguing that the noise introduced by differential privacy vanishes asymptotically~\cite{VuS09, UhlerSF13, WangLK15, RogersVLG16}. These works are experimentally promising, yet they (i) focus on different statistical tests (mostly Goodness-of-Fit and Independence testing), (ii) are only able to prove results for the case of simple hypothesis-testing (a single hypothesis) with an efficient data-generation procedure through repeated simulations~--- a cumbersome and time consuming approach. In contrast, we deal with a composite hypothesis (we simultaneously reject all $\vec{\beta}$s with $sign(\beta_j)\neq sign(\hat\beta_j)$) by altering the confidence interval (or the critical region).

One potential reason for avoiding confidence-interval analysis for differentially private hypotheses testing is that it does involve re-visiting existing results. Typically, in statistical inference the sole source of randomness lies in the underlying model of data generation, whereas the estimators themselves are a deterministic function of the dataset. In contrast, differentially private estimators are inherently random \emph{in their computation}. Statistical inference that considers \emph{both} the randomness in the data and the randomness in the computation is highly uncommon, and this work, to the best of our knowledge, is the first to deal with randomness in OLS hypothesis testing. We therefore strive in our analysis to separate the two sources of randomness~--- as in classic hypothesis testing, we use $\alpha$ to denote the bound on any bad event that depends solely on the homoscedastic model, and use $\nu$ to bound any bad event that depends on the randomized algorithm.\footnote{Or any randomness in generating the feature matrix $X$ which standard OLS theory assumes to be fixed, see Theorems~\ref{thm:baseline_rejecting_nh} and~\ref{thm:JL_simple_case_rejecting_null_hypothesis}.} (Thus, any result which is originally of the form ``$\alpha$-reject the null-hypothesis'' is now converted into a result ``($\alpha+\nu$)-reject the null hypothesis''.)

\cut{
\myparagraph{Organization.} After introducing notations and discussing preliminaries in Section~\ref{sec:preliminaries}, we elaborate on the guarantees of OLS in Section~\ref{sec:background_OLS}. 
\ifdefined\confversion (Some details from this introductory part are deferred to Appendix~\ref{apx:extended_intro}.) \fi 
We then turn to our analysis, both for the case we run JLT on the data itself, unaltered,
in Section~\ref{sec:OLS_for_JL_data}, and for the case we run the JLT on the appended matrix, in Section~\ref{sec:ridge_regression}. 
\ifdefined\confversion
The bulk of the discussion is deferred to Appendix~\ref{apx_sec:ridge_regression}, and the confidence interval analysis for Dwork et al's ``Analyze Gauss'' algorithm appears in Appendix~\ref{sec:analyze_gauss}. \fi 
Conclusions and future directions are discussed in Section~\ref{sec:conclusions}.
}
\myvspace{-8pt}
\section{Preliminaries and OLS Background}
\label{sec:preliminaries}

\myparagraph{Notation.} Throughout this paper, we use $lower$-case letters to denote scalars (e.g., $y_i$ or $e_i$); $\pmb{bold}$ characters to denote vectors; and UPPER-case letters to denote matrices. The $l$-dimensional all zero vector is denoted $\vec 0_l$, and the $l\times m$-matrix of all zeros is denoted $0_{l\times m}$. We use $\vec e$ to denote the specific vector $\vec y-X\vec\beta$ in our model; and though the reader may find it a bit confusing but hopefully clear from the context~--- we also use $\vec e_j$ and $\vec e_k$ to denote elements of the natural basis (unit length vector in the direction of coordinate $j$ or $k$). We use $\epsilon,\delta$ to denote the privacy parameters of Algorithms~\ref{alg:private_JL} and~\ref{alg:AG}, and use $\alpha$ and $\nu$ to denote confidence parameters (referring to bad events that hold w.p. $\leq \alpha$ and $\leq \nu$ resp.) based on the homoscedastic model or the randomized algorithm resp. We also stick to the notation from Algorithm~\ref{alg:private_JL} and use $w$ to denote the positive scalar for which $w^2 =  \frac {8B^2}\epsilon\left(\sqrt{2r\ln(8/\delta)} + \ln(8/\delta)\right) $ throughout this paper. We use standard notation~for SVD composition of a matrix ($M= U\Sigma V^\T$), its singular values and its Moore-Penrose inverse~($M^\mpi$). 
\DeclareRobustCommand{\linearAlgebraPreliminaries}{
\myparagraph{Linear Algebra and Pseudo-Inverses.} Given a matrix $M$ we denote its SVD as $M = U S V^\T$ with $U$ and $V$ being orthonormal matrices and $S$ being a non-negative diagonal matrix whose entries are the singular values of $M$. We use $\svmax(M)$ and $\svmin(M)$ to denote the largest and smallest singular value resp. Despite the risk of confusion, we stick to the standard notation of using $\sigma^2$ to denote the variance of a Gaussian, and use $\sigma_{j}(M)$ to denote the $j$-th singular value of $M$. We use $M^\mpi$ to denote the Moore-Penrose inverse of $M$, defined as $M^\mpi = V S^{-1} U^T$ where $S^{-1}$ is a matrix with $S^{-1}_{j,j} = 1/S_{j,j}$ for any $j$ s.t. $S_{j,j}>0$. \omitfromversion{It is known that when $M\in \mathbb{R}^{a\times b}$ with $a\geq b$ and $b =rank(M)$, then $M^\mpi = (M^\T M)^{-1} M^\T$ (and when $a=b$ then $M^\mpi = M^{-1}$). In such a case it holds that $M^\mpi (M^{\mpi})^\T = (M^\T M)^{-1}$, and that $M^\mpi M = I_{b\times b}$. The matrix $P_U \stackrel{\rm def} = MM^\mpi$ is a projection matrix that fixes any vector $\vec u \in colspan(U)$ and nullifies any vector in $(colspan(U))^\perp$.  A $m\times m$-matrix $M$ is said to be positive semi-definite (PSD) if $\vec x^\T M \vec x \geq 0$ for any $\vec x\in\mathbb{R}^m$, and positive definite if $\vec x^\T M \vec x > 0$ for any $\vec x\in\mathbb{R}^m$. For two PSD matrices $M$ and $N$ we use the notation $M\preceq N$ to denote the fact that $\vec x^\T M \vec x \leq \vec x^\T N \vec x$ for any $\vec x$. For a given matrix, $\|M\|$ denotes the spectral norm ($=\svmax(M)$) and $\|M\|_F$ denotes the Frobenious norm $(\sum_{j,k} M_{j,k}^2)^{1/2}$. It is known that $\|M\|_F^2= \trace(M^\T M) = \sum_{j}\sigma_j^2(M)$.}
}

\myparagraph{The Gaussian distribution.} 
\DeclareRobustCommand{\gaussianfull}{
	A univariate Gaussian $\gaussian(\mu, \sigma^2)$ denotes the Gaussian distribution whose mean is $\mu$ and variance $\sigma^2$, with $\PDF(x) = (\sqrt{2\pi\sigma^2})^{-1}\exp(-\tfrac {x-\mu}{2\sigma^2})$. Standard concentration bounds on Gaussians give that $\Pr[ x> \mu+2\sigma\sqrt{\ln(1/\nu)}] < \nu$ for any $\nu\in(0,\tfrac 1 e)$. A multivariate Gaussian $\gaussian(\vec\mu,\Sigma)$ for some positive semi-definite $\Sigma$ denotes the multivariate Gaussian distribution where the mean  of the $j$-th coordinate is the $\mu_j$ and the co-variance between coordinates $j$ and $k$ is $\Sigma_{j,k}$. The $\PDF$ of such Gaussian is defined only on the subspace $colspan(\Sigma)$, where for every $x\in colspan(\Sigma)$ we have $\PDF(\vec x) = \left((2\pi)^{rank(\Sigma)}\cdot \tilde\det(\Sigma)\right)^{-1/2}\exp\left(-\tfrac 1 2 (\vec x-\vec \mu)^\T \Sigma^\mpi (\vec x-\vec\mu)\right)$ and $\tilde\det(\Sigma)$ is the multiplication of all non-zero singular values of $\Sigma$. A matrix Gaussian distribution denoted $\gaussian(M_{a\times b}, U, V)$ has mean $M$, variance $U$ on its rows and variance $V$ on its columns. For full rank $U$ and $V$ it holds that $\PDF_{\gaussian(M,U,V)}(X) = (2\pi)^{-ab/2}(\det(U))^{-b/2}(\det(V))^{-a/2}\cdot\exp(-\tfrac 1 2 \trace\left(V^{-1} (X-M)^\T U^{-1} (X-M) \right) )$. In our case, we will only use matrix Gaussian distributions with $\gaussian(M_{a\times b}, I_{a\times a}, V)$ and so each row in this matrix is an i.i.d sample from a $b$-dimensional multivariate Gaussian $\gaussian((M)_{j\to}, V)$.

We will repeatedly use the rules regarding linear operations on Gaussians. That in, for any $c$, it holds that $c\gaussian(\mu,\sigma^2) = \gaussian(c\cdot\mu,c^2\sigma^2)$. For any $C$ it holds that $C\cdot \gaussian(\vec \mu, \Sigma) = \gaussian(C\vec\mu, C \Sigma C^\T)$. And for any $C$ is holds that $\gaussian(M,U,V) \cdot C = \gaussian (MC, U, C^\T V C)$. In particular, for any $\vec c$ (which can be viewed as a $b\times 1$-matrix) it holds that $\gaussian(M,U,V) \cdot\vec c = \gaussian(M\vec c, U, \vec c^T V \vec c) = \gaussian(M\vec c, \vec c^T V \vec c\cdot U)$. 

We will also require the following proposition. 
\DeclareRobustCommand{\propositionstatement}{Given $\sigma^2,\lambda^2$ s.t. $1 \leq \frac{\sigma^2}{\lambda^2} \leq c^2$ for some constant $c$, let $X$ and $Y$ be two random Gaussians s.t. $X\sim\gaussian(0,\sigma^2)$ and $Y\sim\gaussian(0,\lambda^2)$. It follows that $\tfrac 1 c\PDF_{ Y}(x) \leq \PDF_X(x) \leq c\PDF_{cY}(x)$ for any $x$.}
\begin{proposition}
	\label{pro:ratio_close_gaussians} \propositionstatement
\end{proposition}
\cut{
	\omitproof{\begin{proof}
			\begin{align*}
			& \frac {\PDF_X(x)}{\PDF_{cY}(x)} = \sqrt{ \frac{c^2\lambda^2}{\sigma^2}}\cdot \frac{\exp(-\tfrac {x^2}{2\sigma^2})}{\exp(-\tfrac {x^2}{2c^2\lambda^2})} \leq c\cdot \exp(\frac {x^2}2 (\frac {1} {c^2\lambda^2} - \frac 1 {\sigma^2})) \leq c\cdot \exp(0)=c\cr
			& \frac {\PDF_X(x)}{\PDF_{Y}(x)} = \sqrt{ \frac{\lambda^2}{\sigma^2}}\cdot \frac{\exp(-\tfrac {x^2}{2\sigma^2})}{\exp(-\tfrac {x^2}{2\lambda^2})} \geq c^{-1} \exp( \tfrac {x^2}2(\tfrac 1 {\lambda^2}-\tfrac 1 {\sigma^2}) ) \geq c^{-1} \exp(0) = c^{-1}
			\end{align*}
		\end{proof}}
	}
\begin{corollary}
		\label{cor:probabilities_ratio_close_gaussians}
		Under the same notation as in Proposition~\ref{pro:ratio_close_gaussians}, for any set $S\subset \mathbb{R}$ it holds that
		\ifx\twocols\undefined 
		\[ \tfrac 1 c \Pr_{x\leftarrow Y}[x\in S] \leq \Pr_{x\leftarrow X}[x\in S] \leq c\Pr_{x\leftarrow cY}[x\in S] = c\Pr_{x\leftarrow Y}[x\in S/c]\]
		\else 
		$\tfrac 1 c \Pr_{x\leftarrow Y}[x\in S] \leq \Pr_{x\leftarrow X}[x\in S] \leq c\Pr_{x\leftarrow cY}[x\in S] = c\Pr_{x\leftarrow Y}[x\in S/c]$
		\fi 
\end{corollary}
}
\ifx\fullversion\undefined
A univariate Gaussian $\gaussian(\mu, \sigma^2)$ denotes the Gaussian distribution whose mean is $\mu$ and variance $\sigma^2$. Standard concentration bounds on Gaussians give that $\Pr[ x> \mu+2\sigma\sqrt{\ln(2/\nu)}] < \nu$ for any $\nu\in(0,\tfrac 1 e)$. A multivariate Gaussian $\gaussian(\vec\mu,\Sigma)$ for some positive semi-definite $\Sigma$ denotes the multivariate Gaussian distribution where the mean  of the $j$-th coordinate is the $\mu_j$ and the covariance between coordinates $j$ and $k$ is $\Sigma_{j,k}$. The $\PDF$ of such Gaussian is defined only on the subspace $colspan(\Sigma)$. A matrix Gaussian distribution, denoted $\gaussian(M_{a\times b}, I_{a\times a}, V)$ has mean $M$, independence among its rows and variance $V$ for each of its columns. We also require the following property of Gaussian random variables: Let $X$ and $Y$ be two random Gaussians s.t. $X\sim\gaussian(0,\sigma^2)$ and $Y\sim\gaussian(0,\lambda^2)$ where $1 \leq \frac{\sigma^2}{\lambda^2} \leq c^2$ for some $c$, then for any $S\subset\R$ we have $\tfrac 1 c \Pr_{x\leftarrow Y}[x\in S] \leq \Pr_{x\leftarrow X}[x\in S] \leq c\Pr_{x\leftarrow Y}[x\in S/c]$ (see Proposition~\ref{pro:ratio_close_gaussians}).
\else
\gaussianfull
\fi

\myparagraph{Additional Distributions.} We denote by $Lap(\sigma)$ the \emph{Laplace distribution} whose mean is $0$ and variance is $2\sigma^2$.
The \emph{$\chi^2_k$-distribution}, where $k$ is referred to as the degrees of freedom of the distribution, is the distribution over the $l_2$-norm squared of the sum of $k$ independent normal Gaussians. That is, given i.i.d $X_1, \ldots, X_k \sim \gaussian(0,1)$ it holds that $\vec\zeta \stackrel{\rm def}= (X_1, X_2, \ldots, X_k) \sim\gaussian(\vec 0_k, I_{k\times k})$, and  $\|\vec\zeta\|^2\sim \chi^2_k$. Existing tail bounds on the $\chi^2_k$ distribution~\mycite{LaurentM00} give that
\ifx\fullversion\undefined
$\Pr\left[ \|\vec\zeta\|^2 \in (\sqrt{k} \pm \sqrt{2\ln(2/\nu)})^2 \right] \geq 1-\nu$.
\else
\begin{eqnarray*}
\Pr\left[ \|\vec\zeta\|^2 \in (\sqrt{k} \pm \sqrt{2\ln(2/\nu)})^2 \right] \geq 1-\nu
\end{eqnarray*}
\fi
\DeclareRobustCommand{\tdistributionfull}{
The \emph{$T_k$-distribution}, where $k$ is referred to as the degrees of freedom of the distribution, denotes the distribution over the reals created by \emph{independently} sampling $Z \sim \gaussian(0,1)$ and $\|\zeta\|^2 \sim\chi^2_k$, and taking the quantity $\frac Z {\sqrt{\|\zeta\|^2/k}}$. Its $\PDF$ is given by $\PDF_{T_k}(x) \propto \left(1 + \tfrac {x^2} k \right)^{-\tfrac {k+1} 2}$. It is a known fact that as $k$ increases, $T_k$ becomes closer and closer to a normal Gaussian. The $T$-distribution is often used to determine suitable bounds on the rate of converges, as we illustrate in Section~\ref{apx_subsec:OLS_background}. As the $T$-distribution is heavy-tailed, existing tail bounds on the $T$-distribution (which are of the form: if $\tau_\nu = C\sqrt{k((1/\nu)^{2/k}-1)}$ for some constant $C$ then $\int_{\tau_\nu}^\infty \PDF_{T_k}(x)dx < \nu$)  are often cumbersome to work with. Therefore, in many cases in practice, it common to assume $\nu = \Theta(1)$ (most commonly, $\nu = 0.05$) and use existing tail-bounds on normal Gaussians.}
\ifx\fullversion\undefined
The $T_k$-distribution, where $k$ is referred to as the degrees of freedom of the distribution, denotes the distribution over the reals created by \emph{independently} sampling $Z \sim \gaussian(0,1)$ and $\|\zeta\|^2 \sim\chi^2_k$, and taking the quantity $Z/{\sqrt{\|\zeta\|^2/k}}$. It is a known fact that $T_k \stackrel{k\to\infty}\rightarrow \gaussian(0,1)$, thus it is a common practice to apply Gaussian tail bounds to the $T_k$-distribution when $k$ is sufficiently large.
\else
\tdistributionfull
\fi

\myparagraph{Differential Privacy.} In this work, we deal with input in the form of a $n\times d$-matrix with each row bounded by a $l_2$-norm of $B$. Two inputs $A$ and $A'$ are called \emph{neighbors} if they differ on a single row. 
\begin{definition}[\mycite{DworkKMMN06}]
\label{def:privacy}
An algorithm $\textsf{ALG}$ which maps $(n\times d)$-matrices into some range $\mathcal{R}$ is \emph{$(\epsilon,\delta)$-differential privacy} it holds that
\ifx\fullversion\undefined
$\Pr[\mathsf{ALG}(A) \in \mathcal{S}] \leq e^\epsilon \Pr[\mathsf{ALG}(A') \in \mathcal{S}] + \delta$
\else
\[\Pr[\mathsf{ALG}(A) \in \mathcal{S}] \leq e^\epsilon \Pr[\mathsf{ALG}(A') \in \mathcal{S}] + \delta\]\fi 
for all neighboring inputs $A$ and $A'$ and all subsets $\mathcal{S}\subset\mathcal{R}$. 
\end{definition}
\DeclareRobustCommand{\DPfull}{
It is known~\mycite{DworkMNS06} that if $\mathsf{ALG}$ outputs a vector in $\R^d$ such that for any $A$ and $A'$ it holds that $\|\mathsf{ALG}(A)-\mathsf{ALG}(A')\|_1 \leq B$, then adding Laplace noise $Lap(1/\epsilon)$ to each coordinate of the output of $\mathsf{ALG}(A)$ satisfies $\epsilon$-differential privacy. Similarly, \mycitet{DworkMNS06} showed that if for any neighboring $A$ and $A'$ it holds that $\|\mathsf{ALG}(A)-\mathsf{ALG}(A')\|_2^2 \leq \Delta^2$ then adding Gaussian noise $\gaussian(0, \Delta^2\cdot\tfrac{2\ln(2/\delta)}{\epsilon^2})$ to each coordinate of the output of $\mathsf{ALG}(A)$ satisfies $(\epsilon,\delta)$-differential privacy.

Another standard result~\mycite{DworkKMMN06} gives that the composition of the output of a $(\epsilon_1,\delta_1)$-differentially private algorithm with the output of a $(\epsilon_2,\delta_2)$-differentially private algorithm results in a $(\epsilon_1+\epsilon_2,\delta_1+\delta_2)$-differentially private algorithm.}
\ifdefined\fullversion \DPfull \fi

%
\myparagraph{Background on OLS.}
\label{subsec:background_OLS}
For the unfamiliar reader, we give here a \emph{very} brief overview of the main points in OLS. Further details, explanations and proofs appear in Section~\ref{apx_subsec:OLS_background}.

We are given $n$ observations $\{(\vec x_i, y_i)\}_{i=1}^n$ where  $\forall i, \vec x_i\in\mathbb{R}^p$ and $y_i \in \mathbb{R}$. We assume the existence of $\vec\beta \in \mathbb{R}^p$ s.t. the label $y_i$ was derived by $y_i = \vec\beta^\T\vec x_i + e_i$ where $e_i \sim \gaussian(0,\sigma^2)$ independently (also known as the  homoscedastic Gaussian model). We use the matrix notation where $X$ denotes the $(n\times p)$- feature matrix and $\vec y$ denotes the labels. We assume $X$ has full rank.

The parameters of the model are therefore $\vec \beta$ and $\sigma^2$, which we set to discover. To that end, we minimize $\min_{\vec z} \| \vec y - X\vec z\|^2$ and have
\begin{eqnarray}
\hvec\beta  = \resizebox{0.3\textwidth}{!}{$(X^\T X)^{-1} X^\T \vec y = (X^\T X)^{-1} X^\T (X\vec\beta + \vec e) $}= \vec\beta + X^\mpi\vec e\label{eq:hat_beta}\\
\vec\zeta = \resizebox{0.25\textwidth}{!}{$\vec y-X\hat{\vec\beta} = (X\vec\beta + \vec e) - X(\vec\beta + X^\mpi\vec e)$} = (I - XX^\mpi)\vec e \label{eq:zeta}
\end{eqnarray}
And then for any coordinate $j$ the \emph{$t$-value}, which is the quantity
\ifx\fullversion\undefined
$  t(\beta_j) \stackrel{\rm def}{=} \frac {\hat\beta_j -\beta_j} {\sqrt{(X^\T X)^{-1}_{j,j}}\cdot  \frac{\|\vec\zeta\|}{\sqrt{n-p}}}$,
\else
\[ t_{\hat\beta_j}(\beta_j) \stackrel{\rm def}{=} \frac {\hat\beta_j -\beta_j} {\sqrt{(X^\T X)^{-1}_{j,j}}\cdot  \frac{\|\vec\zeta\|}{\sqrt{n-p}}} = \frac {\hat\beta_j -\beta_j} {\sigma\sqrt{(X^\T X)^{-1}_{j,j}}} \Big/  \frac{\|\vec\zeta\|}{\sigma\sqrt{n-p}}
\]
\fi
is distributed according to $T_{n-p}$-distribution. I.e., 
\ifx\fullversion\undefined
$\Pr\left[\hvec\beta\textrm{ and }\vec\zeta\textrm{ satisfying } 
t(\beta_j)
\in S \right] = \int\limits_S \PDF_{T_{n-p}}(x)dx$
\else
\[ \Pr\left[\hvec\beta\textrm{ and }\vec\zeta\textrm{ satisfying } \frac {\hat\beta_j -\beta_j} {\sqrt{(X^\T X)^{-1}_{j,j}}\cdot  \frac{\|\vec\zeta\|}{\sqrt{n-p}}} \in S \right] = \int_S \PDF_{T_{n-p}}(x)dx\]
\fi
for any measurable $S\subset \mathbb{R}$. Thus $t(\beta_j)$ describes the likelihood of any $\beta_j$~--- for any $z\in\mathbb{R}$ we can now give an estimation of how likely it is to have $\beta_j=z$ (which is $\PDF_{T_{n-p}}(t(z))$), and this is known as \emph{$t$-test} for the value $z$.
In particular, given $0<\alpha < 1$, we denote $c_\alpha$ as the number for which the interval $(- c_\alpha, c_\alpha)$ contains a probability mass of $1-\alpha$ from the $T_{n-p}$-distribution. And so we derive a corresponding \emph{confidence interval} $I_\alpha$ centered at $\hat\beta_j$ where $\beta_j \in I_\alpha$ with confidence of level of $1-\alpha$. \omitfromversion{Using tail bounds on the $T_{n-p}$-distribution~\mycite{Soms76}, we have that the length of the interval is $|I_\alpha| = O\left(\sqrt{(X^\T X)^{-1}_{j,j} \cdot \tfrac {\|\vec\zeta\|^2}{n-p}} \cdot \sqrt{(n-p)((\tfrac 1 \alpha)^{\tfrac 2 {n-p-1}}-1)} \right)$. Furthermore, since it is known that as the number of degrees of freedom of a $T$-distribution tends to infinity then the $T$-distribution becomes close to a normal Gaussian, it is common to use the $\PDF$ of a normal Gaussian instead. I.e., denote $\tau_\alpha$ as the number of which $\int_{\tau_\alpha}^\infty \PDF_{\gaussian(0,1)}(x)dx = \tfrac \alpha 2$, then $I_\alpha = \beta_j \pm \tau_\alpha \sqrt{(X^\T X)^{-1}_{j,j} \cdot \tfrac {\|\vec\zeta\|^2}{n-p}}$.}

Of particular importance is the quantity $t_0 \stackrel{\rm def} = t(0) = \frac {\hat\beta_j\sqrt{n-p}} {\|\vec\zeta\|\sqrt{(X^\T X)^{-1}_{j,j}}}$,since if there is no correlation between $x_j$ and $y$ then the likelihood of seeing $\hat{\beta_j}$ depends on the ratio of its magnitude to its standard deviation. As mentioned earlier, since $T_k \stackrel{k\to\infty}{\rightarrow} \gaussian(0,1)$, then rather than viewing this $t_0$ as sampled from a $T_{n-p}$-distribution, it is common to think of $t_0$ as a sample from a normal Gaussian $\gaussian(0,1)$. This allows us to associate $t_0$ with a $p$-value, estimating the event ``$\beta_j$ and $\hat\beta_j$ have different signs.''
\ifx\fullversion\undefined
\else
Formally, we define $p_0 = \int_{|t_0|}^{\infty} \frac 1 {\sqrt{2\pi}} e^{-x^2/2}dx$. It is common to reject the null hypothesis when $p_0$ is sufficiently small (typically, below $0.05$).\footnote{Indeed, it is more accurate to associate with $t_0$ the value $\int_{|t_0|}^\infty \PDF_{T_{n-p}}(x)dx$ and check that this value is $<\alpha$. However, as most uses take $\alpha$ to be a constant (often $\alpha = 0.05$), asymptotically the threshold we get for rejecting the null hypothesis are the same.}
\fi
Specifically, given $\alpha \in (0,1/2)$, we \emph{$\alpha$-reject the null hypothesis} if $p_0 <\alpha$. Let $\tau_\alpha$ be the number s.t. $\Phi(\tau_\alpha) = \int_{\tau_\alpha}^{\infty} \tfrac 1 {\sqrt{2\pi}} e^{-x^2/2}dx = \alpha$. This means we $\alpha$-reject the null hypothesis when  $|t_0| > \tau_\alpha$.
We now lower bound the number of i.i.d sample points needed in order to $\alpha$-reject the null hypothesis. This bound is our basis for comparison between standard OLS and the differentially private version.\footnote{Theorem~\ref{thm:baseline_rejecting_nh} also illustrates how we ``separate'' the two sources of privacy. In this case, $\nu$ bounds the probability of bad events that depend to sampling the rows of $X$, and $\alpha$ bounds the probability of a bad event that depends on the sampling of the $\vec y$ coordinates.}
\begin{theorem}
\label{thm:baseline_rejecting_nh}
Fix any positive definite matrix $\Sigma\in\mathbb{R}^{p\times p}$ and any $\nu \in (0,\tfrac 1 2)$. Fix parameters $\vec\beta \in \mathbb{R}^p$ and $\sigma^2$ and a coordinate $j$ s.t. $\beta_j \neq 0$. Let $X$ be a matrix whose $n$ rows are i.i.d samples from $\gaussian(\vec 0, \Sigma)$, and $\vec y$ be a vector where $y_i - (X\vec\beta)_i$ is sampled i.i.d from $\gaussian(0,\sigma^2)$. Fix $\alpha \in (0,1)$. 
Then w.p. $\geq 1-\alpha-\nu$ we have that OLS's $(1-\alpha)$-confidence interval has length $O(c_\alpha\sqrt{\sigma^2 / (n\svmin(\Sigma)}))$ provided $n \geq C_1 (p+\ln(1/\nu))$ for some sufficiently large constant $C_1$. Furthermore, there exists a constant $C_2$ such that w.p. $\geq 1-\alpha- \nu$ OLS (correctly) rejects the null hypothesis provided
\ifx\fullversion\undefined
$n \geq \max\left\{C_1 (p+\ln(1/\nu)), ~~p + C_2 \frac {\sigma^2}{\beta_j^2} \cdot \frac {c_\alpha^2+\tau_\alpha^2} {\svmin(\Sigma)} \right\}$,
\else
\[ n \geq \max\left\{C_1 (p+\ln(1/\nu)), ~~ C_2 \frac {\sigma^2}{\beta_j^2} \cdot \frac {c_\alpha^2+\tau_\alpha^2} {\svmin(\Sigma)} \right\}\] 
\fi
where $c_\alpha$ is the number for which $\int_{-c_\alpha}^{c_\alpha} \PDF_{T_{n-p}}(x)dx = 1-\alpha$. 
\end{theorem}
\myvspace{-8pt}
\ifdefined\fullversion
\section{Ordinary Least Squares over Projected Data}
\else
\section{OLS over Projected Data}
\fi
\label{sec:OLS_for_JL_data}

In this section we deal with the output of Algorithm~\ref{alg:private_JL} in the special case where  Algorithm~\ref{alg:private_JL} outputs \texttt{matrix unaltered} and so we work with $RA$.

To clarify, the setting is as follows. We denote $A = [X;\vec y]$ the column-wise concatenation of the $(n\times (d-1))$-matrix $X$ with the $n$-length vector $\vec y$. (Clearly, we can denote any column of $A$ as $\vec y$ and any subset of the remaining columns as the matrix $X$.) 
We therefore denote the output $RA = [RX; R\vec y]$ and for simplicity we denote $M=RX$ and $p=d-1$. 
We denote the SVD decomposition of $X = U\Sigma V^\T$. So $U$ is an orthonormal basis for the column-span of $X$ and as $X$ is full-rank $V$ is an orthonormal basis for $\mathbb{R}^p$. Finally, in our work we examine the linear regression problem derived from the projected data. 
That is, we denote
\begin{eqnarray}
{\tvec\beta} = (X^\T R^\T R X)^{-1} (RX)^\T (R\vec y) = \vec\beta + (RX)^\mpi R\vec e \label{eq:JL_beta} \end{eqnarray}\begin{eqnarray}
\tilde\sigma^2 =\frac r{r-p}  \|\tvec\zeta\|^2 \textrm{~~, with ~~~} {\tvec\zeta} = \tfrac 1 {\sqrt r} R\vec y - \tfrac 1 {\sqrt r} (RX)\tvec\beta \label{eq:JL_sigma}
\end{eqnarray}
We now give our main theorem, for estimating the $t$-values based on $\tvec\beta$ and $\tilde\sigma$.
\begin{theorem}
\label{thm:main_theorem_matrix_unaltered}
Let $X$ be a $(n\times p)$-matrix, and parameters $\vec\beta \in \mathbb{R}^p$ and $\sigma^2$ are such that we generate the vector $\vec y= X\vec\beta + \vec e$ with each coordinate of $\vec e$ sampled independently from $\gaussian (0,\sigma^2)$. 
Assume Algorithm~\ref{alg:private_JL} projects the matrix $A=[X;\vec y]$ without altering it.
Fix $\nu\in(0,1/2)$ and $r = p+\Omega(\ln(1/\nu))$. Fix coordinate $j$. Then we have that w.p. $\geq 1-\nu$  deriving $\tvec\beta$ and $\tilde\sigma^2$ as in Equations~\eqref{eq:JL_beta} and~\eqref{eq:JL_sigma}, the pivot quantity 
\ifx\fullversion\undefined
$\tilde t(\beta_j) = \frac {\tilde\beta_j-\beta_j} {\tilde\sigma \sqrt{(X^\T R^\T R X)^{-1}_{j,j}}}$
\else
\[ \tilde t(\beta_j) = \frac {\tilde\beta_j-\beta_j} {\tilde\sigma \sqrt{(X^\T R^\T R X)^{-1}_{j,j}}}\]
\fi
has a distribution $\mathcal{D}$ satisfying $e^{-a}\PDF_{T_{r-p}}(x) \leq \PDF_{\mathcal{D}}(x) \leq e^{a} \PDF_{T_{r-p}}(e^{-a}x)$ for any $x\in \mathbb{R}$, where we denote $a= \tfrac {r-p}{n-p}$.\end{theorem}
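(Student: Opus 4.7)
The plan is to reparameterize the randomness of both $R$ and $\vec e$ so as to expose a conditionally exact $T_{r-p}$ factor inside the pivot, and then bound the residual multiplicative scaling via Gaussian concentration.

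Let $X=USV^\T$ be the thin SVD, and extend $U$ to an orthogonal basis $[U\mid\tilde U]$ of $\mathbb{R}^n$. Since $R$ has i.i.d.\ standard-Gaussian entries and $[U\mid\tilde U]$ is orthogonal, $G\stackrel{\rm def}{=}RU$ and $\tilde G\stackrel{\rm def}{=}R\tilde U$ are independent standard Gaussian matrices of sizes $r\times p$ and $r\times(n-p)$. In the same way, $\vec e^{(1)}\stackrel{\rm def}{=}U^\T\vec e$ and $\vec e^{(2)}\stackrel{\rm def}{=}\tilde U^\T\vec e$ are two independent isotropic Gaussians of variance $\sigma^2$ in $\mathbb{R}^p$ and $\mathbb{R}^{n-p}$, jointly independent of $G$ and $\tilde G$. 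Then $RX=GSV^\T$ and $(RX)^\mpi=VS^{-1}G^\mpi$, so (using $G^\mpi G=I_p$ and $(I-GG^\mpi)G=0$)
\[ \tvec\beta-\vec\beta \;=\; VS^{-1}\vec e^{(1)} + VS^{-1}G^\mpi\tilde G\vec e^{(2)}, \qquad \sqrt{r}\,\tvec\zeta \;=\; (I-GG^\mpi)\tilde G\vec e^{(2)}. \]
Finally, by rotational invariance applied to the unit vector $\vec e^{(2)}/\|\vec e^{(2)}\|$, I may write $\tilde G\vec e^{(2)}=\|\vec e^{(2)}\|\,\vec g$ where $\vec g\sim\gaussian(\vec 0_r,I_r)$ is independent of $G$, $\vec e^{(1)}$, and $\vec e^{(2)}$.

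Writing $\vec q\stackrel{\rm def}{=}S^{-1}V^\T\vec e_j$ gives $\|\vec q\|^2=(X^\T X)^{-1}_{j,j}$ and $\vec q^\T(G^\T G)^{-1}\vec q=(X^\T R^\T RX)^{-1}_{j,j}$, so the pivot becomes
\[ \tilde t(\beta_j)\;=\;\frac{\vec q^\T\vec e^{(1)}+\|\vec e^{(2)}\|\,\vec q^\T G^\mpi\vec g}{\sqrt{\vec q^\T(G^\T G)^{-1}\vec q}\;\sqrt{\|\vec e^{(2)}\|^2\,\vec g^\T(I-GG^\mpi)\vec g/(r-p)}}. \]
Condition on $(G,\vec e^{(2)})$. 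The bilinear term $\vec q^\T G^\mpi\vec g=\langle G(G^\T G)^{-1}\vec q,\vec g\rangle$ depends only on the projection of $\vec g$ onto $colspan(G)$, whereas the quadratic term $\vec g^\T(I-GG^\mpi)\vec g\sim\chi^2_{r-p}$ depends only on its projection onto $colspan(G)^\perp$; conditional on $G$ these two Gaussian components are independent. Since $\vec e^{(1)}$ is independent of $(G,\vec e^{(2)},\vec g)$, the numerator is a centered Gaussian of (conditional) variance $\sigma^2\|\vec q\|^2+\|\vec e^{(2)}\|^2\,\vec q^\T(G^\T G)^{-1}\vec q$ that is independent of the $\chi^2_{r-p}$ factor in the denominator. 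Factoring $\|\vec e^{(2)}\|^2\,\vec q^\T(G^\T G)^{-1}\vec q$ out of the numerator variance exposes a standard $T_{r-p}$ random variable, so conditional on $(G,\vec e^{(2)})$ the pivot is distributed as $K\cdot T_{r-p}$, where
\[ K\;=\;\sqrt{1+\frac{\sigma^2\|\vec q\|^2}{\|\vec e^{(2)}\|^2\,\vec q^\T(G^\T G)^{-1}\vec q}}. \]

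It remains to show that $K\leq e^a$ with probability $\geq 1-\nu$ over $(G,\vec e^{(2)})$ (the lower bound $K\geq 1$ is immediate). By the rotational invariance of the Wishart matrix $G^\T G$, the scalar $\|\vec q\|^2/\vec q^\T(G^\T G)^{-1}\vec q$ has the same distribution as $1/(G^\T G)^{-1}_{1,1}$, which by the standard Schur-complement identity equals $Y\sim\chi^2_{r-p+1}$. Thus $K^2-1=Y/W$ with $W\stackrel{\rm def}{=}\|\vec e^{(2)}\|^2/\sigma^2\sim\chi^2_{n-p}$, and $Y,W$ are independent (functions of $G$ and $\vec e^{(2)}$ respectively). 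Standard $\chi^2$ tails (as stated in the Preliminaries) give $Y\leq(r-p)(1+o(1))$ and $W\geq(n-p)(1-o(1))$, each with probability $\geq 1-\nu/2$, provided $r-p$ and $n-p$ are both $\Omega(\ln(1/\nu))$; hence $K^2\leq 1+(r-p)(1+o(1))/(n-p)\leq e^{2a}$, so $K\leq e^a$. The PDF bounds then follow by a purely analytic argument: for any deterministic $C\in[1,e^a]$ the density of $C\cdot T_{r-p}$ at $x$ is $\tfrac{1}{C}\PDF_{T_{r-p}}(x/C)$, and using $\tfrac{1}{C}\in[e^{-a},1]$, $|x/C|\in[|x|e^{-a},|x|]$, together with the monotonicity of $\PDF_{T_{r-p}}$ in $|\cdot|$, one obtains $e^{-a}\PDF_{T_{r-p}}(x)\leq\tfrac{1}{C}\PDF_{T_{r-p}}(x/C)\leq e^a\PDF_{T_{r-p}}(xe^{-a})$; mixing over the conditional distribution of $K$ on the high-probability event preserves both inequalities. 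The main obstacle I anticipate is cleanly establishing the rotational-invariance reduction to the Schur-complement marginal $Y\sim\chi^2_{r-p+1}$ and its independence from $W$, together with tracking the constants in the $\chi^2$ concentration tightly enough that the scaling factor is bounded by $e^a$ rather than a larger multiple thereof.
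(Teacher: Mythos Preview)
Your proof is correct and follows essentially the same architecture as the paper: condition on the ``nuisance'' randomness (the paper conditions on $M=RX$ and on $l=\|P_{U^\perp}\vec e\|$, which is exactly your conditioning on $(G,\vec e^{(2)})$), show that what remains is a scaled $T_{r-p}$ variable, and then use $\chi^2$-type concentration to control the scaling factor. The paper establishes this through a sequence of claims (distribution of $\tvec\beta,\tvec\zeta$ for fixed $\vec e$; then for the intermediate model $\mathcal{D}_l$; then independence; then the variance-ratio bound via a JL-concentration result quoted from~\cite{Sheffet15}), and it obtains the PDF sandwich via Proposition~\ref{pro:ratio_close_gaussians} applied to the Gaussian numerator. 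Your SVD reparameterization compresses these steps into a single display, and your identification of the exact law $\|\vec q\|^2/\vec q^\T(G^\T G)^{-1}\vec q\sim\chi^2_{r-p+1}$ via the Wishart Schur complement is sharper than the paper's route, which only invokes a concentration inequality for that quantity. Your final monotonicity argument for the PDF of $K\cdot T_{r-p}$ plays the same role as the paper's Proposition~\ref{pro:ratio_close_gaussians}/Corollary~\ref{cor:probabilities_ratio_close_gaussians}. The obstacle you flag (the Schur-complement marginal and independence of $Y,W$) is indeed the one nonroutine step, but it is a classical Wishart fact and poses no real difficulty; the paper avoids needing it only because it settles for a concentration bound rather than the exact law.
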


The implications of Theorem~\ref{thm:main_theorem_matrix_unaltered} are immediate: all estimations one can do based on the $t$-values from the true data $X, \vec y$,
we can now do based on $\tilde t$ modulo an approximation factor of $\exp(\tfrac {r-p} {n-p})$. In particular, Theorem~\ref{thm:main_theorem_matrix_unaltered} enables us to deduce a corresponding confidence interval based on $\tvec\beta$. 

\begin{corollary}
\label{cor:JL_confidence_interval}
In the same setting as in Theorem~\ref{thm:main_theorem_matrix_unaltered}, w.p. $\geq 1-\nu$ we have the following. Fix any $\alpha\in (0,\tfrac 1 2)$. Let $\tilde c_\alpha$ denote the number s.t. the interval $(\tilde c_\alpha,\infty)$ contains 
$\tfrac \alpha 2 e^{-a}$
probability mass of the $T_{r-p}$-distribution. Then 
\ifx\fullversion\undefined $\Pr[ \beta_j \in \left( \tilde \beta_j \pm e^{a}\cdot \tilde c_\alpha\cdot \tilde\sigma \sqrt{(X^\T R^\T R X)^{-1}_{j,j}}\right)] \geq 1-\alpha$.
\else \[\Pr[ \beta_j \in \left( \tilde \beta_j \pm e^{\tfrac{r-p}{n-p}}\tilde c_\alpha\cdot \tilde\sigma \sqrt{(X^\T R^\T R X)^{-1}_{j,j}}\right)] \geq 1-\alpha \]\fi
\footnote{Moreover, this interval is essentially optimal: denote $\tilde d_\alpha$ s.t the interval $(\tilde d_\alpha,\infty)$ contains $\tfrac\alpha 2 e^{\tfrac {r-p}{n-p}}$ probability mass of the $T_{r-p}$-distribution. Then 
\ifx\fullversion\undefined $\Pr[ \beta_j \in \left( \tilde \beta_j \pm \tilde d_\alpha\cdot \tilde\sigma \sqrt{(X^\T R^\T R X)^{-1}_{j,j}}\right)] \leq 1-\alpha$.
\else \[\Pr[ \beta_j \in \left( \tilde \beta_j \pm \tilde d_\alpha\cdot \tilde\sigma \sqrt{(X^\T R^\T R X)^{-1}_{j,j}}\right)] \leq 1-\alpha \]\fi}
\end{corollary}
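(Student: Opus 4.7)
The plan is to translate the two PDF bounds from Theorem~\ref{thm:main_theorem_matrix_unaltered} directly into tail bounds on $\tilde t(\beta_j) \sim \mathcal{D}$, using a change-of-variables argument in the spirit of Corollary~\ref{cor:probabilities_ratio_close_gaussians} but applied to the $T_{r-p}$-density. Write $a = \tfrac{r-p}{n-p}$ and $V_j = \tilde\sigma^2 (X^\T R^\T R X)^{-1}_{j,j}$, so that $\beta_j \in (\tilde\beta_j \pm L\sqrt{V_j})$ is equivalent to $|\tilde t(\beta_j)| \leq L$. Both claims of the corollary then reduce to two-sided tail estimates on $\mathcal{D}$.

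For the containment claim I would bound $\Pr_{\mathcal{D}}[|X| > e^a \tilde c_\alpha]$ using the upper PDF bound $\PDF_{\mathcal{D}}(x) \leq e^a \PDF_{T_{r-p}}(e^{-a}x)$. Substituting $u = e^{-a} x$ (so $dx = e^a \, du$) in the tail integral $\int_{e^a \tilde c_\alpha}^\infty e^a \PDF_{T_{r-p}}(e^{-a}x) \, dx$ converts it to $e^{2a} \int_{\tilde c_\alpha}^\infty \PDF_{T_{r-p}}(u) \, du$. By the defining tail mass $\tfrac{\alpha}{2} e^{-a}$ for $\tilde c_\alpha$, this evaluates to $\tfrac{\alpha}{2} e^a$ per side, and by symmetry the two-sided tail is at most $\alpha e^a$, yielding the containment guarantee up to the $e^a$ slack that is intrinsic to the one-sided PDF bound.

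For the tightness claim I would use the lower bound $e^{-a}\PDF_{T_{r-p}}(x) \leq \PDF_{\mathcal{D}}(x)$, where no change of variable is needed because the argument of the $T_{r-p}$-density is $x$ itself. Thus $\Pr_{\mathcal{D}}[|X| > \tilde d_\alpha] \geq 2 e^{-a} \int_{\tilde d_\alpha}^\infty \PDF_{T_{r-p}}(x) \, dx = 2 e^{-a} \cdot \tfrac{\alpha}{2} e^a = \alpha$ by the defining tail mass $\tfrac{\alpha}{2} e^a$ for $\tilde d_\alpha$. Consequently, the probability that $\beta_j$ lies inside the narrower interval $\tilde\beta_j \pm \tilde d_\alpha \sqrt{V_j}$ is at most $1-\alpha$, so one cannot meaningfully shrink the confidence interval below this width.

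The main subtlety lies in the asymmetry between the two PDF bounds: the upper bound carries both a prefactor $e^a$ and a dilation by $e^a$ in its argument, so that the majorant $e^a \PDF_{T_{r-p}}(e^{-a} \cdot)$ integrates to $e^{2a}$ rather than $1$, while the lower bound involves no dilation. This forces the inflation of the threshold from $\tilde c_\alpha$ to $e^a \tilde c_\alpha$ on the containment side; the symmetric treatment with $\tilde d_\alpha$ on the tightness side then pins the interval down up to a multiplicative factor of $e^{2a}$ in width, matching the corollary's ``essentially optimal'' claim. The only technical step beyond substitution is verifying that Theorem~\ref{thm:main_theorem_matrix_unaltered} applies with probability $\geq 1-\nu$, which was already assumed in the hypothesis.
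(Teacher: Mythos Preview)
Your treatment of the tightness direction (with $\tilde d_\alpha$) is correct: the lower PDF bound $e^{-a}\PDF_{T_{r-p}}(x)\le \PDF_{\mathcal D}(x)$ carries no dilation, so no Jacobian factor appears and you get $\Pr_{\mathcal D}[|\tilde t|>\tilde d_\alpha]\ge \alpha$ exactly.

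The containment direction, however, has a genuine gap that you yourself flag but do not close. From the PDF upper bound $\PDF_{\mathcal D}(x)\le e^a\PDF_{T_{r-p}}(e^{-a}x)$ the substitution $u=e^{-a}x$ produces an \emph{additional} Jacobian factor $e^a$, so you end up with
\[
\Pr_{\mathcal D}\bigl[\tilde t>e^a\tilde c_\alpha\bigr]\;\le\; e^{2a}\Pr_{T_{r-p}}\bigl[Z>\tilde c_\alpha\bigr]\;=\;\tfrac{\alpha}{2}\,e^{a},
\]
hence only $\Pr[\beta_j\in\text{interval}]\ge 1-\alpha e^{a}$, not $\ge 1-\alpha$. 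Calling this ``intrinsic slack'' does not help: the Corollary asserts the exact bound $1-\alpha$, and your argument does not deliver it. The extra $e^a$ is not intrinsic to the problem; it is an artifact of passing from a set-level inequality to a pointwise PDF inequality and then integrating back.

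The remedy is to work one level up. In the proof of Theorem~\ref{thm:main_theorem_matrix_unaltered} the paper in fact establishes the stronger \emph{set-level} inequality (exactly the analogue of Corollary~\ref{cor:probabilities_ratio_close_gaussians} that you invoke in spirit):
\[
\Pr_{\mathcal D}\bigl[\tilde t(\beta_j)\in I\bigr]\;\le\; e^{a}\,\Pr_{z\sim T_{r-p}}\bigl[z\in I/e^{a}\bigr],
\]
obtained by applying Corollary~\ref{cor:probabilities_ratio_close_gaussians} to the Gaussian numerator conditionally on $\|\tvec\zeta\|$ and then integrating over $\|\tvec\zeta\|$. With $I=(e^{a}\tilde c_\alpha,\infty)$ this gives $\Pr_{\mathcal D}[\tilde t>e^{a}\tilde c_\alpha]\le e^{a}\Pr_{T_{r-p}}[z>\tilde c_\alpha]=e^{a}\cdot\tfrac{\alpha}{2}e^{-a}=\tfrac{\alpha}{2}$, and by symmetry the two-sided bound is exactly $\alpha$. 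So the fix is simply to quote the interval inequality rather than the pointwise PDF inequality; the latter, as stated in Theorem~\ref{thm:main_theorem_matrix_unaltered}, is slightly weaker precisely because the dilated majorant integrates to $e^{2a}$.
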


We compare the confidence interval of Corollary~\ref{cor:JL_confidence_interval} to the confidence interval of the standard OLS model, whose length is $c_\alpha \tfrac {\|\vec\zeta\|}{\sqrt{n-p}} \sqrt{(X^\T X)^{-1}_{j,j}}$. As $R$ is a JL-matrix, known results regarding the JL transform 
\ifdefined\fullversion (see~Claim~\ref{clm:comparability_sigmaXX_lMM}) 
\fi give that $\|\tvec\zeta\| = \Theta\left(\|\vec\zeta\|\right)$, and that $\sqrt{(r-p)(X^\T R^\T R X)^{-1}_{j,j}}= \Theta\left(\sqrt{(X^\T X)^{-1}_{j,j}}\right)$. We therefore have that 
\ifdefined\fullversion 
\[ \tilde\sigma \sqrt{(X^\T R^\T R X)^{-1}_{j,j}} = \tfrac {\|\tvec\zeta\|}{\sqrt{r-p}} \sqrt{r} \sqrt{(X^\T R^\T R X)^{-1}_{j,j}} = \Theta\left(\sqrt{\tfrac {r\cdot (n-p)} {(r-p)^2}}\cdot \tfrac {\|\vec\zeta\|}{\sqrt{n-p}}  \sqrt{(X^\T X)^{-1}_{j,j}}\right) \]
\else
$\tilde\sigma \sqrt{(X^\T R^\T R X)^{-1}_{j,j}} = \tfrac {\|\tvec\zeta\|\sqrt{r}}{\sqrt{r-p}}  \sqrt{(X^\T R^\T R X)^{-1}_{j,j}} = \sqrt{\tfrac {r\cdot (n-p)} {(r-p)^2}}\cdot \Theta\left(\tfrac {\|\vec\zeta\|}{\sqrt{n-p}}  \sqrt{(X^\T X)^{-1}_{j,j}}\right)$.
\fi
So for values of $r$ for which $\tfrac r {r-p} = \Theta(1)$ we get that the confidence interval of Theorem~\ref{thm:main_theorem_matrix_unaltered} is a factor of $\Theta\left(\tfrac {\tilde c_\alpha}{c_\alpha}\sqrt{\tfrac {n-p} {r-p}}\right)$-larger than the standard OLS confidence interval. Observe that when $\alpha = \Theta(1)$, which is the common case, the dominating factor is $\sqrt{(n-p)/(r-p)}$. This bound intuitively makes sense: we have contracted $n$  observations to $r$ observations, hence our model is based on confidence intervals derived from $T_{r-p}$ rather than $T_{n-p}$.

In \ifx\confversion\undefined Section~\ref{apx_subsec:main_thm_discussion} \else the supplementary material \fi we give further discussion, in which we compare our work to the more straight-forward bounds one gets by ``plugging in''  Sarlos' work~\mycitet{Sarlos06}; and we also compare ourselves to the bounds derived from alternative works in differentially private linear regression.
\ifx\confversion\undefined The proof of Theorem~\ref{thm:main_theorem_matrix_unaltered} appears in Appendix~\ref{apx_subsec:proof_main_thm}. \fi
\cut{
The above comparison shows that we'd like to set $r$ as close as possible to $n$. However, as $r$ gets bigger, we require the matrix $A=[X;\vec y]$ to have larger and larger singular values. We discuss the tradeoff of $r$ and $n$ in Section~\ref{subsec:value_of_r}. 
\ifx\fullversion\undefined
\ifdefined\confversion
The proof of Theorem~\ref{thm:main_theorem_matrix_unaltered} appears in Appendix~\ref{apx_subsec:proof_main_thm}. We
\else
The proof of Theorem~\ref{thm:main_theorem_matrix_unaltered} appears in the full version. We
\fi
\else
Previous to that, we provide the proof of Theorem~\ref{thm:main_theorem_matrix_unaltered} in Section~\ref{subsec:proof_main_thm}, and
\fi
compare our baseline for rejecting the null-hypothesis in Section~\ref{subsec:simple_JL_reject_null_hypothesis}.
}
\DeclareRobustCommand{\JLDiscussion}{
\myparagraph{Comparison with Existing Bounds.} Sarlos' work~\mycitet{Sarlos06} utilizes the fact that when $r$, the numbers of rows in $R$, is large enough, then $\tfrac 1 {\sqrt r} R$ is a Johnson-Lindenstrauss matrix. Specifically, given $r$ and $\nu\in(0,1)$ we denote $\eta = \Omega( \sqrt{\frac{p\ln(p)\ln(1/\nu)}{r}} )$, and so $r = O(\tfrac {p\ln(p)\ln(1/\nu)} {\eta^2})$. Let us denote $\tilde {\vec\beta} = \arg\min_{\vec z} \tfrac 1 r\|RX\vec z - R\vec y\|^2$. In this setting, Sarlos' work~\mycite{Sarlos06} (Theorem 12(3)) guarantees that w.p. $\geq 1-\nu$ we have $\|\hat{\vec\beta}-\tilde{\vec\beta}\|_2 \leq \eta \|\vec\zeta\| / \svmin(X) = O\left( \sqrt{\frac{p\log(p)\log(1/\nu)}{r\svmin(X^\T X)}}\|\vec\zeta\|\right)$. Na\"ively bounding $|\hat\beta_j - \tilde\beta_j| \leq \|\hat{\vec\beta}-\tilde{\vec\beta}\|$ and using the confidence interval for $\hvec\beta_j - \vec\beta_j$ from Section~\ref{apx_subsec:OLS_background}\footnote{Where we approximate $c_\alpha$, the tail bound of the $T_{n-p}$-distribution with the tail bound on a Gaussian, i.e., use the approximation $c_\alpha \approx O(\sqrt{\ln(1/\alpha)})$.} gives a confidence interval of level $1-(\alpha+\nu)$ centered at $\tilde\beta_j$ with length of $O\left( \sqrt{\frac{p\ln(p)\log(1/\nu)}{r\svmin(X^\T X)}}\|\vec\zeta\|\right) + O\left( \sqrt{(X^\T X)^{-1}_{j,j}\frac{\log(1/\alpha)}{n-p}} \|\vec\zeta\|\right) = O\left( \sqrt{\frac{p\ln(p)\log(1/\nu) + \log(1/\alpha)}{r\svmin(X^\T X)}}\|\vec\zeta\|\right) $. This implies that our confidence interval has decreased its degrees of freedom from $n-p$ to roughly $r/p\ln(p)$, and furthermore, that it no longer depends on $(X^TX)^{-1}_{j,j}$ but rather on $1/\svmin(X^\T X)$. 
It is only due to the fact that we rely on Gaussians and by mimicking carefully the original proof that we can deduce that the $\tilde t$-value has (roughly) $r-p$ degrees of freedom and depends solely on $(X^TX)^{-1}_{j,j}$.

(In the worst case, we have that $(X^\T X)^{-1}_{j,j}$ is proportional to $\svmin(X^\T X)^{-1}$, but it is not uncommon to have matrices where the former is much larger than the latter.) As mentioned in the introduction, alternative techniques (\mycite{ChaudhuriMS11, BassilyST14, Ullman15}) for finding a DP estimator $\vec\beta^{dp}$ of the linear regression give a data-independent\footnote{In other words, independent of $X,\vec\zeta$.} bound of $\|\vec\beta^{dp}-\hvec\beta\| = \tilde O(p/\epsilon)$. Such bounds are harder to compare with the interval length given by Corollary~\ref{cor:JL_confidence_interval}. Indeed, as we discuss in
\ifdefined \fullversion Section~\ref{subsec:simple_JL_reject_null_hypothesis},
\else Section~\ref{sec:OLS_for_JL_data} under ``Rejecting the null-hypothesis,''
\fi
enough samples from a multivariate Gaussian whose covariance-matrix is well conditioned give a bound which is well below the worst-upper bound of $O(p/\epsilon)$. (Yet, it is possible that these techniques also do much better on such ``well-behaved'' data.) What the works of Sarlos and alternative works regrading differentially private linear regression do not take into account are questions such as generating a likelihood for $\beta_j$ nor do they discuss rejecting the null hypothesis.
}

\DeclareRobustCommand{\proofmainthm}{
We now turn to our analysis of $\tvec\beta$ and $\tvec\zeta$, where our goal is to show that the distribution of the $\tilde t$-values as specified in Theorem~\ref{thm:main_theorem_matrix_unaltered} is well-approximated by the $T_{r-p}$-distribution. For now, we assume the existence of fixed vectors $\vec \beta \in \mathbb{R}^p$ and $\vec e \in \mathbb{R}^{n}$ s.t. $\vec y= X\vec\beta + \vec e$. (Later, we will return to the homoscedastic model where each coordinate of $\vec e$  is sampled i.i.d from $\gaussian(0,\sigma^2)$ for some $\sigma^2$.) In other words, we first examine the case where $R$ is the sole source of randomness in our estimation.
Based on the assumption that $\vec e$ is fixed, we argue the following.
\begin{claim}
\label{clm:beta_and_zeta_distribution}
In our model, given $X$ and the output $M = RX$, we have that 
\ifx\fullversion\undefined
$\tvec\beta \sim \gaussian\left(\vec\beta + X^\mpi\vec e, \|P_{U^\perp}\vec e\|^2 (M^\T M)^{-1}\right)$ and $\tvec\zeta \sim \gaussian\left(\vec 0_n, \tfrac {\|P_{U^\perp}\vec e\|^2} r (I_{r\times r} - M(M^\T M)^{-1}M^\T)\right)$.
\else
\begin{eqnarray*}
& \tvec\beta & \sim \gaussian\left(\vec\beta + X^\mpi\vec e, \|P_{U^\perp}\vec e\|^2 (M^\T M)^{-1}\right)\cr
& \tvec\zeta & \sim \gaussian\left(\vec 0_n, \tfrac {\|P_{U^\perp}\vec e\|^2} r (I - M(M^\T M)^{-1}M^\T)\right)
\end{eqnarray*}
\fi
Where $P_{U^\perp}$ denotes the projection operator onto the subspace orthogonal to $colspan(X)$; i.e., $P_U = XX^\mpi$ and $P_{U^\perp} = (I_{r\times r}- XX^{\mpi})$.
\end{claim}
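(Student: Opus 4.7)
The plan is to condition on $M = RX$ and exploit the fact that an i.i.d.\ Gaussian matrix acts independently on orthogonal subspaces. First I would decompose the fixed vector $\vec e$ along the column span of $X$ and its orthogonal complement: write $\vec e = P_U\vec e + P_{U^\perp}\vec e$ and note that $P_U\vec e = X\vec a$ for $\vec a = X^\mpi \vec e$, so that
\[
R\vec e \;=\; RX\vec a + RP_{U^\perp}\vec e \;=\; M\vec a + RP_{U^\perp}\vec e.
\]

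Next I would prove the key independence statement that $M$ and $RP_{U^\perp}\vec e$ are independent. Write $R = (RU)U^\T + (RU^\perp)(U^\perp)^\T$ where $U$ is an orthonormal basis for $\mathrm{colspan}(X)$ and $U^\perp$ completes it to an orthonormal basis of $\mathbb{R}^n$. Because $R$ has i.i.d.\ standard Gaussian entries, the matrices $RU$ and $RU^\perp$ are themselves Gaussian with i.i.d.\ standard Gaussian entries and independent of one another (multiplying $R$ on the right by an orthonormal basis of a subspace yields a Gaussian matrix in that subspace, and the two subspaces are orthogonal). Hence $RP_U = (RU)U^\T$ and $RP_{U^\perp} = (RU^\perp)(U^\perp)^\T$ are independent. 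Since $M = RX = (RP_U)X$ is a deterministic function of $RP_U$, it is independent of $RP_{U^\perp}\vec e$. A routine computation then shows that unconditionally (and therefore also conditionally on $M$),
\[ RP_{U^\perp}\vec e \;\sim\; \gaussian\!\left(\vec 0_r,\; \|P_{U^\perp}\vec e\|^2 \, I_{r\times r}\right). \]

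With these two ingredients, the remainder is Gaussian bookkeeping. For $\tvec\beta$ I would use $M^\mpi M = I_{p\times p}$ (which holds since $r \geq p$ and $M$ has full column rank a.s.) to write
\[ \tvec\beta \;=\; \vec\beta + M^\mpi R\vec e \;=\; \vec\beta + \vec a + M^\mpi RP_{U^\perp}\vec e \;=\; \vec\beta + X^\mpi\vec e + M^\mpi RP_{U^\perp}\vec e, \]
so conditional on $M$ the vector $\tvec\beta$ is an affine transformation of a Gaussian and thus lies in $\gaussian\!\left(\vec\beta + X^\mpi \vec e,\; \|P_{U^\perp}\vec e\|^2 \, M^\mpi (M^\mpi)^\T\right)$, which equals the claimed distribution via the identity $M^\mpi(M^\mpi)^\T = (M^\T M)^{-1}$. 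For $\tvec\zeta$, setting $Q = I_{r\times r} - M(M^\T M)^{-1}M^\T$, the identity $QM = 0$ kills the $M\vec a$ contribution, leaving $\tvec\zeta = \tfrac{1}{\sqrt{r}} Q\, RP_{U^\perp}\vec e$; and because $Q$ is an orthogonal projection ($Q=Q^\T$ and $Q^2 = Q$), its covariance collapses to $\tfrac{\|P_{U^\perp}\vec e\|^2}{r} Q Q^\T = \tfrac{\|P_{U^\perp}\vec e\|^2}{r} Q$, matching the claim (modulo the cosmetic typo $\vec 0_n$ for $\vec 0_r$ in the statement, as $\tvec\zeta \in \mathbb{R}^r$).

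The only place where care is really needed is justifying cleanly the independence of $RP_U$ and $RP_{U^\perp}$; once that orthogonal-decomposition argument is in hand, everything else reduces to substitution and the standard rule that $AX \sim \gaussian(A\vec\mu, A\Sigma A^\T)$ whenever $X \sim \gaussian(\vec\mu, \Sigma)$.
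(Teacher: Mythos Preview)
Your proposal is correct and follows essentially the same approach as the paper: both arguments decompose $R$ along $P_U$ and $P_{U^\perp}$, use the independence of $RP_U$ and $RP_{U^\perp}$ for a Gaussian $R$ to conclude that conditioning on $M=RX$ leaves $RP_{U^\perp}\vec e\sim\gaussian(\vec 0_r,\|P_{U^\perp}\vec e\|^2 I_{r\times r})$, and then finish by linear-transformation rules for Gaussians. Your observation about the cosmetic typo $\vec 0_n$ versus $\vec 0_r$ is correct as well.
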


\begin{proof}
The matrix $R$ is sampled from $\gaussian( 0_{r\times p}, I_{r\times r}, I_{p\times p})$. Given $X$ and $RX=M$, we learn the projection of each row in $R$ onto the subspace spanned by  the columns of $X$. That is, denoting $\vec u^\T$ as the $i$-th row of $R$ and $\vec v^\T$ as the $i$-th row of $M$, we have that $X^\T \vec u = \vec v$. Recall, initially $\vec u \sim \gaussian(\vec 0_n, I_{n\times n})$ -- a spherically symmetric Gaussian. As a result, we can denote $\vec u = P_U \vec u \times P_{U^\perp} \vec u$ where the two projections are independent samples from $\gaussian( \vec 0_n, P_U)$ and $\gaussian (\vec 0_n, P_{U^\perp})$ resp. However, once we know that $\vec v = X^\T \vec u$ we have that $P_U \vec u = X(X^\T X)^{-1} X^\T \vec u =  X(X^\T X)^{-1} \vec v $ so we learn $P_U\vec u$ exactly, whereas we get no information about $P_{U^\perp}$ so $P_{U^\perp} \vec u$ is still sampled from a Gaussian $\gaussian(\vec 0_n , P_{U^\perp})$. As we know for each row of $R$ that $\vec u^\T P_U = \vec v^\T X^\mpi$, we therefore have that
\ifx\twocols\undefined
\[ R = RP_U + RP_{U^\perp} = MX^\mpi + RP_{U^\perp}, ~~~\textrm{ where } ~~~ RP_{U^\perp} \sim \gaussian(0_{r\times n}, I_{r\times r}, P_{U^\perp})\]
\else 
\[ R = RP_U + RP_{U^\perp} = MX^\mpi + RP_{U^\perp}\] where $RP_{U^\perp} \sim \gaussian(0_{r\times n}, I_{r\times r}, P_{U^\perp})$.
\fi 
From here on, we just rely on the existing results about the linearity of Gaussians.
\begin{align*}
 & R \sim \gaussian (MX^\mpi, I_{r\times r}, P_{U^\perp}) \cr & ~~~\Rightarrow  R\vec e \sim \gaussian(MX^\mpi\vec e, \|P_{U^\perp}\vec e\|^2 I_{r\times r}) \cr &~~~\Rightarrow M^\mpi R\vec e \sim \gaussian (X^\mpi \vec e, \|P_{U^\perp}\vec e\|^2 (M^\T M)^{-1}) 
\end{align*}
so $\tvec\beta = \vec\beta + M^\mpi R\vec e$ implies $\tvec\beta \sim \gaussian( \vec\beta + X^\mpi \vec e, \|P_{U^\perp}\vec e\|^2 (M^\T M)^{-1})$. And as $\tvec \zeta = \tfrac 1 {\sqrt r} (I_{r\times r}~-~M(M^\T M)^{-1}M^\T) R\vec e$ then we have $\tvec \zeta \sim \gaussian( \vec 0_r, \tfrac {\|P_{U^\perp}\vec e\|^2 } {r} (I_{r\times r}- M M^{\mpi}) )$ as $(I_{r\times r}-M M^\mpi)M = 0_{r\times p}$.
\end{proof}

Claim~\ref{clm:beta_and_zeta_distribution} was based on the assumption that $\vec e$ is fixed. However, given $X$ and $\vec y$ there are many different ways to assign vectors $\vec\beta$ and $\vec e$ s.t. $\vec y = X\vec \beta + \vec e$. However, the distributions we get in Claim~\ref{clm:beta_and_zeta_distribution} are \emph{unique}. To see that, recall Equations~\eqref{eq:hat_beta} and~\eqref{eq:zeta}: $\vec \beta + X^\mpi\vec e = X^\mpi \vec y = \hvec\beta$ and $P_{U^\perp}\vec e = P_{U^\perp}\vec y = (I-XX^\mpi)\vec y = \vec\zeta$. We therefore have $\tvec\beta\sim\gaussian(\hvec\beta, \|\vec\zeta\|^2(M^\T M)^{-1})$ and $\tvec\zeta\sim~\gaussian(\vec 0_n, \tfrac{ \|\vec\zeta\|^2}r (I-MM^\mpi))$. We will discuss this further, in Section~\ref{sec:ridge_regression}, where we will not be able to better analyze the  explicit distributions of our estimators. But in this section, we are able to argue more about the distributions of $\tvec\beta$ and $\tvec\zeta$.

So far we have considered the case that $\vec e$ is fixed, whereas our goal is to argue about the case where each coordinate of $\vec e$ is sampled i.i.d from $\gaussian(0,\sigma^2)$. To that end, we now switch to an intermediate model, in which $P_U\vec e$ is sampled from a multivariate Gaussian while $P_{U^\perp}\vec e$ is fixed as some arbitrary vector of length $l$. Formally, let $\mathcal{D}_l$ denote the distribution where $P_U\vec e \sim \gaussian (0, \sigma^2 P_U)$ and $P_{U^\perp}\vec e$ is fixed as some specific vector whose length is denoted by $\|P_{U^\perp}\vec e\|=l$.

\begin{claim}
\label{clm:homoscedastic_dist_of_JL_beta}
Under the same assumptions as in Claim~\ref{clm:beta_and_zeta_distribution}, given that $\vec e \sim \mathcal{D}_l$, we have that $\tvec \beta \sim \gaussian\left(\vec\beta, \sigma^2 (X^T X)^{-1} + l^2 (M^\T M)^{-1}\right)$ and $\tvec\zeta\sim\gaussian\left(\vec 0_n, \tfrac {l^2} r (I - MM^\mpi)\right)$.
\end{claim}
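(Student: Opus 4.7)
The plan is to reduce Claim~\ref{clm:homoscedastic_dist_of_JL_beta} to the already-proved Claim~\ref{clm:beta_and_zeta_distribution} by conditioning on $\vec e$ and then integrating out the randomness of $P_U\vec e$. The first observation that makes this tractable is that under $\mathcal{D}_l$ the length $\|P_{U^\perp}\vec e\|$ is deterministically equal to $l$, so every covariance matrix appearing in Claim~\ref{clm:beta_and_zeta_distribution} (namely $\|P_{U^\perp}\vec e\|^2 (M^\T M)^{-1}$ and $\tfrac{\|P_{U^\perp}\vec e\|^2}{r}(I-MM^\mpi)$) becomes a fixed, non-random quantity once we set $\|P_{U^\perp}\vec e\|=l$.

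For $\tvec\zeta$ this settles matters immediately: the conditional distribution $\tvec\zeta \mid \vec e \sim \gaussian\bigl(\vec 0_r,\tfrac{l^2}{r}(I-MM^\mpi)\bigr)$ produced by Claim~\ref{clm:beta_and_zeta_distribution} does not depend on $\vec e$ at all, so this is also the marginal distribution of $\tvec\zeta$ under $\mathcal{D}_l$. For $\tvec\beta$ the work is slightly more interesting. Claim~\ref{clm:beta_and_zeta_distribution} says that, conditional on $\vec e$, $\tvec\beta \sim \gaussian(\vec\beta + X^\mpi\vec e,\,l^2(M^\T M)^{-1})$. I would then split $\vec e = P_U\vec e + P_{U^\perp}\vec e$ and use the identity $X^\mpi P_{U^\perp}=0$, which follows from a direct computation using $X^\mpi=(X^\T X)^{-1}X^\T$ and $P_{U^\perp}=I-XX^\mpi$. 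This collapses the mean to $\vec\beta + X^\mpi P_U\vec e$. Since $P_U\vec e\sim\gaussian(\vec 0_n,\sigma^2 P_U)$, standard linearity of Gaussians gives $X^\mpi P_U \vec e \sim \gaussian(\vec 0_p,\,\sigma^2 X^\mpi P_U(X^\mpi)^\T)$, and using $P_U(X^\mpi)^\T=(X^\mpi)^\T$ this simplifies to $\gaussian(\vec 0_p,\sigma^2(X^\T X)^{-1})$.

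To finish I would argue that the two sources of noise can be added as independent Gaussians. The randomness in the conditional distribution of Claim~\ref{clm:beta_and_zeta_distribution} comes entirely from the orthogonal component $R P_{U^\perp}$ of the sampling matrix, which is statistically independent of $\vec e$ (the sample $R$ and the noise vector $\vec e$ are drawn independently). Hence writing $\tvec\beta = \vec\beta + X^\mpi P_U\vec e + \vec\xi$ with $\vec\xi\sim\gaussian(\vec 0_p,\,l^2(M^\T M)^{-1})$ independent of $P_U\vec e$ yields the claimed marginal $\tvec\beta \sim \gaussian\bigl(\vec\beta,\,\sigma^2(X^\T X)^{-1}+l^2(M^\T M)^{-1}\bigr)$.

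I expect no real obstacle beyond being careful with the identity $X^\mpi P_{U^\perp}=0$ and with the independence argument that justifies summing the covariances; both are routine, so the claim should follow essentially as a bookkeeping corollary of Claim~\ref{clm:beta_and_zeta_distribution} together with the linearity properties of Gaussians stated in Section~\ref{sec:preliminaries}.
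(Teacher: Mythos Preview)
Your proposal is correct and follows essentially the same route as the paper. Both arguments rest on the decomposition $\tvec\beta = \vec\beta + X^\mpi\vec e + M^\mpi R P_{U^\perp}\vec e$, the identity $X^\mpi P_{U^\perp}=0$ (so only $P_U\vec e$ enters the first random term), and the independence of $X^\mpi P_U\vec e$ from $M^\mpi R P_{U^\perp}\vec e$ (the latter depending only on $RP_{U^\perp}$, which is independent of $\vec e$ given $M$); the paper just states this decomposition directly rather than phrasing it as ``condition on $\vec e$, then integrate out $P_U\vec e$,'' but the content is identical.
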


\begin{proof}
Recall, $\tvec \beta = \vec\beta + M^\mpi R\vec e = \vec\beta + M^\mpi(MX^\mpi + RP_{U^\perp})\vec e = \vec\beta + X^\mpi\vec e + M^\mpi R (P_{U^\perp}\vec e)$. Now, under the assumption $\vec e \sim \mathcal{D}_l$ we have that $\beta$ is the sum of two \emph{independent} Gaussians:
\begin{eqnarray*}
& \vec\beta + X^\mpi \vec e &\sim \gaussian (\vec\beta, \sigma^2 \left(X^\mpi \cdot P_U \cdot (X^\mpi)^\T\right)) \cr 
&&= \gaussian(\vec\beta,\sigma^2 (X^\T X)^{-1}) \cr 
& RP_{U^\perp} \vec e &\sim \gaussian (\vec 0_r, \|P_{U^\perp}\vec e\|^2 I_{r\times r})\cr
& \Rightarrow M^\mpi R \vec e &\sim \gaussian (\vec 0_p, \|P_{U^\perp}\vec e\|^2 (M^\T M)^{-1})
\end{eqnarray*}
Summing the two independent Gaussians' means and variances gives the distribution of $\tvec\beta$. Furthermore, in Claim~\ref{clm:beta_and_zeta_distribution} we have already established that for any fixed $\vec e$ we have $\tvec\zeta  \sim \gaussian\left(\vec 0_n, \tfrac {\|P_{U^\perp}\vec e\|^2} r (I - MM^\mpi)\right)$. Hence, for $\vec e \sim \mathcal{D}_l$ we still have $\tvec\zeta  \sim \gaussian\left(\vec 0_n, \tfrac {l^2} r (I - MM^\mpi)\right)$. (It is easy to verify that the same chain of derivations is applicable when $\vec e \sim \mathcal{D}_l$.)
\end{proof}
\begin{corollary}
\label{cor:tildebeta_j_zeta_squared}
Given that $\vec e\sim \mathcal{D}_l$ we have that $\tilde\beta_j \sim \gaussian(\beta_j,~ \sigma^2 (X^\T X)^{-1}_{j,j} + l^2(M^\T M)^{-1}_{j,j})$ for any coordinate $j$, and that $\|\tvec\zeta\|^2 \sim \frac{l^2}r \cdot\chi^2_{r-p}$.
\end{corollary}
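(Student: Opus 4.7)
The plan is to apply two standard linear-algebraic facts about multivariate Gaussians to the distributions already derived in Claim~\ref{clm:homoscedastic_dist_of_JL_beta}: marginalization via a coordinate projection for the first statement, and the relation between norms of degenerate Gaussians and $\chi^2$-distributions for the second statement. Since both distributions are already explicitly given, no genuine obstacle should arise; this is essentially a bookkeeping corollary.

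For the claim about $\tilde\beta_j$, I would start from the fact, established in Claim~\ref{clm:homoscedastic_dist_of_JL_beta}, that $\tvec\beta \sim \gaussian(\vec\beta, \Sigma)$ with $\Sigma = \sigma^2 (X^\T X)^{-1} + l^2 (M^\T M)^{-1}$. Writing $\tilde\beta_j = \vec e_j^\T \tvec\beta$ and applying the linearity rule $C \cdot \gaussian(\vec\mu,\Sigma) = \gaussian(C\vec\mu, C\Sigma C^\T)$ with $C = \vec e_j^\T$ (stated in the Gaussian preliminaries) immediately yields $\tilde\beta_j \sim \gaussian(\beta_j, \vec e_j^\T \Sigma \vec e_j)$, and by linearity of the quadratic form in $\Sigma$, $\vec e_j^\T \Sigma \vec e_j = \sigma^2 (X^\T X)^{-1}_{j,j} + l^2 (M^\T M)^{-1}_{j,j}$, as desired.

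For the claim about $\|\tvec\zeta\|^2$, I would start from the distribution $\tvec\zeta \sim \gaussian(\vec 0_r, \tfrac{l^2}{r}(I_{r\times r} - MM^\mpi))$, again from Claim~\ref{clm:homoscedastic_dist_of_JL_beta}. The matrix $I_{r\times r} - MM^\mpi$ is the orthogonal projection onto $(\mathrm{colspan}(M))^\perp$, and since $M = RX$ has rank $p$ with probability $1$ (and by assumption on $X$ and $r > p$), this projection has rank exactly $r-p$. Hence $\tvec\zeta$ is supported on an $(r-p)$-dimensional subspace, and if $Q$ is any $r \times (r-p)$ matrix whose columns form an orthonormal basis of that subspace, then $Q^\T\tvec\zeta \sim \gaussian(\vec 0_{r-p}, \tfrac{l^2}{r} I_{(r-p)\times(r-p)})$ with $\|\tvec\zeta\|^2 = \|Q^\T\tvec\zeta\|^2$. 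By definition of the $\chi^2$-distribution given in Section~\ref{sec:preliminaries}, $\tfrac{r}{l^2}\|Q^\T\tvec\zeta\|^2 \sim \chi^2_{r-p}$, so $\|\tvec\zeta\|^2 \sim \tfrac{l^2}{r}\chi^2_{r-p}$. The only small thing to double-check is that $Q^\T Q = I_{(r-p)\times(r-p)}$ and $QQ^\T = I_{r\times r} - MM^\mpi$, which is immediate from the SVD of the projection, so this step is routine rather than a real obstacle.
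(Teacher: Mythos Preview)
Your proposal is correct and follows essentially the same approach as the paper's proof, which simply notes that $\tilde\beta_j = \vec e_j^\T\tvec\beta$ for the first part and that $\tvec\zeta$ is a spherically symmetric Gaussian supported on the $(r-p)$-dimensional subspace $\mathrm{colspan}(M)^\perp$ for the second. Your version is just more explicit (introducing the orthonormal basis $Q$ and spelling out the rank computation), but the content is identical.
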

\begin{proof}The corollary follows immediately from the fact that $\beta_j = \vec e_j^\T \tvec\beta$, and from the definition of the $\chi^2$-distribution, as $\tvec\zeta$ is a spherically symmetric Gaussian defined on the subspace $colspan(M)^\perp$ of dimension $r-p$.\end{proof}
To continue, we need the following claim.
\begin{claim}
\label{clm:beta_zeta_independent}
Given $X$ and $M=RX$, and given that $\vec e \sim\mathcal{D}_l$ we have that $\tvec\beta$ and $\tvec\zeta$ are independent.
\end{claim}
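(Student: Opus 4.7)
The plan is to trace the sources of randomness in $\tvec\beta$ and $\tvec\zeta$ and show they split cleanly into independent contributions. Using the decomposition $R = MX^\mpi + RP_{U^\perp}$ (with $RP_{U^\perp}\sim\gaussian(0_{r\times n},I_{r\times r},P_{U^\perp})$) from the proof of Claim~\ref{clm:beta_and_zeta_distribution}, and the identity $X^\mpi = X^\mpi P_U$, I will first write
\[ \tvec\beta = \vec\beta + X^\mpi (P_U\vec e) + M^\mpi R(P_{U^\perp}\vec e), \qquad \tvec\zeta = \tfrac{1}{\sqrt r}(I_{r\times r}-MM^\mpi)R(P_{U^\perp}\vec e), \]
using $(I-MM^\mpi)M=0$ to eliminate the $MX^\mpi \vec e$ term from $\tvec\zeta$. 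Under $\vec e\sim\mathcal{D}_l$, the vector $\vec f\stackrel{\rm def}= P_{U^\perp}\vec e$ is a \emph{fixed} vector of norm $l$, while $P_U\vec e$ is a Gaussian independent of $R$.

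Next I would isolate two independent sources of randomness: (a) $P_U\vec e \sim \gaussian(\vec 0_n,\sigma^2 P_U)$, which only enters $\tvec\beta$ (via $X^\mpi P_U \vec e$) and does not appear in $\tvec\zeta$; and (b) the Gaussian vector $R\vec f$, which enters both quantities. Since $\vec e$ and $R$ are drawn independently, it suffices to show that the two functions of $R\vec f$ appearing in $\tvec\beta$ and $\tvec\zeta$ are independent of each other; once this is established, $\tvec\beta$ becomes a function of two independent Gaussian ingredients which are each independent of $\tvec\zeta$.

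For the main step, observe that $R\vec f\sim\gaussian(\vec 0_r, l^2 I_{r\times r})$ is a spherically symmetric Gaussian in $\mathbb{R}^r$. Thus its projections onto the orthogonal subspaces $\mathrm{colspan}(M)$ and $\mathrm{colspan}(M)^\perp$, namely $MM^\mpi R\vec f$ and $(I-MM^\mpi)R\vec f$, are independent Gaussians. The $R\vec f$-contribution to $\tvec\beta$ is $M^\mpi R\vec f = M^\mpi(MM^\mpi)R\vec f$ (using $M^\mpi M M^\mpi = M^\mpi$), i.e.\ a deterministic function of $MM^\mpi R\vec f$, while $\tvec\zeta = \tfrac{1}{\sqrt r}(I-MM^\mpi)R\vec f$ is a deterministic function of the complementary projection. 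Hence they are independent, concluding the proof.

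I do not expect a serious obstacle here; the only subtlety worth stating carefully is the bookkeeping around the two independent randomness sources (the random part of $\vec e$ versus the random matrix $R$), because one must verify that the $P_U\vec e$ contribution really is absent from $\tvec\zeta$ — which follows from $X^\mpi$ annihilating $P_{U^\perp}$ and from the identity $(I-MM^\mpi)M = 0$, both already used in Claim~\ref{clm:beta_and_zeta_distribution}.
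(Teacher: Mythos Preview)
Your proposal is correct and follows essentially the same approach as the paper: both decompose $\tvec\beta$ and $\tvec\zeta$ via $R = MX^\mpi + RP_{U^\perp}$, note that $P_U\vec e$ affects only $\tvec\beta$, and then exploit that the spherical Gaussian $R(P_{U^\perp}\vec e)$ splits independently under the orthogonal projections $MM^\mpi$ and $I-MM^\mpi$. The paper supplements this geometric argument with an explicit coordinatewise computation showing $\mathrm{Cov}[\tilde\beta_j,\tilde\zeta_k]=0$ (and then invokes that jointly Gaussian vectors with zero covariance are independent), whereas your version is the cleaner direct form of the same idea.
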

\begin{proof}
Recall, $\tvec\beta=\vec\beta + X^\mpi\vec e + M^\mpi R (P_{U^\perp}\vec e)$. And so, given $X$, $M$ and a specific vector $P_{U^\perp}\vec e $ we have that the distribution of $\tvec\beta$ depends on (i) the projection of $\vec e$ on $U = colspan(X)$ and on (ii) the projection of each row in $R$ onto $\tilde U = colspan(M)$. The distribution of $\tvec\zeta = \tfrac 1 {\sqrt r} P_{\tilde U^\perp} R \vec e = \tfrac 1 {\sqrt r} P_{\tilde U^\perp} (MX^\mpi + R P_{U^\perp})\vec e = \tfrac 1 {\sqrt r} P_{\tilde U^\perp} R P_{U^\perp}\vec e$ depends on (i) the projection of $\vec e$ onto $U^{\perp}$ (which for the time being is fix to some specific vector of length $l$) and on (ii) the projection of each row in $R$ onto $\tilde U^\perp$. Since $P_U\vec e$ is independent from $P_{U^\perp} \vec e$, and since for any row $\vec u^\T$ of $R$ we have that $P_{\tilde U}\vec u$ is independent of $P_{\tilde U^{\perp}}\vec u$, and since $\vec e$ and $R$ are chosen independently, we have that $\tvec\beta$ and $\tvec\zeta$ are independent.

Formally, consider any pair of coordinates $\tilde\beta_j$ and $\tilde\zeta_k$, and we have
\begin{eqnarray*}
&\tilde\beta_j - \beta_j & = \vec e_j^\T X^\mpi \vec e + \vec e_j^\T M^\mpi (R P_{U^{\perp}}\vec e)\cr
&\tilde\zeta_k &=  \vec e_k^\T P_{\tilde U^\perp} (RP_{U^\perp}\vec e)
\end{eqnarray*}
Recall, we are given $X$ and $M = RX$. Therefore, we know $P_U$ and $P_{\tilde U}$. And so
\ifx\twocols\undefined 
\begin{align*}
\textrm{Cov}[\tilde\beta_j,\tilde\zeta_k] & = \E[(\tilde\beta_j-\beta_j)(\tilde\zeta_k - 0)] \cr
&= \E [ \vec e_j^\T X^\mpi \vec e (RP_{U^\perp}\vec e)^\T P_{\tilde U^\perp} \vec e_k ] + \E[ \vec e_j^\T M^\mpi  (RP_{U^\perp}\vec e)(RP_{U^\perp}\vec e)^\T P_{\tilde U^\perp} \vec e_k ] \cr
& = \vec e_j^\T X^\mpi \E[ \vec e \vec e^\T P_{U^\perp}] \E[ R^\T] P_{\tilde U^\perp}\vec e_k + \vec e_j^\T M^\mpi \E[(RP_{U^\perp}\vec e)(RP_{U^\perp}\vec e)^\T] P_{\tilde U^\perp} \vec e_k \cr
&= \vec e_j^\T X^\mpi \E[ \vec e \vec e^\T P_{U^\perp}] \left((MX^\mpi)^\T+ \E[(RP_{U^\perp})^\T] \right) P_{\tilde U^\perp}\vec e_k + \vec e_j^\T M^\mpi \left( \|P_{U^\perp}\vec e\|^2 I_{r\times r}\right) P_{\tilde U^\perp} \vec e_k \cr
&= \vec e_j^\T X^\mpi \E[ \vec e \vec e^\T P_{U^\perp}] (X^\mpi)^\T \left(M^\T P_{\tilde U^\perp}\right) \vec e_k + 0 + l^2 \cdot \vec e_j^\T \left(M^\mpi P_{\tilde U^\perp}\right) \vec e_k = 0+0+0= 0
\end{align*}
\else
\begin{align*}
& \textrm{Cov}[\tilde\beta_j,\tilde\zeta_k] \cr & = \E[(\tilde\beta_j-\beta_j)(\tilde\zeta_k - 0)] \cr
&= \E [ \vec e_j^\T X^\mpi \vec e (RP_{U^\perp}\vec e)^\T P_{\tilde U^\perp} \vec e_k ] \cr &~~~+ \E[ \vec e_j^\T M^\mpi  (RP_{U^\perp}\vec e)(RP_{U^\perp}\vec e)^\T P_{\tilde U^\perp} \vec e_k ] \cr
& = \vec e_j^\T X^\mpi \E[ \vec e \vec e^\T P_{U^\perp}] \E[ R^\T] P_{\tilde U^\perp}\vec e_k \cr &~~~ + \vec e_j^\T M^\mpi \E[(RP_{U^\perp}\vec e)(RP_{U^\perp}\vec e)^\T] P_{\tilde U^\perp} \vec e_k \cr
&= \vec e_j^\T X^\mpi \E[ \vec e \vec e^\T P_{U^\perp}] \left((MX^\mpi)^\T+ \E[(RP_{U^\perp})^\T] \right) P_{\tilde U^\perp}\vec e_k \cr &~~~ + \vec e_j^\T M^\mpi \left( \|P_{U^\perp}\vec e\|^2 I_{r\times r}\right) P_{\tilde U^\perp} \vec e_k \cr
&= \vec e_j^\T X^\mpi \E[ \vec e \vec e^\T P_{U^\perp}] (X^\mpi)^\T \left(M^\T P_{\tilde U^\perp}\right) \vec e_k +0 \cr &~~~ + l^2 \cdot \vec e_j^\T \left(M^\mpi P_{\tilde U^\perp}\right) \vec e_k \cr & = 0+0+0= 0
\end{align*}
\fi
And as $\tvec\beta$ and $\tvec\zeta$ are Gaussians, having their covariance $=0$ implies independence.
\end{proof}
\cut{
\begin{corollary}
\label{cor:beta_zeta_independent}
Given $X$ and $M=RX$, and given that $\vec e$ is sampled i.i.d s.t. $e_i \sim\gaussian(0,\sigma^2)$, we have that $\tvec\beta$ and $\tvec\zeta$ are independent.
\end{corollary}
\begin{proof}
The same argument for $\vec e \sim \mathcal{D}_l$ applies here too. In particular, we have that, under the same notation as in the proof of Claim~\ref{clm:beta_zeta_independent},
\begin{eqnarray*}
& \textrm{Cov}[\tilde\beta_j,\tilde\zeta_k] &= 0 + \E[\vec e_j^\T M^\mpi (RP_{U^\perp}\vec e)(RP_{U^\perp}\vec e)^\T P_{\tilde U^\perp} \vec e_k]\cr
&& = \int_0^\infty \E\left[ \vec e_j^\T M^\mpi (RP_{U^\perp}\vec e)(RP_{U^\perp}\vec e)^\T P_{\tilde U^\perp}\vec e_k ~\Big| ~\|P_{U^\perp}\vec e\| = l \right] \PDF_{\| P_{U^\perp}\vec e\|}(l)dl
\cr && = \int_{0}^\infty 0 \cdot dl = 0\end{eqnarray*}
\end{proof}
}
Having established that $\tvec\beta$ and $\tvec\zeta$ are independent Gaussians and specified their distributions, we continue with the proof of Theorem~\ref{thm:main_theorem_matrix_unaltered}. We assume for now that there exists some small $a>0$ s.t. 
\ifx\twocols\undefined
\begin{equation} l^2 (M^\T M)^{-1}_{j,j}\leq \sigma^2 (X^T X)^{-1}_{j,j} + l^2 (M^\T M)^{-1}_{j,j}\leq  e^{2a} \cdot l^2 (M^\T M)^{-1}_{j,j}\label{eq:exists_bounding_a}\end{equation} 
\else 
\begin{eqnarray} l^2 (M^\T M)^{-1}_{j,j}\leq \sigma^2 (X^T X)^{-1}_{j,j} + l^2 (M^\T M)^{-1}_{j,j} \cr \leq  e^{2a} \cdot l^2 (M^\T M)^{-1}_{j,j}\label{eq:exists_bounding_a}\end{eqnarray} 
\fi
Then, due to Corollary~\ref{cor:probabilities_ratio_close_gaussians}, denoting the distributions $\gaussian_1 = \gaussian(0,~ l^2(M^\T M)^{-1}_{j,j})$ and $\gaussian_2 = \gaussian(0,~ \sigma^2 (X^\T X)^{-1}_{j,j} + l^2(M^\T M)^{-1}_{j,j})$, we have that for any $S\subset\mathbb{R}$ it holds that\footnote{In fact, it is possible to use standard techniques from differential privacy, and argue a similar result~--- that the probabilities of any event that depends on some function $f(\beta_j)$ under $\beta_j \sim \gaussian_1$ and  under $\beta_j\sim\gaussian_2$ are close in the differential privacy sense.}
\begin{equation} e^{-a}\Pr_{\tilde\beta_j \sim \gaussian_1}[S]\leq\Pr_{\tilde\beta_j\sim \gaussian_2 }[S]\leq e^a \Pr_{\tilde\beta_j \sim \gaussian_1}[S/e^{a}] \label{eq:ratios_N1_N2}\end{equation}

More specifically, denote the function
\ifx\twocols\undefined 
\begin{eqnarray*}&\tilde t(\psi, \|\vec\xi\|,\beta_j) &= \frac {\psi-\beta_j} {\|\vec\xi\| \sqrt{\tfrac r {r-p} (M^\T M)^{-1}_{j,j}}}=\frac {\psi-\beta_j} {l\sqrt{(M^\T M)^{-1}_{j,j}}} \Big/ \frac {\|\vec\xi\|\sqrt{\tfrac r {r-p}}} l 
\end{eqnarray*}
\else 
\begin{eqnarray*}\tilde t(\psi, \|\vec\xi\|,\beta_j) &&= \frac {\psi-\beta_j} {\|\vec\xi\| \sqrt{\tfrac r {r-p} (M^\T M)^{-1}_{j,j}}}\cr &&=\frac {\psi-\beta_j} {l\sqrt{(M^\T M)^{-1}_{j,j}}} \Big/ \frac {\|\vec\xi\|\sqrt{\tfrac r {r-p}}} l 
\end{eqnarray*}
\fi 
and observe that when we sample $\psi,\vec\xi$ independently s.t. $\psi \sim \gaussian( \beta_j , ~ l^2 (M^\T M)^{-1}_{j,j})$ and $\|\vec\xi\|^2 \sim \tfrac{l^2}{r}\chi^2_{r-p}$ then $\tilde t(\psi,\|\vec\xi\|,\beta_j)$ is distributed like a $T$-distribution with $r-p$ degrees of freedom. And so, for any $\tau >0 $ we have that under such way to sample $\psi,\vec\xi$ we have $\Pr[\tilde t(\psi,\|\vec\xi\|,\beta_j) > \tau] = 1-\CDF_{T_{r-p}}(\tau)$.

For any $\tau \geq 0$ and for any non-negative real value $z$ let $S^\tau_{z}$ denote the suitable set of values s.t.
\ifx\twocols\undefined 
\[ \Pr_{\left\{\substack{\psi \sim \gaussian( \beta_j , ~ l^2 (M^\T M)^{-1}_{j,j})\\ \|\vec\xi\|^2 \sim \tfrac{l^2}{r}\chi^2_{r-p}}\right\} }[\tilde t(\psi,\|\vec\xi\|,\beta_j) > \tau]
 = \int\limits_0^{\infty} \PDF_{\tfrac {l^2}r \chi^2_{r-p}}(z) \cdot \Prx_{\left\{\psi-\beta_j \sim \gaussian( 0 , ~ l^2 (M^\T M)^{-1}_{j,j})\right\}}[S^\tau_z]\,dz\]
\else
\begin{eqnarray*} 
&&\Pr_{\left\{\substack{\psi \sim \gaussian( \beta_j , ~ l^2 (M^\T M)^{-1}_{j,j})\\ \|\vec\xi\|^2 \sim \tfrac{l^2}{r}\chi^2_{r-p}}\right\} }[\tilde t(\psi,\|\vec\xi\|,\beta_j) > \tau]
\cr &&= \int\limits_0^{\infty} \PDF_{\tfrac {l^2}r \chi^2_{r-p}}(z) \cdot \Prx_{\left\{\psi-\beta_j \sim \gaussian( 0 , ~ l^2 (M^\T M)^{-1}_{j,j})\right\}}[S^\tau_z]\,dz\end{eqnarray*}
\fi
That is, $S^\tau_z = \left(\tau\cdot z\sqrt{\tfrac r {r-p} (M^\T M)^{-1}_{j,j}},~\infty\right)$.

We now use Equation~\eqref{eq:ratios_N1_N2} (Since $\gaussian( 0 , ~ l^2 (M^\T M)^{-1}_{j,j})$ is precisely $\gaussian_1$) to deduce that
\ifx\twocols\undefined
\begin{align*}
&\Pr_{\left\{\substack{\psi \sim \gaussian( \beta_j , ~ l^2 (M^\T M)^{-1}_{j,j} + \sigma^2 (X^\T X)^{-1}_{j,j})\\ \|\vec\xi\|^2 \sim \tfrac{l^2}{r}\chi^2_{r-p}}\right\} }[\tilde t(\psi,\|\vec\xi\|,\beta_j) > \tau] 
\cr &~~~~~~~=  \int_0^{\infty} \PDF_{\tfrac {l^2}r \chi^2_{r-p}}(z) \Pr_{\psi-\beta_j \sim \gaussian( 0 , ~ l^2 (M^\T M)^{-1}_{j,j} + \sigma^2 (X^\T X)^{-1}_{j,j})}[S^\tau_z]dz
\cr &~~~~~~~ \leq e^a \int_0^{\infty} \PDF_{\tfrac {l^2}r \chi^2_{r-p}}(z) \Pr_{\psi-\beta_j \sim \gaussian( 0 , ~ l^2 (M^\T M)^{-1}_{j,j} )}[S^\tau_z/e_a]dz
\cr &~~~~~~~ \stackrel {(*)}= e^a \int_0^{\infty} \PDF_{\tfrac {l^2}r \chi^2_{r-p}}(z) \Pr_{\psi-\beta_j \sim \gaussian( 0 , ~ l^2 (M^\T M)^{-1}_{j,j} )}[S^{\tau/e^a}_{z}]dz 
\cr &~~~~~~~ = e^{a}\Pr_{\left\{\substack{\psi \sim \gaussian( \beta_j , ~ l^2 (M^\T M)^{-1}_{j,j})\\ \|\vec\xi\|^2 \sim \tfrac{l^2}{r}\chi^2_{r-p}}\right\} }[\tilde t(\psi,\|\vec\xi\|,\beta_j) > \tau/e^a] = e^a\left(1 - \CDF_{T_{r-p}}(\tau/e^a)\right)
\end{align*}
\else 
\begin{align*}
&\Pr_{\left\{\substack{\psi \sim \gaussian( \beta_j , ~ l^2 (M^\T M)^{-1}_{j,j} + \sigma^2 (X^\T X)^{-1}_{j,j})\\ \|\vec\xi\|^2 \sim \tfrac{l^2}{r}\chi^2_{r-p}}\right\} }[\tilde t(\psi,\|\vec\xi\|,\beta_j) > \tau] 
\cr &~=  \int_0^{\infty} \PDF_{\tfrac {l^2}r \chi^2_{r-p}}(z) \Prx_{\resizebox*{0.25\textwidth}{!}{$\psi-\beta_j \sim \gaussian( 0 , ~ l^2 (M^\T M)^{-1}_{j,j} + \sigma^2 (X^\T X)^{-1}_{j,j})$}}[S^\tau_z]dz
\cr &~ \leq e^a \int_0^{\infty} \PDF_{\tfrac {l^2}r \chi^2_{r-p}}(z) \Prx_{\psi-\beta_j \sim \gaussian( 0 , ~ l^2 (M^\T M)^{-1}_{j,j} )}[S^\tau_z/e_a]dz
\cr &~ \stackrel {(*)}= e^a \int_0^{\infty} \PDF_{\tfrac {l^2}r \chi^2_{r-p}}(z) \Prx_{\psi-\beta_j \sim \gaussian( 0 , ~ l^2 (M^\T M)^{-1}_{j,j} )}[S^{\tau/e^a}_{z}]dz 
\cr &~ = e^{a}\Pr_{\left\{\substack{\psi \sim \gaussian( \beta_j , ~ l^2 (M^\T M)^{-1}_{j,j})\\ \|\vec\xi\|^2 \sim \tfrac{l^2}{r}\chi^2_{r-p}}\right\} }[\tilde t(\psi,\|\vec\xi\|,\beta_j) > \tau/e^a] 
\cr &~= e^a\left(1 - \CDF_{T_{r-p}}(\tau/e^a)\right)
\end{align*}
\fi 
where the equality $(*)$ follows from the fact that $S^\tau_z / c = S^{\tau/c}_z$ for any $c>0$, since it is a non-negative interval. Analogously, we can also show that
\begin{align*}
& \Pr_{\left\{\substack{\psi \sim \gaussian( \beta_j , ~ l^2 (M^\T M)^{-1}_{j,j} + \sigma^2 (X^\T X)^{-1}_{j,j})\\ \|\vec\xi\|^2 \sim \tfrac{l^2}{r}\chi^2_{r-p}}\right\} }[\tilde t(\psi,\|\vec\xi\|,\beta_j) > \tau]
\cr &~~~~~~~ \geq e^{-a}\Pr_{\left\{\substack{\psi \sim \gaussian( \beta_j , ~ l^2 (M^\T M)^{-1}_{j,j})\\ \|\vec\xi\|^2 \sim \tfrac{l^2}{r}\chi^2_{r-p}}\right\} }[\tilde t(\psi,\|\vec\xi\|,\beta_j) > \tau] 
\ifx\twocols\undefined = \else \cr &= \fi e^{-a}\left(1-\CDF_{T_{r-p}}(\tau) \right) 
 \end{align*}

In other words, we have just shown that for any interval $I = (\tau,\infty)$ with $\tau\geq 0$ we have
\ifx\twocols\undefined 
\[ e^a \int\limits_I \PDF_{T_{r-p}}(z)dz \leq \Pr_{\left\{\substack{\psi \sim \gaussian( \beta_j , ~ l^2 (M^\T M)^{-1}_{j,j} + \sigma^2 (X^\T X)^{-1}_{j,j})\\ \|\vec\xi\|^2 \sim \tfrac{l^2}{r}\chi^2_{r-p}}\right\} }[\tilde t(\psi,\|\vec\xi\|,\beta_j) \in I] \leq e^a \int\limits_{I/e^a} \PDF_{T_{r-p}}(z)dz\]
\else
that $\Pr_{\left\{\substack{\psi \sim \gaussian( \beta_j , ~ l^2 (M^\T M)^{-1}_{j,j} + \sigma^2 (X^\T X)^{-1}_{j,j})\\ \|\vec\xi\|^2 \sim \tfrac{l^2}{r}\chi^2_{r-p}}\right\} }[\tilde t(\psi,\|\vec\xi\|,\beta_j) \in I]$ is lower bounded by $e^a \int\limits_I \PDF_{T_{r-p}}(z)dz$ and upper bounded by $e^a \int\limits_{I/e^a} \PDF_{T_{r-p}}(z)dz$.
\fi 
We can now repeat the same argument for $I = (\tau_1,\tau_2)$ with $0\leq \tau_1<\tau_2$ (using an analogous definition of $S^{\tau_1,\tau_2}_z$), and again for any $I= (\tau_1,\tau_2)$ with $\tau_1 < \tau_2 \leq 0$, and deduce that the $\PDF$ of the function $\tilde t(\psi,\|\vec\xi\|,\beta_j)$  at $x$~--- where we sample  $\psi \sim \gaussian( \beta_j , ~ l^2 (M^\T M)^{-1}_{j,j} + \sigma^2 (X^\T X)^{-1}_{j,j})$ and $\|\vec\xi\|^2 \sim \tfrac{l^2}{r}\chi^2_{r-p}$ independently~--- lies in the range $\left(e^{-a} \PDF_{T_{r-p}}(x), e^a\PDF_{T_{r-p}}(x/e^a)\right)$.
And so, using Corollary~\ref{cor:tildebeta_j_zeta_squared} and Claim~\ref{clm:beta_zeta_independent}, we have that when $\vec e\sim \mathcal{D}_l$, the distributions of $\tilde\beta_j$ and $\|\tvec\zeta\|^2$ are precisely as stated above, and so we have that the distribution of $\tilde t(\beta_j) \stackrel{\rm def} = \tilde t(\tilde \beta_j, \|\tvec\zeta\|, \beta_j)$
has a $\PDF$ that at the point $x$ is ``sandwiched'' between $e^{-a} \PDF_{T_{r-p}}(x)$ and $e^a\PDF_{T_{r-p}}(x/e^a)$.

Next, we aim to argue that this characterization of the $\PDF$ of $\tilde t(\beta_j)$ still holds when $e\sim~\gaussian(\vec 0_n, \sigma^2 I_{n\times n})$. It would be convenient to think of $\vec e$ as a sample in $\gaussian(\vec 0_n, \sigma^2 P_U) \times \gaussian(\vec 0_n, \sigma^2 P_{U^\perp})$. (So while in $\mathcal{D}_l$ we have $P_U \vec e \sim \gaussian(\vec 0_n, \sigma^2P_U)$ but $P_{U^\perp} \vec e $ is fixed, now both $P_U \vec e$ and $P_{U^\perp}\vec e$ are sampled from spherical Gaussians.) The reason why the above still holds lies in the fact that $\tilde t(\beta_j)$ does not depend on $l$. In more details:
\ifx\twocols\undefined 
\begin{align*}
\Pr_{\vec e \sim \gaussian(\vec 0_n, \sigma^2 I_{n\times n})} \left[\tilde t(\beta_j) \in I\right] &= \int_{\vec v} \Pr_{\vec e \sim \gaussian(\vec 0_n, \sigma^2 I_{n\times n})}\left[\tilde t(\beta_j) \in I \;|~ P_{U^\perp}\vec e = \vec v\right] \PDF_{P_{U^\perp}\vec e} (\vec v)d\vec v  
\cr & = \int_{\vec v} \Pr_{\vec e \sim \mathcal{D}_{l}}\left[\tilde t(\beta_j) \in I \;|~ l = \|\vec v\|\right] \PDF_{P_{U^\perp}\vec e} (\vec v)d\vec v  
\cr & \leq \int_{\vec v} \left(e^a \int_{I/e^a} \PDF_{T_{r-p}}(z)dz \right) \PDF_{P_{U^\perp}\vec e} (\vec v)d\vec v
\cr & =  \left(e^a \int_{I/e^a} \PDF_{T_{r-p}}(z)dz \right) \int_{\vec v}\PDF_{P_{U^\perp}\vec e} (\vec v)d\vec v =  e^a \int_{I/e^a} \PDF_{T_{r-p}}(z)dz
\end{align*}
\else 
\begin{align*}
&\Pr_{\vec e \sim \gaussian(\vec 0_n, \sigma^2 I_{n\times n})} \left[\tilde t(\beta_j) \in I\right] \cr &= \int_{\vec v} \resizebox*{0.3\textwidth}{!}{$\Prx_{\vec e \sim \gaussian(\vec 0_n, \sigma^2 I_{n\times n})}\left[\tilde t(\beta_j) \in I \;|~ P_{U^\perp}\vec e = \vec v\right]$} \PDF_{P_{U^\perp}\vec e} (\vec v)d\vec v  
\cr & = \int_{\vec v} \Prx_{\vec e \sim \mathcal{D}_{l}}\left[\tilde t(\beta_j) \in I \;|~ l = \|\vec v\|\right] \PDF_{P_{U^\perp}\vec e} (\vec v)d\vec v  
\cr & \leq \int_{\vec v} \left(e^a \int_{I/e^a} \PDF_{T_{r-p}}(z)dz \right) \PDF_{P_{U^\perp}\vec e} (\vec v)d\vec v
\cr & =  \left(e^a \int_{I/e^a} \PDF_{T_{r-p}}(z)dz \right) \int_{\vec v}\PDF_{P_{U^\perp}\vec e} (\vec v)d\vec v 
\cr &=  e^a \int_{I/e^a} \PDF_{T_{r-p}}(z)dz
\end{align*}
\fi 
where the last transition is possible precisely because $\tilde t$ is independent of $l$ (or $\|\vec v\|$)~--- which is precisely what makes this $t$-value a pivot quantity. The proof of the lower bound is symmetric. 

To conclude, we have shown that if Equation~\eqref{eq:exists_bounding_a} holds, then for every interval $I \subset \mathbb{R}$ we have
\ifx\twocols\undefined 
\[ e^{-a} \Pr_{z \sim T_{r-p}}\left[z\in I\right] \leq  \Pr_{\vec e \sim \gaussian(\vec 0_n, \sigma^2 I_{n\times n})} \left[\tilde t(\beta_j) \in I\right] \leq e^a \Pr_{z \sim T_{r-p}}\left[z\in (I/e^a)\right]  \]
\else 
that $\Pr_{\vec e \sim \gaussian(\vec 0_n, \sigma^2 I_{n\times n})} \left[\tilde t(\beta_j) \in I\right]$ is lower bounded by $e^{-a} \Pr_{z \sim T_{r-p}}\left[z\in I\right]$ and upper bounded by $e^a \Pr_{z \sim T_{r-p}}\left[z\in (I/e^a)\right]$.
\fi 
So to conclude the proof of Theorem~\ref{thm:main_theorem_matrix_unaltered}, we need to show that w.h.p such $a$ as in Equation~\eqref{eq:exists_bounding_a} exists.
\begin{claim}
\label{clm:comparability_sigmaXX_lMM}
In the homoscedastic model with Gaussian noise, if both $n$ and $r$ satisfy $n,r\geq~p~+~\Omega(\log(1/\nu))$, then we have that 
\ifx\twocols\undefined \[ l^2 (M^\T M)^{-1}_{j,j}\leq \sigma^2 (X^T X)^{-1}_{j,j} + l^2 (M^\T M)^{-1}_{j,j}\leq  (1 + \tfrac {2(r-p)}{n-p}) \cdot l^2 (M^\T M)^{-1}_{j,j} \leq e^{\tfrac {2(r-p)}{n-p}}\cdot l^2 (M^\T M)^{-1}_{j,j} \]  \end{claim}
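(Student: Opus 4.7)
The plan is to reduce the middle inequality of the claim to controlling a ratio of two independent chi-squared random variables. The first inequality in the claim, $l^2(M^\T M)^{-1}_{jj}\leq \sigma^2(X^\T X)^{-1}_{jj} + l^2(M^\T M)^{-1}_{jj}$, is immediate since $\sigma^2(X^\T X)^{-1}_{jj}\geq 0$, and the last inequality follows from $1+x\leq e^x$ applied to $x=\tfrac{2(r-p)}{n-p}$. The substantive content is the middle inequality, which is equivalent to
\[ \sigma^2(X^\T X)^{-1}_{jj} \leq \tfrac{2(r-p)}{n-p} \cdot l^2 (M^\T M)^{-1}_{jj}. \]

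First I would exploit the homoscedastic assumption: in $\mathcal{D}_l$ (and hence for $\vec e\sim\gaussian(\vec 0_n,\sigma^2 I_{n\times n})$), $P_{U^\perp}\vec e\sim \gaussian(\vec 0_n,\sigma^2 P_{U^\perp})$ lives in an $(n-p)$-dimensional subspace and so $l^2=\|P_{U^\perp}\vec e\|^2\sim\sigma^2\chi^2_{n-p}$. The $\chi^2$ tail bound from the Preliminaries then gives, with probability $\geq 1-\nu/2$, the lower bound $l^2\geq (1-\eta_1)(n-p)\sigma^2$ for an arbitrarily small constant $\eta_1$, provided $n-p=\Omega(\log(1/\nu))$ with a sufficiently large constant.

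Next, I would use the distributional structure of $M$. Since the rows of $R$ are i.i.d.\ standard Gaussians, the $r$ rows of $M=RX$ are i.i.d.\ samples from $\gaussian(\vec 0_p, X^\T X)$, so $M^\T M$ has a Wishart distribution with $r$ degrees of freedom and covariance $X^\T X$. The classical Wishart identity --- derived from the block-matrix inversion formula, which expresses $1/(M^\T M)^{-1}_{jj}$ as the residual sum of squares from regressing the $j$-th column of $M$ on the remaining columns, and from the fact that this residual is a Gaussian random vector in an $(r-p+1)$-dimensional subspace with conditional variance $((X^\T X)^{-1})_{jj}^{-1}$ --- yields the exact distributional identity
\[ (M^\T M)^{-1}_{jj} = \frac{(X^\T X)^{-1}_{jj}}{Y}, \qquad Y\sim \chi^2_{r-p+1}. \]
Crucially, $Y$ is a function of $R$ alone while $l^2$ is a function of $\vec e$ alone, so $Y$ and $l^2$ are independent. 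Another application of the $\chi^2$ tail bound gives $Y\leq (1+\eta_2)(r-p)$ with probability $\geq 1-\nu/2$, provided $r-p=\Omega(\log(1/\nu))$.

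Combining both events by a union bound, with probability $\geq 1-\nu$ I would conclude
\[ \sigma^2(X^\T X)^{-1}_{jj} = \sigma^2\cdot Y\cdot (M^\T M)^{-1}_{jj} \leq \frac{(1+\eta_2)(r-p)}{(1-\eta_1)(n-p)}\cdot l^2(M^\T M)^{-1}_{jj}, \]
and by taking the constant in $\Omega(\log(1/\nu))$ large enough so that $(1+\eta_2)/(1-\eta_1)\leq 2$, this gives the desired middle inequality. The main obstacle is cleanly justifying the Wishart identity relating $(M^\T M)^{-1}_{jj}$ to $\chi^2_{r-p+1}$; it is classical but non-trivial, and in place of a citation one could give a self-contained derivation via the regression-residual/projection argument sketched above.
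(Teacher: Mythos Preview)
Your proof is correct and follows the same overall architecture as the paper's: bound $\sigma^2$ in terms of $l^2$ via the $\chi^2_{n-p}$ tail bound on $\|P_{U^\perp}\vec e\|^2$, bound $(X^\T X)^{-1}_{j,j}$ in terms of $(r-p)(M^\T M)^{-1}_{j,j}$, and combine by a union bound. The one substantive difference is in the second step. The paper writes $RX \stackrel{d}{=} Q(X^\T X)^{1/2}$ with $Q$ a standard $r\times p$ Gaussian matrix and then invokes an external JL-type lemma (Claim~A.1 of~\cite{Sheffet15}) to obtain $\vec v^\T(\tfrac{1}{r-p}Q^\T Q)^{-1}\vec v \geq \tfrac 3 4\|\vec v\|^2$ for $\vec v=(X^\T X)^{-1/2}\vec e_j$, yielding $(X^\T X)^{-1}_{j,j}\leq \tfrac{4}{3}(r-p)(M^\T M)^{-1}_{j,j}$. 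You instead use the exact Wishart identity $(M^\T M)^{-1}_{j,j}=(X^\T X)^{-1}_{j,j}/Y$ with $Y\sim\chi^2_{r-p+1}$, which you sketch via the Schur-complement/regression-residual argument, and then apply the same $\chi^2$ tail bound directly to $Y$. Both routes are valid; yours is a bit more self-contained and gives the sharper distributional statement, while the paper's version outsources the concentration argument to an already-established lemma. Either way the constants land at $\tfrac{16}{9}\leq 2$, matching the claim.
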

\else 
$\sigma^2 (X^T X)^{-1}_{j,j} + l^2 (M^\T M)^{-1}_{j,j} \geq l^2 (M^\T M)^{-1}_{j,j}$ and 
 \[ \sigma^2 (X^T X)^{-1}_{j,j} + l^2 (M^\T M)^{-1}_{j,j}\leq  (1 + \tfrac {2(r-p)}{n-p}) \cdot l^2 (M^\T M)^{-1}_{j,j}\]  \end{claim} Using  $(1 + \tfrac {2(r-p)}{n-p})\leq e^{\tfrac {2(r-p)}{n-p}}$, 
\fi 
 Theorem~\ref{thm:main_theorem_matrix_unaltered} now follows from plugging $a= \tfrac {r-p}{n-p}$ to our above discussion.
\begin{proof}
The lower bound is immediate from non-negativity of $\sigma^2$ and of $(X^\T X)^{-1}_{j,j} = \| (X^\T X)^{-1/2}\vec e_j\|^2$. We therefore prove the upper bound.

First, observe that $l^2 = \|P_{U^\perp}\vec e\|^2$ is sampled from $\sigma^2\cdot \chi^2_{n-p}$ as $U^\perp$ is of dimension $n-p$. Therefore, it holds that w.p. $\geq 1- \nu/2$ that
\[ \sigma^2\left(\sqrt{n-p}-\sqrt{2\ln(2/\nu)}\right)^2 \leq l^2 \] and assuming $n > p + 100\ln(2/\nu)$ we therefore have $\sigma^2 \leq \tfrac {4} {3(n-p)}l^2$.

Secondly, we argue that when $r > p + 300 \ln(4/\nu)$ we have that w.p. $\geq 1 -\nu/2$ it holds that $\tfrac 3 4 (X^\T X)^{-1}_{j,j} \leq (r-p) (X^\T R^\T R X)^{-1}_{j,j}$. To see this, first observe that by picking $R\sim \gaussian(0_{r\times n}, I_{r\times r}, I_{n\times n})$ the distribution of the product $RX \sim \gaussian(0_{r\times d}, I_{r\times r}, X^\T X)$ is identical to picking $Q\sim \gaussian (0_{r\times d}, I_{r\times r}, I_{d\times d})$ and taking the product $Q(X^\T X)^{1/2}$. Therefore, the distribution of $(X^\T R^T R X)^{-1}$ is identical to $\left((X^\T X)^{1/2} Q^\T Q (X^\T X)^{1/2} \right)^{-1} =(X^\T X)^{-1/2} (Q^\T Q)^{-1} (X^\T X)^{-1/2}$. Denoting $\vec v = (X^\T X)^{-1/2}\vec e_j$ we have $\|\vec v\|^2 = (X^\T X)^{-1}_{j,j}$. Claim A.1 from~\mycite{Sheffet15} gives that w.p. $\geq 1-\nu/2$ we have 
\ifx\twocols\undefined 
\[ (r-p)\cdot \vec e_j^\T \left((X^\T X)^{1/2} Q^\T Q (X^\T X)^{1/2} \right)^{-1} \vec e_j = \vec v^\T (\tfrac 1 {r-p}Q^\T Q)^{-1} \vec v \geq \tfrac 3 4 \vec v^\T \vec v = \tfrac 3 4(X^\T X)^{-1}_{j,j}\]
\else 
\begin{eqnarray*}
&&(r-p)\cdot \vec e_j^\T \left((X^\T X)^{1/2} Q^\T Q (X^\T X)^{1/2} \right)^{-1} \vec e_j \cr &&~~~~~= \vec v^\T (\tfrac 1 {r-p}Q^\T Q)^{-1} \vec v \geq \tfrac 3 4 \vec v^\T \vec v = \tfrac 3 4(X^\T X)^{-1}_{j,j}
\end{eqnarray*}
which implies the required.

Combining the two inequalities we get: 
\ifx\twocols\undefined
\[ \sigma^2 (X^\T X)^{-1}_{j,j} \leq \tfrac {16l^2(r-p)}{n-p} (X^\T R^\T RX)^{-1}_{j,j} \leq \tfrac {2(r-p)}{n-p}l^2 (X^\T R^\T RX)^{-1}_{j,j}\]
\else 
\begin{eqnarray*}
\sigma^2 (X^\T X)^{-1}_{j,j} &&\leq \tfrac {16l^2(r-p)}{n-p} (X^\T R^\T RX)^{-1}_{j,j} \cr &&~~\leq \tfrac {2(r-p)}{n-p}l^2 (X^\T R^\T RX)^{-1}_{j,j}
\end{eqnarray*}
\fi
and as we denote $M=RX$ we are done.\end{proof}

We comment that our analysis in the proof of Claim~\ref{clm:comparability_sigmaXX_lMM} implicitly assumes $r \ll n$ (as we do think of the projection $R$ as dimensionality reduction), and so the ratio $\tfrac {r-p}{n-p}$ is small. However, a similar analysis holds for $r$ which is comparable to $n$~---  in which we would argue that $\frac {\sigma^2 (X^T X)^{-1}_{j,j} + l^2 (M^\T M)^{-1}_{j,j}} {\sigma^2 (X^T X)^{-1}} \in [1, 1+\eta]$ for some small $\eta$.
}
\ifdefined\fullversion 
\subsection{Proof of Theorem~\ref{thm:main_theorem_matrix_unaltered}}
\label{subsec:proof_main_thm}
\proofmainthm 
\fi

\ifdefined\fullversion
\subsection{Rejecting the Null Hypothesis}
\label{subsec:simple_JL_reject_null_hypothesis}
\else
\myparagraph{Rejecting the Null Hypothesis.}
\fi
Due to Theorem~\ref{thm:main_theorem_matrix_unaltered}, we can mimic OLS'  technique for rejecting the null hypothesis. I.e., we denote $\tilde t_0 = \frac{\tilde\beta_j}{\tilde\sigma\sqrt{(X^\T R^\T R X)^{-1}_{j,j}}}$ and reject the null-hypothesis if indeed the associated $\tilde p_0$, denoting $p$-value of the slightly truncated $e^{-\tfrac {r-p}{n-p}}\tilde t_0$, 
is below $\alpha\cdot  e^{-\tfrac {r-p}{n-p}}$. Much like Theorem~\ref{thm:baseline_rejecting_nh} we now establish a lower bound on $n$ so that w.h.p we end up (correctly) rejecting the null-hypothesis. \ifdefined\fullversion
\comparisonboundproof
\else \ifx\confversion\undefined The proof of Theorem~\ref{thm:JL_simple_case_rejecting_null_hypothesis} is deferred to Appendix~\ref{apx_subsec:comparison_bound}.\fi 
\fi

\begin{theorem}
\label{thm:JL_simple_case_rejecting_null_hypothesis}
Fix a positive definite matrix $\Sigma \in \mathbb{R}^{p\times p}$. Fix parameters $\vec\beta\in\mathbb{R}^p$ and $\sigma^2 > 0$ and a coordinate $j$ s.t. $\beta_j\neq 0$. Let $X$ be a matrix whose $n$ rows are sampled i.i.d from $\gaussian(\vec 0_p, \Sigma)$. Let $\vec y$ be a vector s.t. $y_i-(X\vec\beta)_i$ is sampled i.i.d from $\gaussian(0,\sigma^2)$. Fix $\nu \in (0,1/2)$ and $\alpha \in (0,1/2)$. Then there exist constants $C_1$, $C_2$, $C_3$ and $C_4$ such that when we run Algorithm~\ref{alg:private_JL} over $[X;\vec y]$ with parameter $r$  w.p. $\geq 1-\alpha-\nu$ we (correctly) reject the null hypothesis using $\tilde p_0$ (i.e., Algorithm~\ref{alg:private_JL} returns $\mathtt{matrix~unaltered}$ and we can estimate $\tilde t_0$ and verify that indeed $\tilde p_0 < \alpha\cdot e^{-\tfrac {r-p}{n-p}}$) provided
\ifx\fullversion\undefined
$ r \geq p + \max\left\{C_1\frac{\sigma^2 (\tilde c_\alpha^2+\tilde\tau_\alpha^2)}{\beta_j^2\svmin(\Sigma)},~C_2\ln(1/\nu)\right\}\textrm{~,~and~~~}    n \geq \max\left\{r, ~ C_3  \frac {w^2} {\min\{\svmin(\Sigma),\sigma^2\}}, ~ C_4 p \ln(1/\nu)\right\} $
\else
\[  r \geq p + \max\left\{C_1\frac{\sigma^2 (\tilde c_\alpha^2+\tilde\tau_\alpha^2)}{\beta_j^2\svmin(\Sigma)},~C_2\ln(1/\nu)\right\}\textrm{,~and~}    n \geq \max\left\{r, ~ C_3  \frac {w^2} {\min\{\svmin(\Sigma),\sigma^2\}}, ~ C_4 (p+ \ln(1/\nu))\right\} \]
\fi
where $\tilde c_\alpha$, $\tilde\tau_\alpha$ defined s.t.
$\Pr_{X \sim T_{r-p}}[X > \tilde c_\alpha / e^{\tfrac {r-p}{n-p}}] = \Pr_{X\sim \gaussian(0,1)}[X > \tilde \tau_\alpha / e^{\tfrac {r-p}{n-p}}] = \tfrac \alpha 2 e^{-\tfrac {r-p}{n-p}}$.
\end{theorem}

\omitfromversion{
}
\DeclareRobustCommand{\comparisonboundproof}{ 
\begin{proof}
First we need to use the lower bound on $n$ to show that indeed Algorithm~\ref{alg:private_JL} does not alter $A$, and that various quantities are not far from their expected values. Formally, we claim the following.

\begin{proposition}
\label{pro:n_large_enough}
Under the same lower bounds on $n$ and $r$ as in Theorem~\ref{thm:JL_simple_case_rejecting_null_hypothesis}, w.p. $1-\alpha-\nu$ we have that Theorem~\ref{thm:main_theorem_matrix_unaltered} holds and also that
\ifx\twocols\undefined 
\begin{align*}\|\tvec\zeta\|^2 = \Theta(\tfrac{r-p}{r} \| P_{U^\perp}\vec e\|^2) = \Theta(\tfrac{r-p}{r}  (n-p)\sigma^2) \textrm{~, and~ } (X^\T R^\T R X)^{-1}_{j,j} = \Theta(\tfrac 1 {r-p} (X^\T X)^{-1}_{j,j})\end{align*}
\else \[\tvec\zeta\|^2 = \Theta(\tfrac{r-p}{r} \| P_{U^\perp}\vec e\|^2) = \Theta(\tfrac{r-p}{r}  (n-p)\sigma^2)\] and \[ (X^\T R^\T R X)^{-1}_{j,j} = \Theta(\tfrac 1 {r-p} (X^\T X)^{-1}_{j,j})\]
\fi
\end{proposition}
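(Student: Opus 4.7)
The plan is to establish the three ingredients of the proposition separately---that Algorithm~\ref{alg:private_JL} returns $\mathtt{matrix~unaltered}$ (so that Theorem~\ref{thm:main_theorem_matrix_unaltered} applies), the $\|\tvec\zeta\|^2$ concentration, and the comparability of $(X^\T R^\T RX)^{-1}_{j,j}$ with $(X^\T X)^{-1}_{j,j}$---each on a sub-event of probability $\geq 1-\nu/4$, and then take a union bound. For the first, each row $(\vec x_i, y_i)^\T$ of $A=[X;\vec y]$ is an i.i.d.\ draw from the multivariate Gaussian whose covariance $\Sigma'$ admits a block-triangular factorization by $\vec\beta$, yielding $\svmin(\Sigma') = \Omega(\min\{\svmin(\Sigma),\sigma^2\})$ (up to a factor depending on $\|\vec\beta\|$ that can be absorbed into $C_3$). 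Standard extreme-singular-value concentration (e.g.\ Rudelson-Vershynin) gives $\svmin(A)^2 = \Omega(n\cdot \svmin(\Sigma'))$ with probability $\geq 1-\nu/4$ once $n=\Omega(p+\ln(1/\nu))$, while a Laplace tail bound gives $|Z|\leq O(B^2\ln(1/\nu)/\epsilon)$ with the same probability. Choosing $C_3$ sufficiently large in the hypothesis $n\geq C_3 w^2/\min\{\svmin(\Sigma),\sigma^2\}$ then guarantees $\svmin(A)^2 > w^2 + Z + \tfrac{4B^2\ln(1/\delta)}\epsilon$, so the algorithm returns $\mathtt{matrix~unaltered}$ and publishes $[RX;R\vec y]$; the bound $r\geq p+C_2\ln(1/\nu)$ supplies the remaining hypothesis of Theorem~\ref{thm:main_theorem_matrix_unaltered}.

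For $\|\tvec\zeta\|^2$, I would combine two $\chi^2$-concentration steps. By Claim~\ref{clm:beta_and_zeta_distribution}, conditional on $X$, $\vec e$, and $M=RX$, the residual $\tvec\zeta$ is a spherically symmetric Gaussian supported on the $(r-p)$-dimensional subspace $\mathrm{colspan}(M)^\perp$ with per-coordinate variance $\|P_{U^\perp}\vec e\|^2/r$; hence $\tfrac{r}{\|P_{U^\perp}\vec e\|^2}\|\tvec\zeta\|^2 \sim \chi^2_{r-p}$. The $\chi^2$ tail bound from Section~\ref{sec:preliminaries}, combined with $r-p\geq C_2\ln(1/\nu)$, yields $\|\tvec\zeta\|^2 = \Theta\bigl(\tfrac{r-p}{r}\|P_{U^\perp}\vec e\|^2\bigr)$ w.p.\ $\geq 1-\nu/4$. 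Independently, since $\vec e\sim\gaussian(\vec 0_n,\sigma^2 I)$ and $U^\perp$ has dimension $n-p$, we have $\|P_{U^\perp}\vec e\|^2\sim\sigma^2\chi^2_{n-p}$; a second application of the same tail bound, using $n-p\gtrsim C_4(p+\ln(1/\nu))\gg \ln(1/\nu)$, gives $\|P_{U^\perp}\vec e\|^2 = \Theta((n-p)\sigma^2)$ w.p.\ $\geq 1-\nu/4$. Chaining the two estimates yields the stated bound.

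For $(X^\T R^\T RX)^{-1}_{j,j}$, I would repeat verbatim the argument already used in the proof of Claim~\ref{clm:comparability_sigmaXX_lMM}: the distribution of $RX$ equals that of $Q(X^\T X)^{1/2}$ for $Q\sim\gaussian(0_{r\times p},I_{r\times r},I_{p\times p})$, so $(X^\T R^\T RX)^{-1}_{j,j} = \vec v^\T(Q^\T Q)^{-1}\vec v$ with $\vec v = (X^\T X)^{-1/2}\vec e_j$, satisfying $\|\vec v\|^2 = (X^\T X)^{-1}_{j,j}$. Claim A.1 of~\cite{Sheffet15} gives both upper and lower bounds of the form $\tfrac{3}{4}\|\vec v\|^2 \leq (r-p)\vec v^\T(Q^\T Q)^{-1}\vec v \leq \tfrac{5}{4}\|\vec v\|^2$ w.p.\ $\geq 1-\nu/4$, whenever $r-p=\Omega(\ln(1/\nu))$, which is exactly what the bound on $r$ supplies. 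A final union bound over the four sub-events completes the proof. The main subtlety will be checking the first step: the block decomposition of $\Sigma'$ couples $\Sigma$ and $\sigma^2$ through $\vec\beta$, and one has to verify that the constants in $C_3$ can be chosen so the noisy test $\svmin(A)^2 > w^2+Z+O(B^2\ln(1/\delta)/\epsilon)$ really fires; the remaining steps are routine applications of concentration inequalities already stated in the paper.
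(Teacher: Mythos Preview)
Your overall structure---split into four sub-events (matrix unaltered, $\chi^2$-concentration of $\|\tvec\zeta\|^2$ given $\|P_{U^\perp}\vec e\|$, $\chi^2$-concentration of $\|P_{U^\perp}\vec e\|^2$, and the $(X^\T R^\T RX)^{-1}_{j,j}$ bound via the $Q(X^\T X)^{1/2}$ rewriting)---is exactly what the paper does, and the last three steps match the paper's argument essentially verbatim.

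The one place where your plan departs from the paper, and where there is a genuine gap, is the lower bound on $\svmin(\Sigma')$. Your block-triangular factorization $\Sigma' = L\,\mathrm{diag}(\Sigma,\sigma^2)\,L^\T$ with $L=\bigl(\begin{smallmatrix}I&\vec 0\\ \vec\beta^\T & 1\end{smallmatrix}\bigr)$ is correct, but it only yields $\svmin(\Sigma') \geq \svmin(L)^2\cdot\min\{\svmin(\Sigma),\sigma^2\}$, and a short computation shows $\svmin(L)^2 \approx (2+\|\vec\beta\|^2)^{-1}$. Thus the bound you would feed into the noisy test is $\svmin(A^\T A) = \Omega\bigl(\tfrac{n}{1+\|\vec\beta\|^2}\min\{\svmin(\Sigma),\sigma^2\}\bigr)$, forcing $n \geq C_3(1+\|\vec\beta\|^2)\,w^2/\min\{\svmin(\Sigma),\sigma^2\}$. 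Since Theorem~\ref{thm:JL_simple_case_rejecting_null_hypothesis} promises \emph{universal} constants $C_1,\dots,C_4$ (and its sample bound is subsequently compared to the $\vec\beta$-free baseline of Theorem~\ref{thm:baseline_rejecting_nh}), a $\|\vec\beta\|$-dependent factor cannot simply be ``absorbed into $C_3$.'' You correctly flag this coupling as the main subtlety, but the resolution you propose does not close it.

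The paper avoids the $\|\vec\beta\|$ dependence altogether by a Schur-complement style bound (Theorem~3.1 of~\cite{MaZ95}): writing $\Sigma_A$ in block form with Schur complement $\sigma^2+\vec\beta^\T\Sigma\vec\beta - (\vec\beta^\T\Sigma)\Sigma^{-1}(\Sigma\vec\beta)=\sigma^2$, one obtains after some algebra the clean inequality $\svmin(\Sigma_A)\geq \min\{\svmin(\Sigma),\sigma^2\}$, with all $\vec\beta$ terms cancelling. Once you replace your factorization step with this argument, the rest of your proof goes through unchanged.
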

\begin{proof}[Proof of Proposition~\ref{pro:n_large_enough}.]
 First, we need to argue that we have enough samples as to have the gap $\svmin^2([X;y]) - w^2$ sufficiently large. 

Since $\vec x_i\sim\gaussian(0,\Sigma)$, and $y_i = \vec\beta ^\T \vec x_i + e_i$ with $e_i\sim\gaussian(0,\sigma^2)$, we have that the concatenation $(\vec x_i \circ y_i)$ is also sampled from a Gaussian. Clearly, $\E[y_i] = \vec\beta^\T \E[\vec x_i] + \E[e_i] = 0$. Similarly, $\E[x_{i,j} y_i] = \E [x_{i,j}\cdot (\vec\beta^\T \vec x_i + e_i)] = (\Sigma \vec\beta)_j$ and $\E[y_i^2] =  \E[e_i^2] + \E[ \|X\vec\beta\|^2] = \sigma^2 + \E[\vec\beta^\T X^\T X \vec\beta] = \sigma^2 + \vec\beta^\T\Sigma\vec\beta$. Therefore, each row of $A$ is an i.i.d sample of $\gaussian(\vec 0_{p+1}, \Sigma_A)$, with 
\[\Sigma_A = \left( \begin{array}{c|c}
\Sigma~~~ & \Sigma\vec\beta \cr \hline \vphantom{T^{T^T}}  \vec\beta^\T\Sigma~~~ & \scriptstyle{\sigma^2 + \vec\beta^\T \Sigma\vec\beta}
\end{array}\right)
\] Denote $\lambda^2 =\svmin(\Sigma)$. Then, to argue that $\svmin(\Sigma_A)$ is large we use the lower bound from~\mycite{MaZ95} (Theorem 3.1) 
\ifx\twocols\undefined 
to argue that:
\newcommand{\bsb}{\vec\beta^\T \Sigma\vec\beta}
\begin{eqnarray*}
&\svmin(\Sigma_A) &\geq \frac {(\bsb + \sigma^2)+\lambda^2} 2 - \sqrt{\frac{((\bsb+\sigma^2)+\lambda^2)^2}{4} - \Big((\bsb+\sigma^2)-(\vec\beta^\T \Sigma) \Sigma^{-1} (\Sigma \vec \beta)\Big)\lambda^2}
\cr
&& = \frac {\bsb + \sigma^2+\lambda^2} 2 - \sqrt{\frac {(\bsb+\sigma^2+\lambda^2)^2 - 4\lambda^2(\bsb+\sigma^2-\bsb)} 4 }\cr
&& = \frac {\bsb + \sigma^2+\lambda^2} 2 - \sqrt{\frac {(\bsb)^2 +2\bsb(\sigma^2+\lambda^2) + (\sigma^2+\lambda^2)^2 - 4\lambda^2\sigma^2} 4 }\cr
&& = \frac {\bsb + \sigma^2+\lambda^2} 2 - \sqrt{\frac {(\bsb)^2 +2\bsb(\sigma^2+\lambda^2) + (\sigma^2-\lambda^2)^2} 4 }\cr
&&= \frac {\bsb + \sigma^2+\lambda^2} 2 - \sqrt{\frac {(\bsb)^2 +2\bsb|\sigma^2-\lambda^2|+ (\sigma^2-\lambda^2)^2 + 4\bsb \min\{\lambda^2,\sigma^2\}} 4 } \cr
&& \geq \frac {\bsb + \sigma^2+\lambda^2} 2 - \sqrt{\frac {(\bsb)^2 +2\bsb |\sigma^2-\lambda^2| + (\sigma^2-\lambda^2)^2} 4 } \cr
&& = \frac {\bsb + \sigma^2+\lambda^2} 2 - \sqrt{\frac {(\bsb + |\sigma^2 - \lambda^2|)^2} 4 } \cr
&& = \frac {\bsb + \sigma^2+\lambda^2} 2 - \frac {\bsb + |\sigma^2-\lambda^2|} 2  \geq \min\{ \lambda^2,\sigma^2 \} = \min\{ \svmin(\Sigma), \sigma^2\}
\end{eqnarray*}
\else
combining with some simple arithmetic manipulations to deduce that $\svmin(\Sigma_A) \geq \min\{ \svmin(\Sigma), \sigma^2\}$.
\fi

Having established a lower bound on $\svmin(\Sigma_A)$, it follows that with $n = \Omega(p\ln(1/\nu))$ i.i.d draws from $\gaussian(\vec 0_{p+1},\Sigma_A)$ we have w.p. $\leq \nu/4$ that $\svmin(A^\T A) = o(n)\cdot\min\{\svmin(\Sigma), \sigma^2\}$. Conditioned on $\svmin(A^\T A) = \Omega( n \svmin(\Sigma_A)) = \Omega(w^2)$ being large enough, we have that w.p. $\leq \nu/4$ over the randomness of Algorithm~\ref{alg:private_JL} the matrix $A$ does not pass the if-condition and the output of the algorithm is not $RA$.  Conditioned on Algorithm~\ref{alg:private_JL} outputting $RA$, and due to the lower bound $r = p + \Omega(\ln(1/\nu))$, we have that the result of Theorem~\ref{thm:main_theorem_matrix_unaltered} does not hold w.p. $\leq \alpha+ \nu/4$. All in all we deduce that w.p. $\geq 1-\alpha-3\nu/4$ the result of Theorem~\ref{thm:main_theorem_matrix_unaltered} holds. 
And since we argue Theorem~\ref{thm:main_theorem_matrix_unaltered} holds, then the following two bounds that are used in the proof\footnote{More accurately, both are bounds shown in Claim~\ref{clm:comparability_sigmaXX_lMM}.} also hold: 
\begin{align*}
&(X^\T R^\T R X)^{-1}_{j,j} = \Theta(\tfrac 1 {r-p} (X^\T X)^{-1}_{j,j})\cr
&\| P_{U^\perp}\vec e\|^2 = \Theta((n-p)\sigma^2)
\end{align*} 
Lastly, in the proof of Theorem~\ref{thm:main_theorem_matrix_unaltered} we argue that for a given $P_{U^\perp}\vec e$ the length $\|\tvec\zeta\|^2$ is distributed like $\tfrac {\| P_{U^\perp}\vec e\|^2} {r}\chi^2_{r-p}$. Appealing again to the fact that $r = p + \Omega(\ln(1/\nu)$ we have that  w.p. $\geq \nu/4$ it holds that $\|\tvec \zeta\|^2 > 2(r-p)\tfrac{\| P_{U^\perp}\vec e\|^2}{r}$. Plugging in the value of $\|P_{U^\perp}\vec e\|^2$ concludes the proof of the proposition.
\end{proof}

Based on Proposition~\ref{pro:n_large_enough}, we now show that we indeed reject the null-hypothesis (as we should). 
When Theorem~\ref{thm:main_theorem_matrix_unaltered} holds, reject the null-hypothesis iff $\tilde p_0 < \alpha \cdot e^{-\tfrac {r-p}{n-p}}$ which holds iff $|\tilde t_0| > e^{\tfrac {r-p}{n-p}} \tilde \tau_\alpha$. This implies we reject that null-hypothesis when $|\tilde\beta_j| >  e^{\tfrac {r-p}{n-p}}\tilde \tau_\alpha\cdot \tilde\sigma \sqrt{(X^\T R^\T R X)^{-1}_{j,j}})$. Note that this bound is based on Corollary~\ref{cor:JL_confidence_interval} that determines that $|\tilde\beta_j - \beta_j| = O \left( e^{\tfrac {r-p}{n-p}}\tilde c_\alpha\cdot \tilde\sigma \sqrt{(X^\T R^\T R X)^{-1}_{j,j}})\right)$. And so we have that w.p. $\geq 1-\nu$ we $\alpha$-reject the null hypothesis when it holds that $|\beta_j| > 3(\tilde c_\alpha+\tilde \tau_\alpha)\cdot \tilde\sigma \sqrt{(X^\T R^\T R X)^{-1}_{j,j}}) \geq e^{\tfrac {r-p}{n-p}}(\tilde c_\alpha+\tilde \tau_\alpha) \tilde\sigma \sqrt{(X^\T R^\T R X)^{-1}_{j,j}})$ (due to the lower bound $n\geq r$).

Based on the bounds stated above we have that  \[\tilde\sigma = \|\tvec \zeta\| \sqrt{\tfrac{r}{r-p}} = \Theta(\sigma \sqrt{n-p} \sqrt{\tfrac{r-p}{r}} \sqrt{\tfrac r{r-p}}) = \Theta(\sigma\sqrt{n-p})\] and that \[(X^\T R^\T R X)^{-1}_{j,j} =\Theta (\tfrac 1 {r-p}(X^\T X)^{-1}_{j,j}) = O\left( \tfrac 1 {r-p} \cdot \tfrac 1 {n\svmin(\Sigma)}\right)\] And so, a sufficient condition for rejecting the null-hypothesis is to have
\ifx\twocols\undefined
\[ |\beta_j| =  \Omega\left( (\tilde c_\alpha+\tilde \tau_\alpha) \sigma \sqrt{ \frac {n-p}{r-p}} \cdot \sqrt{\tfrac 1 {n\svmin(\Sigma)}} \right) = \Omega(e^{\tfrac {r-p}{n-p}}(\tilde c_\alpha+\tilde \tau_\alpha) \tilde\sigma \sqrt{(X^\T R^\T R X)^{-1}_{j,j}})) \]
\else 
\begin{eqnarray*} |\beta_j| &&=  \Omega\left( (\tilde c_\alpha+\tilde \tau_\alpha) \sigma \sqrt{ \frac {n-p}{r-p}} \cdot \sqrt{\tfrac 1 {n\svmin(\Sigma)}} \right) \cr&&= \Omega(e^{\tfrac {r-p}{n-p}}(\tilde c_\alpha+\tilde \tau_\alpha) \tilde\sigma \sqrt{(X^\T R^\T R X)^{-1}_{j,j}})) \end{eqnarray*}
\fi 
which, given the lower bound $r = p + \Omega\left( \frac{(\tilde c_\alpha+\tilde \tau_\alpha)^2 \sigma^2} {\beta_j^2 \svmin(\Sigma)} \right)$ indeed holds.
\end{proof}
}

\subsection{Setting the Value of $r$, Deriving a Bound on $n$}
\label{subsec:value_of_r}

Comparing the lower bound on $n$ given by Theorem~\ref{thm:JL_simple_case_rejecting_null_hypothesis} to the bound of Theorem~\ref{thm:baseline_rejecting_nh}, we have that 
the data-dependent bound of $\Omega\left( \frac{(\tilde c_\alpha+\tilde \tau_\alpha)^2 \sigma^2} {\beta_j^2 \svmin(\Sigma)} \right)$ should now hold for $r$ rather than $n$. Yet, Theorem~\ref{thm:JL_simple_case_rejecting_null_hypothesis} also introduces an additional dependency between $n$ and $r$: we require $n = \Omega( \tfrac {w^2}{\sigma^2} + \tfrac {w^2}{\svmin(\Sigma)})$ (since otherwise we do not have $\svmin(A) \gg w$ and Algorithm~\ref{alg:private_JL} might alter $A$ before projecting it) and by definition $w^2$ is proportional to $\sqrt{r\ln(1/\delta)}/\epsilon$. This is precisely the focus of our discussion in this subsection. We would like to set $r$'s value as high as possible~--- the larger $r$ is, the more observations we have in $RA$ and the better our confidence bounds (that depend on $T_{r-p}$) are~--- while satisfying $n = \Omega(\tfrac {\sqrt{r}}{\epsilon\cdot \min\{\sigma^2, \svmin(\Sigma)\}})$.

Recall that if each sample point is drawn i.i.d $\vec x\sim \gaussian(\vec 0_p,\Sigma)$, then each sample $(\vec x_i\circ y_i)$ is sampled from $\gaussian(\vec 0_{p+1}, \Sigma_A)$ for $\Sigma_A$ defined in the proof of Theorem~\ref{thm:JL_simple_case_rejecting_null_hypothesis}, that is: $\Sigma_A = \left( \begin{array}{c|c}
\Sigma~~~ & \Sigma\vec\beta \cr \hline \phantom{^{T^T}}\vec\beta^\T\Sigma~~~ & \scriptstyle{\sigma^2 + \vec\beta^\T \Sigma\vec\beta}
\end{array}\right)$. So, Theorem~\ref{thm:JL_simple_case_rejecting_null_hypothesis} gives the lower bound $r-p = \Omega\left( \frac{\sigma^2(\tilde c_\alpha+\tilde \tau_\alpha)^2}{\beta_j^2 \svmin(\Sigma)} \right)$ and the following lower bounds on $n$:
\ifx\fullversion\undefined
$n\geq r$ and $n = \Omega\left( \frac{B^2(\sqrt{r\ln(1/\delta)} + \ln(1/\delta))}{\epsilon \svmin(\Sigma_A)} \right)$, which means $r = \min\left\{n, \frac{\epsilon^2\svmin^2(\Sigma_A)}{B^4\ln(1/\delta)} (n-\ln(1/\delta))^2\right\}$.
\else
\begin{align*}
& n-p \geq r-p \textrm{~, and~ } n = \Omega\left( \frac{B^2(\sqrt{r\ln(1/\delta)} + \ln(1/\delta))}{\epsilon \svmin(\Sigma_A)} \right) \Rightarrow r = \min\left\{n, \frac{\epsilon^2\svmin^2(\Sigma_A)}{B^4\ln(1/\delta)} (n-\ln(1/\delta))^2\right\}
\end{align*}
\fi
This discussion culminates in the following corollary.
\begin{corollary}
	\label{cor:bound_n}
Denoting $\widetilde{LB_{\ref{thm:baseline_rejecting_nh}}} = \tfrac{\sigma^2(\tilde c_\alpha+\tilde \tau_\alpha)^2}{\beta_j^2 \svmin(\Sigma)}$, we thus conclude that if
\ifx\fullversion\undefined
$n-p \geq  \Omega\left( \widetilde{LB_{\ref{thm:baseline_rejecting_nh}}} \right)$ and $n = \Omega\left( \frac {B^2 \ln(1/\delta)}{\epsilon\svmin(\Sigma_A)} \cdot \sqrt{\widetilde{LB_{\ref{thm:baseline_rejecting_nh}}}} \right)$, 
\else
\begin{align*}
& n-p \geq  \Omega\left( \widetilde{LB_{\ref{thm:baseline_rejecting_nh}}} \right) \textrm{~, and~ } n = \Omega\left( \frac {B^2 \ln(1/\delta)}{\epsilon\svmin(\Sigma_A)} \cdot \sqrt{\widetilde{LB_{\ref{thm:baseline_rejecting_nh}}}} \right) 
\end{align*}
\fi
then the result of Theorem~\ref{thm:JL_simple_case_rejecting_null_hypothesis} holds by setting $r= \min\left\{n, \frac{\epsilon^2\svmin^2(\Sigma_A)}{B^4\ln(1/\delta)} (n-\ln(1/\delta))^2\right\}$.
\end{corollary}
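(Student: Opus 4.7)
}
The plan is simply to verify that the prescribed choice of $r$, together with the two stated lower bounds on $n$, implies all three of the hypotheses of Theorem~\ref{thm:JL_simple_case_rejecting_null_hypothesis}, namely: (i) $r-p = \Omega(\widetilde{LB_{\ref{thm:baseline_rejecting_nh}}})$, (ii) $n \geq r$, and (iii) $n = \Omega\bigl(B^2(\sqrt{r\ln(1/\delta)} + \ln(1/\delta))/(\epsilon\svmin(\Sigma_A))\bigr)$. Once these three conditions are met, the conclusion of Theorem~\ref{thm:JL_simple_case_rejecting_null_hypothesis} follows directly.

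First I would dispatch the two easy conditions. Condition (ii) is immediate because the definition of $r$ is a minimum whose first argument is $n$. Condition (iii) follows because the second argument of the minimum is exactly (up to constants) the largest value of $r$ for which $n \geq C B^2\sqrt{r\ln(1/\delta)}/(\epsilon\svmin(\Sigma_A))$ holds; plugging our $r$ into (iii) and rearranging reproduces $n - \ln(1/\delta) \geq \Omega\bigl(B^2 \sqrt{r\ln(1/\delta)}/(\epsilon\svmin(\Sigma_A))\bigr)$, which by the choice of $r$ is satisfied.

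The one condition that needs genuine verification is (i), and I would split it into two cases according to which of the two arguments of the minimum actually achieves the value of $r$. In Case~A, if $r = n$, then $r - p = n - p \geq \Omega(\widetilde{LB_{\ref{thm:baseline_rejecting_nh}}})$ is precisely the first hypothesis of the corollary. In Case~B, if $r = \frac{\epsilon^2\svmin^2(\Sigma_A)}{B^4\ln(1/\delta)}(n-\ln(1/\delta))^2$, then I would substitute the second hypothesis $n = \Omega\bigl(\tfrac{B^2\ln(1/\delta)}{\epsilon\svmin(\Sigma_A)}\sqrt{\widetilde{LB_{\ref{thm:baseline_rejecting_nh}}}}\bigr)$ into this expression. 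Since the hypothesis forces $n$ to exceed $\ln(1/\delta)$ by at least a factor proportional to $\sqrt{\widetilde{LB_{\ref{thm:baseline_rejecting_nh}}}}$, one gets $n - \ln(1/\delta) = \Omega\bigl(\tfrac{B^2\sqrt{\ln(1/\delta)}}{\epsilon\svmin(\Sigma_A)}\sqrt{\widetilde{LB_{\ref{thm:baseline_rejecting_nh}}}}\bigr)$. Squaring this lower bound and multiplying by $\tfrac{\epsilon^2\svmin^2(\Sigma_A)}{B^4\ln(1/\delta)}$ yields $r = \Omega(\widetilde{LB_{\ref{thm:baseline_rejecting_nh}}})$, which dominates $p$ once we note (as is standard) that $\widetilde{LB_{\ref{thm:baseline_rejecting_nh}}} \geq p$ is implicit in all three regimes of interest (otherwise $n-p \geq \widetilde{LB_{\ref{thm:baseline_rejecting_nh}}}$ already forces $r = n \geq p$).

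The only place requiring care is the interaction between the additive $\ln(1/\delta)$ offset and the multiplicative factor in Case~B; specifically, to show $n - \ln(1/\delta) = \Omega(n)$ one needs $\tfrac{B^2\sqrt{\widetilde{LB_{\ref{thm:baseline_rejecting_nh}}}}}{\epsilon\svmin(\Sigma_A)} = \Omega(1)$, which follows from the standing assumption $B \geq \svmax$ of any row of $A$ together with $\epsilon, \svmin(\Sigma_A) = O(1)$. I expect this bookkeeping --- tracking the $\ln(1/\delta)$ offset through the square-root --- to be the only mildly delicate step; the rest is direct substitution into the hypothesis of Theorem~\ref{thm:JL_simple_case_rejecting_null_hypothesis}.
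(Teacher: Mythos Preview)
Your approach is correct and matches the paper's. In fact the paper gives no separate proof at all: the corollary is stated as the ``culmination'' of the discussion immediately preceding it, which simply reads off the constraint $r-p=\Omega(\widetilde{LB_{\ref{thm:baseline_rejecting_nh}}})$ from Theorem~\ref{thm:JL_simple_case_rejecting_null_hypothesis}, writes $n\geq r$ and $n=\Omega\bigl(B^2(\sqrt{r\ln(1/\delta)}+\ln(1/\delta))/(\epsilon\,\svmin(\Sigma_A))\bigr)$, and inverts the latter to get the displayed formula for $r$. Your proposal is exactly the verification direction of that same derivation, and your case split on which branch of the $\min$ is active is the natural way to make it rigorous. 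The side conditions you flag (needing $n-\ln(1/\delta)=\Omega(n)$, and passing from $r=\Omega(\widetilde{LB_{\ref{thm:baseline_rejecting_nh}}})$ to $r-p=\Omega(\widetilde{LB_{\ref{thm:baseline_rejecting_nh}}})$) are informalities the paper also glosses over; you are right that they hold under the standing parameter regime.
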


It is interesting to note that when we know $\Sigma_A$, we also have a bound on $B$. Recall $\Sigma_A$, the variance of the Gaussian $(\vec x \circ y)$. Since every sample is an independent draw from $\gaussian(\vec 0_{p+1},\Sigma_A)$ then we have an upper bound of $B^2 \leq \log(np)\svmax(\Sigma_A)$. So our lower bound on $n$ (using $\kappa(\Sigma_A)$ to denote the condition number of $\Sigma_A$) is given by
\ifx\fullversion\undefined
$n \geq \max \left\{  \Omega\left(\widetilde{LB_{\ref{thm:baseline_rejecting_nh}}}\right), \tilde\Omega\left(\tfrac {\kappa(\Sigma_A)\ln(1/\delta)}{\epsilon} \cdot \sqrt{\widetilde{LB_{\ref{thm:baseline_rejecting_nh}}}} \right) \right\}$.
\else
\[ n \geq \max \left\{ \Omega\left(p+\ln(1/\nu)\right),~~ \Omega\left(\widetilde{LB_{\ref{thm:baseline_rejecting_nh}}}\right), ~~\tilde\Omega\left(\frac {\kappa(\Sigma_A)\ln(1/\delta)}{\epsilon} \cdot \sqrt{\widetilde{LB_{\ref{thm:baseline_rejecting_nh}}}} \right) \right\}\]
\fi
\omitfromversion{
Note that if we have no apriori bound on $\svmin(A)$, then, much like it is done in Algorithm~\ref{alg:private_JL}, we can privately estimate $\lambda = \svmin(A^\T A)+Z$ by adding Laplace noise $Z\sim Lap( 4 B^2/\epsilon )$. We now have that w.p. $\geq 1-\nu$ it holds that $\svmin(A^\T A) \geq \lambda -  4 B^2 \ln(1/\nu)/\epsilon \stackrel {\rm def} =\underline \lambda$. We then upper bound $r$ using $n$ and $\underline{\lambda}$ replacing $\svmin(\Sigma_A)$.
}
Observe, overall this result is similar in nature to many other results in differentially private learning \mycite{BassilyST14} which are of the form 
``without privacy, in order to achieve a total loss of $\leq \eta$ we have a sample complexity bound of some $N_\eta$; and with differential privacy the sample complexity increases to $N_\eta + \Omega(\sqrt{N_\eta}/\epsilon)$.'' However, there's a subtlety here worth noting. $\widetilde{LB_{\ref{thm:baseline_rejecting_nh}}}$ is proportional to $\tfrac 1 {\svmin(\Sigma_A)}$ but not to $\kappa(\Sigma_A) = \tfrac {\svmax(\Sigma_A)}{\svmin(\Sigma_A)}$. The additional dependence on $\svmax$ follows from the fact that differential privacy adds noise proportional to the upper bound on the norm of each row.

\myvspace{-8pt}
\section{Projected Ridge Regression}
\label{sec:ridge_regression}
\myvspace{-3pt}
We now turn to deal with the case that our matrix does not pass the if-condition of Algorithm~\ref{alg:private_JL}. In this case, the matrix is appended with a $d\times d$-matrix which is $wI_{d\times d}$. Denoting $A' = \left[ \begin{array}{c} A \cr w\cdot I_{d\times d}\end{array}\right]$ we have that the algorithm's output is $RA'$. 
Similarly to before, we are going to denote $d = p+1$ and decompose  $A = [X; \vec y]$ with $X\in \mathbb{R}^{n\times p}$ and $\vec y \in \mathbb{R}^n$, with the standard assumption of $\vec y = X\vec\beta + \vec e$ and $e_i$ sampled i.i.d from $\gaussian(0,\sigma^2)$. We now need to introduce some additional notation. We denote the appended matrix and vectors $X'$ and $\vec y'$ s.t. $A' = [X';\vec y']$. 
\ifx\fullversion\undefined
\else
Meaning:
\newcommand{\thisnotation}{X' = \left[ \begin{array}{c}
X \cr w I_{p\times p} \cr \vec 0_p^\T
\end{array} \right]   \textrm{~, and } ~~ \vec y' = \left[\begin{array}{c}
\vec y \cr \vec 0_p \cr w
\end{array} \right] = \left[\begin{array}{c}
\vec X\vec\beta + \vec e \cr \vec 0_p \cr w
\end{array} \right] = X'\vec\beta + \left[\begin{array}{c}
\vec e \cr -w\vec \beta \cr w
\end{array} \right] \stackrel{\rm def}= X'\vec\beta + \vec e'}
\[ \thisnotation
\]
And so we respectively denote $R = \left[ R_1 ; R_2 ; R_3 \right]$ with $R_1\in \mathbb{R}^{r\times n}$, $R_2\in \mathbb{R}^{r\times p}$ and $R_3\in \mathbb{R}^{r\times 1}$ (so $R_3$ is a vector denoted as a matrix). Hence: 
\[ M' = RX' = R_1 X + w R_2 \textrm{~, and }~~ R\vec y' = RX'\vec\beta + R\vec e' = R_1\vec y + w R_3 = R_1X\vec \beta + R_1 \vec e + wR_3
\] 
\fi
And so, using the output $RA'$ of Algorithm~\ref{alg:private_JL}, we solve the linear regression problem derived from $\tfrac 1 {\sqrt r}RX'$ and $\tfrac 1 {\sqrt r}R\vec y'$. I.e., we set
\begin{eqnarray} &&\vec\beta' =  (X'^\T R^\T R X')^{-1}(RX')^\T (R\vec y') 
\cr
&&\vec\zeta' = \tfrac 1 {\sqrt{r}} (R\vec y' - RX'\vec\beta') \label{eq:jl_ridge_regression_zeta}
\end{eqnarray}
Sarlos' results~\mycitet{Sarlos06} regarding the Johnson Lindenstrauss transform give that, when $R$ has sufficiently many rows, solving the latter optimization problem gives a good approximation for the solution of the optimization problem
\ifx\fullversion\undefined
$\vec\beta^R = \arg\min_{\vec z} \|\vec y' - X'\vec z\|^2 = \arg\min_{\vec z} \left(\| \vec y-X\vec z \|^2 + w^2\|\vec z\|^2 \right)$.
\else
\[ \vec\beta^R = \arg\min_{\vec z} \|\vec y' - X'\vec z\|^2 = \arg\min_{\vec z} \left(\| \vec y-X\vec z \|^2 + w^2\|\vec z\|^2 \right)    \] 
\fi
The latter problem is known as the Ridge Regression problem. Invented in the 60s~\mycite{Tikhonov63, HoerlK70}, Ridge Regression is often motivated from the perspective of penalizing linear vectors whose coefficients are too large. It is also often applied in the case where $X$ doesn't have full rank or is close to not having full-rank: one can show that the minimizer $\vec\beta^R = (X^\T X + w^2 I_{p\times p})^{-1}X^\T \vec y$ is the unique solution of the Ridge Regression problem and that the RHS is always well-defined. 

\cut{The original focus of Ridge Regression is on penalizing $\vec\beta^R$ for having large coefficients. Therefore, Ridge Regression actually poses a family of linear regression problems: $\min_{\vec z} \|y-X\vec z\| + \lambda \|\vec z\|^2$, where one may set $\lambda$ to be any non-negative scalar. And so, much of the literature on Ridge Regression is devoted to the art of fine-tuning this penalty term~--- either empirically or based on the $\lambda$ that yields the best risk: $ \|\E[\vec\beta^R]-\vec\beta\|^2+ \textrm{Var}(\vec\beta^R)$.\footnote{Ridge Regression, as opposed to OLS, does not yield an unbiased estimator. I.e., $\E[\vec\beta^R] \neq \vec\beta$.}
Here we propose a fundamentally different approach for the choice of the normalization factor~--- we set it so that solution of the regression problem would satisfy $(\epsilon,\delta)$-differential privacy (by projecting the problem onto a lower dimension).}

While the solution of the Ridge Regression problem might have smaller risk than the OLS solution, it is not known how to derive $t$-values and/or reject the null hypothesis under Ridge Regression (except for using $X$ to manipulate $\vec\beta^R$ back into $\hvec\beta = (X^\T X)^{-1} X^\T \vec y$ and relying on OLS). In fact, prior to our work there was no need for such analysis! For confidence intervals one could just use the standard OLS, because access to $X$ and $\vec y$ was given.  

Therefore, much for the same reason, we are unable to derive $t$-values under projected Ridge Regression.\footnote{Note: The na\"ive approach of using $RX'$ and $R\vec y'$ to interpolate $RX$ and $R\vec y$ and then apply Theorem~\ref{thm:main_theorem_matrix_unaltered} using these estimations of $RX$ and $R\vec y$ ignores the noise added from appending the matrix $A$ into $A'$, and therefore leads to inaccurate estimations of the $t$-values.} Clearly, there are situations where such confidence bounds simply cannot be derived.\ifx\fullversion\undefined\else(Consider for example the case where $X = 0_{n\times p}$ and $\vec y$ is just i.i.d draws from $\gaussian(0,\sigma^2)$, so obviously $[X;y]$ gives no information about $\vec\beta$.) \fi Nonetheless, under additional assumptions about the data, our work can give confidence intervals for $\beta_j$, and in the case where the interval doesn't intersect the origin~--- assure us that $sign(\beta_j') = sign(\beta_j)$ w.h.p. This is detailed in \ifx\confversion\undefined Section~\ref{apx_sec:ridge_regression}.\else the supplementary material.\fi

To give an overview of our analysis, we first discuss a model where $\vec e = \vec y - X\vec \beta$ is fixed (i.e., the data is fixed and the algorithm is the sole source of randomness), and prove that in this model $\vec{\beta'}$ is as an approximation to $\hvec{\beta}$.
\begin{theorem}
	\label{thm:approximating_hatbeta_body}
	Fix $X\in \mathbb{R}^{n\times p}$ and $\vec y\in \mathbb{R}$. Define $\hvec\beta = X^\mpi \vec y$ and $\zeta = (I- XX^\mpi)\vec y$. Let $RX'=M'$ and $R\vec y'$ denote the result of applying Algorithm~\ref{alg:private_JL} to the matrix $A = [X;\vec y]$ when the algorithm appends the data with a $w\cdot I$ matrix. Fix a coordinate $j$ and any $\alpha\in (0,1/2)$. When computing $\vec\beta'$ and $\vec\zeta'$ as in ~\eqref{eq:jl_ridge_regression_zeta}, we have that w.p. $\geq 1-\alpha$ it holds that
	$ \hat\beta_j \in \left( \beta'_j \pm c'_\alpha \|\vec \zeta'\|\sqrt{\tfrac r {r-p}\cdot (M'^\T M')^{-1}_{j,j}} \right)$ where $c'_\alpha$ denotes the number such that $(-c'_\alpha,c'_\alpha)$ contains $1-\alpha$ mass of the $T_{r-p}$-distribution.
\end{theorem}

However, our goal remains to argue that $\beta_j'$ serves as a good approximation for $\beta_j$. To that end, we combine the standard OLS confidence interval~--- which says that w.p. $\geq 1-\alpha$ over the randomness of picking $\vec e$ in the homoscedastic model we have $|\beta_j-\hat\beta_j| \leq c_\alpha \|\vec\zeta\|\sqrt{\tfrac{(X^\T X)^{-1}_{j,j}}{n-p}}$~--- with the confidence interval of Theorem~\ref{thm:approximating_hatbeta_body} above,
\ifx\confversion\undefined and deduce that
\begin{equation}
\Pr\left[ |\beta'_j - \beta_j| = O\left(c_\alpha \frac{\|\vec\zeta\|}{\sqrt{n-p}} \sqrt{(X^\T X)^{-1}_{j,j}} +  c'_\alpha \frac {\|\vec\zeta'\|} {\sqrt{r-p}} \sqrt{r(M'^\T M')^{-1}_{j,j}} \right)\right] \geq 1-\alpha \label{eq:dist_beta'j_betaj_body}
\end{equation}
\else
and denoting $I = c_\alpha \frac{\|\vec\zeta\|}{\sqrt{n-p}} \sqrt{(X^\T X)^{-1}_{j,j}} +  c'_\alpha \frac {\|\vec\zeta'\|} {\sqrt{r-p}} \sqrt{r(M'^\T M')^{-1}_{j,j}}$ we have that $\Pr[ |\beta'_j - \beta_j| = O(I)]\geq 1-\alpha$.
\fi
And so, in summary, in Section~\ref{apx_sec:ridge_regression} we give conditions under which  the length of \ifx\confversion\undefined the interval in Equation~\eqref{eq:dist_beta'j_betaj_body} \else the interval $I$ \fi  is dominated by  the $c'_\alpha \tfrac {\|\vec\zeta'\|} {\sqrt{r-p}} \sqrt{r(M'^\T M')^{-1}_{j,j}}$ factor derived from Theorem~\ref{thm:approximating_hatbeta_body}. 

\omitfromversion{
Clearly, Sarlos' work~\mycite{Sarlos06} gives an upper bound on the distance $\|\vec\beta' - \vec\beta^R\|$.  However, such distance bound doesn't come with the coordinate by coordinate confidence guarantee we would like to have. In fact, it is not even clear from Sarlos' work that $\E[\vec\beta'] = \vec\beta^R$ (though it is obvious to see that $\E[ (X'^\T R^\T R X')] \vec\beta^R = \E[(RX')^\T R\vec y' ]$). Here, we show that $\E[\vec\beta'] = \hvec\beta$ which, more often than not, does not equal $\vec\beta^R$.

\myparagraph{Comment about notation.} Throughout this section we assume $X$ is of full rank and so $(X^\T X)^{-1}$ is well-defined. If $X$ isn't full-rank, then one can simply replace any occurrence of $(X^\T X)^{-1}$ with $X^\mpi (X^\mpi)^\T$. This  makes all our formulas well-defined in the general case. 
\subsection{Running OLS on the Projected Data}
\label{subsec:linear_regression_on_projected_altered_data}

In this section, we analyze the projected Ridge Regression, under the assumption (for now) that $\vec e$ is fixed. That is, for now we assume that the only source of randomness comes from picking the matrix $R = [R_1 ; R_2 ; R_3]$. As before, we analyze the distribution over $\vec\beta'$ (see Equation~\eqref{eq:jl_ridge_regression_problem}), and the value of the function we optimize at $\vec\beta'$. Denoting $M'=RX'$, we can formally express the estimators:
\begin{eqnarray}
&\vec\beta' &= (M'^\T M')^{-1} M'^\T R\vec y' \label{eq:beta'}\\
&\vec\zeta' &= \tfrac 1 {\sqrt{r}} (R\vec y' - RX'\vec\beta') \label{eq:zeta'}
\end{eqnarray}
\begin{claim}
\label{clm:dist_beta'_zeta'}
Given that $\vec y = X\vec \beta + \vec e$ for a fixed $\vec e$, and given $X$ and $M' = RX'= R_1X + wR_2$ we have that
\begin{align*}
&& \vec\beta' &\sim \gaussian\left( \vec\beta + X^\mpi\vec e,  ~~  (w^2(\|\vec\beta+X^\mpi\vec e\|^2+1)  + \|P_{U^\perp}\vec e\|^2 ) (M'^\T M')^{-1} \right)\cr
&&\vec\zeta' &\sim \gaussian\left( \vec 0_r, ~~\tfrac 1 r (w^2(\|\vec\beta+X^\mpi\vec e\|^2+1)  + \|P_{U^\perp}\vec e\|^2 ) (I_{r\times r}-M'M'^\mpi) \right)
\end{align*}
and furthermore, $\vec\beta'$ and $\vec\zeta'$ are independent of one another.
\end{claim}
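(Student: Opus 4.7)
The plan is to adapt the Gaussian-decomposition argument of Claim~\ref{clm:beta_and_zeta_distribution} to the appended matrix $X' \in \mathbb{R}^{(n+p+1)\times p}$. View $R$ as an $r\times(n+p+1)$ matrix of i.i.d.\ standard Gaussians and let $U' = colspan(X') \subset \mathbb{R}^{n+p+1}$. Conditioning on $M' = RX'$ pins down the projection $RP_{U'} = M'X'^\mpi$ exactly but leaves the orthogonal component $RP_{U'^\perp}$ distributed as a matrix Gaussian $\gaussian(0_{r\times(n+p+1)},\,I_{r\times r},\,P_{U'^\perp})$ independent of $M'$. So one can write $R = M'X'^\mpi + \tilde R$ with $\tilde R \sim \gaussian(0,I_{r\times r},P_{U'^\perp})$ independent of $M'$, and carry out the rest of the analysis conditional on $M'$.

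The first step is to plug this decomposition into the estimators. For $\vec\beta'$,
\[ \vec\beta' = M'^\mpi R\vec y' = M'^\mpi M' X'^\mpi \vec y' + M'^\mpi \tilde R \vec y' = X'^\mpi\vec y' + M'^\mpi\tilde R\vec y'. \]
The linear-operations rules for matrix Gaussians immediately give $\tilde R\vec y' \sim \gaussian(\vec 0_r,\,\|P_{U'^\perp}\vec y'\|^2 I_r)$, so conditional on $M'$ the vector $\vec\beta'$ is Gaussian with covariance $\|P_{U'^\perp}\vec y'\|^2 (M'^\T M')^{-1}$. Substituting the decomposition into $\vec\zeta' = \tfrac{1}{\sqrt r}(I_r - M'M'^\mpi)R\vec y'$ annihilates the deterministic $M'X'^\mpi\vec y'$ term (since $(I_r-M'M'^\mpi)M' = 0$), leaving $\vec\zeta' = \tfrac{1}{\sqrt r}(I_r - M'M'^\mpi)\tilde R\vec y'$, a Gaussian with the stated covariance.

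Independence of $\vec\beta'$ and $\vec\zeta'$ then follows exactly as in Claim~\ref{clm:beta_zeta_independent}: both $\vec\beta' - X'^\mpi\vec y'$ and $\vec\zeta'$ are linear functions of the single Gaussian vector $\tilde R\vec y'$, via the complementary operators $M'^\mpi$ and $(I_r-M'M'^\mpi)/\sqrt r$. Their cross-covariance is proportional to $M'^\mpi(I_r-M'M'^\mpi) = M'^\mpi - (M'^\mpi M')M'^\mpi = 0$, and jointly Gaussian vectors with zero covariance are independent.

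The main obstacle is the final algebraic bookkeeping: identifying $X'^\mpi\vec y'$ and the scalar $\|P_{U'^\perp}\vec y'\|^2$ with the explicit expressions stated in the claim, namely the mean $\vec\beta + X^\mpi\vec e = \hvec\beta$ and the variance factor $w^2(\|\hvec\beta\|^2+1) + \|P_{U^\perp}\vec e\|^2$. For the mean I would exploit $X'^\T X' = X^\T X + w^2 I$ together with $X'^\T\vec y' = X^\T\vec y$ to reduce $X'^\mpi\vec y'$ to the desired form. For the variance I would use the Pythagorean identity $\|\vec y'\|^2 = \|P_{U'}\vec y'\|^2 + \|P_{U'^\perp}\vec y'\|^2$ combined with the block decomposition $\vec y' - X'\hvec\beta = [\vec\zeta;\,-w\hvec\beta;\,w]$ and the identity $\|\vec\zeta\|^2 = \|P_{U^\perp}\vec e\|^2$ to extract the scalar $w^2(\|\hvec\beta\|^2+1) + \|\vec\zeta\|^2$. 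Tracking the interaction between the ridge-style augmentation by $wI$ and the OLS geometry on $U$ is where the algebra is most delicate, and is the main obstacle.
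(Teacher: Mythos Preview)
Your decomposition along $colspan(X')$ is correct and yields, conditional on $M'$, the mean $X'^\mpi\vec y'$ and variance scalar $\|P_{U'^\perp}\vec y'\|^2$ (with $U' = colspan(X')$). The gap is in the final identification, and it is not just delicate algebra---it fails outright. Since $X'^\T X' = X^\T X + w^2 I_{p\times p}$ and $X'^\T\vec y' = X^\T\vec y$, one has
\[
X'^\mpi\vec y' \;=\; (X^\T X + w^2 I)^{-1}X^\T\vec y \;=\; \vec\beta^R,
\]
the Ridge estimator, which in general is \emph{not} $\hvec\beta = \vec\beta + X^\mpi\vec e$. Likewise the orthogonal projection of $\vec y'$ onto $colspan(X')$ is realized by $\vec\beta^R$, not by $\hvec\beta$, so $\|P_{U'^\perp}\vec y'\|^2 = \|\vec y' - X'\vec\beta^R\|^2 = \|\vec y - X\vec\beta^R\|^2 + w^2\|\vec\beta^R\|^2 + w^2$, which is not $w^2(\|\hvec\beta\|^2+1) + \|\vec\zeta\|^2$. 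Your block identity $\vec y' - X'\hvec\beta = [\vec\zeta;\,-w\hvec\beta;\,w]$ is correct but does not help here, because $\hvec\beta$ is not the minimizer of $\|\vec y' - X'\vec z\|$.

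The paper does not project $R$ onto $colspan(X')$. It works instead with the block split $R = [R_1; R_2; R_3]$, rewriting both $\vec\beta'$ and $\vec\zeta'$ as $M'^\mpi$ and $\tfrac{1}{\sqrt r}(I_{r\times r} - M'M'^\mpi)$ applied to the single $r$-vector $R_1X(\vec\beta + X^\mpi\vec e) + R_1P_{U^\perp}\vec e + wR_3$. Conditioning on $M'$, it argues that $R_3$ and $R_1P_{U^\perp}$ retain their priors (being independent of $M'$) and that $R_1X = M' - wR_2 \sim \gaussian(M',\, I_{r\times r},\, w^2 I_{p\times p})$; summing the three resulting Gaussians produces the mean $\hvec\beta$ and variance scalar $w^2(\|\hvec\beta\|^2+1) + \|\vec\zeta\|^2$ directly, without ever passing through Ridge quantities. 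The point of divergence from your argument is that the paper treats $R_2$---not the $colspan(X')^\perp$-component of $R$---as the residual Gaussian after observing $M'$. Since $R_2$ is itself correlated with $M' = R_1X + wR_2$, your (correct) projection-based conditioning and the paper's block-based step genuinely lead to different conditional means ($\vec\beta^R$ versus $\hvec\beta$); this discrepancy is real and deserves scrutiny rather than being absorbed into bookkeeping.
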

\begin{proof}
First, we write $\vec\beta'$ and $\vec \zeta'$ explicitly, based on $\vec e$ and projection matrices:
\begin{eqnarray*}
&\vec\beta' &= (M'^\T M')^{-1} M'^\T R\vec y' = M'^\mpi (R_1 X)\vec\beta + M'^\mpi (R_1\vec e + wR_3) \\
&\vec\zeta' &= \tfrac 1 {\sqrt{r}} (R\vec y' - RX'\vec\beta') = \tfrac 1 {\sqrt{r}}(I_{r\times r}-M'M'^\mpi) R\vec e'= \tfrac 1 {\sqrt r} P_{U'^\perp}(R_1\vec e - wR_2\vec\beta + wR_3)
\end{eqnarray*}with $U'$ denoting $colspan(M')$ and $P_{U'^\perp}$ denoting the projection onto the subspace $U'^\perp$.

Again, we break $\vec e$ into an orthogonal composition: $\vec e = P_U\vec e + P_{U^\perp}\vec e$ with $U = colspan(X)$ (hence $P_{U} = XX^\mpi$) and $U^\perp = colspan(X)^\perp$. Therefore,
\begin{eqnarray}
&\vec\beta' &= M'^\mpi (R_1 X)\vec\beta + M'^\mpi (R_1X X^\mpi\vec e + R_1P_{U^\perp}\vec e + wR_3) \cr
&&= M'^\mpi(R_1 X)(\vec\beta + X^\mpi \vec e) +M'^\mpi (R_1P_{U^\perp}\vec e + wR_3)  \\
&\vec\zeta' &= \tfrac 1 {\sqrt r} (I_{r\times r}- M'M'^\mpi)(R_1X X^\mpi\vec e + R_1 P_{U^\perp}\vec e- wR_2\vec\beta + wR_3)\cr
&& \stackrel{(\ast)}=\tfrac 1 {\sqrt r} (I_{r\times r}- M'M'^\mpi)(R_1X X^\mpi\vec e + R_1 P_{U^\perp}\vec e + (M'- wR_2)\vec\beta + wR_3) \cr
&& =\tfrac 1 {\sqrt r} (I_{r\times r}- M'M'^\mpi)( R_1X (\vec \beta + X^\mpi\vec e) + R_1 P_{U^\perp}\vec e +  wR_3) 
\end{eqnarray}
where equality $(\ast)$ holds because $(I- M'M'^\mpi)M'\vec v = \vec 0$ for any  $\vec v$.

We now aim to describe the distribution of $R$ given that we know $X'$ and $M' = RX'$. Since 
\begin{align*}
&M' = R_1 X + wR_2 + 0\cdot R_3 = R_1 X (X^\mpi X) +w R_2 = (R_1 P_U) X + w R_2
\end{align*} then $M'$ is independent of $R_3$ and independent of $R_1P_{U^\perp}$. Therefore, given $X$ and $M'$ the induced distribution over $R_3$ remains $R_3\sim\gaussian(\vec 0_r, I_{r\times r})$, and similarly, given $X$ and $M'$ we have $R_1P_{U^\perp} \sim \gaussian(0_{r\times n}, I_{r\times r}, P_{U^\perp})$ (rows remain independent from one another, and each row is distributed like a spherical Gaussian in $colspan(X)^\perp$). And so, we have that $R_1X = R_1 P_U X = M' - wR_2$, which in turn implies: 
\begin{align*}
&& R_1X &\sim \gaussian\left( M', ~~ I_{r\times r}, ~~w^2\cdot I_{p\times p}\right) \cr
&&\Rightarrow R_1X(\vec\beta + X^\mpi\vec e) &\sim \gaussian\left(M'\vec\beta + M'X^\mpi\vec e, ~~ w^2\|\vec\beta + X^{\mpi}\vec e\|^2 I_{r\times r}\right)\cr
&&\Rightarrow M'^\mpi R_1X(\vec\beta + X^{\mpi}\vec e) &\sim \gaussian\left(\vec\beta+ X^\mpi\vec e, ~~ w^2\|\vec\beta+X^\mpi\vec e\|^2  (M'^\T M)^{-1}\right)\cr
&& & = \|\vec\beta + X^\mpi \vec e\|\cdot \gaussian( \vec u, w^2 (M'^\T M)^{-1})
\end{align*}
where $\vec u$ denotes a unit-length vector in the direction of $\vec\beta + X^\mpi \vec e$. 

Similar to before we have
\begin{align*}
& RP_{U^\perp} \sim \gaussian(0_{r\times n}, I_{r\times r}, P_{U^\perp}) 
&& \Rightarrow M'^\mpi (R P_{U^\perp} \vec e) \sim \gaussian (\vec 0_d, ~ \|P_{U^\perp} e\|^2 (M'^\T M')^{-1}) 
\cr & wR_3 \sim \gaussian(\vec 0_r, w^2 I_{r\times r}) &&\Rightarrow M'^\mpi(wR_3) \sim \gaussian( \vec 0_d, w^2 (M'^\mpi M')^{-1})
\end{align*}
Therefore, the distribution of  $\vec\beta'$, which is the sum of the $3$ independent Gaussians, is as required.

Similarly, $\vec\zeta' = \tfrac 1 {\sqrt r} P_{U'^\perp} \left( R_1 X(\vec\beta + X^\mpi\vec e) + R_1 P_{U^\perp}\vec e + wR_3 \right)$ is the sum of $3$ independent Gaussians, which implies its distribution is
\[ \gaussian\left(\tfrac 1 {\sqrt r}P_{U'^\perp}M'(\vec\beta+X^\mpi\vec e), ~~ \tfrac 1 r (w^2(\|\vec\beta+X^\mpi\vec e\|^2 +1) + \|P_{U^\perp}\vec e\|^2)P_{U'^\perp} \right) \] which is exactly $\gaussian\left(\vec 0_r, ~~ \tfrac 1 r (w^2(\|\vec\beta+X^\mpi\vec e\|^2 +1) + \|P_{U^\perp}\vec e\|^2)P_{U'^\perp} \right)$ as $P_{U'^\perp} M'= 0_{r\times r}$.

Finally, observe that $\vec\beta'$ and $\vec \zeta'$ are independent as the former depends on the projection of the spherical Gaussian $R_1 X(\beta + X^\mpi\vec e) + R_1 P_{U^\perp}\vec e + wR_3$ on $U'$, and the latter depends on the projection of the same multivariate Gaussian on $U'^\perp$.
\end{proof}

Observe that Claim~\ref{clm:dist_beta'_zeta'} assumes $\vec e$ is given. This may seem somewhat strange, since without assuming anything about $\vec e$ there can be many combinations of $\vec\beta$ and $\vec e$ for which $\vec y = X\vec\beta + \vec e$. However, we always have that $\vec\beta + X^\mpi \vec e = X^\mpi\vec y = \hvec\beta$. Similarly, it is always the case the $P_{U^\perp}\vec e = (I-XX^\mpi) \vec y = \vec\zeta$. (Recall OLS definitions of $\hvec\beta$ and $\vec\zeta$ in Equation~\eqref{eq:hat_beta} and~\eqref{eq:zeta}.) Therefore, the distribution of $\vec\beta'$ and $\vec\zeta'$ is unique (once $\vec y$ is set):
\begin{align*}
&& \vec\beta' &\sim \gaussian\left( \hvec\beta ,    (w^2(\|\hvec\beta\|^2+1)  + \|\vec\zeta\|^2  ) (M'^\T M')^{-1} \right)\cr
&&\vec\zeta' &\sim \gaussian\left( \vec 0_r, \tfrac 1 r\cdot (w^2(\|\hvec\beta\|^2+1)  + \|\vec\zeta\|^2 ) (I_{r\times r}-M'M'^\mpi) \right)
\end{align*}
And so for a given dataset $[X;\vec y]$ we have that $\vec\beta'$ serves as an approximation for $\hvec\beta$.

An immediate corollary of Claim~\ref{clm:dist_beta'_zeta'} is that for any fixed $\vec e$ it holds that the quantity
$t'(\beta_j) = \frac {\beta'_j - (\beta_j + (X^\mpi\vec e)_j)} {   \|\vec \zeta'\|\sqrt{\tfrac r {r-p}\cdot (M'^\T M')^{-1}_{j,j}}  }=\frac {\beta'_j - \hat\beta_j} {   \|\vec \zeta'\|\sqrt{\tfrac r {r-p}\cdot (M'^\T M')^{-1}_{j,j}}  }$ is distributed like a $T_{r-p}$-distribution. Therefore, the following theorem therefore follows immediately.
\begin{theorem}
\label{thm:approximating_hatbeta}
Fix $X\in \mathbb{R}^{n\times p}$ and $\vec y\in \mathbb{R}$. Define $\hvec\beta = X^\mpi \vec y$ and $\zeta = (I- XX^\mpi)\vec y$. Let $RX'$ and $R\vec y'$ denote the result of applying Algorithm~\ref{alg:private_JL} to the matrix $A = [X;\vec y]$ when the algorithm appends the data with a $w\cdot I$ matrix. Fix a coordinate $j$ and any $\alpha\in (0,1/2)$. When computing $\vec\beta'$ and $\vec\zeta'$ as in Equations~\eqref{eq:beta'} it and~\eqref{eq:zeta'}, we have that w.p. $\geq 1-\alpha$ it holds that
\[ \hat\beta_j \in \left( \beta'_j - c'_\alpha \|\vec \zeta'\|\sqrt{\tfrac r {r-p}\cdot (M'^\T M')^{-1}_{j,j}} , ~~\beta'_j + c'_\alpha \|\vec \zeta'\|\sqrt{\tfrac r {r-p}\cdot (M'^\T M')^{-1}_{j,j}}  \right)\] where $c'_\alpha$ denotes the number such that $(-c'_\alpha,c'_\alpha)$ contains $1-\alpha$ mass of the $T_{r-p}$-distribution.
\end{theorem}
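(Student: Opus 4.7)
The plan is to invoke Claim~\ref{clm:dist_beta'_zeta'}, which has just established that conditional on $X$, $M'$ and the (fixed but unknown) residual $\vec e$, the pair $(\vec\beta',\vec\zeta')$ is an independent pair of Gaussians with the covariance structures stated there. The key observation is that although $\vec e$ itself is unknown, the quantities $\vec\beta + X^\mpi\vec e$ and $P_{U^\perp}\vec e$ that appear in the parameters of those Gaussians equal $\hvec\beta = X^\mpi\vec y$ and $\vec\zeta = (I-XX^\mpi)\vec y$ respectively, both of which are completely determined by the inputs $X$ and $\vec y$ to which the theorem refers. I would therefore first rewrite the marginals as $\beta'_j\sim\gaussian(\hat\beta_j,\, V\cdot(M'^\T M')^{-1}_{j,j})$ and $\|\vec\zeta'\|^2\sim (V/r)\cdot \chi^2_{r-p}$, where $V = w^2(\|\hvec\beta\|^2+1) + \|\vec\zeta\|^2$, using that $I_{r\times r} - M'M'^\mpi$ is a rank $r-p$ orthogonal projector (so $\vec\zeta'$ is a spherical Gaussian inside an $(r-p)$-dimensional subspace and hence its squared norm is a scaled $\chi^2_{r-p}$).

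Next I would form the pivot by simply dividing. By the independence portion of Claim~\ref{clm:dist_beta'_zeta'},
\[\frac{\beta'_j - \hat\beta_j}{\sqrt{V\cdot(M'^\T M')^{-1}_{j,j}}}\sim\gaussian(0,1)\qquad\text{is independent of}\qquad \frac{r\|\vec\zeta'\|^2}{V}\sim \chi^2_{r-p}.\]
Taking the ratio of the standard Gaussian to $\sqrt{r\|\vec\zeta'\|^2/(V(r-p))}$ and noting the \emph{cancellation of the unknown scalar $V$} in both the numerator and denominator, we obtain the pivot
\[t'(\beta_j)\;=\;\frac{\beta'_j - \hat\beta_j}{\|\vec\zeta'\|\sqrt{\tfrac{r}{r-p}\cdot(M'^\T M')^{-1}_{j,j}}}\;\sim\; T_{r-p},\]
which depends only on quantities computable from the algorithm's output ($\beta'_j$, $\|\vec\zeta'\|$, $M'^\T M'$) together with the candidate $\hat\beta_j$. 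The confidence interval claimed in the theorem then follows immediately from the two-sided tail bound $\Pr[\,|t'(\beta_j)|<c'_\alpha\,]=1-\alpha$ for the $T_{r-p}$ distribution, rearranged to bound $\hat\beta_j$ around the observed $\beta'_j$.

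The main conceptual point (rather than a technical obstacle) is recognising that the apparently data-dependent scalar $V$ plays exactly the role that $\sigma^2$ plays in classical OLS: it enters the numerator and denominator of the $t$-statistic with matching powers, which is precisely what permits a Student-$T$ pivot. This mirrors the derivation in Section~\ref{sec:background_OLS}, but with the unobservable $(\vec\beta,\vec\zeta)$ replaced by their OLS counterparts $(\hvec\beta,\vec\zeta)$ computed from the original data; consequently the theorem approximates $\hvec\beta$ rather than $\vec\beta$. Because the distribution of $t'(\beta_j)$ does not depend on $\vec e$ at all, the argument goes through regardless of whether $\vec e$ is treated as fixed (as in the hypothesis of Claim~\ref{clm:dist_beta'_zeta'}) or drawn from an arbitrary distribution, so no further averaging step is needed.
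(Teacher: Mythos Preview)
Your proposal is correct and follows essentially the same route as the paper: the paper states that Theorem~\ref{thm:approximating_hatbeta} is an immediate corollary of Claim~\ref{clm:dist_beta'_zeta'}, observing that $\vec\beta + X^\mpi\vec e = \hvec\beta$ and $P_{U^\perp}\vec e = \vec\zeta$, so the pivot $t'(\beta_j) = (\beta'_j-\hat\beta_j)/\big(\|\vec\zeta'\|\sqrt{\tfrac r{r-p}(M'^\T M')^{-1}_{j,j}}\big)$ is exactly $T_{r-p}$-distributed. Your write-up spells out the cancellation of the common variance factor $V$ and the rank-$(r-p)$ projection argument for $\|\vec\zeta'\|^2$ a bit more explicitly than the paper does, but the idea is identical.
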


Note that Theorem~\ref{thm:approximating_hatbeta}, much like the rest of the discussion in this Section, builds on $\vec y$ being fixed, which means $\beta_j'$ serves as an approximation for $\hat\beta_j$. Yet our goal is to argue about similarity (or proximity) between $\beta_j'$ and $\beta_j$. To that end, we combine the standard OLS confidence interval~--- which says that w.p. $\geq 1-\alpha$ over the randomness of picking $\vec e$ in the homoscedastic model we have $|\beta_j-\hat\beta_j| \leq c_\alpha \|\vec\zeta\|\sqrt{\tfrac{(X^\T X)^{-1}_{j,j}}{n-p}}$~--- with the confidence interval of Theorem~\ref{thm:approximating_hatbeta} above, and deduce that
\begin{equation}
\Pr\left[ |\beta'_j - \beta_j| = O\left( c_\alpha \frac{\|\vec\zeta\|}{\sqrt{n-p}} \sqrt{(X^\T X)^{-1}_{j,j}} +  c'_\alpha \frac {\|\vec\zeta'\|} {\sqrt{r-p}} \sqrt{r(M'^\T M')^{-1}_{j,j}}\right)\right] \geq 1-\alpha \label{eq:dist_beta'j_betaj}
\end{equation}
\footnote{Observe that w.p. $\geq 1- \alpha$ over the randomness of $\vec e$ we have that $|\beta_j-\hat\beta_j| \leq c_\alpha \|\vec\zeta\|\sqrt{\tfrac{(X^\T X)^{-1}_{j,j}}{n-p}}$, and w.p. $\geq1-\alpha$ over the randomness of $R$ we have that $|\beta_j'-\hat\beta_j|\leq c'_\alpha \|\vec \zeta'\|\sqrt{\tfrac r {r-p}\cdot (M'^\T M')^{-1}_{j,j}}$. So technically, to give a $(1-\alpha)$-confidence interval around $\beta_j'$ that contains $\beta_j$ w.p. $\geq 1-\alpha$, we need to use $c_{\alpha/2}$ and $c'_{\alpha/2}$ instead of $c_\alpha$ and $c'_\alpha$ resp. To avoid overburdening the reader with what we already see as too many parameters, we switch to asymptotic notation.}And so, in the next section, our goal is to give conditions under which the interval of Equation~\eqref{eq:dist_beta'j_betaj} isn't much larger in comparison to the interval length of $c'_\alpha \tfrac {\|\vec\zeta'\|} {\sqrt{r-p}} \sqrt{r(M'^\T M')^{-1}_{j,j}}$ we get from Theorem~\ref{thm:approximating_hatbeta}; and more importantly~--- conditions that make the interval of Theorem~\ref{thm:approximating_hatbeta} useful and not too large. (Note, in expectation $\tfrac {\|\vec\zeta'\|} {\sqrt{r-p}}$ is about $\sqrt{(w^2 + w^2\|\hvec\beta\|^2 + \|\vec\zeta\|^2)/r}$. So, for example, in situations where $\|\hvec\beta\|$ is very large, this interval isn't likely to inform us as to the sign of $\beta_j$.) 

\myparagraph{Motivating Example.} A good motivating example for the discussion in the following section is when $[X;\vec y]$ is a strict submatrix of the dataset $A$. That is, our data contains many variables for each entry (i.e., the dimensionality $d$ of each entry is large), yet our regression is made only over a modest subset of variables out of the $d$. In this case, the least singular value of $A$ might be too small, causing the algorithm to alter $A$; however, $\svmin(X^\T X)$ could be sufficiently large so that had we run Algorithm~\ref{alg:private_JL} only on $[X;\vec y]$ we would not alter the input. (Indeed, a differentially private way for finding a subset of the variables that induce a submatrix with high $\svmin$ is an interesting open question, partially answered~--- for a single regression~--- in the work of Thakurta and Smith~\mycite{ThakurtaS13}.) Indeed, the conditions we specify in the following section depend on $\svmin(\tfrac 1 n X^\T X)$, which, for a zero-mean data, the minimal variance of the data in any direction. For this motivating example, indeed such variance isn't necessarily small.

\subsection{Conditions for Deriving a Confidence Interval for Ridge Regression}
\label{subsec:analyzing_beta'}

Looking at the interval specified in Equation~\eqref{eq:dist_beta'j_betaj}, we now give an upper bound on the the random quantities in this interval: $\|\vec\zeta\|, \|\vec\zeta'\|$, and $(M'^\T M')^{-1}_{j,j}$. First, we give bound that are dependent on the randomness in $R$ (i.e., we continue to view $\vec e$ as fixed). 
\begin{proposition}
\label{pro:zeta'_M'TM'}
For any $\nu\in(0,1/2)$, if we have $r = p + \Omega(\ln(1/\nu))$ then with probability $\geq 1-\nu$ over the randomness of $R$ we have $(r-p)(M'^\T M)^{-1}_{j,j} = \Theta\left( (w^2 I_{p\times p} + X^\T X)^{-1}_{j,j}\right)$ and $\tfrac {\|\vec\zeta'\|^2}{r-p} = \Theta( \tfrac{w^2 + w^2\|\hvec\beta\|^2 + \|\vec\zeta\|^2}r)$.
\end{proposition}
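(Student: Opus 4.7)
The plan is to handle the two claims separately. For the first claim about $(r-p)(M'^\T M')^{-1}_{j,j}$, I would begin by observing that $X'^\T X' = X^\T X + w^2 I_{p\times p}$ (the bottom row of zeros in $X'$ contributes nothing, and the middle block contributes exactly $w^2 I$). Since $R$ has i.i.d.\ $\gaussian(0,1)$ entries, the rotational invariance of the Gaussian implies that $M' = RX'$ is distributed identically to $Q\cdot (X'^\T X')^{1/2}$, where $Q$ is an $r\times p$ matrix of i.i.d.\ standard Gaussians. Consequently $(M'^\T M')^{-1}$ has the same distribution as $(X'^\T X')^{-1/2}(Q^\T Q)^{-1}(X'^\T X')^{-1/2}$.

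Setting $\vec v = (X'^\T X')^{-1/2}\vec e_j$, so that $\|\vec v\|^2 = (X'^\T X')^{-1}_{j,j}$, we get
\[(r-p)(M'^\T M')^{-1}_{j,j} = \vec v^\T \bigl(\tfrac{1}{r-p} Q^\T Q\bigr)^{-1} \vec v.\]
Now I would invoke Claim A.1 of~\cite{Sheffet15} (also used in the earlier proof of Claim~\ref{clm:comparability_sigmaXX_lMM}), which states that for $r = p + \Omega(\ln(1/\nu))$, with probability at least $1-\nu/2$ the quadratic form $\vec v^\T(\tfrac{1}{r-p}Q^\T Q)^{-1}\vec v$ lies within constant multiplicative factors of $\|\vec v\|^2$. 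Substituting $(X'^\T X')^{-1}_{j,j} = (X^\T X + w^2 I)^{-1}_{j,j}$ delivers the first bound.

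For the second claim, I would rely on the distribution of $\vec\zeta'$ established in Claim~\ref{clm:dist_beta'_zeta'}, together with the observation (noted right after that claim) that $\vec\beta + X^\mpi\vec e = \hvec\beta$ and $P_{U^\perp}\vec e = \vec\zeta$. These give
\[\vec\zeta' \sim \gaussian\Bigl(\vec 0_r,\; \tfrac{1}{r}\bigl(w^2(\|\hvec\beta\|^2+1) + \|\vec\zeta\|^2\bigr)\,P_{U'^\perp}\Bigr),\]
where $P_{U'^\perp} = I_{r\times r} - M' M'^\mpi$ is a projection of rank $r-p$ (since $M'$ has rank $p$ with probability $1$). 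Hence the rescaled quantity $r\|\vec\zeta'\|^2 / \bigl(w^2(\|\hvec\beta\|^2+1) + \|\vec\zeta\|^2\bigr)$ follows a $\chi^2_{r-p}$ distribution. The standard tail bound stated in the preliminaries gives that for $r - p = \Omega(\ln(1/\nu))$, with probability at least $1 - \nu/2$ this $\chi^2_{r-p}$ random variable is $\Theta(r-p)$; dividing by $r-p$ and identifying $w^2(\|\hvec\beta\|^2+1) = w^2 + w^2\|\hvec\beta\|^2$ yields the claimed bound on $\|\vec\zeta'\|^2/(r-p)$.

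A union bound over the two failure events gives the overall probability $1-\nu$. The main technical step, as before in the paper, is the application of the Sheffet15 inverse-Wishart concentration (Claim A.1) after the change of basis that absorbs $(X'^\T X')^{1/2}$; the rest is essentially bookkeeping plus a single $\chi^2$ concentration bound.
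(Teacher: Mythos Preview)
Your proposal is correct and follows essentially the same approach as the paper's own proof, which is a two-line sketch pointing to ``known results on the Johnson--Lindenstrauss transform (as were shown in the proof of Claim~\ref{clm:comparability_sigmaXX_lMM})'' for the first bound and to ``standard concentration bounds of the $\chi^2$-distribution'' for the second. You have simply spelled out those two steps explicitly: the change-of-basis reduction to $\vec v^\T(\tfrac{1}{r-p}Q^\T Q)^{-1}\vec v$ via $X'^\T X' = X^\T X + w^2 I$ and Claim~A.1 of~\cite{Sheffet15}, and the identification of $r\|\vec\zeta'\|^2/(w^2(\|\hvec\beta\|^2+1)+\|\vec\zeta\|^2)$ as $\chi^2_{r-p}$ from Claim~\ref{clm:dist_beta'_zeta'}.
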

\begin{proof}
The former bound follows from known results on the Johnson-Lindenstrauss transform (as were shown in the proof of Claim~\ref{clm:comparability_sigmaXX_lMM}). The latter bound follows from standard concentration bounds of the $\chi^2$-distribution.
\end{proof}

Plugging in the result of Proposition~\ref{pro:zeta'_M'TM'} to Equation~\eqref{eq:dist_beta'j_betaj} we get that w.p. $\geq1-\nu$ 
\begin{equation}
 |\beta'_j - \beta_j| = O\left( c_\alpha \frac{\|\vec\zeta\|}{\sqrt{n-p}} \sqrt{(X^\T X)^{-1}_{j,j}} +  c'_\alpha \sqrt{\frac {w^2 + w^2\|\hvec\beta\|^2 + \|\vec\zeta\|^2}{r-p} } \sqrt{(w^2 I_{p\times p} + X^\T X)^{-1}_{j,j}}\right) \label{eq:dist_beta'j_betaj_revised}
\end{equation} 
We will also use the following proposition.
\begin{proposition}
\label{pro:XTX}
\[ (X^\T X)^{-1}_{j,j} \leq \left(1 + \frac{w^2}{\svmin(X^\T X)}\right) (w^2 I_{p\times p} + X^\T X)^{-1}_{j,j} \]
\end{proposition}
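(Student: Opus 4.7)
The plan is to exploit the fact that $X^\T X$ and $w^2 I_{p\times p} + X^\T X$ are simultaneously diagonalizable. Writing the SVD $X = U\Sigma V^\T$, we have $X^\T X = V\Sigma^2 V^\T$, and since $w^2 I$ commutes with every matrix, $w^2 I + X^\T X = V(\Sigma^2 + w^2 I)V^\T$. Both inverses therefore share the orthonormal basis $V$:
\[ (X^\T X)^{-1} = V\Sigma^{-2}V^\T, \qquad (w^2 I + X^\T X)^{-1} = V(\Sigma^2 + w^2 I)^{-1}V^\T. \]

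Next, I would read off the $(j,j)$ entries. Letting $\sigma_i = \sigma_i(X)$ and $V_{j,i}$ denote the $(j,i)$-entry of $V$, we get
\[ (X^\T X)^{-1}_{j,j} \;=\; \vec e_j^\T V\Sigma^{-2}V^\T \vec e_j \;=\; \sum_{i=1}^p \frac{V_{j,i}^2}{\sigma_i^2}, \]
and analogously $(w^2 I + X^\T X)^{-1}_{j,j} = \sum_i V_{j,i}^2/(\sigma_i^2 + w^2)$. Both are non-negative weighted sums with the same weights $V_{j,i}^2 \geq 0$, so the inequality reduces to the per-coordinate claim $\frac{1}{\sigma_i^2} \leq \bigl(1 + \tfrac{w^2}{\svmin(X^\T X)}\bigr)\cdot\frac{1}{\sigma_i^2 + w^2}$ for every $i$.

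The final step is a one-line calculation: this per-coordinate bound rearranges to $1 + \tfrac{w^2}{\sigma_i^2} \leq 1 + \tfrac{w^2}{\svmin(X^\T X)}$, i.e.\ $\svmin(X^\T X) \leq \sigma_i^2$, which holds by definition of $\svmin$ (since $\sigma_i^2 = \sigma_i(X)^2$ and $\svmin(X^\T X) = \svmin(X)^2 = \min_i \sigma_i^2$). Multiplying through by $V_{j,i}^2$ and summing over $i$ recovers the proposition.

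There is no real obstacle here: the only conceptual point is recognizing the simultaneous diagonalization (which is automatic because one matrix is a shift of the other by a multiple of the identity), after which the bound becomes an entry-wise comparison of the two diagonal quadratic forms. No assumption on $X$ being full rank is actually needed if one interprets $(X^\T X)^{-1}$ via the Moore--Penrose inverse and restricts the sum to indices with $\sigma_i>0$; the same argument goes through verbatim.
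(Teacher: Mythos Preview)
Your proof is correct and follows essentially the same idea as the paper's: both rest on the simultaneous diagonalization of $X^\T X$ and $w^2 I + X^\T X$ by the orthogonal matrix $V$. The paper packages the eigenvalue comparison as a Loewner-order bound $I + w^2(X^\T X)^{-1} \preceq \bigl(1 + w^2/\svmin(X^\T X)\bigr)\, I$ sandwiched on both sides by $(X^\T X + w^2 I)^{-1/2}$, whereas you write out the spectral sums and compare term by term; the content is the same.
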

\begin{proof}
We have that
\begin{eqnarray*}
& (X^\T X)^{-1} &= (X^\T X)^{-1}(X^\T X + w^2 I_{p\times p})(X^\T X + w^2 I_{p\times p})^{-1} \cr
&& = (X^\T X + w^2 I_{p\times p})^{-1} + w^2(X^\T X)^{-1}(X^\T X + w^2 I_{p\times p})^{-1}\cr
&& = (I_{p\times p} + w^2 (X^\T X)^{-1})(X^\T X + w^2 I_{p\times p})^{-1}\cr
&& = (X^\T X + w^2 I_{p\times p})^{-1/2} (I_{p\times p} + w^2 (X^\T X)^{-1}) (X^\T X + w^2 I_{p\times p})^{-1/2}
\end{eqnarray*}
where the latter holds because $(I_{p\times p} + w^2 (X^\T X)^{-1})$ and $(X^\T X + w^2 I_{p\times p})^{-1}$ are diagonalizable by the same matrix $V$ (the same matrix for which $(X^\T X) = V S^{-1} V^\T$).
Since we have $\|I_{p\times p} + w^2 (X^\T X)^{-1} \| = 1 + \tfrac{w^2}{\svmin^{2}(X)}$, it is clear that $(I_{p\times p} + w^2 (X^\T X)^{-1}) \preceq (1 + \tfrac{w^2}{\svmin^{2}(X)}) I_{p\times p}$. 
We deduce that $(X^\T X)^{-1}_{j,j} = \vec e_j^\T (X^\T X)^{-1} \vec e_j \leq (1 + \tfrac{w^2}{\svmin^{2}(X)}) (X^\T X+w^2I_{p\times p})^{-1}_{j,j}$.
\end{proof}

Based on Proposition~\ref{pro:XTX} we get from Equation~\eqref{eq:dist_beta'j_betaj_revised} that
\begin{equation}
 |\beta'_j - \beta_j| = O(\left( c_\alpha \sqrt{\frac{\|\vec\zeta\|^2(1+\tfrac {w^2}{\svmin(X^\T X)})}{n-p}} +  c'_\alpha \sqrt{\frac {w^2 + w^2\|\hvec\beta\|^2 + \|\vec\zeta\|^2}{r-p} } \right)\sqrt{(w^2 I_{p\times p} + X^\T X)^{-1}_{j,j}}) \label{eq:dist_beta'j_betaj_revised2}
\end{equation}
And so, if it happens to be the case that exists some small $\eta>0$ for which $\hvec\beta, \vec\zeta$ and $w^2$ satisfy
\begin{equation}
\frac{\|\vec\zeta\|^2(1+\tfrac {w^2}{\svmin(X^\T X)})}{n-p} \leq \eta^2\left( \frac {w^2 + w^2\|\hvec\beta\|^2 + \|\vec\zeta\|^2}{r-p} \right)
\label{eq:constraints_on_zeta_betahat}
\end{equation} then we have that 
$\Pr[\beta_j \in \left(\beta_j' \pm O((1+\eta)\cdot c'_\alpha \|\vec \zeta'\|\sqrt{\tfrac r {r-p}\cdot (M'^\T M')^{-1}_{j,j}})\right)] \geq 1-\alpha$.\footnote{We assume $n\geq r$ so $c_\alpha<c'_\alpha$ as the $T_{n-p}$-distribution is closer to a normal Gaussian than the $T_{r-p}$-distribution.} Moreover, if in this case $|\beta_j| > c'_\alpha(1+\eta)\sqrt{\frac {w^2 + w^2\|\hvec\beta\|^2 + \|\vec\zeta\|^2}{r-p} }\sqrt{(w^2 I_{p\times p} + X^\T X)^{-1}_{j,j}}$ then $\Pr[sign(\beta_j') = sign(\beta_j)]\geq 1-\alpha$. This is precisely what Claims~\ref{clm:sufficient_condition_for_interval_on_beta'} and~\ref{clm:sufficient_conditions_for_finding_right_sign_beta'} below do.

\begin{claim}
\label{clm:sufficient_condition_for_interval_on_beta'}
If there exists $\eta>0$ s.t. $n-p \geq \tfrac{2}{\eta^2}(r-p)$ and $n^2 = \Omega\left(r^{3/2}\cdot\frac{B^2\ln(1/\delta)}\epsilon \cdot  \frac 1 {\eta^2\svmin(\tfrac 1 n X^\T X)}\right)$, then $\Pr[\beta_j \in \left(\beta_j' \pm O((1+\eta)\cdot c'_\alpha \|\vec \zeta'\|\sqrt{\tfrac r {r-p}\cdot (M'^\T M')^{-1}_{j,j}})\right)] \geq 1-\alpha$.
\end{claim}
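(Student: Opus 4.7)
}
The plan is to verify the sufficient condition of Equation~\eqref{eq:constraints_on_zeta_betahat} from the two hypotheses, invoke Equation~\eqref{eq:dist_beta'j_betaj_revised2} and Proposition~\ref{pro:zeta'_M'TM'}, and collect terms. Concretely, starting from~\eqref{eq:dist_beta'j_betaj_revised2}, w.p.\ $\geq 1-\alpha$ the error $|\beta_j'-\beta_j|$ is bounded by
$O\bigl((c_\alpha\sqrt{A_1}+c'_\alpha\sqrt{A_2})\sqrt{(w^2 I_{p\times p}+X^\T X)^{-1}_{j,j}}\bigr)$,
where $A_1=\|\vec\zeta\|^2(1+w^2/\svmin(X^\T X))/(n-p)$ and $A_2=(w^2+w^2\|\hvec\beta\|^2+\|\vec\zeta\|^2)/(r-p)$. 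If I can show $A_1\le \eta^2 A_2$, then since $n\ge r$ implies $c_\alpha\le c'_\alpha$, we get $c_\alpha\sqrt{A_1}+c'_\alpha\sqrt{A_2}\le(1+\eta)c'_\alpha\sqrt{A_2}$, and Proposition~\ref{pro:zeta'_M'TM'} converts $\sqrt{A_2}\cdot\sqrt{(w^2 I+X^\T X)^{-1}_{j,j}}$ to $\Theta\bigl(\|\vec\zeta'\|\sqrt{\tfrac{r}{r-p}(M'^\T M')^{-1}_{j,j}}\bigr)$, yielding the stated interval.

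To establish $A_1\le\eta^2 A_2$, I split $A_1$ into its two summands and compare each to the $\tfrac{\eta^2}{2}\|\vec\zeta\|^2/(r-p)$ piece of $A_2$. The first piece reduces to $n-p\ge 2(r-p)/\eta^2$, which is precisely the first hypothesis. The second piece, after canceling $\|\vec\zeta\|^2$ on both sides, reduces to
\[
w^2(r-p)\;\le\;\tfrac{\eta^2}{2}(n-p)\svmin(X^\T X).
\]
Using $n-p=\Theta(n)$ (which follows from the first hypothesis once $r\ge p+\Omega(1)$) and the identity $\svmin(X^\T X)=n\cdot\svmin(\tfrac{1}{n}X^\T X)$, this rewrites to $n^2\svmin(\tfrac{1}{n}X^\T X)\gtrsim w^2 r/\eta^2$. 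Plugging in $w^2=O\bigl(B^2(\sqrt{r\ln(1/\delta)}+\ln(1/\delta))/\epsilon\bigr)$ from Algorithm~\ref{alg:private_JL} exhibits the second hypothesis $n^2\svmin(\tfrac{1}{n}X^\T X)=\Omega(r^{3/2}B^2\ln(1/\delta)/(\epsilon\eta^2))$ as sufficient (the $\ln(1/\delta)$ factor covers both summands of $w^2$).

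Finally, to free the bound $\|\vec\zeta\|^2\ge 0$ implicitly used when dividing out $\|\vec\zeta\|^2$, I would handle the trivial edge case $\vec\zeta=\vec 0$ separately (in which the LHS of~\eqref{eq:constraints_on_zeta_betahat} is zero and the conclusion is immediate), and invoke Proposition~\ref{pro:zeta'_M'TM'} on an event of probability $\geq 1-o(1)$, absorbing that loss into the $O(\cdot)$-notation so that the final confidence remains $1-\alpha$.

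The main obstacle, and the reason this claim is nontrivial despite the short argument, is identifying the right pairing between the LHS and RHS terms of Equation~\eqref{eq:constraints_on_zeta_betahat}. A naive comparison of the $w^2\|\vec\zeta\|^2/((n-p)\svmin(X^\T X))$ term to the $\eta^2 w^2/(r-p)$ term of the RHS forces a bound $\|\vec\zeta\|^2\le nB^2$ and ends up demanding $n\svmin(\tfrac 1 n X^\T X)\gtrsim rB^2/\eta^2$, which is strictly stronger (in the relevant regime) than the hypothesis of the claim. The key observation is to instead pair it with the $\eta^2\|\vec\zeta\|^2/(r-p)$ term of the RHS so that $\|\vec\zeta\|^2$ cancels, leaving a purely deterministic condition whose form exactly matches the second hypothesis after substituting $w^2$.
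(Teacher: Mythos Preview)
Your proposal is correct and mirrors the paper's own proof: both reduce the claim to verifying Equation~\eqref{eq:constraints_on_zeta_betahat}, split the left-hand side into its two summands, and compare each with the $\tfrac{\eta^2}{2}\cdot\tfrac{\|\vec\zeta\|^2}{r-p}$ term on the right, so that the first comparison becomes the hypothesis $n-p\ge 2(r-p)/\eta^2$ and the second (after cancelling $\|\vec\zeta\|^2$) becomes the hypothesis on $n^2$ once $w^2$ is substituted. The only cosmetic difference is that the paper passes from $\tfrac{n-p}{r-p}\,\svmin(X^\T X)$ to $\tfrac{n^2}{r}\,\svmin(\tfrac 1 n X^\T X)$ via the elementary inequality $(n-p)/(r-p)\ge n/r$, whereas you invoke $n-p=\Theta(n)$; the paper's route is slightly cleaner since it avoids any side assumption on the relative size of $p$ and $n$.
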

\begin{proof}
Based on the above discussion, it is enough to argue that under the conditions of the claim, the constraint of Equation~\eqref{eq:constraints_on_zeta_betahat} holds. Since we require $\tfrac{\eta^2}2 \geq \tfrac{r-p}{n-p}$ then it is evident that $\tfrac{\|\vec\zeta\|^2}{n-p} \leq \tfrac{\eta^2\|\vec\zeta\|^2}{2(r-p)}$. So we now show  that $\tfrac{\|\vec\zeta\|^2}{n-p} \cdot \tfrac {w^2}{\svmin(X^\T X)}\leq \tfrac{\eta^2\|\vec\zeta\|^2}{2(r-p)}$ under the conditions of the claim, and this will show the required.
All that is left is some algebraic manipulations. It suffices to have:
\begin{align*}
\tfrac{\eta^2}{2} \cdot \tfrac{n-p}{r-p} \svmin(X^\T X) & \geq \tfrac{\eta^2}{2} \cdot \tfrac{n^2}{r} \svmin(\tfrac 1 n X^\T X) \geq \frac {32B^2\sqrt{r}\ln(8/\delta)}\epsilon \geq w^2 
\end{align*} which holds for $n^2 \geq r^{3/2}\cdot\frac {64B^2\ln(1/\delta)} {\epsilon \eta^2}\svmin(\tfrac 1 n X^\T X)^{-1}$, as we assume to hold.
\end{proof}
\cut{
Therefore, all that is left to show is that w.p. $\geq 1-\nu$ it holds that $\tfrac{\|\vec\zeta\|^2}{n-p} \cdot \tfrac 1{\svmin(X^\T X)}\leq \tfrac{\eta^2 (1+\|\hvec\beta\|^2)}{r-p}$.

To see that, we now invoke Claim~\ref{clm:sizes_hatbeta_zeta}, which states that w.p. $\geq 1-\nu$ we have $\|\vec\beta-\hvec\beta\|^2 = O(\sigma^2\|X^\mpi\|_F^2\ln(\tfrac p \nu))$ and $\|\vec\zeta\|^2= \Theta( \sigma^2 (n-p) )$. Therefore, for $n = p + \Omega(\ln(1/\nu))$ we need to have $\tfrac{2\sigma^2}{\svmin(X^\T X)} \leq \tfrac {\eta^2 (1+\|\vec\beta\|^2)}{2(r-p)}$. Due to the lower bound on $\eta^2$ this indeed holds.

}

\begin{claim}
\label{clm:sufficient_conditions_for_finding_right_sign_beta'}
Fix $\nu \in (0,\tfrac1 2)$. If (i) $n= p + \Omega(\ln(1/\nu))$, (ii) $\|\vec\beta\|^2  = \Omega(\sigma^2 \|X^\mpi\|^2_F \ln(\tfrac p \nu))$ and (iii) $r-p = \Omega\left( \tfrac{(c'_\alpha)^2(1+\eta)^2}{\beta_j^2} \left(1+\|\vec\beta\|^2 +  \frac {\sigma^2}{\svmin(\tfrac 1 n X^\T X)} \right)\right)$, then in the homoscedastic model, with probability $\geq 1-\nu-\alpha$ we have that $sign(\beta_j) = sign(\beta_j')$.
\end{claim}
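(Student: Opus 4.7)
The plan is to directly combine the Ridge-regression confidence interval of Claim~\ref{clm:sufficient_condition_for_interval_on_beta'} with the concentration bounds of Claim~\ref{clm:sizes_hatbeta_zeta}, and then show that condition (iii) forces the half-length of this interval to be strictly below $|\beta_j|$. If that half-length is $L$, then the event $|\beta'_j - \beta_j| \leq L \leq |\beta_j|$ already forces $sign(\beta'_j) = sign(\beta_j)$, so a union bound finishes the proof.

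First I would apply Claim~\ref{clm:sizes_hatbeta_zeta} with confidence parameter $\nu$: by condition (i), w.p.\ $\geq 1-\nu$ over the randomness of $\vec e$ in the homoscedastic model, we have $\|\vec\zeta\|^2 = \Theta(\sigma^2(n-p))$ and $\|\hvec\beta - \vec\beta\|^2 = O(\sigma^2 \|X^\mpi\|_F^2 \ln(p/\nu))$. Combining the latter with condition (ii) and the triangle inequality yields $\|\hvec\beta\|^2 = \Theta(\|\vec\beta\|^2)$.

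Next I would plug Proposition~\ref{pro:zeta'_M'TM'} into the interval of Claim~\ref{clm:sufficient_condition_for_interval_on_beta'} to obtain (w.p.\ $\geq 1-\alpha$ over $R$) the bound
\[ |\beta'_j - \beta_j| \;\leq\; L \;=\; O\!\left( (1+\eta)\,c'_\alpha \sqrt{\tfrac{w^2 + w^2\|\hvec\beta\|^2 + \|\vec\zeta\|^2}{r-p}}\,\sqrt{(w^2 I_{p\times p} + X^\T X)^{-1}_{j,j}}\right). \]
The key technical step is to bound $(w^2 I + X^\T X)^{-1}_{j,j}$ \emph{in two different ways}, depending on which numerator term it is paired with. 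Since $X^\T X + w^2 I \succeq w^2 I$, we have $(w^2 I + X^\T X)^{-1}_{j,j} \leq 1/w^2$, and since $X^\T X + w^2 I \succeq X^\T X$, we have $(w^2 I + X^\T X)^{-1}_{j,j} \leq 1/\svmin(X^\T X) = 1/(n\cdot \svmin(\tfrac 1 n X^\T X))$. Applying the first bound to the $w^2(1 + \|\hvec\beta\|^2)$ contribution and the second to the $\|\vec\zeta\|^2$ contribution (and using the concentration from the previous paragraph), the two cross-terms cancel cleanly and I get
\[ L^2 \;=\; O\!\left((c'_\alpha)^2 (1+\eta)^2 \cdot \frac{\,1 + \|\vec\beta\|^2 + \sigma^2 / \svmin(\tfrac 1 n X^\T X)\,}{r-p}\right). \]

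Finally, this is exactly the form of condition (iii), which states that $(r-p)$ is large enough to make the right-hand side $O(\beta_j^2)$; hence $L \leq |\beta_j|$. A union bound over the randomness of $\vec e$ (failure $\leq \nu$) and of $R$ (failure $\leq \alpha$) gives sign agreement with probability $\geq 1-\nu-\alpha$, as desired. I expect the \emph{only} subtle step to be the two-way bounding of $(w^2 I + X^\T X)^{-1}_{j,j}$: a single bound would introduce a spurious factor of $w^2/\svmin(X^\T X)$ (or $\svmin(X^\T X)/w^2$) into condition (iii), whereas the split is what makes the claim's condition (iii) --- which does not explicitly restrict $w^2$ relative to $\svmin(X^\T X)$ --- come out exactly right.
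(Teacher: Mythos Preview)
Your proposal is correct and follows essentially the same route as the paper: invoke Claim~\ref{clm:sizes_hatbeta_zeta} (via conditions~(i) and~(ii)) to get $\|\vec\zeta\|^2=\Theta(\sigma^2(n-p))$ and $\|\hvec\beta\|^2=\Theta(\|\vec\beta\|^2)$, plug these into the Ridge-regression interval, and then show condition~(iii) drives the half-length below $|\beta_j|$. The only cosmetic difference is in the ``split'': the paper uses the single bound $(w^2 I+X^\T X)^{-1}_{j,j}\le 1/(w^2+\svmin(X^\T X))$ together with $\tfrac{a+b}{c+d}\le\tfrac a c+\tfrac b d$, whereas you apply the two looser bounds $1/w^2$ and $1/\svmin(X^\T X)$ to the two numerator pieces separately---both yield exactly the same final estimate.
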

\begin{proof}
Based on the above discussion, we aim to show that in the homoscedastic model (where each coordinate $e_i\sim\gaussian(0,\sigma^2)$ independently) w.p. $\geq 1- \nu$ it holds that \[|\beta_j| > c'_\alpha(1+\eta)\sqrt{\frac {w^2 + w^2\|\hvec\beta\|^2 + \|\vec\zeta\|^2}{r-p} }\sqrt{(w^2 I_{p\times p} + X^\T X)^{-1}_{j,j}}\] To show this, we invoke Claim~\ref{clm:sizes_hatbeta_zeta} to argue that w.p. $\geq 1-\nu$ we have (i) $\|\vec\zeta\|^2 \leq 2\sigma^2(n-p)$ (since $n = p + \Omega(\ln(1/\nu))$), and (ii) $\|\hvec\beta\|^2 \leq 2\|\vec\beta\|^2$ (since $\|\vec\beta-\hvec\beta\|^2 \leq \sigma^2 \|X^\mpi\|^2_F \ln(\tfrac p \nu)$ whereas $\|\vec\beta\|^2  = \Omega(\sigma^2 \|X^\mpi\|^2_F \ln(\tfrac p \nu))$). We also use the fact that $(w^2 I_{p\times p}+ X^\T X)^{-1}_{j,j} \leq (w^2 + \svmin^{-1}(X^\T X))$, and then deduce that 
\begin{align*}
&(1+\eta)c'_\alpha\sqrt{\frac {w^2 + w^2\|\hvec\beta\|^2 + \|\vec\zeta\|^2}{r-p} }\sqrt{(w^2 I_{p\times p} + X^\T X)^{-1}_{j,j}} 
\cr&~~~~\leq \frac{(1+\eta)c'_\alpha}{\sqrt{r-p}} \sqrt{2\frac{w^2(1+\|\vec\beta\|^2) + \sigma^2(n-p) }{w^2 + \svmin(X^\T X)}} \leq \frac{(1+\eta)c'_\alpha}{\sqrt{r-p}} \sqrt{2(1+\|\vec\beta\|^2)+ \frac{2\sigma^2(n-p)}{\svmin(X^\T X)} } \leq |\beta_j|   
\end{align*}
due to our requirement on $r-p$.
\end{proof}

Observe, out of the $3$ conditions specified in Claim~\ref{clm:sufficient_conditions_for_finding_right_sign_beta'}, condition (i) merely guarantees that the sample is large enough to argue that estimations are close to their expect value; and condition (ii) is there merely to guarantee that $\|\hvec\beta\|\approx \|\vec\beta\|$. It is condition (iii) which is non-trivial to hold, especially together with the conditions of Claim~\ref{clm:sufficient_condition_for_interval_on_beta'} that pose other constraints in regards to $r$, $n$, $\eta$ and the various other parameters in play. It is interesting to compare the requirements on $r$ to the lower bound we get in Theorem~\ref{thm:JL_simple_case_rejecting_null_hypothesis}~--- especially the latter bound. The two bounds are strikingly similar, with the exception that here we also require $r-p$ to be greater than $\frac{1+\|\vec\beta\|^2}{\beta_j^2}$. This is part of the unfortunate effect of altering the matrix $A$: we cannot give confidence bounds only for the coordinates $j$ for which $\beta_j^2$ is very small \emph{relative to $\|\vec\beta\|^2$}.

In summary, we require to have $n = p  + \Omega(\ln(1/\nu))$ and that $X$ contains enough sample points to have $\|\hvec\beta\|$ comparable to $\|\vec\beta\|$, and then set $r$ and $\eta$ such that (it is convenient to think of $\eta$ as a small constant, say, $\eta = 0.1$) 
\begin{itemize}
\item $r - p = O(\eta^2(n-p))$ (which implies $r = O(n)$)
\item $r = O(\left( \eta^2 \frac{\epsilon n^2}{B^2 \ln(1/\delta)}\svmin(\tfrac 1 n X^\T X)\right)^{\tfrac 2 3})$ 
\item $r - p = \Omega( \frac{1+\|\vec\beta\|^2}{\beta_j^2} + \frac{\sigma^2}{\beta_j^2} \cdot {\svmin^{-1}(\tfrac 1 n X^\T X)})$
\end{itemize}
to have that the $(1-\alpha)$-confidence interval around $\beta_j'$ does not intersect the origin. Once again, we comment that these conditions are sufficient but not necessary, and furthermore~--- even with these conditions holding~--- we do not make any claims of optimality of our confidence bound. That is because from Proposition~\ref{pro:XTX} onwards our discussion uses upper bounds that do not have corresponding lower bounds, to the best of our knowledge.

\cut{
\myparagraph{Comments.} We conclude this section with a few comments about the results here.
\begin{itemize}
\item Observe that the condition of~\eqref{eq:lowerbound_on_n_for_ridge_regression} is a sufficient but not a necessary condition. Clearly, there exist datasets for which $\svmin(X)$ is very small, yet $(X^\T X + w^2\cdot I)^{-1}_{j,j}$ is comparable with $(X^\T X)^{-1}_{j,j}$ for some specific coordinate $j$.
\item As in the previous section, the larger $r$ is, the smaller our confidence interval becomes (under the assumption that $L=1$). Equation~\eqref{eq:lowerbound_on_n_for_ridge_regression} suggest a concrete way to set $r$~--- as $\eta \cdot \frac{n-p}{1+\svmin^{-2}(X)}$ for a suitable small constant $\eta$ (say, $0.1$). As before, it is possible to replace $\svmin(X)$ with a pessimistic lower bound on $\svmin$, obtained using differential privacy. 
\item Note that, as opposed to the confidence interval in Section~\ref{sec:OLS_for_JL_data}, any confidence interval has a lower bound of: $\sqrt{ \frac{\|\hvec\beta\|^2+1} r }\sqrt{w^2(X^\T X+w^2\cdot I)^{-1}_{j,j}\log(1/\nu) }$. And so, clearly when $\|\hvec\beta\|$ is large (either because $\|\vec\beta\|$ itself is large or because $X$ has very small singular values and so $\|X^\mpi\vec e\|$ is large), our bounds are significantly worse than those in the previous section.
\item We suspect that the results of this section are applicable in practice (much more than the results of the Section~\ref{sec:OLS_for_JL_data}). In particular, we believe that for existing data, where $X$ contains enough sample points to have $\|\hvec\beta-\vec\beta\|$ small, but $\epsilon$ is too small to have $\svmin(X) > w$,  the analysis in this section can be successfully applied.
\end{itemize}
}
}

\newcommand{\oo}[1]{\widetilde{#1}}
\newcommand{\oovec}[1]{\oo{\pmb{#1}}}
\newcommand{\invxx}{(X^\T X)^{-1}}
\newcommand{\ooinvxx}{\oo{X^\T X}^{\,-1}}
\newcommand{\ooinvxxsq}{\oo{X^\T X}^{\,-2}}

\DeclareRobustCommand{\AGsection}{
\section{Confidence Intervals for ``Analyze Gauss'' Algorithm}
\label{sec:analyze_gauss}

To complete the picture, we now analyze the ``Analyze Gauss'' algorithm of Dwork et al~\mycite{DworkTTZ14}. Algorithm~\ref{alg:AG} works by adding random Gaussian noise to $A^\T A$, where the noise is symmetric with each coordinate above the diagonal sampled i.i.d from $\gaussian(0,\Delta^2)$ with $\Delta^2 = O\left(B^4 \tfrac{\log(1/\delta)}{\epsilon^2}\right)$.\footnote{It is easy to see that the $l_2$-global sensitivity of the mapping $A \mapsto A^\T A$ is $\propto B^4$. Fix any $A_1, A_2$ that differ on one row which is some vector $\vec v$ with $\|\vec{v}\|=B$ in $A_1$ and the all zero vector in $A_2$. Then $GS_2^2 = \|A_1^\T A_1 - A_2^\T A_2\|_F^2 = \|\vec v\vec v^T\|_F^2 = \trace(\vec v \vec v^\T \cdot \vec v \vec v^\T) = (\vec v^\T \vec v)^2 = B^4$.} Using the same notation for a sub-matrix of $A$ as $[X;\vec y]$ as before, with $X \in \R^{n\times p}$ and $y\in \R^n$, we denote the output of Algorithm~\ref{alg:AG} as
\begin{equation} 
\resizebox*{0.475\textwidth}{!}{$\left(\begin{array}{ccc|c} & & & \cr & \oo{X^\T X} & &\oo{X^\T\vec y}\\ & & & \\ \hline \vphantom{^{\large{2}}} & \oo{\vec y^\T X} & &\oo{\vec y^\T \vec y}\end{array}\right) = \left(\begin{array}{ccc|c} & & & \cr & {X^\T X}+N & &{X^\T\vec y}+\vec n\\ & & & \\ \hline \vphantom{^{\displaystyle{2}}}& {\vec y^\T X}+\vec n^\T & &{\vec y^\T \vec y}+m\end{array}\right)$}\label{eq:AG_notation}\end{equation} where $N$ is a symmetric $p\times p$-matrix, $\vec n$ is a $p$-dimensional vector and $m$ is a scalar, whose coordinates are sampled i.i.d from $\gaussian(0,\Delta^2)$. 

Using the output of Algorithm~\ref{alg:AG}, it is simple to derive analogues of $\hvec\beta$ and $\|\vec{\zeta}\|^2$ (Equations~\eqref{eq:hat_beta} and~\eqref{eq:zeta})
\begin{align}
&\oovec{\beta} = \left(\oo{X^\T X}\right)^{-1} \oo{X^\T \vec y} = \left( X^\T X + N\right)^{-1} (X^T \vec y + \vec n) \label{eq:oo_beta}\\
&\oo{\| \vec{\zeta}\|^2} = \oo{\vec y^\T \vec y} -2\, \oo{\vec y^T X}\,\oovec{\beta} + \oovec{\beta}^\T \,\oo{X^\T X} \,\oovec{\beta}  
\ifx\twocols\undefined =\else \cr &~~~~~~~~~~~=  \fi 
\oo{\vec y^T \vec y} -  \oo{\vec y^T X}\, \oo{X^\T X}^{-1}\, \oo{X^T \vec y} \label{eq:oo_zeta}
\end{align}
We now argue that it is possible to use $\oo{\beta_j}$ and $\oo{\| \vec{\zeta}\|^2}$ to get a confidence interval for $\beta_j$ under certain conditions.

\begin{theorem}
	\label{thm:AG_interval}
	Fix $\alpha,\nu\in (0,\tfrac 1 2)$. Assume that there exists $\eta\in (0,\tfrac 1 2)$ s.t. $\svmin(X^\T X) > \Delta\sqrt{p\ln(1/\nu)}/\eta$. Under the homoscedastic model, given $\vec{\beta}$ and $\sigma^2$, if we assume also that $\|\vec \beta\|\leq B$ and $\|\hvec{\beta}\| = \|\invxx X^\T \vec y\|\leq B$, then w.p. $\geq 1-\alpha-\nu$ it holds that
	\ifx\twocols\undefined  
	\begin{eqnarray*} \left|\beta_j - \oo{\beta}_j \right| &=&  O\Big(  \rho \cdot \sqrt{\left( \ooinvxx_{j,j} + \Delta \sqrt{p\ln(1/\nu)} \cdot\ooinvxxsq_{j,j} \right) \ln(1/\alpha)} \cr
		&&~~~~+ \Delta \sqrt{\ooinvxxsq_{j,j}\cdot \ln(1/\nu)} \cdot (B\sqrt{p}+1)   \Big)
	\end{eqnarray*}
	\else 
	$|\beta_j - \oo{\beta}_j|$ it at most
	\begin{eqnarray*} O\Big(  \rho \cdot \sqrt{\left( \ooinvxx_{j,j} + \Delta \sqrt{p\ln(1/\nu)} \cdot\ooinvxxsq_{j,j} \right) \ln(1/\alpha)} \cr
		~~~~+ \Delta \sqrt{\ooinvxxsq_{j,j}\cdot \ln(1/\nu)} \cdot (B\sqrt{p}+1)   \Big)
	\end{eqnarray*}
	\fi 
	where $\rho$ is such that $\rho^2$ is w.h.p an upper bound on $\sigma^2$, defined as
	\ifx\twocols\undefined 
	\[ \rho^2 \stackrel{\rm def}{=}  \left(\tfrac{1}{\sqrt{n-p} - 2\sqrt{\ln(4/\alpha)}}\right)^2 \cdot \left( \oo{\| \vec{\zeta}\|^2} - C\cdot\left(  \Delta \tfrac{ B^2\sqrt{p}}{1-\eta}\sqrt{\ln(1/\nu)}   + \Delta^2 \|\ooinvxx\|_F\cdot  \ln(p/\nu) \right)   \right) \]
	\else
	\begin{align*}
	&\rho^2 \stackrel{\rm def}{=}  \left(\tfrac{1}{\sqrt{n-p} - 2\sqrt{\ln(4/\alpha)}}\right)^2 \cdot
	\cr &~~\resizebox{0.5\textwidth}{!}{$ \left( \oo{\| \vec{\zeta}\|^2} - C\cdot\left(  \Delta \tfrac{ B^2\sqrt{p}}{1-\eta}\sqrt{\ln(1/\nu)}   + \Delta^2 \|\ooinvxx\|_F\cdot  \ln(p/\nu) \right)   \right)$}
	\end{align*}
	\fi for some large constant $C$.
\end{theorem}

We comment that in practice, instead of using $\rho$, it might be better to use the MLE of $\sigma^2$, namely: \[\overline{\sigma^2}~ \stackrel{\rm def}{=}~\tfrac {1}{n-p}\left( \oo{\|\vec{\zeta}\|^2} + \Delta^2 \|\ooinvxx\|_F\right)\]
instead of $\rho^2$, the upper bound we derived for $\sigma^2$. (Replacing an unknown variable with its MLE estimator is a common approach in applied statistics.) 
Note that the assumption that $\|\vec{\beta}\|\leq B$ is fairly benign once we assume each row has bounded $l_2$-norm. The assumption $\|\hvec{\beta}\| \leq B$ simply assumes that $\hvec\beta$ is a reasonable estimation of $\vec\beta$, which is likely to hold if we assume that $X^\T X$ is well-spread. The assumption about the magnitude of the least singular value of $X^\T X$ is therefore the major one. Nonetheless, in the case we considered before where each row in $X$ is sampled i.i.d from $\gaussian(\vec 0, \Sigma)$, this assumption merely means that $n$ is large enough s.t. $n = \tilde{\Omega}(\tfrac{\Delta\sqrt{p\ln(1/\nu)}}{\eta\cdot\svmin(\Sigma)})$.

In order to prove Theorem~\ref{thm:AG_interval}, we require the following proposition. 
\begin{proposition}
\label{pro:norms_of_gaussians} Fix any $\nu \in (0,\tfrac 1 2)$. Fix any matrix $M\in\R^{p\times p}$. Let $\vec v \in\R^p$ be a vector with each coordinate sampled independently from a Gaussian $\gaussian(0, \Delta^2)$. Then we have that $\Pr \left[ \|M\vec  v\| > \Delta \cdot \|M\|_F\sqrt{2\ln(2p/\nu)} \right] < \nu$.
\end{proposition}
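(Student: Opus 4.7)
The plan is to exploit the rotational symmetry of the isotropic Gaussian $\vec v \sim \gaussian(\vec 0_p, \Delta^2 I_{p\times p})$ via the SVD of $M$. First, I would write $M = USV^\T$ where $U, V$ are orthonormal and $S$ is the diagonal matrix of singular values $\sigma_1, \ldots, \sigma_p$. Setting $\vec w = V^\T \vec v$, orthonormality of $V$ together with the linear-transformation rule for multivariate Gaussians ($C\cdot\gaussian(\vec\mu,\Sigma) = \gaussian(C\vec\mu, C\Sigma C^\T)$) gives $\vec w \sim \gaussian(\vec 0_p, \Delta^2 I_{p\times p})$, so the coordinates $w_1,\ldots,w_p$ are i.i.d.\ draws from $\gaussian(0,\Delta^2)$. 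Since $U$ is orthonormal, $\|M\vec v\|^2 = \|USV^\T\vec v\|^2 = \|S\vec w\|^2 = \sum_{j=1}^p \sigma_j^2 w_j^2$.

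Next, I would apply the standard univariate Gaussian tail bound stated in the preliminaries: for any $j$, $\Pr\bigl[|w_j| > \Delta\sqrt{2\ln(2p/\nu)}\bigr] < \nu/p$. By a union bound over the $p$ coordinates, with probability at least $1-\nu$ every $w_j$ satisfies $w_j^2 \leq 2\Delta^2 \ln(2p/\nu)$. Plugging this into the identity above yields
\[ \|M\vec v\|^2 = \sum_{j=1}^p \sigma_j^2 w_j^2 \;\leq\; 2\Delta^2\ln(2p/\nu)\sum_{j=1}^p \sigma_j^2 \;=\; 2\Delta^2\ln(2p/\nu)\cdot \|M\|_F^2, \]
where the last equality is the well-known identity $\|M\|_F^2 = \sum_j \sigma_j^2(M)$. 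Taking square roots gives the claimed bound.

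There is essentially no obstacle here: the proposition is a very loose concentration bound (it loses a $\sqrt{\log p}$ factor compared to what a $\chi^2$-type bound on $\sum_j \sigma_j^2 w_j^2$ would give), and the coordinate-wise union bound approach is the cleanest way to get exactly the stated form. The only thing to be careful about is handling the square root, which requires the right-hand side bound to hold simultaneously for all coordinates~-- hence the need to set the per-coordinate failure probability to $\nu/p$ rather than $\nu$.
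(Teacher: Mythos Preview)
Your proposal is correct and follows essentially the same argument as the paper's proof: both use the SVD to reduce $\|M\vec v\|^2$ to $\sum_j \sigma_j^2 w_j^2$ with $w_j$ i.i.d.\ $\gaussian(0,\Delta^2)$, then apply a coordinate-wise Gaussian tail bound with per-coordinate failure probability $\nu/p$ and a union bound, concluding via $\sum_j \sigma_j^2 = \|M\|_F^2$. The only cosmetic difference is that the paper phrases the diagonalization as ``rotating $M\vec v$'' rather than explicitly writing $\vec w = V^\T\vec v$.
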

\begin{proof}
Given $M$, we have that $M\vec v \sim \gaussian(\vec 0, \Delta^2 \cdot M M^\T)$. Denoting $M$'s singular values as $sv_1, \ldots, sv_p$, we can rotate $M\vec v$ without affecting its $l_2$-norm and infer that $\|M \vec v|^2$ is distributed like a sum on $p$ independent Gaussians, each sampled from $\gaussian(0,\Delta^2 \cdot sv_i^2)$. Standard union bound gives that w.p. $\geq 1-\nu$ non of the $p$ Gaussians exceeds its standard deviation by a factor of $\sqrt{2\ln(2p/\nu)}$. Hence, w.p. $\geq 1-\nu$ it holds that $\|M\vec v\|^2 \leq 2\Delta^2\sum_i sv_i^2 \ln(2p/\nu) = 2\Delta^2\cdot \trace(M M^\T)\cdot \ln(2p/\nu)$. 
\end{proof}
Our proof also requires the use of the following equality, that holds for any invertible $A$ and any matrix $B$ s.t. $I+B\cdot A^{-1}$ is invertible:
\begin{equation*} \left(A+B\right)^{-1} = A^{-1} - A^{-1} \left( I + BA^{-1}\right)^{-1} B A^{-1}\end{equation*}
In our case, we have
\ifx\twocols\undefined 
\begin{eqnarray}
& \ooinvxx &= (X^\T X+N)^{-1} = \invxx - \invxx \left( I + N\invxx\right)^{-1} N \invxx \cr &&= \invxx \left(I -  \left( I + N\invxx\right)^{-1} N \invxx \right) \stackrel{\rm def}{=} \invxx \left( I - Z\cdot \invxx\right)\cr &\label{eq:inv_sum}
\end{eqnarray}
\else 
\begin{align}
& \ooinvxx 
\cr &= (X^\T X+N)^{-1} 
\cr &= \invxx - \invxx \resizebox*{0.175\textwidth}{!}{$\left( I + N\invxx\right)^{-1} N$} \invxx 
\cr &= \invxx \left(I -  \left( I + N\invxx\right)^{-1} N \invxx \right) 
\cr &\stackrel{\rm def}{=} \invxx \left( I - Z\cdot \invxx\right) \label{eq:inv_sum}
\end{align}
\fi 

\begin{proof}[Proof of Theorem~\ref{thm:AG_interval}]
	Fix $\nu>0$. First, we apply to standard results about Gaussian matrices, such as~\cite{Tao12} (used also by~\cite{DworkTTZ14} in their analysis), to see that w.p. $\geq 1- \nu/6$ we have $\|N\| = O(\Delta \sqrt{p \ln(1/\nu)})$. And so, for the remainder of the proof we fix $N$ subject to having bounded operator norm. Note that by fixing $N$ we fix $\oo{X^\T X}$. 
	
	Recall that in the homoscedastic model, $\vec y = X\vec \beta + \vec e$ with each coordinate of $\vec e$ sampled i.i.d from $\gaussian(0,\sigma^2)$. We therefore have that
	\begin{align}
	\oovec{\beta} & = \ooinvxx (X^\T \vec y + \vec n) = \ooinvxx (X^\T X \vec\beta + X^\T \vec e + \vec n)\cr
	& = \ooinvxx (\oo{X^\T X} - N)\vec{\beta} + \ooinvxx X^\T \vec e + \ooinvxx \vec n \cr
	&= \vec \beta - \ooinvxx N \vec{\beta} + \ooinvxx X^\T \vec e + \ooinvxx \vec n\cr
	\intertext{Denoting the $j$-th row of $\ooinvxx$ as $\ooinvxx_{j\rightarrow}$ we deduce:}
	\oo\beta_j &=  \beta_j - \ooinvxx_{j\rightarrow} N \vec{\beta} + \ooinvxx_{j\rightarrow} X^\T \vec e + \ooinvxx_{j\rightarrow} \vec n \label{eq:oobeta_j}
	\end{align}
	
	We na\"ively bound the size of the term $\ooinvxx_{j\rightarrow} N \vec{\beta}$ by $\left\| \ooinvxx_{j\rightarrow}\right\| \|N\| \|\vec{\beta}\| = O\left(\left\| \ooinvxx_{j\rightarrow}\right\| \cdot B \Delta \sqrt{p\ln(1/\nu)}   \right)$.
	
	To bound $\ooinvxx_{j\rightarrow} X^\T \vec e$ note that $\vec e$ is chosen independently of $\oo{X^\T X}$ and since $\vec e \sim \gaussian (\vec 0, \sigma^2 I)$ we have $\ooinvxx_{j\rightarrow} X^\T \vec e \sim \gaussian\left(\vec 0, \sigma^2 \cdot \vec e_j^\T \ooinvxx\cdot X^\T X\cdot \ooinvxx \vec e_j\right)$. Since we have
	\ifx\twocols\undefined 
	\[ \ooinvxx\cdot X^\T X\cdot \ooinvxx = \ooinvxx\cdot (\oo{X^\T X} - N)\cdot \ooinvxx = \ooinvxx - \ooinvxx \cdot N \cdot \ooinvxx\] 
	\else
	\begin{align*}
	&\ooinvxx\cdot X^\T X\cdot \ooinvxx 
	\cr &~= \ooinvxx\cdot (\oo{X^\T X} - N)\cdot \ooinvxx 
	\cr &~= \ooinvxx - \ooinvxx \cdot N \cdot \ooinvxx
	\end{align*} 
	\fi  we can bound the variance of $\ooinvxx_{j\rightarrow} X^\T \vec e$ by $\sigma^2 \left( \ooinvxx_{j,j} + \|N\| \cdot \left\| \ooinvxx_{j\rightarrow}\right\|^2 \right)$. Appealing to Gaussian concentration bounds, we have that w.p. $\geq 1-\alpha/2$ the absolute value of this Gaussian is at most 
	\ifx\twocols\undefined
	$O \left(\sqrt{\left( \ooinvxx_{j,j} + \Delta \sqrt{p\ln(1/\nu)} \cdot \left\| \ooinvxx_{j\rightarrow}\right\|^2 \right) \sigma^2\ln(1/\alpha)}~\right)$.
	\else
	\resizebox{0.5\textwidth}{!}{$O \left(\sqrt{\left( \ooinvxx_{j,j} + \Delta \sqrt{p\ln(1/\nu)} \cdot \left\| \ooinvxx_{j\rightarrow}\right\|^2 \right) \sigma^2\ln(1/\alpha)}~\right)$.}
	\fi 
	
	To bound  $\ooinvxx_{j\rightarrow} \vec n$ note that $\vec n \sim \gaussian(\vec 0, \Delta^2 I)$ is sampled independently of $\oo{X^\T X}$. We therefore have that $\ooinvxx_{j\rightarrow} \vec n\sim \gaussian (0, \Delta^2 \left\| \ooinvxx_{j\rightarrow}\right\|^2)$. Gaussian concentration bounds give that w.p $\geq 1-\nu/6$ we have $|\ooinvxx_{j\rightarrow} \vec n| = O\left(\Delta \left\| \ooinvxx_{j\rightarrow}\right\|\sqrt{\ln(1/\nu)} \right) $.
	
	Plugging this into our above bounds on all terms that appear in Equation~\eqref{eq:oobeta_j} we have that w.p. $\geq 1-\nu/2-\alpha/2$ we have that $\left| \oo\beta_j - \beta_j\right|$ is at most
	\ifx\twocols\undefined 
	\begin{align*}
	& O\left(\left\| \ooinvxx_{j\rightarrow}\right\| \cdot B \Delta \sqrt{p\ln(1/\nu)}   \right) \cr
	&~~+O \left(\sigma \sqrt{\left( \ooinvxx_{j,j} + \Delta \sqrt{p\ln(1/\nu)} \cdot \left\| \ooinvxx_{j\rightarrow}\right\|^2 \right)\ln(1/\alpha)}~\right) \cr
	&~~+  O\left(\Delta \left\| \ooinvxx_{j\rightarrow}\right\|\sqrt{\ln(1/\nu)} \right)
	\end{align*}
	\else
	\begin{align*}
	& O\left(\left\| \ooinvxx_{j\rightarrow}\right\| \cdot B \Delta \sqrt{p\ln(1/\nu)}   \right) \cr
	&\resizebox{0.5\textwidth}{!}{$~~+O \left(\sigma \sqrt{\left( \ooinvxx_{j,j} + \Delta \sqrt{p\ln(1/\nu)} \cdot \left\| \ooinvxx_{j\rightarrow}\right\|^2 \right)\ln(1/\alpha)}~\right)$} \cr
	&~~+  O\left(\Delta \left\| \ooinvxx_{j\rightarrow}\right\|\sqrt{\ln(1/\nu)} \right)
	\end{align*}
	\fi 
	Note that due to the symmetry of $\oo{X^\T X}$ we have $\left\| \ooinvxx_{j\rightarrow}\right\|^2 = \ooinvxxsq_{j,j}$ (the $(j,j)$-coordinate of the matrix $\ooinvxxsq$), thus
	\ifx\twocols\undefined 
	\begin{align}
	\left| \oo\beta_j - \beta_j\right| &= O\Big( \sigma \cdot \sqrt{\left( \ooinvxx_{j,j} + \Delta \sqrt{p\ln(1/\nu)} \cdot\ooinvxxsq_{j,j} \right) \ln(1/\alpha)} \cr 
	& ~~~~+ \Delta \sqrt{\ooinvxxsq_{j,j}\cdot \ln(1/\nu)} \cdot (B\sqrt{p}+1) \ \Big) \label{eq:dist_oobeta_beta}
	\end{align}
	\else 
	$|\oo\beta_j - \beta_j|$ is at most 
	\begin{align}
	&O\Big( \sigma \cdot \sqrt{\left( \ooinvxx_{j,j} + \Delta \sqrt{p\ln(1/\nu)} \cdot\ooinvxxsq_{j,j} \right) \ln(1/\alpha)} \cr 
	& ~~~~+ \Delta \sqrt{\ooinvxxsq_{j,j}\cdot \ln(1/\nu)} \cdot (B\sqrt{p}+1) \ \Big) \label{eq:dist_oobeta_beta}
	\end{align}
	\fi 
	
	All of the terms appearing in Equation~\eqref{eq:dist_oobeta_beta} are known given $\oo{X^\T X}$, except for $\sigma$ --- which is a parameter of the model. Next, we derive an upper bound on $\sigma$ which we can then plug into Equation~\eqref{eq:dist_oobeta_beta} to complete the proof of the theorem and derive a confidence interval for $\beta_j$.
	
	\ifx\twocols\undefined 
	Recall Equation~\eqref{eq:oo_zeta}, according to which we have
	\begin{align}
	\oo{\|\vec{\zeta}\|^2	} &= \oo{\vec y^T \vec y} -  \oo{\vec y^T X}\, \oo{X^\T X}^{-1}\, \oo{X^T \vec y}  
	\cr & \stackrel{\rm Eq\eqref{eq:inv_sum}}= \vec y^\T \vec y + m - (\vec y^\T X + \vec n^\T)\invxx (I - Z \cdot \invxx) (X^\T \vec y + \vec n) 
	\cr 	& =  \vec y^\T \vec y + m  - \vec y^\T X \invxx X^\T \vec y + \vec y^\T X \invxx Z \invxx X^\T  \vec y \cr
	& \qquad  - 2 \vec y^\T X\invxx \vec n + 2 \vec y^\T X  \invxx Z \invxx \vec n - \vec n^\T \invxx (I - Z \cdot \invxx)\vec n	\notag
	\intertext{Recall that $\hvec{\beta} = \invxx X^\T \vec y$, and so we have}
	&= \vec y^\T \left( I - X\invxx X^\T\right) \vec y + m - \hvec{\beta}^\T Z \hvec{\beta} - 2\hvec \beta^\T (I - Z\invxx )\vec n - \vec n^\T \ooinvxx \vec n 
	\label{eq:oozeta_breakdown}
	\end{align}
	and of course, both $\vec n$ and $m$ are chosen independently of $\oo{X^\T X}$ and $\vec y$.
	\else
	Recall Equation~\eqref{eq:oo_zeta}, according to which we have
	\begin{align}
	\oo{\|\vec{\zeta}\|^2	} &= \oo{\vec y^T \vec y} -  \oo{\vec y^T X}\, \oo{X^\T X}^{-1}\, \oo{X^T \vec y}  
	\cr & \stackrel{\rm \eqref{eq:inv_sum}}= \vec y^\T \vec y + m 
	\cr &~~~~~ - (\vec y^\T X + \vec n^\T)\resizebox*{0.3\textwidth}{!}{$\invxx (I - Z \cdot \invxx) (X^\T \vec y + \vec n)$} 
	\cr 	& =  \vec y^\T \vec y + m  
	\cr &~~~~~ - \vec y^\T X \invxx X^\T \vec y 
	\cr &~~~~~ + \vec y^\T X \invxx Z \invxx X^\T  \vec y 
	\cr &~~~~~  - 2 \vec y^\T X\invxx \vec n 
	\cr &~~~~~ + 2 \vec y^\T X  \invxx Z \invxx \vec n 
	\cr &~~~~~ - \vec n^\T \invxx (I - Z \cdot \invxx)\vec n	\notag
	\intertext{Recall that $\hvec{\beta} = \invxx X^\T \vec y$, and so we have}
	&= \vec y^\T \left( I - X\invxx X^\T\right) \vec y + m - \hvec{\beta}^\T Z \hvec{\beta} 
	\cr & ~~~~ - 2\hvec \beta^\T (I - Z\invxx )\vec n - \vec n^\T \ooinvxx \vec n 
	\label{eq:oozeta_breakdown}
	\end{align}
	and of course, both $\vec n$ and $m$ are chosen independently of $\oo{X^\T X}$ and $\vec y$.
	\fi 
	
	Before we bound each term in Equation~\eqref{eq:oozeta_breakdown}, we first give a bound on $\|Z\|$. Recall, $Z  = \left( I + N \invxx\right)^{-1} N$. Recall our assumption (given in the statement of Theorem~\ref{thm:AG_interval}) that $\svmin(X^\T X) \geq \tfrac{\Delta}{\eta}\sqrt{p\ln(1/\nu)}$. This implies that $\|N \invxx\| \leq \|N\|\cdot \svmin(X^\T X)^{-1} = O(\eta)$. Hence
	\[ \|Z\| \leq (\|I + N\invxx\|)^{-1} \cdot \|N\| = O\left( \tfrac{\Delta \sqrt{p\ln(1/\nu)}} {1-\eta} \right)\]
	Moreover, this implies that $\|Z \invxx\| \leq O\left(\tfrac{\eta}{1-\eta} \right)$ and that $\| I - Z\invxx\| \leq O\left( \tfrac 1 {1-\eta}\right)$.
	
	Armed with these bounds on the operator norms of $Z$ and $(I-Z\invxx)$ we bound the magnitude of the different terms in Equation~\eqref{eq:oozeta_breakdown}.
	\begin{itemize}
		\item The term $\vec y^T \left(I - X X^\mpi \right) \vec y$ is the exact term from the standard OLS, and we know it is distributed like $\sigma^2 \cdot \chi^2_{n-p}$ distribution. Therefore, it is greater than $\sigma^2(\sqrt{n-p}~-~2\sqrt{\ln(4/\alpha)})^2$ w.p. $\geq 1- \alpha/2$.
		\item The scalar $m$ sampled from $m \sim \gaussian(0,\Delta^2)$ is bounded by $O(\Delta\sqrt{\ln(1/\nu)})$ w.p. $\geq 1- \nu/8$.
		\item Since we assume $\|\hvec{\beta}\|\leq B$, the term $\hvec{\beta}^\T Z \hvec \beta$ is upper bounded by $B^2 \|Z\| = O\left( \tfrac{B^2\Delta \sqrt{p\ln(1/\nu)}} {1-\eta} \right)$.
		\item Denote $\vec z^\T \vec n = 2\hvec\beta^\T (I - Z\invxx) \vec n$. We thus have that $\vec z^\T \vec n \sim \gaussian(0,\Delta^2 \|\vec z\|^2)$ and that its magnitude is at most $O(\Delta\cdot\|\vec z\| \sqrt{\ln(1/\nu)})$ w.p. $\geq 1- \nu/8$. We can upper bound $\|\vec z\|~\leq~2\|\hvec{\beta}\|~\|I~-~Z\invxx\| = O(\tfrac {B}{1-\eta})$, and so this term's magnitude is upper bounded by $ O\left(\tfrac {\Delta\cdot B\sqrt{\ln(1/\nu)}}{1-\eta}\right)$.
		\item Given our assumption about the least singular value of $X^\T X$ and with the bound on $\|N\|$, we have that $\svmin(\oo{X^\T X}) \geq \svmin(X^\T X) - \|N\| >0$ and so the symmetric matrix $\oo{X^\T X}$ is a PSD. Therefore, the term $\vec n ^\T \ooinvxx \vec n = \| \oo{X^\T X}^{-1/2} \vec n\|^2$ is strictly positive.  Applying Proposition~\ref{pro:norms_of_gaussians} we have that w.p. $\geq 1- \nu/8$ it holds that $\vec n^\T \ooinvxx \vec n \leq O\left(\Delta^2 \|\ooinvxx\|_F\cdot  \ln(p/\nu)\right)$.
	\end{itemize}
	Plugging all of the above bounds into Equation~\eqref{eq:oozeta_breakdown} we get that w.p. $\geq 1- \nu/2-\alpha/2$ it holds that
	\ifx\twocols\undefined 
	\[ 
	\sigma^2 \leq \left(\tfrac{1}{\sqrt{n-p} - 2\sqrt{\ln(4/\alpha)}}\right)^2 \left( \oo{\| \vec{\zeta}\|^2} + O\left( (1+\tfrac{ B^2\sqrt{p} + B}{1-\eta})\Delta\sqrt{\ln(1/\nu)} + \Delta^2 \|\ooinvxx\|_F\cdot  \ln(p/\nu) \right)   \right)
	\]	
	\else 
	\begin{align*}
		&\sigma^2 \leq \left(\tfrac{1}{\sqrt{n-p} - 2\sqrt{\ln(4/\alpha)}}\right)^2\cdot 
		\cr &\resizebox*{0.5\textwidth}{!}{$\left( \oo{\| \vec{\zeta}\|^2} + O\left( (1+\tfrac{ B^2\sqrt{p} + B}{1-\eta})\Delta\sqrt{\ln(1/\nu)} + \Delta^2 \|\ooinvxx\|_F\cdot  \ln(p/\nu) \right)   \right)$}
	\end{align*}
	\fi
	and indeed, the RHS is the definition of $\rho^2$ in the statement of Theorem~\ref{thm:AG_interval}.
\end{proof}
}

\DeclareRobustCommand{\AGshort}{
\section{Confidence Intervals for ``Analyze Gauss''}
\label{sec:analyze_gauss_short}

In this section we analyze the ``Analyze Gauss'' algorithm of Dwork et al~\mycitet{DworkTTZ14}. Algorithm~\ref{alg:AG} works by adding random Gaussian noise to $A^\T A$, where the noise is symmetric with each coordinate above the diagonal sampled i.i.d from $\gaussian(0,\Delta^2)$ with $\Delta^2 = O\left(B^4 \tfrac{\log(1/\delta)}{\epsilon^2}\right)$. Using the same notation for a sub-matrix of $A$ as $[X;\vec y]$ as before, we denote the output of Algorithm~\ref{alg:AG} as
\ifx\twocols\undefined $\left(\begin{array}{ccc|c} & & & \cr & \oo{X^\T X} & &\oo{X^\T\vec y}\\ & & & \\ \hline \vphantom{^{\large{2}}} & \oo{\vec y^\T X} & &\oo{\vec y^\T \vec y}\end{array}\right)$
\else
\resizebox{!}{8mm}{$\left(\begin{array}{ccc|c} & & & \cr & \oo{X^\T X} & &\oo{X^\T\vec y}\\ & & & \\ \hline \vphantom{^{\large{2}}} & \oo{\vec y^\T X} & &\oo{\vec y^\T \vec y}\end{array}\right)$}.
\fi
Thus, we approximate $\vec{\beta}$ and $\|\vec{\zeta}\|$ by $\oovec{\beta} = \left(\oo{X^\T X}\right)^{-1} \oo{X^\T \vec y}$ and $\oo{\| \vec{\zeta}\|^2} = \oo{\vec y^\T \vec y} -2\, \oo{\vec y^T X}\,\oovec{\beta} + \oovec{\beta}^\T \,\oo{X^\T X} \,\oovec{\beta}$ resp.
We now argue that it is possible to use $\oo{\beta_j}$ and $\oo{\| \vec{\zeta}\|^2}$ to get a confidence interval for $\beta_j$ under certain conditions.

\begin{theorem}
	\label{thm:AG_interval_body}
	Fix $\alpha,\nu\in (0,\tfrac 1 2)$. Assume that there exists $\eta\in (0,\tfrac 1 2)$ s.t. $\svmin(X^\T X) > \Delta\sqrt{p\ln(1/\nu)}/\eta$. Under the homoscedastic model, given $\vec{\beta}$ and $\sigma^2$, if we assume also that $\|\vec \beta\|\leq B$ and $\|\hvec{\beta}\| = \|\invxx X^\T \vec y\|\leq B$, then w.p. $\geq 1-\alpha-\nu$ it holds that 
	\ifx\confversion\undefined
	\begin{eqnarray*} \left|\beta_j - \oo{\beta}_j \right| &=&  O\Big(  \rho \cdot \sqrt{\left( \ooinvxx_{j,j} + \Delta \sqrt{p\ln(1/\nu)} \cdot\ooinvxxsq_{j,j} \right) \ln(1/\alpha)} \cr
		&&~~~~+ \Delta \sqrt{\ooinvxxsq_{j,j}\cdot \ln(1/\nu)} \cdot (B\sqrt{p}+1)   \Big)
	\end{eqnarray*}
	where $\rho$ is such that $\rho^2$ is w.h.p an upper bound on $\sigma^2$, defined (using some large constant $C$) as
	\[ \rho^2 \stackrel{\rm def}{=}  \left(\tfrac{1}{\sqrt{n-p} - 2\sqrt{\ln(4/\alpha)}}\right)^2 \cdot \left( \oo{\| \vec{\zeta}\|^2} - C\cdot\left(  \Delta \tfrac{ B^2\sqrt{p}}{1-\eta}\sqrt{\ln(1/\nu)}   + \Delta^2 \|\ooinvxx\|_F\cdot  \ln(p/\nu) \right)   \right) \] 
	\else
	$\left|\beta_j - \oo{\beta}_j \right|$ is at most
	\vspace{-2mm}
	\begin{eqnarray*}  O\Big(  \rho \cdot \sqrt{\left( \ooinvxx_{j,j} + \Delta \sqrt{p\ln(1/\nu)} \cdot\ooinvxxsq_{j,j} \right) \ln(1/\alpha)} \cr
	~~~~+ \Delta \sqrt{\ooinvxxsq_{j,j}\cdot \ln(1/\nu)} \cdot (B\sqrt{p}+1)   \Big)
	\end{eqnarray*}
	where $\rho$ is w.h.p an upper bound on $\sigma$ (details appear in the Supplementary material).
	\fi
\end{theorem}

Note that the assumptions that $\|\vec{\beta}\|\leq B$ and $\|\hvec{\beta}\| \leq B$ are fairly benign once we assume each row has bounded $l_2$-norm. The key assumption is that $X^\T X$ is well-spread. Yet in the model where each row in $X$ is sampled i.i.d from $\gaussian(\vec 0, \Sigma)$, this assumption merely means that $n$ is large enough~--- namely, that $n = \tilde{\Omega}(\tfrac{\Delta\sqrt{p\ln(1/\nu)}}{\eta\cdot\svmin(\Sigma)})$.\ifx\confversion\undefined The proof of Theorem~\ref{thm:AG_interval_body} appears in Section~\ref{sec:analyze_gauss}.\fi
}

\myvspace{-4mm}
\DeclareRobustCommand{\conclusions}{
\section{Conclusions and Future Directions}
\label{sec:conclusions}
\myvspace{-2mm}

This work is the first, to the best of our knowledge, to provide an analysis of a differentially private technique for OLS' statistical inference. Furthermore, our work is the first to approach Ridge Regression from a completely different perspective. We propose that one should set the penalty term in Ridge Regression to a specific value (namely, $w^2$) so that by projecting the problem using the Johnson-Lindenstrauss transform, we still satisfy differential privacy.

\omitfromversion{We believe this work should be applicable in practice and curious to see its performance over real datasets. (Initial investigation was done in~\mycite{Sheffet15}, however, the experiments there look at the distance $\|\tvec\beta-\vec\beta\|$ rather than $t$-values and $p$-values.) In particular, we are curious to see whether the conditions posed in Section~\ref{sec:ridge_regression} hold in practice, and if indeed one is able to use the JLT version of Ridge Regression without having $\vec\beta'$ far from $\vec\beta$ or $\hvec\beta$. We are curious also to see if one is able to give a better characterization of the distances between of any pair of the following $4$ vectors: $\vec\beta$ (the true coefficients), $\hvec\beta$ (the linear regression estimator from the data), $\vec\beta^R$ (the Ridge Regression estimator) and $\vec\beta'$ (the estimator from the projected Ridge Regression problem). Also, observe that the statistical analysis in our work follows the frequentist approach. However, Ridge Regression is also motivated from a Bayesian perspective (where $\vec\beta$ has a prior of a spherical Gaussian). Deriving a Bayesian analysis of private least squares seems to be both important and challenging. As ever, the question of matching lower bounds is of importance. Does there exist a sample of points from a multivariate Gaussian for which, without privacy we are likely to $\alpha$-reject the null-hypothesis, but no differentially private algorithm is likely to $\alpha$-reject the null-hypothesis? Is it the case that the increase in sample complexity of our algorithm is optimal?}

We believe there is much work to be done in order to bridge the gap between TCS' standard utility analysis of differentially private algorithms and the statistical inference techniques used in practice in data analysis. Statistical inference is often done using deterministic estimators, where the sole source of randomness lies in the underlying model of data generation. In contrast, differentially private estimators are inherently random in their computation. Statistical inference that considers \emph{both} the randomness in the data and the randomness in the computation is highly uncommon, and this work deals solely with one particular analysis. 
As noted before, OLS is just the first out of many variants of linear regression applied in data analysis, for which confidence estimations should be derived. And even beyond linear regression~--- OLS is only one of many MLE techniques which can be associated with confidence estimations, based on the general recipe of estimating the information matrix of the loss function (the expected Hessian of the loss function, whose computation is often fairly complicated even without privacy). Computing confidence estimations for other differentially private estimators pose a difficult and challenging problem. 

Lastly, we also believe that other applications of the Johnson-Lindenstrauss transform, previously studied for computational speedups, can and should also be analyzed also for their privacy preserving implications.
}
\ifdefined\fullversion
\conclusions
\else
\AGshort
\fi

\DeclareRobustCommand{\acksfull}{
The bulk of this work was done when the author was a postdoctoral fellow at Harvard University, supported by NSF grant CNS-123723;
and also an unpaid collaborator on NSF grant 1565387. 
The author wishes to wholeheartedly thank Prof.~Salil Vadhan, for his tremendous help in shaping this paper. The author would also like to thank Prof.~Jelani Nelson and the members of the ``Privacy Tools for Sharing Research Data'' project at Harvard University (especially 
James Honaker, Vito D'Orazio, Vishesh Karwa, Prof.~Kobbi Nissim and Prof.~Gary King) for many helpful discussions and suggestions; as well as Abhradeep Thakurta for clarifying the similarity between our result. Lastly the author thanks the anonymous referees for many helpful suggestions in general and for a reference to~\mycite{Ullman15} in particular.}

\section{Experiment: $t$-Values of Output}
\label{sec:experiment}

\myparagraph{Goal.} We set to experiment with the outputs of Algorithms~\ref{alg:private_JL} and~\ref{alg:AG}. While Theorem~\ref{thm:main_theorem_matrix_unaltered} guarantees that computing the $t$-value from the output of Algorithm~\ref{alg:private_JL} in the \texttt{matrix unaltered} case does give a good approximation of the $t$-value -- we were wondering if by computing the $t$-value directly from the output we can (a) get a good approximation of the true (non-private) $t$-value and (b) get the same ``higher-level conclusion'' of rejecting the null-hypothesis. The answers are, as ever, mixed. The two main observations we do notice is that both algorithms improve as the number of examples increases, and that Algorithm~\ref{alg:private_JL} is more conservative then Algorithm~\ref{alg:AG}.

\myparagraph{Setting.} We tested both algorithms in two settings. The first is over synthetic data. Much like the setting in Theorems~\ref{thm:baseline_rejecting_nh} and~\ref{thm:JL_simple_case_rejecting_null_hypothesis}, $X$ was generated using $p=3$ independent normal Gaussian features, and $\vec y$ was generated using the homoscedastic model. We chose $\vec\beta = (0.5,-0.25,0)$ so the first coordinate is twice as big a the second but of opposite sign, and moreover, $\vec y$ is independent of the $3$rd feature. The variance of the label is also set to $1$, and so the variance of the homosedastic noise equals to $\sigma^2 = 1-(0.5)^2 -(-0.25)^2$. The number of observations $n$ ranges from $n=1000$ to $n=100000$.

The second setting is over real-life data. We ran the two algorithms over diabetes dataset collected over ten years (1999-2008) taken from the UCI repository~\cite{UCIdiabetes}. We truncated the data to 4 attributes: sex (binary), age (in buckets of 10 years), number medications (numeric, 0-100), and a diagnosis (numeric, 0-1000). Naturally, we added a $5^{\rm th}$ column of all-$1$ (intercept). Omitting any entry with missing  or non-numeric values on these nine attributes we were left with $N=91842$ entries, which we shuffled and fed to the algorithm in varying sizes --- from $n=30,000$ to $n=90,000$. Running OLS over the entire $N$ observation yields $\beta \approx (14.07, 0.54, -0.22, 482.59)$, and $t$-Values of $(10.48, 1.25, -2.66, 157.55)$.

\myparagraph{The Algorithms.} We ran a version of Algorithm~\ref{alg:private_JL} that  uses a DP-estimation of $\svmin$, and finds the largest $r$ the we can use without altering the input, yet if this $r$ is below $25$ then it does alter the input and approximates Ridge regression. We ran Algorithm~\ref{alg:AG} verbatim. We set $\epsilon=0.25$ and $\delta = 10^{-6}$. We repeated each algorithm $100$ times.

\myparagraph{Results.} We plot the $t$-values we get from Algorithms~\ref{alg:private_JL} and~\ref{alg:AG} and decide to reject the null-hypothesis based on $t$-value larger than $2.8$ (which corresponds to a fairly conservative $p$-value of $0.005$). Not surprisingly, as $n$ increases, the $t$-values become closer to their expected value -- the $t$-value of Analyze Gauss is close to the non-private $t$-value and the $t$-value from Algorithm~\ref{alg:private_JL} is a factor of $\sqrt{\tfrac r n}$ smaller as detailed above (see after Corollary~\ref{cor:JL_confidence_interval}). As a result, when the null-hypothesis is false, Analyze Gauss tends to produce larger $t$-values (and thus reject the null-hypothesis) for values of $n$ under which Algorithm~\ref{alg:private_JL} still does not reject, as shown in Figure~\ref{subfig:decision-synth-coord1}. This is exacerbated in real data setting, where its actual least singular value ($\approx 500$) is fairly small in comparison to its size ($N=91842$).

However, what is fairly surprising is the case where the null-hypothesis should not be rejected --- since $\beta_j =0$ (in the synthetic case) or its non-private $t$-value is close to $0$ (in the real-data case). Here, the Analyze Gauss' $t$-value approximation has fairly large variance, and we still get fairly high (in magnitude) $t$-values. As the result, we falsely reject the null-hypothesis based on the $t$-value of Analyze Gauss quite often, even for large values of $n$. This is shown in Figure~\ref{subfig:decision-synth-coord3}. Additional figures (including plotting the distribution of the $t$-value approximations) appear in the supplementary material.

The results show that $t$-value approximations that do not take into account the inherent randomness in the DP-algorithms lead to erroneous conclusions. One approach would be to follow the more conservative approach we advocate in this paper, where Algorithm~\ref{alg:private_JL} may allow you to get true approximation of the $t$-values and otherwise reject the null-hypothesis only based on the confidence interval (of Algorithm~\ref{alg:private_JL} or~\ref{alg:AG}) not intersecting the origin. Another approach, which we leave as future work, is to replace the $T$-distribution with a new distribution, one that takes into account the randomness in the estimator as well. This, however, has been an open and long-standing challenge since the first works on DP and statistics (see~\cite{VuS09, DworkL09}) and requires we move into non-asymptotic hypothesis testing.

\vspace{-0mm}
\def \SCALE {0.375} 
\begin{figure}[H]
	\begin{center}
	\begin{subfigure}{0.5\textwidth}
		\includegraphics[scale=\SCALE]{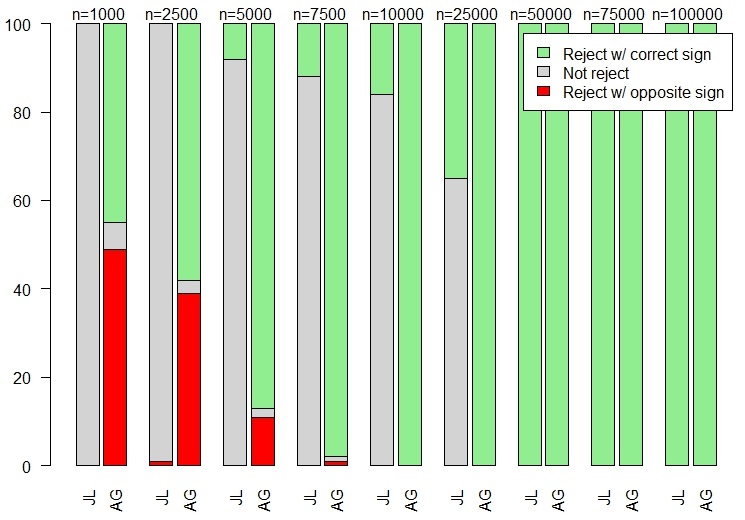}
		\caption{\label{subfig:decision-synth-coord1} Synthetic data, coordinate $\beta_1$}
	\end{subfigure}
	\begin{subfigure}{0.5\textwidth}
		\includegraphics[scale=\SCALE]{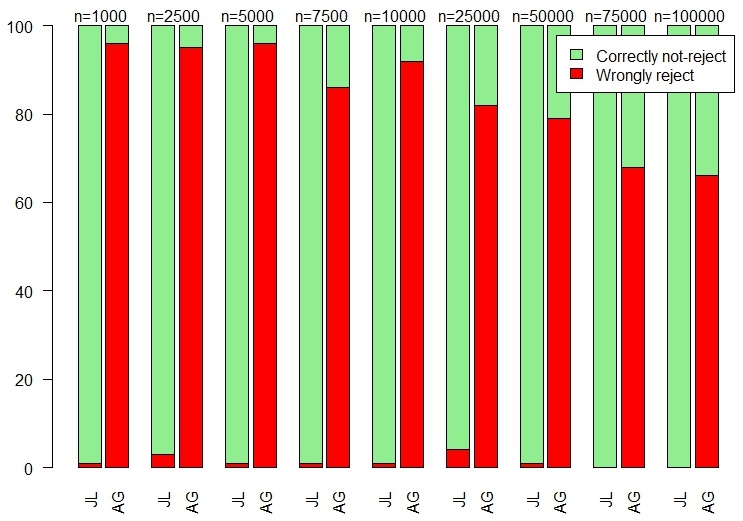}
		\caption{\label{subfig:decision-synth-coord3} Synthetic data, coordinate $\beta_3$}
	\end{subfigure}
	\end{center}
	\caption{Correctly and Wrongly Rejecting the Null-Hypothesis}
\end{figure}

\newpage

\section*{Acknowledgements} 
\acksfull

\ifx\forappendix\undefined
\bibliography{paper}
\bibliographystyle{icml2017}
\newpage
\appendix 
\fi

\omitfromversion{
\section{Omitted Proofs.}
\label{apx:privacy_of_alg}
\begin{theorem}
\label{apx_thm:alg_privacy}
Algorithm~\ref{alg:private_JL} is $(\epsilon,\delta)$-differentially private.
\end{theorem}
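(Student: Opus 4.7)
The plan is to decompose Algorithm~\ref{alg:private_JL} into two conceptually independent mechanisms and apply basic composition. The first mechanism is the Propose-Test-Release (PTR) subroutine of Dwork-Lei~\cite{DworkL09} that privately decides whether $\svmin(A)^2$ comfortably exceeds $w^2$. The second mechanism is the release of the Gaussian-JL projection itself, for which~\cite{BlockiBDS12, Sheffet15} already establish differential privacy conditioned on $\svmin(\cdot)^2 \geq w^2$ of the input being projected. Each mechanism will consume budget $(\epsilon/2, \delta/2)$, giving $(\epsilon,\delta)$-DP overall.

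First I would analyze the PTR step. For neighbors $A, A'$ differing in one row of $\ell_2$-norm at most $B$, write $A'^\T A' - A^\T A = \vec x'\vec x'^\T - \vec x \vec x^\T$, whose operator norm is at most $2B^2$; by Weyl's inequality, the map $A \mapsto \svmin(A)^2$ has $\ell_1$-sensitivity at most $2B^2$, so the noise $Z\sim Lap(4B^2/\epsilon)$ provides an $(\epsilon/2)$-DP estimate of $\svmin(A)^2$. The additive buffer $\tfrac{4B^2\ln(1/\delta)}{\epsilon}$ in the test is calibrated so that, by the Laplace tail bound $\Pr[|Z| > t\cdot 4B^2/\epsilon] \leq e^{-t}$ applied at $t=\ln(1/\delta)$, the event ``the test passes on $A$ but the corresponding neighbor $A'$ has $\svmin(A')^2 \leq w^2$'' (and its mirror image) happens with probability at most $\delta/2$. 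Thus the PTR test is $(\epsilon/2, \delta/2)$-DP, and whenever it passes we may safely act as if $\svmin(\cdot)^2 > w^2$ holds for both the input and every neighbor.

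Next I would handle the two branches. In the \texttt{matrix unaltered} branch, the precondition $\svmin(A)^2 > w^2$ (with $w^2$ as set in the algorithm) lets us invoke the DP-JLT guarantee of~\cite{Sheffet15}: releasing $RA$ with $R$ an $r\times n$ i.i.d.\ Gaussian matrix is $(\epsilon/2, \delta/2)$-DP. In the \texttt{matrix altered} branch, the appended block $wI_{d\times d}$ gives $A'^\T A' = A^\T A + w^2 I_{d\times d}$, so $\svmin(A')^2 \geq w^2$ holds \emph{deterministically}, regardless of the outcome of the PTR test; the same DP-JLT result~\cite{Sheffet15} then guarantees $(\epsilon/2, \delta/2)$-DP of $RA'$. (Note that in this branch neighboring inputs $A,A'$ still produce neighboring appended matrices, since the appended $wI$ block is the same for both.)

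Finally, applying basic composition~\cite{DworkKMMN06} to the two $(\epsilon/2, \delta/2)$-DP mechanisms yields the claimed $(\epsilon, \delta)$-DP of the whole algorithm. The main obstacle is the correct calibration of the PTR threshold: one must verify that the shift $\tfrac{4B^2\ln(1/\delta)}{\epsilon}$ is large enough that, on any neighbor $A'$, the two failure modes (the algorithm takes different branches on $A$ versus $A'$, or it takes the \texttt{matrix unaltered} branch on $A'$ while $\svmin(A')^2 \leq w^2$) are each of probability $\leq \delta/2$, since only then is the post-test input to the JLT mechanism genuinely in the regime where~\cite{Sheffet15} applies.
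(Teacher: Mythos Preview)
Your proposal is correct and follows essentially the same approach as the paper: bound the sensitivity of $\svmin(A)^2$ by $2B^2$ via Weyl's inequality so that $Lap(4B^2/\epsilon)$ gives an $(\epsilon/2)$-DP test, use the Laplace tail bound on the buffer $\tfrac{4B^2\ln(1/\delta)}{\epsilon}$ to guarantee the PTR safety event with failure $\leq \delta/2$, invoke the DP-JLT theorem of~\cite{Sheffet15} for the projection step, and finish by basic composition. The only cosmetic difference is in bookkeeping: the paper splits the budget as $(\epsilon/2,0)$ for the test plus $(\epsilon/2,\delta)$ for the release (absorbing the PTR failure into the release step), whereas you split it as $(\epsilon/2,\delta/2)+(\epsilon/2,\delta/2)$; both accountings are valid and yield the same $(\epsilon,\delta)$ total.
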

\begin{proof}
The proof of the theorem is based on the fact the Algorithm~\ref{alg:private_JL} is the result of composing the differentially private Propose-Test-Release algorithm of~\mycite{DworkL09} with the differentially private analysis of the Johnson-Lindenstrauss transform of~\mycite{Sheffet15}.
\omitproof{
\omitproof{More specifically, we use Theorem B.1 from~\mycite{Sheffet15} that states that given a matrix $A$ whose all of its singular values at greater than $T(\epsilon,\delta)$ where $T(\epsilon,\delta)^2 = \frac{2B^2}{\epsilon}\left(\sqrt {2r\ln(4/\delta)} + 2\ln(4/\delta)\right)$, publishing $RA$ is $(\epsilon,\delta)$-differentially private for a $r$-row matrix $R$ whose entries sampled are i.i.d normal Gaussians. Since we have that all of the singular values of $A'$ are greater than $w$ (as specified in Algorithm~\ref{alg:private_JL}),  outputting $RA'$ is $(\epsilon/2,\delta/2)$-differentially private. The rest of the proof boils down to showing that (i) the if-else-condition is $(\epsilon/2,0)$-differentially private and that (ii) w.p. $\leq \delta/2$ any matrix $A$ whose smallest singular value is smaller than $w$ passes the if-condition (step 3). If both these facts hold, then knowing whether we pass the if-condition or not is $(\epsilon/2$)-differentially private and the output of the algorithm is $(\epsilon/2,\delta)$-differentially private, hence basic composition gives the overall bound of $(\epsilon,\delta)$-differential privacy.

To prove (i) we have that for any pair of neighboring matrices $A$ and $B$ that differ only on the $i$-th row, denoted $\vec a_i$ and $\vec b_i$ resp., we have $B^\T B - \vec b_i \vec b_i^\T = A^\T A - \vec a_i \vec a_i^\T$. Applying Weyl's inequality we have
\[ \svmin(B^\T B) \leq  \svmin(B^\T B-\vec b_i \vec b_i^\T) + \svmax(\vec b_i \vec b_i^\T) \leq \svmin(A^\T A) + \svmax(\vec a_i \vec a_i^\T)  + \svmax(\vec b_i \vec b_i^\T) \leq \svmin(A^\T A) +2B^2\]
hence $|\svmin(A)^2 - \svmin(B)^2| \leq 2 B^2$, so adding $Lap(\frac{4B^2}{\epsilon})$ is $(\epsilon/2)$-differentially private.

To prove (ii), note that by standard tail-bounds on the Laplace distribution we have that $\Pr[Z~<~-\frac {4B^2\ln(1/\delta)}\epsilon] \leq \tfrac \delta 2$. Therefore, w.p. $1-\delta/2$ it holds that any matrix $A$ that passes the if-test of the algorithm must have $\svmin(A)^2 > w^2$.
Also note that a similar argument shows that for any $0 < \beta < 1$, any matrix $A$ s.t. $\svmin(A)^2 > w^2 + \frac {4B^2\ln(1/\beta)}\epsilon$  passes the if-condition of the algorithm w.p. $1-\beta$.}
}
\end{proof}
}

\ifdefined\confversion
\appendix
\ 
\newpage 
\section{Extended Introductory Discussion}
\label{apx:extended_intro}

Due to space constraint, a few details from the introductory parts (Sections~\ref{sec:intro},\ref{sec:preliminaries}) were omitted. We bring them in this appendix. We especially recommend the uninformed reader to go over the extended OLS background we provide in Appendix~\ref{apx_subsec:OLS_background}.

\subsection{Proof Of Privacy of Algorithm~\ref{alg:private_JL}}
\label{apx_subsec:thmprivacy}
\begin{theorem}\thmalgprivacystatement
\end{theorem}
\begin{proof}
The proof of the theorem is based on the fact the Algorithm~\ref{alg:private_JL} is the result of composing the differentially private Propose-Test-Release algorithm of~\mycite{DworkL09} with the differentially private analysis of the Johnson-Lindenstrauss transform of~\mycite{Sheffet15}.
	
More specifically, we use Theorem B.1 from~\mycite{Sheffet15} that states that given a matrix $A$ whose all of its singular values at greater than $T(\epsilon,\delta)$ where $T(\epsilon,\delta)^2 = \frac{2B^2}{\epsilon}\left(\sqrt {2r\ln(4/\delta)} + 2\ln(4/\delta)\right)$, publishing $RA$ is $(\epsilon,\delta)$-differentially private for a $r$-row matrix $R$ whose entries sampled are i.i.d normal Gaussians. Since we have that all of the singular values of $A'$ are greater than $w$ (as specified in Algorithm~\ref{alg:private_JL}),  outputting $RA'$ is $(\epsilon/2,\delta/2)$-differentially private. The rest of the proof boils down to showing that (i) the if-else-condition is $(\epsilon/2,0)$-differentially private and that (ii) w.p. $\leq \delta/2$ any matrix $A$ whose smallest singular value is smaller than $w$ passes the if-condition (step 3). If both these facts hold, then knowing whether we pass the if-condition or not is $(\epsilon/2$)-differentially private and the output of the algorithm is $(\epsilon/2,\delta)$-differentially private, hence basic composition gives the overall bound of $(\epsilon,\delta)$-differential privacy.
		
		To prove (i) we have that for any pair of neighboring matrices $A$ and $B$ that differ only on the $i$-th row, denoted $\vec a_i$ and $\vec b_i$ resp., we have $B^\T B - \vec b_i \vec b_i^\T = A^\T A - \vec a_i \vec a_i^\T$. Applying Weyl's inequality we have
		\ifx\twocols\undefined
		\[ \svmin(B^\T B) \leq  \svmin(B^\T B-\vec b_i \vec b_i^\T) + \svmax(\vec b_i \vec b_i^\T) \leq \svmin(A^\T A) + \svmax(\vec a_i \vec a_i^\T)  + \svmax(\vec b_i \vec b_i^\T) \leq \svmin(A^\T A) +2B^2\]
		\else
		\begin{align*}
		\svmin(B^\T B) &\leq  \svmin(B^\T B-\vec b_i \vec b_i^\T) + \svmax(\vec b_i \vec b_i^\T) \\ &\leq \svmin(A^\T A) + \svmax(\vec a_i \vec a_i^\T)  + \svmax(\vec b_i \vec b_i^\T) \\& \leq \svmin(A^\T A) +2B^2
		\end{align*}
		hence $|\svmin(A)^2 - \svmin(B)^2| \leq 2 B^2$, so adding $Lap(\frac{4B^2}{\epsilon})$ is $(\epsilon/2)$-differentially private.

		To prove (ii), note that by standard tail-bounds on the Laplace distribution we have that $\Pr[Z~<~-\frac {4B^2\ln(1/\delta)}\epsilon] \leq \tfrac \delta 2$. Therefore, w.p. $1-\delta/2$ it holds that any matrix $A$ that passes the if-test of the algorithm must have $\svmin(A)^2 > w^2$.
		Also note that a similar argument shows that for any $0 < \beta < 1$, any matrix $A$ s.t. $\svmin(A)^2 > w^2 + \frac {4B^2\ln(1/\beta)}\epsilon$  passes the if-condition of the algorithm w.p. $1-\beta$.
\end{proof}

\subsection{Omitted Preliminary Details}
\label{apx_subsec:extended_preliminary}

\linearAlgebraPreliminaries

\myparagraph{The Gaussian Distribution.} \gaussianfull

\cut{As mentioned, we require the following proposition.
\begin{proposition}
	[Proposition~\ref{pro:ratio_close_gaussians} restated]\propositionstatement
\end{proposition}}
\begin{proof}
The proof is mere calculation.
\ifx\twocols\undefined 
			\begin{align*}
			& \frac {\PDF_X(x)}{\PDF_{cY}(x)} = \sqrt{ \frac{c^2\lambda^2}{\sigma^2}}\cdot \frac{\exp(-\tfrac {x^2}{2\sigma^2})}{\exp(-\tfrac {x^2}{2c^2\lambda^2})} \leq c\cdot \exp(\frac {x^2}2 (\frac {1} {c^2\lambda^2} - \frac 1 {\sigma^2})) \leq c\cdot \exp(0)=c\cr
			& \frac {\PDF_X(x)}{\PDF_{Y}(x)} = \sqrt{ \frac{\lambda^2}{\sigma^2}}\cdot \frac{\exp(-\tfrac {x^2}{2\sigma^2})}{\exp(-\tfrac {x^2}{2\lambda^2})} \geq c^{-1} \exp( \tfrac {x^2}2(\tfrac 1 {\lambda^2}-\tfrac 1 {\sigma^2}) ) \geq c^{-1} \exp(0) = c^{-1}
			\end{align*}
\else
			\begin{align*}
			 \frac {\PDF_X(x)}{\PDF_{cY}(x)} & = \sqrt{ \frac{c^2\lambda^2}{\sigma^2}}\cdot \frac{\exp(-\tfrac {x^2}{2\sigma^2})}{\exp(-\tfrac {x^2}{2c^2\lambda^2})} \\
			& \leq c\cdot \exp(\frac {x^2}2 (\frac {1} {c^2\lambda^2} - \frac 1 {\sigma^2})) \leq c\cdot \exp(0)=c\\
			 \frac {\PDF_X(x)}{\PDF_{Y}(x)} & = \sqrt{ \frac{\lambda^2}{\sigma^2}}\cdot \frac{\exp(-\tfrac {x^2}{2\sigma^2})}{\exp(-\tfrac {x^2}{2\lambda^2})}\\& \geq c^{-1} \exp( \tfrac {x^2}2(\tfrac 1 {\lambda^2}-\tfrac 1 {\sigma^2}) ) \geq \tfrac{ \exp(0)}c = c^{-1}
			\end{align*}
\fi 
\end{proof}
\myparagraph{The $T_k$-Distribution.} \tdistributionfull

\myparagraph{Differential Privacy facts.} \DPfull

\subsection{Detailed Background on Ordinary Least Squares}
\label{apx_subsec:OLS_background}
For the unfamiliar reader, we give a short description of the model under which OLS operates as well as the confidence bounds one derives using OLS. This is by no means an exhaustive account of OLS and we refer the interested reader to~\mycite{Rao73, MullerS06}.

Given $n$ observations $\{(\vec x_i, y_i)\}_{i=1}^n$ where for all $i$ we have $\vec x_i\in\mathbb{R}^p$ and $y_i \in \mathbb{R}$, we assume the existence of a $p$-dimensional vector $\vec\beta \in \mathbb{R}^p$ s.t. the label $y_i$ was derived by $y_i = \vec\beta^\T\vec x_i + e_i$ where $e_i \sim \gaussian(0,\sigma^2)$ independently (also known as the  homoscedastic Gaussian model). We use the matrix notation where $X$ denotes the $(n\times p)$-matrix whose rows are $\vec x_i$, and use $\vec y,\vec e \in \mathbb{R}^n$ to denote the vectors whose $i$-th entry is $y_i$ and $e_i$ resp. To simplify the discussion, we assume $X$ has full rank.

The parameters of the model are therefore $\vec \beta$ and $\sigma^2$, which we set to discover. To that end, we minimize $\min_{\vec z} \| \vec y - X\vec z\|^2$ and solve
\begin{equation*}
\hvec\beta  = (X^\T X)^{-1} X^\T \vec y = (X^\T X)^{-1} X^\T (X\vec\beta + \vec e) = \vec\beta + X^\mpi\vec e\end{equation*}
As $\vec e\sim \gaussian(\vec 0_n, \sigma^2 I_{n\times n})$, it holds that $\hat{\vec\beta} \sim \gaussian(\vec\beta, \sigma^2 (X^\T X)^{-1})$, or alternatively, that for every coordinate $j$ it holds that $\hat\beta_j = \vec e_j^\T \hat{\vec \beta} \sim \gaussian(\beta_j, \sigma^2 (X^\T X)^{-1}_{j,j})$. Hence we get $ \frac{\hat\beta_j - \beta_j}{\sigma \sqrt{(X^\T X)^{-1}_{j,j}}} \sim \gaussian(0,1)$.
In addition, we denote the vector 
\begin{equation*}\vec\zeta = y-X\hat{\vec\beta} = (X\vec\beta + \vec e) - X(\vec\beta + X^\mpi\vec e) = (I - XX^\mpi)\vec e \end{equation*} and since $XX^\mpi$ is a rank-$p$ (symmetric) projection matrix, we have $\vec\zeta \sim\gaussian(0, \sigma^2(I-XX^\mpi))$. Therefore, $\|\vec\zeta\|^2$ is equivalent to summing the squares of $(n-p)$ i.i.d samples from $\gaussian(0,\sigma^2)$. In other words, the quantity $\|\vec\zeta\|^2/\sigma^2$ is sampled from a $\chi^2$-distribution with $(n-p)$ degrees of freedom.

We sidetrack from the OLS discussion to give the following bounds on the $l_2$-distance between $\vec\beta$ and $\hvec\beta$, as the next claim shows.
\begin{claim}
	\label{clm:sizes_hatbeta_zeta}
	For any $0<\nu<1/2$, the following holds w.p. $\geq 1- \nu$ over the \emph{randomness of the model} (the randomness over  $\vec e$)
	\begin{align}
	{ \|\vec\beta-\hvec\beta\|^2 } &= \|X^\mpi \vec e\|^2 \cr &= O\left(  \sigma^2  \log(p/\nu) \cdot \|X^\mpi\|_F^2 \right) \label{eq:diff_beta_hatbeta}\\
	\|\hvec\beta\|^2 &= \|\vec\beta + X^\mpi \vec e\|^2 \cr & = O(\left( \|\vec\beta\|+ \sigma\cdot \|X^\mpi\|_F \cdot\sqrt{\log(p/\nu)} \right)^2 ) \cr
	\left|\tfrac 1 {n-p}\|\vec\zeta\|^2 - \sigma^2\right| &= O(\sqrt{\tfrac {\ln(1/\nu)}{n-p}}) \notag
	\end{align}
\end{claim}
\begin{proof}
		Since $\vec e \sim\gaussian(\vec 0_n, \sigma^2 I_{n\times n})$ then $X^\mpi \vec e \sim \gaussian(\vec 0_n, \sigma^2 (X^\T X)^{-1})$. Denoting the SVD decomposition $(X^\T X)^{-1} = V S V^\T$ with $S$ denoting the diagonal matrix whose entries are $\svmax^{-2}(X),\ldots, \svmin^{-2}(X)$, we have that $V^\T X^\mpi\vec e \sim\gaussian(\vec 0_n, \sigma^2 S)$. And so, each coordinate of  $V^\T X^\mpi \vec e$ is distributed like an i.i.d Gaussian. So w.p. $\geq 1- \nu/2$ non of these Gaussians is a factor of $O(\sigma\sqrt{\ln(p/\nu)})$ greater than its standard deviation. And so w.p. $\geq 1-\nu/2$ it holds that $\|X^\mpi \vec e\|^2 = \|V^\T X^\mpi\vec e\|^2 \leq O(\sigma^2 \log(p/\nu) \left( \sum_{i}\sigma_i^{-2}(X) \right) )$. Since $\sum_{i}\sigma_i^{-2}(X) = \trace((X^\T X)^{-1}) = \trace( X^\mpi (X^\mpi)^\T) = \|X^\mpi\|_F^2$, the bound of~\eqref{eq:diff_beta_hatbeta} is proven.
		
		The bound on $\|\hvec\beta\|^2$ is an immediate corollary of~\eqref{eq:diff_beta_hatbeta} using the triangle inequality.\footnote{Observe, though $\vec e$ is spherically symmetric, and is likely to be approximately-orthogonal to $\vec\beta$, this does not necessarily hold for $X^\mpi\vec e$ which isn't spherically symmetric. Therefore, we result to bounding the $l_2$-norm of $\hvec\beta$ using the triangle bound.} The bound on $\|\vec\zeta\|^2$ follows from tail bounds on the $\chi^2_{n-p}$ distribution, as detailed in Section~\ref{sec:preliminaries}.
\end{proof}

Returning to OLS, it is important to note that $\hvec\beta$ and $\vec\zeta$ are independent of one another. (Note, $\hvec\beta$ depends solely on $X^\mpi \vec e = (X^\mpi X)X^\mpi\vec e = X^\mpi P_U\vec e$, whereas $\vec\zeta$ depends on $(I-XX^\mpi)\vec e = P_{U^\perp}\vec e$. As $\vec e$ is spherically symmetric, the two projections are independent of one another and so $\hvec\beta$ is independent of $\vec\zeta$.) As a result of the above two calculations, we have that the quantity
\[ \resizebox*{0.49\textwidth}{!}{$t_{\hat\beta_j}(\beta_j) \stackrel{\rm def}{=} \frac {\hat\beta_j -\beta_j} {\sqrt{(X^\T X)^{-1}_{j,j}}\cdot  \frac{\|\vec\zeta\|}{\sqrt{n-p}}} = \frac {\hat\beta_j -\beta_j} {\sigma\sqrt{(X^\T X)^{-1}_{j,j}}} \Big/  \frac{\|\vec\zeta\|}{\sigma\sqrt{n-p}}
$}\]
is distributed like a $T$-distribution with $(n-p)$ degrees of freedom.  Therefore, we can compute an exact probability estimation for this quantity. That is, for any measurable $S\subset \mathbb{R}$ we have
\ifx\twocols\undefined
\[ \Pr\left[\hvec\beta\textrm{ and }\vec\zeta\textrm{ satisfying } \frac {\hat\beta_j -\beta_j} {\sqrt{(X^\T X)^{-1}_{j,j}}\cdot  \frac{\|\vec\zeta\|}{\sqrt{n-p}}} \in S \right] = \int_S \PDF_{T_{n-p}}(x)dx\]
\else 
\[ \Pr\left[\hvec\beta\textrm{ and }\vec\zeta\textrm{ satisfying } t_{\hat\beta_j}(\beta_j) \in S \right] = \int_S \PDF_{T_{n-p}}(x)dx\]
\fi
The importance of the \emph{$t$-value} $t(\beta_j)$ lies in the fact that it can be fully estimated from the observed data $X$ and $y$ (for any value of $\beta_j$), which makes it a \emph{pivotal quantity}. Therefore, given $X$ and $\vec y$, we can use $t(\beta_j)$ to describe the likelihood of any $\beta_j$~--- for any $z\in\mathbb{R}$ we can now give an estimation of how likely it is to have $\beta_j=z$ (which is $\PDF_{T_{n-p}}(t(z))$). 
The $t$-values enable us to perform multitude of statistical inferences. For example, we can say which of two hypotheses is more likely and by how much (e.g., we are $5$-times more likely that the hypothesis $\beta_j =3$ is true than the hypothesis $\beta_j =14$ is true); we can compare between two coordinates $j$ and $j'$ and report we are more confident that $\beta_j >0$ than $\beta_{j'}>0$; or even compare among the $t$-values we get across multiple datasets (such as the datasets we get from subsampling rows from a single dataset).

In particular, we can use $t(\beta_j)$ to $\alpha$-reject unlikely values of $\beta_j$. Given $0<\alpha < 1$, we denote $c_\alpha$ as the number for which the interval $(- c_\alpha, c_\alpha)$ contains a probability mass of $1-\alpha$ from the $T_{n-p}$-distribution. And so we derive a corresponding \emph{confidence interval} $I_\alpha$ centered at $\hat\beta_j$ where $\beta_j \in I_\alpha$ with confidence of level of $1-\alpha$. \omitfromversion{Using tail bounds on the $T_{n-p}$-distribution~\mycite{Soms76}, we have that the length of the interval is $|I_\alpha| = O\left(\sqrt{(X^\T X)^{-1}_{j,j} \cdot \tfrac {\|\vec\zeta\|^2}{n-p}} \cdot \sqrt{(n-p)((\tfrac 1 \alpha)^{\tfrac 2 {n-p-1}}-1)} \right)$. Furthermore, since it is known that as the number of degrees of freedom of a $T$-distribution tends to infinity then the $T$-distribution becomes close to a normal Gaussian, it is common to use the $\PDF$ of a normal Gaussian instead. I.e., denote $\tau_\alpha$ as the number of which $\int_{\tau_\alpha}^\infty \PDF_{\gaussian(0,1)}(x)dx = \tfrac \alpha 2$, then $I_\alpha = \beta_j \pm \tau_\alpha \sqrt{(X^\T X)^{-1}_{j,j} \cdot \tfrac {\|\vec\zeta\|^2}{n-p}}$.}

We comment as to the actual meaning of this confidence interval. Our analysis thus far applied w.h.p to a vector $\vec y$ derived according to this model. Such $X$ and $\vec y$ will result in the quantity $t_{\hat\beta_j}(\beta_j)$ being distributed like a $T_{n-p}$-distribution~--- where $\beta_j$ is given as the model parameters and $\hat\beta_j$ is the random variable. We therefore have that guarantee that for $X$ and $\vec y$ derived according to this model, the event $E_\alpha \stackrel{\rm def}= \hat\beta_j \in  \left(\beta_j \pm c_\alpha\cdot \sqrt{(X^\T X)^{-1}_{j,j} \cdot \tfrac {\|\vec\zeta\|^2}{n-p}} \right)$ happens w.p. $1-\alpha$.  However, the analysis done over a \emph{given} dataset $X$ and $\vec y$ (once $\vec y$ has been drawn) views the quantity $t_{\hat\beta_j}(\beta_j)$ with $\hat\beta_j$ given and $\beta_j$ unknown. Therefore the event $E_\alpha$ either holds or does not hold. That is why the alternative terms of \emph{likelihood} or \emph{confidence} are used, instead of probability. We have a confidence level of $1-\alpha$ that indeed $\beta_j \in \hat{\beta_j}\pm c_\alpha\cdot \sqrt{(X^\T X)^{-1}_{j,j} \cdot \tfrac {\|\vec\zeta\|^2}{n-p}}$, because this event does happen in $1-\alpha$ fraction of all datasets generated according to our model.

\myparagraph{Rejecting the Null Hypothesis.} One important implication of the quantity $t(\beta_j)$ is that we can refer specifically to the hypothesis that $\beta_j=0$, called the \emph{null hypothesis}. This quantity, $t_0 \stackrel{\rm def} = t_{\hat\beta_j}(0) = \frac {\hat\beta_j\sqrt{n-p}} {\|\vec\zeta\|\sqrt{(X^\T X)^{-1}_{j,j}}}$, represents how large is $\hat\beta_j$ relatively to the empirical estimation of standard deviation $\sigma$. Since it is known that as the number of degrees of freedom of a $T$-distribution tends to infinity then the $T$-distribution becomes a normal Gaussian, it is common to think of $t_0$ as a sample from a normal Gaussian $\gaussian(0,1)$. This allows us to associate $t_0$ with a $p$-value, estimating the event ``$\beta_j$ and $\hat\beta_j$ have different signs.''
Formally, we define $p_0 = \int_{|t_0|}^{\infty} \frac 1 {\sqrt{2\pi}} e^{-x^2/2}dx$. It is common to reject the null hypothesis when $p_0$ is sufficiently small (typically, below $0.05$).\footnote{Indeed, it is more accurate to associate with $t_0$ the value $\int_{|t_0|}^\infty \PDF_{T_{n-p}}(x)dx$ and check that this value is $<\alpha$. However, as most uses take $\alpha$ to be a constant (often $\alpha = 0.05$), asymptotically the threshold we get for rejecting the null hypothesis are the same.}

Specifically, given $\alpha \in (0,1/2)$, we say we \emph{$\alpha$-reject the null hypothesis} if $p_0 <\alpha$. Let $\tau_\alpha$ be the number s.t. $\Phi(\tau_\alpha) = \int_{\tau_\alpha}^{\infty} \tfrac 1 {\sqrt{2\pi}} e^{-x^2/2}dx = \alpha$. (Standard bounds give that $\tau_\alpha < 2\sqrt{\ln(1/\alpha)}$.) This means we $\alpha$-reject the null hypothesis if  $t_0 > \tau_\alpha$ or $t_0 < -\tau_\alpha$, meaning if $|\hat\beta_j| > \tau_\alpha \sqrt{(X^\T X)^{-1}_{j,j}}\tfrac {\|\vec\zeta\|}{\sqrt{n-p}}$.

We can now lower bound the number of i.i.d sample points needed in order to $\alpha$-reject the null hypothesis. This bound will be our basis for comparison~--- between standard OLS and the differentially private version.\footnote{This theorem is far from being new (except for maybe focusing on the setting where every row in $X$ is sampled from an i.i.d multivariate Gaussians), it is just stated in a non-standard way, discussing solely the power of the $t$-test in OLS. For further discussions on sample size calculations see~\mycite{MullerS06}.}

\begin{theorem}
	[Theorem~\ref{thm:baseline_rejecting_nh} restated.]
	Fix any positive definite matrix $\Sigma\in\mathbb{R}^{p\times p}$ and any $\nu \in (0,\tfrac 1 2)$. Fix parameters $\vec\beta \in \mathbb{R}^p$ and $\sigma^2$ and a coordinate $j$ s.t. $\beta_j \neq 0$. Let $X$ be a matrix whose $n$ rows are i.i.d samples from $\gaussian(\vec 0, \Sigma)$, and $\vec y$ be a vector where $y_i - (X\vec\beta)_i$ is sampled i.i.d from $\gaussian(0,\sigma^2)$. Fix $\alpha \in (0,1)$. 
	Then w.p. $\geq 1-\nu$ we have that the $(1-\alpha)$-confidence interval is of length $O(c_\alpha\sqrt{\sigma^2 / (n\svmin(\Sigma)}))$ provided $n \geq C_1 (p+\ln(1/\nu))$ for some sufficiently large constant $C_1$. Furthermore, there exists a constant $C_2$ such that w.p. $\geq 1- \alpha-\nu$ we (correctly) reject the null hypothesis provided
	\[ n \geq \max\left\{C_1 (p+\ln(1/\nu)), ~~ C_2 \frac {\sigma^2}{\beta_j^2} \cdot \frac {c_\alpha^2+\tau_\alpha^2} {\svmin(\Sigma)} \right\}\] 
	Here $c_\alpha$ denotes the number for which $\int_{-c_\alpha}^{c_\alpha} \PDF_{T_{n-p}}(x)dx = 1-\alpha$. (If we are content with approximating $T_{n-p}$ with a normal Gaussian than one can set $c_\alpha \approx \tau_\alpha  < 2\sqrt{\ln(1/\alpha)}$.) 
\end{theorem}
\begin{proof}
		The discussion above shows that w.p. $\geq 1- \alpha$ we have $|\beta_j- \hat\beta_j| \leq c_\alpha \sqrt{(X^\T X)^{-1}_{j,j}\tfrac {\|\vec\zeta\|^2}{n-p}}$; and in order to $\alpha$-reject the null hypothesis we must have $|\hat\beta_j| > \tau_\alpha \sqrt{(X^\T X)^{-1}_{j,j}\tfrac {\|\vec\zeta\|^2}{n-p}}$. Therefore, a sufficient condition for OLS to $\alpha$-reject the null-hypothesis is to have $n$ large enough s.t. $|\beta_j | > (c_\alpha+\tau_\alpha) \sqrt{(X^\T X)^{-1}_{j,j}\tfrac {\|\vec\zeta\|^2}{n-p}}$. We therefore argue that w.p.$\geq 1-\nu$ this inequality indeed holds.
		
		We assume each row of $X$ i.i.d vector $\vec x_i \sim \gaussian(\vec 0_p, \Sigma)$, and recall that according to the model $\|\vec\zeta\|^2 \sim \sigma^2 \chi^2(n-p)$. Straightforward concentration bounds on Gaussians and on the $\chi^2$-distribution give:\\
		\indent (i) W.p. $\leq \alpha$ it holds that $\|\vec\zeta\| > \sigma\left( \sqrt{n-p} + 2\ln(2/\alpha))\right)$. (This is part of the standard OLS analysis.)\\
		\indent (ii) W.p. $\leq \nu$ it holds that $\svmin(X^\T X) \leq \svmin(\Sigma) ( \sqrt n - (\sqrt p + \sqrt{2\ln(2/\nu)}))^2$.~\mycite{RudelsonV09}\\
		Therefore, due to the lower bound $n = \Omega(p + \ln(1/\nu))$, w.p.$\geq 1- \nu-\alpha$ we have that none of these events hold. In such a case we have $\sqrt{(X^\T X)^{-1}_{j,j}} \leq \sqrt{\svmax((X^\T X)^{-1})} = O(\tfrac 1 {\sqrt{n \svmin(\Sigma)}})$ and $\|\vec\zeta\| = O(\sigma\sqrt{n-p})$. This implies that the confidence interval of level $1-\alpha$ has length of $c_\alpha\sqrt{(X^\T X)^{-1}_{j,j} \cdot \tfrac {\|\vec\zeta\|^2}{n-p}} = O\left( c_\alpha\sqrt{ \tfrac {\sigma^2} {n\svmin(\Sigma)}} \right)$; and that in order to $\alpha$-reject that null-hypothesis it suffices to have $|\beta_j | = \Omega\left((c_\alpha+\tau_\alpha) \sqrt{\tfrac {\sigma^2} {n\svmin(\Sigma)}}\right)$. Plugging in the lower bound on $n$, we see that this inequality holds.
		
		We comment that for sufficiently large constants $C_1,C_2$, it holds that all the constants hidden in the $O$- and $\Omega$-notations of the proof are close to $1$. I.e., they are all within the interval $(1\pm \eta)$ for some small $\eta>0$ given $C_1, C_2 \in \Omega(\eta^{-2})$. 
\end{proof}

\cut{
We expend on the details given in Section~\ref{sec:background_OLS}.
Recall, we are given $n$ observations $\{(\vec x_i, y_i)\}_{i=1}^n$ where for all $i$ we have $\vec x_i\in\mathbb{R}^p$ and $y_i \in \mathbb{R}$, we assume the existence of a $p$-dimensional vector $\vec\beta \in \mathbb{R}^p$ s.t. the label $y_i$ was derived by $y_i = \vec\beta^\T\vec x_i + e_i$ where $e_i \sim \gaussian(0,\sigma^2)$ independently (also known as the  homoscedastic Gaussian model). We use the matrix notation where $X$ denotes the $(n\times p)$-matrix whose rows are $\vec x_i$, and use $\vec y,\vec e \in \mathbb{R}^n$ to denote the vectors whose $i$-th entry is $y_i$ and $e_i$ resp. To simplify the discussion, we assume $X$ has full rank.

And so, we approximate $\vec{\beta}$ and $\sigma^2$ according to Equations~\eqref{eq:hat_beta} and~\eqref{eq:zeta}:
\begin{equation*}
\hvec\beta = \arg\min_{\vec z} \| \vec y - X\vec z\|^2 = (X^\T X)^{-1} X^\T \vec y = (X^\T X)^{-1} X^\T (X\vec\beta + \vec e) = \vec\beta + X^\mpi\vec e\end{equation*}
\begin{equation*}\vec\zeta = y-X\hat{\vec\beta} = (X\vec\beta + \vec e) - X(\vec\beta + X^\mpi\vec e) = (I - XX^\mpi)\vec e \end{equation*}
and we have that the pivotal quantity
\[ t_{\hat\beta_j}(\beta_j) \stackrel{\rm def}{=} \frac {\hat\beta_j -\beta_j} {\sqrt{(X^\T X)^{-1}_{j,j}}\cdot  \frac{\|\vec\zeta\|}{\sqrt{n-p}}} = \frac {\hat\beta_j -\beta_j} {\sigma\sqrt{(X^\T X)^{-1}_{j,j}}} \Big/  \frac{\|\vec\zeta\|}{\sigma\sqrt{n-p}}
\]
is distributed like a $T$-distribution with $(n-p)$ degrees of freedom.  Therefore, we can compute an exact probability estimation for this quantity. That is, for any measurable $S\subset \mathbb{R}$ we have
\[ \Pr\left[\hvec\beta\textrm{ and }\vec\zeta\textrm{ satisfying } \frac {\hat\beta_j -\beta_j} {\sqrt{(X^\T X)^{-1}_{j,j}}\cdot  \frac{\|\vec\zeta\|}{\sqrt{n-p}}} \in S \right] = \int_S \PDF_{T_{n-p}}(x)dx\]

We sidetrack from the OLS discussion to give the following bounds on the $l_2$-distance between $\vec\beta$ and $\hvec\beta$, as the next claim shows.
\begin{claim} [Claim~\ref{clm:sizes_hatbeta_zeta} restated]
	For any $0<\nu<1/2$, the following holds w.p. $\geq 1- \nu$ over the \emph{randomness of the model} (the randomness over  $\vec e$)
	\begin{align}
	&{ \|\vec\beta-\hvec\beta\|^2 } &&= \|X^\mpi \vec e\|^2 = O\left(  \sigma^2  \log(p/\nu) \cdot \|X^\mpi\|_F^2 \right) \label{eq:diff_beta_hatbeta}\\
	&\|\hvec\beta\|^2 &&= \|\vec\beta + X^\mpi \vec e\|^2 = O(\left( \|\vec\beta\|+ \sigma\cdot \|X^\mpi\|_F \cdot\sqrt{\log(p/\nu)} \right)^2 ) \cr
	&\left|\tfrac 1 {n-p}\|\vec\zeta\|^2 - \sigma^2\right| &&= O(\sqrt{\tfrac {\ln(1/\nu)}{n-p}}) \notag
	\end{align}
\end{claim}
	\begin{proof}
		Since $\vec e \sim\gaussian(\vec 0_n, \sigma^2 I_{n\times n})$ then $X^\mpi \vec e \sim \gaussian(\vec 0_n, \sigma^2 (X^\T X)^{-1})$. Denoting the SVD decomposition $(X^\T X)^{-1} = V S V^\T$ with $S$ denoting the diagonal matrix whose entries are $\svmax^{-2}(X),\ldots, \svmin^{-2}(X)$, we have that $V^\T X^\mpi\vec e \sim\gaussian(\vec 0_n, \sigma^2 S)$. And so, each coordinate of  $V^\T X^\mpi \vec e$ is distributed like an i.i.d Gaussian. So w.p. $\geq 1- \nu/2$ non of these Gaussians is a factor of $O(\sigma\sqrt{\ln(p/\nu)})$ greater than its standard deviation. And so w.p. $\geq 1-\nu/2$ it holds that $\|X^\mpi \vec e\|^2 = \|V^\T X^\mpi\vec e\|^2 \leq O(\sigma^2 \log(p/\nu) \left( \sum_{i}\sigma_i^{-2}(X) \right) )$. Since $\sum_{i}\sigma_i^{-2}(X) = \trace((X^\T X)^{-1}) = \trace( X^\mpi (X^\mpi)^\T) = \|X^\mpi\|_F^2$, the bound of~\eqref{eq:diff_beta_hatbeta} is proven.
		
		The bound on $\|\hvec\beta\|^2$ is an immediate corollary of~\eqref{eq:diff_beta_hatbeta} using the triangle inequality.\footnote{Observe, though $\vec e$ is spherically symmetric, and is likely to be approximately-orthogonal to $\vec\beta$, this does not necessarily hold for $X^\mpi\vec e$ which isn't spherically symmetric. Therefore, we result to bounding the $l_2$-norm of $\hvec\beta$ using the triangle bound.} The bound on $\|\vec\zeta\|^2$ follows from tail bounds on the $\chi^2_{n-p}$ distribution, as detailed in Section~\ref{sec:preliminaries}.
	\end{proof}

Returning to OLS, we comment as to the actual meaning of the $t$-values. Our analysis thus far applied w.h.p to a vector $\vec y$ derived according to this model. Such $X$ and $\vec y$ will result in the quantity $t_{\hat\beta_j}(\beta_j)$ being distributed like a $T_{n-p}$-distribution~--- where $\beta_j$ is given as the model parameters and $\hat\beta_j$ is the random variable. We therefore have that guarantee that for $X$ and $\vec y$ derived according to this model, the event $E_\alpha \stackrel{\rm def}= \hat\beta_j \in  \left(\beta_j- c_\alpha\cdot \sqrt{(X^\T X)^{-1}_{j,j} \cdot \tfrac {\|\vec\zeta\|^2}{n-p}} , \beta_j + c_\alpha\cdot \sqrt{(X^\T X)^{-1}_{j,j} \cdot \tfrac {\|\vec\zeta\|^2}{n-p}}  \right)$ happens w.p. $1-\alpha$.  However, the analysis done over a \emph{given} dataset $X$ and $\vec y$ (once $\vec y$ has been drawn) views the quantity $t_{\hat\beta_j}(\beta_j)$ with $\hat\beta_j$ given and $\beta_j$ unknown. Therefore the event $E_\alpha$ either holds or does not hold. That is why the alternative terms of \emph{likelihood} or \emph{confidence} are used, instead of probability. We have a confidence level of $1-\alpha$ that indeed $\beta_j \in \hat{\beta_j}\pm c_\alpha\cdot \sqrt{(X^\T X)^{-1}_{j,j} \cdot \tfrac {\|\vec\zeta\|^2}{n-p}}$, because this event does happen in $1-\alpha$ fraction of all datasets generated according to our model.

\myparagraph{Rejecting the Null Hypothesis.} One important implication of the quantity $t(\beta_j)$ is that we can refer specifically to the hypothesis that $\beta_j=0$, called the \emph{null hypothesis}. This quantity, $t_0 \stackrel{\rm def} = t_{\hat\beta_j}(0) = \frac {\hat\beta_j\sqrt{n-p}} {\|\vec\zeta\|\sqrt{(X^\T X)^{-1}_{j,j}}}$, represents how large is $\hat\beta_j$ relatively to the empirical estimation of standard deviation $\sigma$. Since it is known that as the number of degrees of freedom of a $T$-distribution tends to infinity then the $T$-distribution becomes a normal Gaussian, it is common to think of $t_0$ as a sample from a normal Gaussian $\gaussian(0,1)$. This allows us to associate $t_0$ with a $p$-value, estimating the event ``$\beta_j$ and $\hat\beta_j$ have different signs.''
Formally, we define $p_0 = \int_{|t_0|}^{\infty} \frac 1 {\sqrt{2\pi}} e^{-x^2/2}dx$. It is common to reject the null hypothesis when $p_0$ is sufficiently small (typically, below $0.05$).\footnote{Indeed, it is more accurate to associate with $t_0$ the value $\int_{|t_0|}^\infty \PDF_{T_{n-p}}(x)dx$ and check that this value is $<\alpha$. However, as most uses take $\alpha$ to be a constant (often $\alpha = 0.05$), asymptotically the threshold we get for rejecting the null hypothesis are the same.}

Specifically, given $\alpha \in (0,1/2)$, we say we \emph{$\alpha$-reject the null hypothesis} if $p_0 <\alpha$. Let $\tau_\alpha$ be the number s.t. $\Phi(\tau_\alpha) = \int_{\tau_\alpha}^{\infty} \tfrac 1 {\sqrt{2\pi}} e^{-x^2/2}dx = \alpha$. (Standard bounds give that $\tau_\alpha < 2\sqrt{\ln(1/\alpha)}$.) This means we $\alpha$-reject the null hypothesis if  $t_0 > \tau_\alpha$ or $t_0 < -\tau_\alpha$, meaning if $|\hat\beta_j| > \tau_\alpha \sqrt{(X^\T X)^{-1}_{j,j}}\tfrac {\|\vec\zeta\|}{\sqrt{n-p}}$.

Our basis for comparison~--- between standard OLS and the differentially private version~--- is the number of i.i.d sample points needed in order to $\alpha$-reject the null hypothesis when all sample points are drawn from a Gaussian.
\begin{theorem}[Theorem~\ref{thm:baseline_rejecting_nh} restated.]
	Fix any positive definite matrix $\Sigma\in\mathbb{R}^{p\times p}$ and any $\nu \in (0,\tfrac 1 2)$. Fix parameters $\vec\beta \in \mathbb{R}^p$ and $\sigma^2$ and a coordinate $j$ s.t. $\beta_j \neq 0$. Let $X$ be a matrix whose $n$ rows are i.i.d samples from $\gaussian(\vec 0, \Sigma)$, and $\vec y$ be a vector where $y_i - (X\vec\beta)_i$ is sampled i.i.d from $\gaussian(0,\sigma^2)$. Fix $\alpha \in (0,1)$. 
	Then w.p. $\geq 1-\nu$ we have that the confidence interval of confidence level $1-\alpha$ is of length $O(c_\alpha\sqrt{\sigma^2 / (n\svmin(\Sigma)}))$ provided $n \geq C_1 (p+\ln(1/\nu))$ for some sufficiently large constant $C_1$. Furthermore, there exists constants $C_1, C_2$ such that w.p. $\geq 1- \nu$ we $\alpha$-reject the null hypothesis provided
	\[ n \geq \max\left\{C_1 (p+\ln(1/\nu)), ~~ C_2 \frac {\sigma^2}{\beta_j^2} \cdot \frac {c_\alpha^2+\tau_\alpha^2} {\svmin(\Sigma)} \right\}\] 
	Here $c_\alpha$ denotes the number for which $\int_{-c_\alpha}^{c_\alpha} \PDF_{T_{n-p}}(x)dx = 1-\alpha$. (If we are content with approximating $T_{n-p}$ with a normal Gaussian than one can set $c_\alpha \approx \tau_\alpha  \approx 2\sqrt{\ln(1/\alpha)}$.) 
\end{theorem}
	\begin{proof}
		The discussion above shows that w.p. $\geq 1- \alpha$ we have $|\beta_j- \hat\beta_j| \leq c_\alpha \sqrt{(X^\T X)^{-1}_{j,j}\tfrac {\|\vec\zeta\|^2}{n-p}}$; and in order to $\alpha$-reject the null hypothesis we must have $|\hat\beta_j| > \tau_\alpha \sqrt{(X^\T X)^{-1}_{j,j}\tfrac {\|\vec\zeta\|^2}{n-p}}$. Therefore, a sufficient condition to $\alpha$-reject the null-hypothesis is to have $n$ large enough s.t. $|\beta_j | > (c_\alpha+\tau_\alpha) \sqrt{(X^\T X)^{-1}_{j,j}\tfrac {\|\vec\zeta\|^2}{n-p}}$. We therefore argue that w.p.$\geq 1-\nu$ this inequality indeed holds.
		
		We assume each row of $X$ i.i.d vector $\vec x_i \sim \gaussian(\vec 0_p, \Sigma)$, and recall that according to the model $\|\vec\zeta\|^2 \sim \sigma^2 \chi^2(n-p)$. Straightforward concentration bounds on Gaussians and on the $\chi^2$-distribution give:\\
		\indent (i) W.p. $\leq \nu/2$ it holds that $\|\vec\zeta\| > \sigma\left( \sqrt{n-p} + 2\ln(4/\nu))\right)$.\\
		\indent (ii) W.p. $\leq \nu/2$ it holds that $\svmin(X^\T X) \leq \svmin(\Sigma) ( \sqrt n - (\sqrt p + \sqrt{2\ln(4/\nu)}))^2$.~\mycite{RudelsonV09}\\
		Therefore, due to the lower bound $n = \Omega(p + \ln(1/\nu))$, w.p.$\geq 1- \nu$ we have that none of these events hold. In such a case we have $\sqrt{(X^\T X)^{-1}_{j,j}} \leq \sqrt{\svmax((X^\T X)^{-1})} = O(\tfrac 1 {\sqrt{n \svmin(\Sigma)}})$ and $\|\vec\zeta\| = O(\sigma\sqrt{n-p})$. This implies that the confidence interval of level $1-\alpha$ has length of $c_\alpha\sqrt{(X^\T X)^{-1}_{j,j} \cdot \tfrac {\|\vec\zeta\|^2}{n-p}} = O\left( c_\alpha\sqrt{ \tfrac {\sigma^2} {n\svmin(\Sigma)}} \right)$; and that in order to $\alpha$-reject that null-hypothesis it suffices to have $|\beta_j | = \Omega\left((c_\alpha+\tau_\alpha) \sqrt{\tfrac {\sigma^2} {n\svmin(\Sigma)}}\right)$. Plugging in the lower bound on $n$, we see that this inequality holds.
		
		We comment that for sufficiently large constants $C_1,C_2$, it holds that all the constants hidden in the $O$- and $\Omega$-notations of the proof are close to $1$. I.e., they are all within the interval $(1\pm \eta)$ for some small $\eta>0$. 
	\end{proof}
}
\section{Projecting the Data using Gaussian Johnson-Lindenstrauss Transform}
\label{apx_sec:proof_main_thm}

\subsection{Main Theorem Restated and Further Discussion}
\label{apx_subsec:main_thm_discussion}

\begin{theorem}[Theorem~\ref{thm:main_theorem_matrix_unaltered} restated.]
	Let $X$ be a $n\times p$ matrix, and parameters $\vec\beta \in \mathbb{R}^p$ and $\sigma^2$ are such that we generate the vector $\vec y= X\vec\beta + \vec e$ with each coordinate of $\vec e$ sampled independently from $\gaussian (0,\sigma^2)$. 
	Assume $\svmin(X) \geq C\cdot w$ and that $n$ is sufficiently large s.t. all of the singular values of the matrix $[X;\vec y]$ are greater than $C \cdot w$ for some large constant $C$, and so Algorithm~\ref{alg:private_JL} projects the matrix $A=[X;\vec y]$ without altering it, and publishes $[RX; R\vec y]$.\\
	Fix $\nu\in(0,1/2)$ and $r = p+\Omega(\ln(1/\nu))$. Fix coordinate $j$. Then w.p. $\geq 1-\nu$ we have that deriving $\tvec\beta$, $\tvec{\zeta}$ and $\tilde\sigma^2$ as follows
	\begin{eqnarray*}
	&{\tvec\beta} &= (X^\T R^\T R X)^{-1} (RX)^\T (R\vec y) = \vec\beta + (RX)^\mpi R\vec e \\
	&{\tvec\zeta} &= \tfrac 1 {\sqrt r} R\vec y - \tfrac 1 {\sqrt r} (RX)\tvec\beta \cr && = \tfrac 1 {\sqrt r} \left(I - (RX)(X^\T R^\T RX)^{-1}(RX)^\T)\right)R\vec e \\
	& \tilde\sigma^2 &=\frac r{r-p}  \|\tvec\zeta\|^2 
	\end{eqnarray*}
	then the pivot quantity
	\[ \tilde t(\beta_j) = \frac {\tilde\beta_j-\beta_j} {\tilde\sigma \sqrt{(X^\T R^\T R X)^{-1}_{j,j}}}\]
	has a distribution $\mathcal{D}$ satisfying $e^{-a}\PDF_{T_{r-p}}(x) \leq \PDF_{\mathcal{D}}(x) \leq e^{a} \PDF_{T_{r-p}}(e^{-a}x)$ for any $x\in \mathbb{R}$, where we denote $a= \tfrac {r-p}{n-p}$.\end{theorem}

\JLDiscussion

\ifdefined\showexperiment
\myparagraph{Experiment showing Theorem~\ref{thm:main_theorem_matrix_unaltered} is stronger than standard JLT-approximation guarantees for linear regression.} We note that Sarlos' work give guarantees that hold for all versions of the JLT (such as sparse versions). In contrast, Theorem~\ref{thm:main_theorem_matrix_unaltered} holds solely for the Gaussian version of the JLT. We thus set to show, empirically, that the guarantees of  Theorem~\ref{thm:main_theorem_matrix_unaltered} are indeed stronger than those one gets via plugging off-the-shelf results regarding JLT and linear regression. To that end, we have conducted the following experiment. We sampled (once) $n=512$ observations in a $p=4$-dimensional space, from a Gaussian whose mean is $\vec 0_4$ and variance is $\left(\begin{array}{c c c c}
1  & 0.7 & 0 & 0.1 \cr 0.7 &1 & 0.7 & 0.1 \cr 0 & 0.7 & 1 & -0.1 \cr0.1 & 0.1 & -0.1 &1\end{array} \right)$ (we chose this matrix as it is positive definite, but its least singular value is fairly small). We also fixed $\vec{\beta} = (0.5, -0.5, 0,0)$ and $r=10$. We then repeated the following experiment $T=10^6$ times: we sampled $\vec y$ from the homoscedastic model (with $\sigma^2 = 0.25$) and denoted $A=[X;\vec y]$. We then 
\begin{enumerate}
\item project $A$ using a JLT onto $r$ dimensions
\item project $A$ using the Ailon-Chazelle JLT onto $r$ dimensions 
\item sample u.a.r a subset of $r$ rows from $A$
\end{enumerate} We ran OLS over each output and computed the appropriate $t$-values. By definition, the $t$-values of technique (3) are $T$ i.i.d samples from a $T_{r-p}=T_5$ distribution. From Theorem~\ref{thm:main_theorem_matrix_unaltered} we have that the $t$-values of technique (1) are $T$ i.i.d samples from a distribution very close to $T_5$. Thus, our goal in this experiment is to show that the samples taken from technique (2) come from a distribution which can \emph{not} be a good approximation for $T_5$. Indeed, empirically we do find that the samples from technique (2) are more concentrates around $0$ then they ought to be.

We sort the $T$ samples by size, and for every $\tau >0$ we empirically estimate $\Pr[t\textrm{-values from technique (i)} > \tau]$. In Figure~\ref{fig:???} we plot ($4$ times, one plot for each coordinate) for each $\tau \in [0,10]$ our empirical estimation of
\begin{itemize}
\item $\Pr[t\textrm{-values from technique (3)} > \tau]-\Pr[t\textrm{-values from technique (1)} > \tau]$ (in black)  
\item $\Pr[t\textrm{-values from technique (3)} > \tau]-\Pr[t\textrm{-values from technique (2)} > \tau]$ (in red)
\end{itemize}
\fi 
\subsection{Proof of Theorem~\ref{thm:main_theorem_matrix_unaltered}}
\label{apx_subsec:proof_main_thm}

\proofmainthm

\subsection{Proof of Theorem~\ref{thm:JL_simple_case_rejecting_null_hypothesis}}
\label{apx_subsec:comparison_bound}

\begin{theorem}[Theorem~\ref{thm:JL_simple_case_rejecting_null_hypothesis} restated.]
Fix a positive definite matrix $\Sigma \in \mathbb{R}^{p\times p}$. Fix parameters $\vec\beta\in\mathbb{R}^p$ and $\sigma^2 > 0$ and a coordinate $j$ s.t. $\beta_j\neq 0$. Let $X$ be a matrix whose $n$ rows are sampled i.i.d from $\gaussian(\vec 0_p, \Sigma)$. Let $\vec y$ be a vector s.t. $y_i-(X\vec\beta)_i$ is sampled i.i.d from $\gaussian(0,\sigma^2)$. Fix $\nu \in (0,1/2)$ and $\alpha \in (0,1/2)$. Then there exist constants $C_1$, $C_2$, $C_3$ and $C_4$ such that when we run Algorithm~\ref{alg:private_JL} over $[X;\vec y]$ with parameter $r$  w.p. $\geq 1-\nu$ we correctly $\alpha$-reject the null hypothesis using $\tilde p_0$ (i.e., w.p. $\geq 1-\nu$ Algorithm~\ref{alg:private_JL} returns $\mathtt{matrix~unaltered}$ and we can estimate $\tilde t_0$ and verify that indeed $\tilde p_0 < \alpha\cdot e^{-\tfrac {r-p}{n-p}}$) provided
\[  r \geq p + \max\left\{C_1\frac{\sigma^2 (\tilde c_\alpha^2+\tilde\tau_\alpha^2)}{\beta_j^2\svmin(\Sigma)},~C_2\ln(1/\nu)\right\}
\ifx\twocols\undefined \textrm{,~and~} \else \] and \[\fi  n \geq \max\left\{r, ~ C_3  \frac {w^2} {\min\{\svmin(\Sigma),\sigma^2\}}, ~ C_4 (p+ \ln(1/\nu))\right\} \]
where $\tilde c_\alpha$, $\tilde\tau_\alpha$ denote the numbers s.t. $\int\limits_{\tilde c_\alpha / e^{\tfrac {r-p}{n-p}}}^\infty\PDF_{T_{r-p}}(x)dx = \frac \alpha 2 e^{-\tfrac {r-p}{n-p}}$  and $ \int\limits_{\tilde \tau_\alpha / e^{\tfrac {r-p}{n-p}}}^\infty\PDF_{\gaussian(0,1)}(x)dx = \tfrac \alpha 2 e^{-\tfrac {r-p}{n-p}}$ resp.
\end{theorem}
\begin{flushleft}
	
\end{flushleft}\comparisonboundproof

\section{Projected Ridge Regression}
\label{apx_sec:ridge_regression}
In this section we deal with the case that our matrix does not pass the if-condition of Algorithm~\ref{alg:private_JL}. In this case, the matrix is appended with a $d\times d$-matrix which is $wI_{d\times d}$. Denoting $A' = \left[ \begin{array}{c} A \cr w\cdot I_{d\times d}\end{array}\right]$ we have that the algorithm's output is $RA'$. 

Similarly to before, we are going to denote $d = p+1$ and decompose  $A = [X; \vec y]$ with $X\in \mathbb{R}^{n\times p}$ and $\vec y \in \mathbb{R}^n$, with the standard assumption of $\vec y = X\vec\beta + \vec e$ and $e_i$ sampled i.i.d from $\gaussian(0,\sigma^2)$.\footnote{Just as before, it is possible to denote any single column as $\vec y$ and any subset of the remaining columns as $X$.} We now need to introduce some additional notation. We denote the appended matrix and vectors $X'$ and $\vec y'$ s.t. $A' = [X';\vec y']$. 
Meaning:
\[
X' = \left[ \begin{array}{c}
		X \cr w I_{p\times p} \cr \vec 0_p^\T
	\end{array} \right]  
\ifx\twocols\undefined \textrm{~, and } ~~ \else \] and \[ \fi 
\vec y' = \left[\begin{array}{c}
	\vec y \cr \vec 0_p \cr w
\end{array} \right] 
= X'\vec\beta + \left[\begin{array}{c}
\vec e \cr -w\vec \beta \cr w
\end{array} \right] \stackrel{\rm def}= X'\vec\beta + \vec e'
\]
And so we respectively denote $R = \left[ R_1 ; R_2 ; R_3 \right]$ with $R_1\in \mathbb{R}^{r\times n}$, $R_2\in \mathbb{R}^{r\times p}$ and $R_3\in \mathbb{R}^{r\times 1}$ (so $R_3$ is a vector denoted as a matrix). Hence: 
\[ M' = RX' = R_1 X + w R_2 
\ifx\twocols\undefined \textrm{~, and } ~~ \else \] and \[ \fi 
R\vec y' = RX'\vec\beta + R\vec e' = R_1\vec y + w R_3 = R_1X\vec \beta + R_1 \vec e + wR_3
\] 

And so, using the output $RA'$ of Algorithm~\ref{alg:private_JL}, we solve the linear regression problem derived from $\tfrac 1 {\sqrt r}RX'$ and $\tfrac 1 {\sqrt r}R\vec y'$. I.e., we set
\begin{eqnarray*} \vec\beta' && = \arg\min_{\vec z} \tfrac 1 r \|R\vec y' - RX'\vec z \|^2 
\ifx\twocols\undefined = \else \cr&&= \fi 
(X'^\T R^\T R X')^{-1}(RX')^\T (R\vec y') 
\label{eq:jl_ridge_regression_problem}\end{eqnarray*}

Sarlos' results~\mycitet{Sarlos06} regarding the Johnson Lindenstrauss transform give that, when $R$ has sufficiently many rows, solving the latter optimization problem gives a good approximation for the solution of the optimization problem
\[ \resizebox*{0.49\textwidth}{!}{$\vec\beta^R = \arg\min_{\vec z} \|\vec y' - X'\vec z\|^2 = \arg\min_{\vec z} \left(\| \vec y-X\vec z \|^2 + w^2\|\vec z\|^2 \right)  $}  \] 
The latter problem is known as the Ridge Regression problem. Invented in the 60s~\mycite{Tikhonov63, HoerlK70},
Ridge Regression is often motivated from the perspective of penalizing linear vectors whose coefficients are too large. It is also often applied in the case where $X$ doesn't have full rank or is close to not having full-rank. That is because the Ridge Regression problem is always solvable. One can show that the minimizer $\vec\beta^R = (X^\T X + w^2 I_{p\times p})^{-1}X^\T \vec y$ is the unique solution of the Ridge Regression problem and that the RHS is always defined (even when $X$ is singular). 

The original focus of Ridge Regression is on penalizing $\vec\beta^R$ for having large coefficients. Therefore, Ridge Regression actually poses a family of linear regression problems: $\min_{\vec z} \|y-X\vec z\| + \lambda \|\vec z\|^2$, where one may set $\lambda$ to be any non-negative scalar. And so, much of the literature on Ridge Regression is devoted to the art of fine-tuning this penalty term~--- either empirically or based on the $\lambda$ that yields the best risk: $ \|\E[\vec\beta^R]-\vec\beta\|^2+ \textrm{Var}(\vec\beta^R)$.\footnote{Ridge Regression, as opposed to OLS, does not yield an unbiased estimator. I.e., $\E[\vec\beta^R] \neq \vec\beta$.}
Here we propose a fundamentally different approach for the choice of the normalization factor~--- we set it so that solution of the regression problem would satisfy $(\epsilon,\delta)$-differential privacy (by projecting the problem onto a lower dimension).

While the solution of the Ridge Regression problem might have smaller risk than the OLS solution, it is not known how to derive $t$-values and/or reject the null hypothesis under Ridge Regression (except for using $X$ to manipulate $\vec\beta^R$ back into $\hvec\beta = (X^\T X)^{-1} X^\T \vec y$ and relying on OLS). In fact, prior to our work there was no need for such analysis! For confidence intervals one could just use the standard OLS, because access to $X$ and $\vec y$ was given.  

Therefore, much for the same reason, we are unable to derive $t$-values under projected Ridge Regression.\footnote{Note: The na\"ive approach of using $RX'$ and $R\vec y'$ to interpolate $RX$ and $R\vec y$ and then apply Theorem~\ref{thm:main_theorem_matrix_unaltered} using these estimations of $RX$ and $R\vec y$ ignores the noise added from appending the matrix $A$ into $A'$, and it is therefore bound to produce inaccurate estimations of the $t$-values.} Clearly, there are situations where such confidence bounds simply cannot be derived. (Consider for example the case where $X = 0_{n\times p}$ and $\vec y$ is just i.i.d draws from $\gaussian(0,\sigma^2)$, so obviously $[X;y]$ gives no information about $\vec\beta$.) Nonetheless, under additional assumptions about the data, our work can give confidence intervals for $\beta_j$, and in the case where the interval doesn't intersect the origin~--- assure us that $sign(\beta_j') = sign(\beta_j)$ w.h.p.

	Clearly, Sarlos' work~\mycitet{Sarlos06} gives an upper bound on the distance $\|\vec\beta' - \vec\beta^R\|$.  However, such distance bound doesn't come with the coordinate by coordinate confidence guarantee we would like to have. In fact, it is not even clear from Sarlos' work that $\E[\vec\beta'] = \vec\beta^R$ (though it is obvious to see that $\E[ (X'^\T R^\T R X')] \vec\beta^R = \E[(RX')^\T R\vec y' ]$). Here, we show that $\E[\vec\beta'] = \hvec\beta$ which, more often than not, does not equal $\vec\beta^R$.
	
	\myparagraph{Comment about notation.} Throughout this section we assume $X$ is of full rank and so $(X^\T X)^{-1}$ is well-defined. If $X$ isn't full-rank, then one can simply replace any occurrence of $(X^\T X)^{-1}$ with $X^\mpi (X^\mpi)^\T$. This  makes all our formulas well-defined in the general case. 
	
	\subsection{Running OLS on the Projected Data}
	\label{subsec:linear_regression_on_projected_altered_data}
	
	In this section, we analyze the projected Ridge Regression, under the assumption (for now) that $\vec e$ is fixed. That is, for now we assume that the only source of randomness comes from picking the matrix $R = [R_1 ; R_2 ; R_3]$. As before, we analyze the distribution over $\vec\beta'$ (see Equation~\eqref{eq:jl_ridge_regression_problem}), and the value of the function we optimize at $\vec\beta'$. Denoting $M'=RX'$, we can formally express the estimators:
	\begin{eqnarray}
	&\vec\beta' &= (M'^\T M')^{-1} M'^\T R\vec y' \label{eq:beta'}\\
	&\vec\zeta' &= \tfrac 1 {\sqrt{r}} (R\vec y' - RX'\vec\beta') \label{eq:zeta'}
	\end{eqnarray}
	\begin{claim}
		\label{clm:dist_beta'_zeta'}
		Given that $\vec y = X\vec \beta + \vec e$ for a fixed $\vec e$, and given $X$ and $M' = RX'= R_1X + wR_2$ we have that
		\begin{align*}
		 \vec\beta' &\sim \gaussian\Big( \vec\beta + X^\mpi\vec e,
		\ifx\twocols\undefined   ~~  \else \cr &~~~~~~~ \fi 
		(w^2(\|\vec\beta+X^\mpi\vec e\|^2+1)  + \|P_{U^\perp}\vec e\|^2 ) (M'^\T M')^{-1} \Big)\cr
		\vec\zeta' &\sim \gaussian\Big( \vec 0_r, 
		\ifx\twocols\undefined   ~~  \else \cr &~~~~~~~ \fi 
		\resizebox*{0.25\textwidth}{!}{$\frac {w^2(\|\vec\beta+X^\mpi\vec e\|^2+1)  + \|P_{U^\perp}\vec e\|^2 }{r}$} (I_{r\times r}-M'M'^\mpi) \Big)
		\end{align*}
		and furthermore, $\vec\beta'$ and $\vec\zeta'$ are independent of one another.
	\end{claim}
	\begin{proof}
		First, we write $\vec\beta'$ and $\vec \zeta'$ explicitly, based on $\vec e$ and projection matrices:
		\begin{eqnarray*}
			\vec\beta' &&= (M'^\T M')^{-1} M'^\T R\vec y' 
			\ifx\twocols\undefined = \else \cr &&= \fi 
			M'^\mpi (R_1 X)\vec\beta + M'^\mpi (R_1\vec e + wR_3) \\
			\vec\zeta' &&= \tfrac 1 {\sqrt{r}} (R\vec y' - RX'\vec\beta') 
			\ifx\twocols\undefined = \else \cr &&= \fi 
			\tfrac 1 {\sqrt{r}}(I_{r\times r}-M'M'^\mpi) R\vec e'
			\ifx\twocols\undefined = \else \cr &&= \fi 
			\tfrac 1 {\sqrt r} P_{U'^\perp}(R_1\vec e - wR_2\vec\beta + wR_3)
		\end{eqnarray*}with $U'$ denoting $colspan(M')$ and $P_{U'^\perp}$ denoting the projection onto the subspace $U'^\perp$.
		
		Again, we break $\vec e$ into an orthogonal composition: $\vec e = P_U\vec e + P_{U^\perp}\vec e$ with $U = colspan(X)$ (hence $P_{U} = XX^\mpi$) and $U^\perp = colspan(X)^\perp$. Therefore,
		\ifx\twocols\undefined 
		\begin{eqnarray*}
		&\vec\beta' &= M'^\mpi (R_1 X)\vec\beta + M'^\mpi (R_1X X^\mpi\vec e + R_1P_{U^\perp}\vec e + wR_3) \cr
		&&= M'^\mpi(R_1 X)(\vec\beta + X^\mpi \vec e) +M'^\mpi (R_1P_{U^\perp}\vec e + wR_3)  \\
		&\vec\zeta' &= \tfrac 1 {\sqrt r} (I_{r\times r}- M'M'^\mpi)(R_1X X^\mpi\vec e + R_1 P_{U^\perp}\vec e- wR_2\vec\beta + wR_3)\cr
		&& \stackrel{(\ast)}=\tfrac 1 {\sqrt r} (I_{r\times r}- M'M'^\mpi)(R_1X X^\mpi\vec e + R_1 P_{U^\perp}\vec e + (M'- wR_2)\vec\beta + wR_3) \cr
		&& =\tfrac 1 {\sqrt r} (I_{r\times r}- M'M'^\mpi)( R_1X (\vec \beta + X^\mpi\vec e) + R_1 P_{U^\perp}\vec e +  wR_3) 
		\end{eqnarray*}
		\else
		\begin{eqnarray*}
		&\vec\beta' &= M'^\mpi (R_1 X)\vec\beta + M'^\mpi (R_1X X^\mpi\vec e + R_1P_{U^\perp}\vec e + wR_3) \cr
		&&= M'^\mpi(R_1 X)(\vec\beta + X^\mpi \vec e) +M'^\mpi (R_1P_{U^\perp}\vec e + wR_3)  \end{eqnarray*} whereas $\vec\zeta'$ is essentially 
		\begin{align*}
		&\tfrac 1 {\sqrt r} (I_{r\times r}- M'M'^\mpi)(\resizebox*{0.3\textwidth}{!}{$R_1X X^\mpi\vec e + R_1 P_{U^\perp}\vec e- wR_2\vec\beta + wR_3$})\cr
		&\stackrel{(\ast)}=\tfrac 1 {\sqrt r} (I_{r\times r}- M'M'^\mpi)\cdot 
		\cr & ~~~~~~~~~~~~~~(R_1X X^\mpi\vec e + R_1 P_{U^\perp}\vec e + (M'- wR_2)\vec\beta + wR_3) \cr
		&=\tfrac 1 {\sqrt r} (I_{r\times r}- M'M'^\mpi)\cdot \cr & ~~~~~~~~~~~~~~( R_1X (\vec \beta + X^\mpi\vec e) + R_1 P_{U^\perp}\vec e +  wR_3) 
		\end{align*}
		\fi 
		where equality $(\ast)$ holds because $(I- M'M'^\mpi)M'\vec v = \vec 0$ for any  $\vec v$.
		
		We now aim to describe the distribution of $R$ given that we know $X'$ and $M' = RX'$. Since 
		\begin{align*}
		&M' = R_1 X + wR_2 + 0\cdot R_3 = R_1 X (X^\mpi X) +w R_2 
		\ifx\twocols\undefined = \else \cr &~~~~=\fi 
		(R_1 P_U) X + w R_2
		\end{align*} then $M'$ is independent of $R_3$ and independent of $R_1P_{U^\perp}$. Therefore, given $X$ and $M'$ the induced distribution over $R_3$ remains $R_3\sim\gaussian(\vec 0_r, I_{r\times r})$, and similarly, given $X$ and $M'$ we have $R_1P_{U^\perp} \sim \gaussian(0_{r\times n}, I_{r\times r}, P_{U^\perp})$ (rows remain independent from one another, and each row is distributed like a spherical Gaussian in $colspan(X)^\perp$). And so, we have that $R_1X = R_1 P_U X = M' - wR_2$, which in turn implies:
		\ifx\twocols\undefined 
		\begin{align*}
		&& R_1X &\sim \gaussian\left( M', ~~ I_{r\times r}, ~~w^2\cdot I_{p\times p}\right) \cr
		&&\Rightarrow R_1X(\vec\beta + X^\mpi\vec e) &\sim \gaussian\left(M'\vec\beta + M'X^\mpi\vec e, ~~ w^2\|\vec\beta + X^{\mpi}\vec e\|^2 I_{r\times r}\right)\cr
		&&\Rightarrow M'^\mpi R_1X(\vec\beta + X^{\mpi}\vec e) &\sim \gaussian\left(\vec\beta+ X^\mpi\vec e, ~~ w^2\|\vec\beta+X^\mpi\vec e\|^2  (M'^\T M)^{-1}\right)\cr
		&& & = \|\vec\beta + X^\mpi \vec e\|\cdot \gaussian( \vec u, w^2 (M'^\T M)^{-1})
		\end{align*}
		where $\vec u$ denotes a unit-length vector in the direction of $\vec\beta + X^\mpi \vec e$.
		\else 
				\[R_1X \sim \gaussian\left( M', ~~ I_{r\times r}, ~~w^2\cdot I_{p\times p}\right)\] multiplying this random matrix with a vector, we get
				\[ R_1X(\vec\beta + X^\mpi\vec e)\sim \gaussian\left(\resizebox*{0.3\textwidth}{!}{$M'\vec\beta + M'X^\mpi\vec e, ~~ w^2\|\vec\beta + X^{\mpi}\vec e\|^2 I_{r\times r}$}\right)\] and multiplying this random vector with a matrix we get
				\[M'^\mpi R_1X(\vec\beta + X^{\mpi}\vec e) \sim \gaussian\left(\resizebox*{0.26\textwidth}{!}{$\vec\beta+ X^\mpi\vec e, ~~ w^2\|\vec\beta+X^\mpi\vec e\|^2  (M'^\T M)^{-1}$}\right)\] I.e., 
				\[M'^\mpi R_1X(\vec\beta + X^{\mpi}\vec e) \sim \|\vec\beta + X^\mpi \vec e\|\cdot \gaussian( \vec u, w^2 (M'^\T M)^{-1})\]
				where $\vec u$ denotes a unit-length vector in the direction of $\vec\beta + X^\mpi \vec e$.
		\fi  
		
		Similar to before we have
		\begin{align*}
		& RP_{U^\perp} \sim \gaussian(0_{r\times n}, I_{r\times r}, P_{U^\perp}) 
		\cr & ~~~\Rightarrow M'^\mpi (R P_{U^\perp} \vec e) \sim \gaussian (\vec 0_d, ~ \|P_{U^\perp} e\|^2 (M'^\T M')^{-1}) 
		\cr & wR_3 \sim \gaussian(\vec 0_r, w^2 I_{r\times r}) \cr &~~~\Rightarrow M'^\mpi(wR_3) \sim \gaussian( \vec 0_d, w^2 (M'^\mpi M')^{-1})
		\end{align*}
		Therefore, the distribution of  $\vec\beta'$, which is the sum of the $3$ independent Gaussians, is as required.
		
		Also, $\vec\zeta' = \tfrac 1 {\sqrt r} P_{U'^\perp} \left( R_1 X(\vec\beta + X^\mpi\vec e) + R_1 P_{U^\perp}\vec e + wR_3 \right)$ is the sum of $3$ independent Gaussians, which implies its distribution is
		\ifx\twocols\undefined 
		\[ \gaussian\left(\tfrac 1 {\sqrt r}P_{U'^\perp}M'(\vec\beta+X^\mpi\vec e), ~~ \tfrac 1 r (w^2(\|\vec\beta+X^\mpi\vec e\|^2 +1) + \|P_{U^\perp}\vec e\|^2)P_{U'^\perp} \right) \] 
		\else 
		\begin{align*}
		&\gaussian\Big(\tfrac 1 {\sqrt r}P_{U'^\perp}M'(\vec\beta+X^\mpi\vec e), 
		\cr & ~~~~~~~~~~~~~ \tfrac 1 r (w^2(\|\vec\beta+X^\mpi\vec e\|^2 +1) + \|P_{U^\perp}\vec e\|^2)P_{U'^\perp} \Big)
		\end{align*}
		\fi I.e., $\gaussian\left(\vec 0_r, ~~ \tfrac 1 r (w^2(\|\vec\beta+X^\mpi\vec e\|^2 +1) + \|P_{U^\perp}\vec e\|^2)P_{U'^\perp} \right)$ as $P_{U'^\perp} M'= 0_{r\times r}$.
		
		Finally, observe that $\vec\beta'$ and $\vec \zeta'$ are independent as the former depends on the projection of the spherical Gaussian $R_1 X(\beta + X^\mpi\vec e) + R_1 P_{U^\perp}\vec e + wR_3$ on $U'$, and the latter depends on the projection of the same multivariate Gaussian on $U'^\perp$.
	\end{proof}
	
	Observe that Claim~\ref{clm:dist_beta'_zeta'} assumes $\vec e$ is given. This may seem somewhat strange, since without assuming anything about $\vec e$ there can be many combinations of $\vec\beta$ and $\vec e$ for which $\vec y = X\vec\beta + \vec e$. However, we always have that $\vec\beta + X^\mpi \vec e = X^\mpi\vec y = \hvec\beta$. Similarly, it is always the case the $P_{U^\perp}\vec e = (I-XX^\mpi) \vec y = \vec\zeta$. (Recall OLS definitions of $\hvec\beta$ and $\vec\zeta$ in Equation~\eqref{eq:hat_beta} and~\eqref{eq:zeta}.) Therefore, the distribution of $\vec\beta'$ and $\vec\zeta'$ is unique (once $\vec y$ is set):
	\begin{align*}
	&& \vec\beta' &\sim \gaussian\left( \hvec\beta ,    (w^2(\|\hvec\beta\|^2+1)  + \|\vec\zeta\|^2  ) (M'^\T M')^{-1} \right)\cr
	&&\vec\zeta' &\sim \gaussian\left( \vec 0_r, \frac {w^2(\|\hvec\beta\|^2+1)  + \|\vec\zeta\|^2 }r (I_{r\times r}-M'M'^\mpi) \right)
	\end{align*}
	And so for a given dataset $[X;\vec y]$ we have that $\vec\beta'$ serves as an approximation for $\hvec\beta$.
	
	An immediate corollary of Claim~\ref{clm:dist_beta'_zeta'} is that for any fixed $\vec e$ it holds that the quantity
	$t'(\beta_j) = \frac {\beta'_j - (\beta_j + (X^\mpi\vec e)_j)} {   \|\vec \zeta'\|\sqrt{\tfrac r {r-p}\cdot (M'^\T M')^{-1}_{j,j}}  }=\frac {\beta'_j - \hat\beta_j} {   \|\vec \zeta'\|\sqrt{\tfrac r {r-p}\cdot (M'^\T M')^{-1}_{j,j}}  }$ is distributed like a $T_{r-p}$-distribution. Therefore, the following theorem  follows immediately.
	\begin{theorem}
		\label{thm:approximating_hatbeta}
		Fix $X\in \mathbb{R}^{n\times p}$ and $\vec y\in \mathbb{R}$. Define $\hvec\beta = X^\mpi \vec y$ and $\zeta = (I- XX^\mpi)\vec y$. Let $RX'$ and $R\vec y'$ denote the result of applying Algorithm~\ref{alg:private_JL} to the matrix $A = [X;\vec y]$ when the algorithm appends the data with a $w\cdot I$ matrix. Fix a coordinate $j$ and any $\alpha\in (0,1/2)$. When computing $\vec\beta'$ and $\vec\zeta'$ as in Equations~\eqref{eq:beta'} it and~\eqref{eq:zeta'}, we have that w.p. $\geq 1-\alpha$ it holds that
		\[ \hat\beta_j \in \left( \beta'_j \pm c'_\alpha \|\vec \zeta'\|\sqrt{\tfrac r {r-p}\cdot (M'^\T M')^{-1}_{j,j}}  \right)\] where $c'_\alpha$ denotes the number such that $(-c'_\alpha,c'_\alpha)$ contains $1-\alpha$ mass of the $T_{r-p}$-distribution.
	\end{theorem}
	
	Note that Theorem~\ref{thm:approximating_hatbeta}, much like the rest of the discussion in this Section, builds on $\vec y$ being fixed, which means $\beta_j'$ serves as an approximation for $\hat\beta_j$. Yet our goal is to argue about similarity (or proximity) between $\beta_j'$ and $\beta_j$. To that end, we combine the standard OLS confidence interval~--- which says that w.p. $\geq 1-\alpha$ over the randomness of picking $\vec e$ in the homoscedastic model we have $|\beta_j-\hat\beta_j| \leq c_\alpha \|\vec\zeta\|\sqrt{\tfrac{(X^\T X)^{-1}_{j,j}}{n-p}}$~--- with the confidence interval of Theorem~\ref{thm:approximating_hatbeta} above, and deduce that
	\ifx\twocols\undefined 
	\begin{equation}
	\Pr\left[ |\beta'_j - \beta_j| = O\left( c_\alpha \frac{\|\vec\zeta\|}{\sqrt{n-p}} \sqrt{(X^\T X)^{-1}_{j,j}} +  c'_\alpha \frac {\|\vec\zeta'\|} {\sqrt{r-p}} \sqrt{r(M'^\T M')^{-1}_{j,j}}\right)\right] \geq 1-\alpha \label{eq:dist_beta'j_betaj}
	\end{equation}
	\else 
	w.p. $\geq 1-\alpha$ we have that $|\beta'_j - \beta_j|$ is at most
	\begin{equation}
	O\left( c_\alpha \frac{\|\vec\zeta\|\sqrt{(X^\T X)^{-1}_{j,j}}}{\sqrt{n-p}}  +  c'_\alpha \frac {\|\vec\zeta'\|\sqrt{r(M'^\T M')^{-1}_{j,j}}} {\sqrt{r-p}} \right)\label{eq:dist_beta'j_betaj}
	\end{equation}
	\fi 
	\footnote{Observe that w.p. $\geq 1- \alpha$ over the randomness of $\vec e$ we have that $|\beta_j-\hat\beta_j| \leq c_\alpha \|\vec\zeta\|\sqrt{\tfrac{(X^\T X)^{-1}_{j,j}}{n-p}}$, and w.p. $\geq1-\alpha$ over the randomness of $R$ we have that $|\beta_j'-\hat\beta_j|\leq c'_\alpha \|\vec \zeta'\|\sqrt{\tfrac r {r-p}\cdot (M'^\T M')^{-1}_{j,j}}$. So technically, to give a $(1-\alpha)$-confidence interval around $\beta_j'$ that contains $\beta_j$ w.p. $\geq 1-\alpha$, we need to use $c_{\alpha/2}$ and $c'_{\alpha/2}$ instead of $c_\alpha$ and $c'_\alpha$ resp. To avoid overburdening the reader with what we already see as too many parameters, we switch to asymptotic notation.}And so, in the next section, our goal is to give conditions under which the interval of Equation~\eqref{eq:dist_beta'j_betaj} isn't much larger in comparison to the interval length of $c'_\alpha \tfrac {\|\vec\zeta'\|} {\sqrt{r-p}} \sqrt{r(M'^\T M')^{-1}_{j,j}}$ we get from Theorem~\ref{thm:approximating_hatbeta}; and more importantly~--- conditions that make the interval of Theorem~\ref{thm:approximating_hatbeta} useful and not too large. (Note, in expectation $\tfrac {\|\vec\zeta'\|} {\sqrt{r-p}}$ is about $\sqrt{(w^2 + w^2\|\hvec\beta\|^2 + \|\vec\zeta\|^2)/r}$. So, for example, in situations where $\|\hvec\beta\|$ is very large, this interval isn't likely to inform us as to the sign of $\beta_j$.) 
	
	\myparagraph{Motivating Example.} A good motivating example for the discussion in the following section is when $[X;\vec y]$ is a strict submatrix of the dataset $A$. That is, our data contains many variables for each entry (i.e., the dimensionality $d$ of each entry is large), yet our regression is made only over a modest subset of variables out of the $d$. In this case, the least singular value of $A$ might be too small, causing the algorithm to alter $A$; however, $\svmin(X^\T X)$ could be sufficiently large so that had we run Algorithm~\ref{alg:private_JL} only on $[X;\vec y]$ we would not alter the input. (Indeed, a differentially private way for finding a subset of the variables that induce a submatrix with high $\svmin$ is an interesting open question, partially answered~--- for a single regression~--- in the work of Thakurta and Smith~\mycite{ThakurtaS13}.) Indeed, the conditions we specify in the following section depend on $\svmin(\tfrac 1 n X^\T X)$, which, for a zero-mean data, the minimal variance of the data in any direction. For this motivating example, indeed such variance isn't necessarily small.

	\subsection{Conditions for Deriving a Confidence Interval for Ridge Regression}
	\label{subsec:analyzing_beta'}
	
	Looking at the interval specified in Equation~\eqref{eq:dist_beta'j_betaj}, we now give an upper bound on the the random quantities in this interval: $\|\vec\zeta\|, \|\vec\zeta'\|$, and $(M'^\T M')^{-1}_{j,j}$. First, we give bound that are dependent on the randomness in $R$ (i.e., we continue to view $\vec e$ as fixed). 
	\begin{proposition}
		\label{pro:zeta'_M'TM'}
		For any $\nu\in(0,1/2)$, if we have $r = p + \Omega(\ln(1/\nu))$ then with probability $\geq 1-\nu$ over the randomness of $R$ we have $(r-p)(M'^\T M)^{-1}_{j,j} = \Theta\left( (w^2 I_{p\times p} + X^\T X)^{-1}_{j,j}\right)$ and $\tfrac {\|\vec\zeta'\|^2}{r-p} = \Theta( \tfrac{w^2 + w^2\|\hvec\beta\|^2 + \|\vec\zeta\|^2}r)$.
	\end{proposition}
	\begin{proof}
		The former bound follows from known results on the Johnson-Lindenstrauss transform (as were shown in the proof of Claim~\ref{clm:comparability_sigmaXX_lMM}). The latter bound follows from standard concentration bounds of the $\chi^2$-distribution.
	\end{proof}
	
	Plugging in the result of Proposition~\ref{pro:zeta'_M'TM'} to Equation~\eqref{eq:dist_beta'j_betaj} we get that w.p. $\geq1-\nu$ 
	\ifx\twocols\undefined 
	\begin{equation}
	|\beta'_j - \beta_j| = O\left( c_\alpha \frac{\|\vec\zeta\|}{\sqrt{n-p}} \sqrt{(X^\T X)^{-1}_{j,j}} +  c'_\alpha \sqrt{\frac {w^2 + w^2\|\hvec\beta\|^2 + \|\vec\zeta\|^2}{r-p} } \sqrt{(w^2 I_{p\times p} + X^\T X)^{-1}_{j,j}}\right) \label{eq:dist_beta'j_betaj_revised}
	\end{equation} 
	\else 
	the difference $|\beta'_j - \beta_j|$ is at most
	\begin{align}
	&O\Big( c_\alpha \frac{\|\vec\zeta\|}{\sqrt{n-p}} \sqrt{(X^\T X)^{-1}_{j,j}} 
	\cr & ~~~~~~+  c'_\alpha \sqrt{\frac {w^2 + w^2\|\hvec\beta\|^2 + \|\vec\zeta\|^2}{r-p} } \sqrt{(w^2 I_{p\times p} + X^\T X)^{-1}_{j,j}}\Big)\cr \label{eq:dist_beta'j_betaj_revised}
	\end{align} 
	\fi 
	We will also use the following proposition.
	\begin{proposition}
		\label{pro:XTX}
		\[ (X^\T X)^{-1}_{j,j} \leq \left(1 + \frac{w^2}{\svmin(X^\T X)}\right) (w^2 I_{p\times p} + X^\T X)^{-1}_{j,j} \]
	\end{proposition}
	\begin{proof}
		We have that
		\ifx\twocols\undefined 
		\begin{eqnarray*}
			& (X^\T X)^{-1} &= (X^\T X)^{-1}(X^\T X + w^2 I_{p\times p})(X^\T X + w^2 I_{p\times p})^{-1} \cr
			&& = (X^\T X + w^2 I_{p\times p})^{-1} + w^2(X^\T X)^{-1}(X^\T X + w^2 I_{p\times p})^{-1}\cr
			&& = (I_{p\times p} + w^2 (X^\T X)^{-1})(X^\T X + w^2 I_{p\times p})^{-1}\cr
			&& = (X^\T X + w^2 I_{p\times p})^{-1/2} (I_{p\times p} + w^2 (X^\T X)^{-1}) (X^\T X + w^2 I_{p\times p})^{-1/2}
		\end{eqnarray*}
		\else 
		\begin{align*}
			& (X^\T X)^{-1} 
			\cr &= (X^\T X)^{-1}(X^\T X + w^2 I_{p\times p})(X^\T X + w^2 I_{p\times p})^{-1} \cr
			& = (X^\T X + w^2 I_{p\times p})^{-1} + w^2(X^\T X)^{-1}(X^\T X + w^2 I_{p\times p})^{-1}\cr
			& = (I_{p\times p} + w^2 (X^\T X)^{-1})(X^\T X + w^2 I_{p\times p})^{-1}\cr
			& = (X^\T X + w^2 I_{p\times p})^{-1/2}\cdot \cr &~~~~~~~~~~~~~~ (I_{p\times p} + w^2 (X^\T X)^{-1}) \cdot \cr &~~~~~~~~~~~~~~~~~~~~~~~~~~~(X^\T X + w^2 I_{p\times p})^{-1/2}
		\end{align*}
		\fi 
		where the latter holds because $(I_{p\times p} + w^2 (X^\T X)^{-1})$ and $(X^\T X + w^2 I_{p\times p})^{-1}$ are diagonalizable by the same matrix $V$ (the same matrix for which $(X^\T X) = V S^{-1} V^\T$).
		Since we have $\|I_{p\times p} + w^2 (X^\T X)^{-1} \| = 1 + \tfrac{w^2}{\svmin^{2}(X)}$, it is clear that $(I_{p\times p} + w^2 (X^\T X)^{-1}) \preceq (1 + \tfrac{w^2}{\svmin^{2}(X)}) I_{p\times p}$. 
		We deduce that $(X^\T X)^{-1}_{j,j} = \vec e_j^\T (X^\T X)^{-1} \vec e_j \leq (1 + \tfrac{w^2}{\svmin^{2}(X)}) (X^\T X+w^2I_{p\times p})^{-1}_{j,j}$.
	\end{proof}
	
	Based on Proposition~\ref{pro:XTX} we get from Equation~\eqref{eq:dist_beta'j_betaj_revised} that
	\ifx\twocols\undefined 
	\begin{equation}
	|\beta'_j - \beta_j| = O(\left( c_\alpha \sqrt{\frac{\|\vec\zeta\|^2(1+\tfrac {w^2}{\svmin(X^\T X)})}{n-p}} +  c'_\alpha \sqrt{\frac {w^2 + w^2\|\hvec\beta\|^2 + \|\vec\zeta\|^2}{r-p} } \right)\sqrt{(w^2 I_{p\times p} + X^\T X)^{-1}_{j,j}}) \label{eq:dist_beta'j_betaj_revised2}
	\end{equation}
	\else
		$|\beta'_j - \beta_j|$ is at most \begin{align}
		&O(\Big( c_\alpha \sqrt{\frac{\|\vec\zeta\|^2(1+\tfrac {w^2}{\svmin(X^\T X)})}{n-p}} + 
		\cr &~~~ c'_\alpha \sqrt{\frac {w^2 + w^2\|\hvec\beta\|^2 + \|\vec\zeta\|^2}{r-p} } \Big)\sqrt{(w^2 I_{p\times p} + X^\T X)^{-1}_{j,j}})\cr  \label{eq:dist_beta'j_betaj_revised2}
		\end{align}
		\fi 
	And so, if it happens to be the case that exists some small $\eta>0$ for which $\hvec\beta, \vec\zeta$ and $w^2$ satisfy
	\begin{equation}
	\frac{\|\vec\zeta\|^2(1+\tfrac {w^2}{\svmin(X^\T X)})}{n-p} \leq \eta^2\left( \frac {w^2 + w^2\|\hvec\beta\|^2 + \|\vec\zeta\|^2}{r-p} \right)
	\label{eq:constraints_on_zeta_betahat}
	\end{equation} then we have that 
	$\Pr[\beta_j \in \left(\beta_j' \pm O((1+\eta)\cdot c'_\alpha \|\vec \zeta'\|\sqrt{\tfrac r {r-p}\cdot (M'^\T M')^{-1}_{j,j}})\right)] \geq 1-\alpha$.\footnote{We assume $n\geq r$ so $c_\alpha<c'_\alpha$ as the $T_{n-p}$-distribution is closer to a normal Gaussian than the $T_{r-p}$-distribution.} Moreover, if in this case $|\beta_j| > c'_\alpha(1+\eta)\sqrt{\frac {w^2 + w^2\|\hvec\beta\|^2 + \|\vec\zeta\|^2}{r-p} }\sqrt{(w^2 I_{p\times p} + X^\T X)^{-1}_{j,j}}$ then $\Pr[sign(\beta_j') = sign(\beta_j)]\geq 1-\alpha$. This is precisely what Claims~\ref{clm:sufficient_condition_for_interval_on_beta'} and~\ref{clm:sufficient_conditions_for_finding_right_sign_beta'} below do.
	
	\begin{claim}
		\label{clm:sufficient_condition_for_interval_on_beta'}
		If there exists $\eta>0$ s.t. $n-p \geq \tfrac{2}{\eta^2}(r-p)$ and $n^2 = \Omega\left(r^{3/2}\cdot\frac{B^2\ln(1/\delta)}\epsilon \cdot  \frac 1 {\eta^2\svmin(\tfrac 1 n X^\T X)}\right)$, then $\Pr[\beta_j \in \left(\beta_j' \pm O((1+\eta)\cdot c'_\alpha \|\vec \zeta'\|\sqrt{\tfrac r {r-p}\cdot (M'^\T M')^{-1}_{j,j}})\right)] \geq 1-\alpha$.
	\end{claim}
	\begin{proof}
		Based on the above discussion, it is enough to argue that under the conditions of the claim, the constraint of Equation~\eqref{eq:constraints_on_zeta_betahat} holds. Since we require $\tfrac{\eta^2}2 \geq \tfrac{r-p}{n-p}$ then it is evident that $\tfrac{\|\vec\zeta\|^2}{n-p} \leq \tfrac{\eta^2\|\vec\zeta\|^2}{2(r-p)}$. So we now show  that $\tfrac{\|\vec\zeta\|^2}{n-p} \cdot \tfrac {w^2}{\svmin(X^\T X)}\leq \tfrac{\eta^2\|\vec\zeta\|^2}{2(r-p)}$ under the conditions of the claim, and this will show the required.
		All that is left is some algebraic manipulations. It suffices to have:
		\begin{align*}
		\tfrac{\eta^2}{2} \cdot \tfrac{n-p}{r-p} \svmin(X^\T X) & \geq \tfrac{\eta^2}{2} \cdot \tfrac{n^2}{r} \svmin(\tfrac 1 n X^\T X) 
		\ifx\twocols\undefined \geq \else \cr &\geq \fi 
		\frac {32B^2\sqrt{r}\ln(8/\delta)}\epsilon \geq w^2 
		\end{align*} which holds for $n^2 \geq r^{3/2}\cdot\frac {64B^2\ln(1/\delta)} {\epsilon \eta^2}\svmin(\tfrac 1 n X^\T X)^{-1}$, as we assume to hold.
	\end{proof}
	\cut{
		Therefore, all that is left to show is that w.p. $\geq 1-\nu$ it holds that $\tfrac{\|\vec\zeta\|^2}{n-p} \cdot \tfrac 1{\svmin(X^\T X)}\leq \tfrac{\eta^2 (1+\|\hvec\beta\|^2)}{r-p}$.
		
		To see that, we now invoke Claim~\ref{clm:sizes_hatbeta_zeta}, which states that w.p. $\geq 1-\nu$ we have $\|\vec\beta-\hvec\beta\|^2 = O(\sigma^2\|X^\mpi\|_F^2\ln(\tfrac p \nu))$ and $\|\vec\zeta\|^2= \Theta( \sigma^2 (n-p) )$. Therefore, for $n = p + \Omega(\ln(1/\nu))$ we need to have $\tfrac{2\sigma^2}{\svmin(X^\T X)} \leq \tfrac {\eta^2 (1+\|\vec\beta\|^2)}{2(r-p)}$. Due to the lower bound on $\eta^2$ this indeed holds.

	}
	
	\begin{claim}
		\label{clm:sufficient_conditions_for_finding_right_sign_beta'}
		Fix $\nu \in (0,\tfrac1 2)$. If (i) $n= p + \Omega(\ln(1/\nu))$, (ii) $\|\vec\beta\|^2  = \Omega(\sigma^2 \|X^\mpi\|^2_F \ln(\tfrac p \nu))$ and (iii) $r-p = \Omega\left( \tfrac{(c'_\alpha)^2(1+\eta)^2}{\beta_j^2} \left(1+\|\vec\beta\|^2 +  \frac {\sigma^2}{\svmin(\tfrac 1 n X^\T X)} \right)\right)$, then in the homoscedastic model, with probability $\geq 1-\nu-\alpha$ we have that $sign(\beta_j) = sign(\beta_j')$.
	\end{claim}
	\begin{proof}
		Based on the above discussion, we aim to show that in the homoscedastic model (where each coordinate $e_i\sim\gaussian(0,\sigma^2)$ independently) w.p. $\geq 1- \nu$ it holds that 
		\ifx\twocols\undefined 
		\[|\beta_j| > c'_\alpha(1+\eta)\sqrt{\frac {w^2 + w^2\|\hvec\beta\|^2 + \|\vec\zeta\|^2}{r-p} }\sqrt{(w^2 I_{p\times p} + X^\T X)^{-1}_{j,j}}\] 
		\else
		the magnitude of $\beta_j$ is greater than
		\[ c'_\alpha(1+\eta)\sqrt{\frac {w^2 + w^2\|\hvec\beta\|^2 + \|\vec\zeta\|^2}{r-p} }\sqrt{(w^2 I_{p\times p} + X^\T X)^{-1}_{j,j}} \] 
		\fi
		To show this, we invoke Claim~\ref{clm:sizes_hatbeta_zeta} to argue that w.p. $\geq 1-\nu$ we have (i) $\|\vec\zeta\|^2 \leq 2\sigma^2(n-p)$ (since $n = p + \Omega(\ln(1/\nu))$), and (ii) $\|\hvec\beta\|^2 \leq 2\|\vec\beta\|^2$ (since $\|\vec\beta-\hvec\beta\|^2 \leq \sigma^2 \|X^\mpi\|^2_F \ln(\tfrac p \nu)$ whereas $\|\vec\beta\|^2  = \Omega(\sigma^2 \|X^\mpi\|^2_F \ln(\tfrac p \nu))$). We also use the fact that $(w^2 I_{p\times p}+ X^\T X)^{-1}_{j,j} \leq (w^2 + \svmin^{-1}(X^\T X))$, and then deduce that 
		\begin{align*}
		&(1+\eta)c'_\alpha\sqrt{\frac {w^2 + w^2\|\hvec\beta\|^2 + \|\vec\zeta\|^2}{r-p} }\sqrt{(w^2 I_{p\times p} + X^\T X)^{-1}_{j,j}} 
		\cr&~~~~\leq \frac{(1+\eta)c'_\alpha}{\sqrt{r-p}} \sqrt{2\frac{w^2(1+\|\vec\beta\|^2) + \sigma^2(n-p) }{w^2 + \svmin(X^\T X)}} \ifx\twocols\undefined \leq \else \cr &~~~~\leq \fi 
		\frac{(1+\eta)c'_\alpha}{\sqrt{r-p}} \sqrt{2(1+\|\vec\beta\|^2)+ \frac{2\sigma^2(n-p)}{\svmin(X^\T X)} } \leq |\beta_j|   
		\end{align*}
		due to our requirement on $r-p$.
	\end{proof}
	
	Observe, out of the $3$ conditions specified in Claim~\ref{clm:sufficient_conditions_for_finding_right_sign_beta'}, condition (i) merely guarantees that the sample is large enough to argue that estimations are close to their expect value; and condition (ii) is there merely to guarantee that $\|\hvec\beta\|\approx \|\vec\beta\|$. It is condition (iii) which is non-trivial to hold, especially together with the conditions of Claim~\ref{clm:sufficient_condition_for_interval_on_beta'} that pose other constraints in regards to $r$, $n$, $\eta$ and the various other parameters in play. It is interesting to compare the requirements on $r$ to the lower bound we get in Theorem~\ref{thm:JL_simple_case_rejecting_null_hypothesis}~--- especially the latter bound. The two bounds are strikingly similar, with the exception that here we also require $r-p$ to be greater than $\frac{1+\|\vec\beta\|^2}{\beta_j^2}$. This is part of the unfortunate effect of altering the matrix $A$: we cannot give confidence bounds only for the coordinates $j$ for which $\beta_j^2$ is very small \emph{relative to $\|\vec\beta\|^2$}.
	
	In summary, we require to have $n = p  + \Omega(\ln(1/\nu))$ and that $X$ contains enough sample points to have $\|\hvec\beta\|$ comparable to $\|\vec\beta\|$, and then set $r$ and $\eta$ such that (it is convenient to think of $\eta$ as a small constant, say, $\eta = 0.1$) 
	\begin{itemize}
		\item $r - p = O(\eta^2(n-p))$ (which implies $r = O(n)$)
		\item $r = O(\left( \eta^2 \frac{\epsilon n^2}{B^2 \ln(1/\delta)}\svmin(\tfrac 1 n X^\T X)\right)^{\tfrac 2 3})$ 
		\item $r - p = \Omega( \frac{1+\|\vec\beta\|^2}{\beta_j^2} + \frac{\sigma^2}{\beta_j^2} \cdot {\svmin^{-1}(\tfrac 1 n X^\T X)})$
	\end{itemize}
	to have that the $(1-\alpha)$-confidence interval around $\beta_j'$ does not intersect the origin. Once again, we comment that these conditions are sufficient but not necessary, and furthermore~--- even with these conditions holding~--- we do not make any claims of optimality of our confidence bound. That is because from Proposition~\ref{pro:XTX} onwards our discussion uses upper bounds that do not have corresponding lower bounds, to the best of our knowledge.

\AGsection

\section{Experiment: Additional Figures}
\label{apx_sec:experiment}

To complete our discussion about the experiments we have conducted, we attach here additional figures, plotting both the $t$-value approximations we get from both algorithms, and the ``high-level decision'' of whether correctly reject or not-reject the null hypothesis (and with what sign). First, we show the distribution of the $t$-value approximation for coordinates that should be rejected, in Figure~\ref{apx_fig:t-values-reject}, and then the decision of whether to reject or not based on this $t$-value --- and whether it was right, conservative (we didn't reject while we needed to) or wrong (we rejected with the wrong sign, or rejected when we shouldn't have rejected) in Figure~\ref{apx_fig:decision-reject}. As one can see, Algorithm~\ref{alg:private_JL} has far lower $t$-values (as expected) and therefore is much more conservative. In fact, it tends to not-reject coordinate $1$ of the real-data even on the largest value of $n$ (Figure~\ref{apx_subfig:decision-real-coord1}).

However, because Algorithm~\ref{alg:private_JL} also has much smaller variance, it also does not reject when it ought to not-reject, whereas Algorithm~\ref{alg:AG} erroneiously rejects the null-hypotheses. This can be seen in Figures~\ref{apx_fig:t-values-not-reject} and~\ref{apx_fig:decision-not-reject}.

\begin{figure}[h]
	\begin{center}
		\begin{subfigure}[h]{0.5\textwidth}
			\includegraphics[scale=\SCALE]{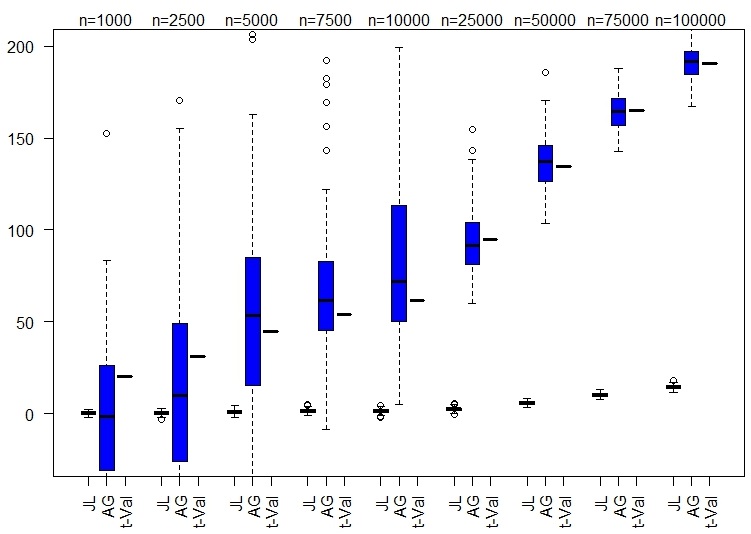}
			\caption{\label{apx_subfig:values-synth-coord1} Synthetic data, coordinate $\beta_1=0.5$}
		\end{subfigure}
		\begin{subfigure}[h]{0.5\textwidth}
			\includegraphics[scale=\SCALE]{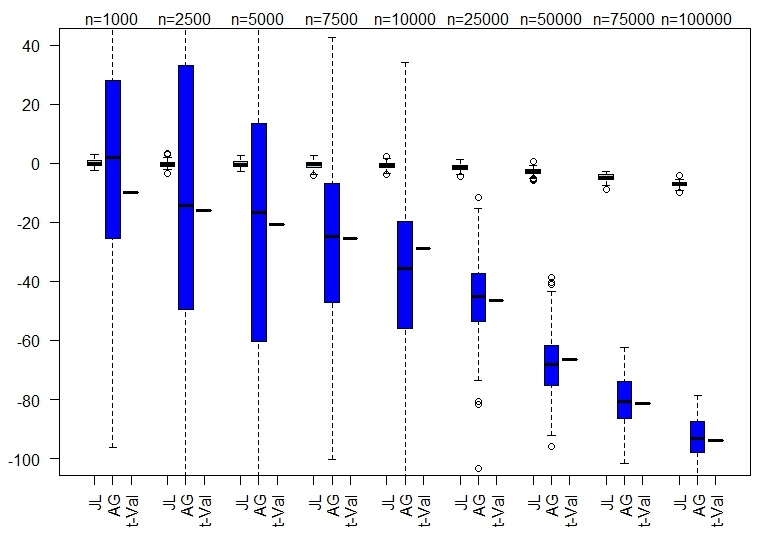}
			\caption{\label{apx_subfig:values-synth-coord2} Synthetic data, coordinate $\beta_2=-0.25$}
		\end{subfigure}
		\begin{subfigure}[h]{0.5\textwidth}
			\includegraphics[scale=\SCALE]{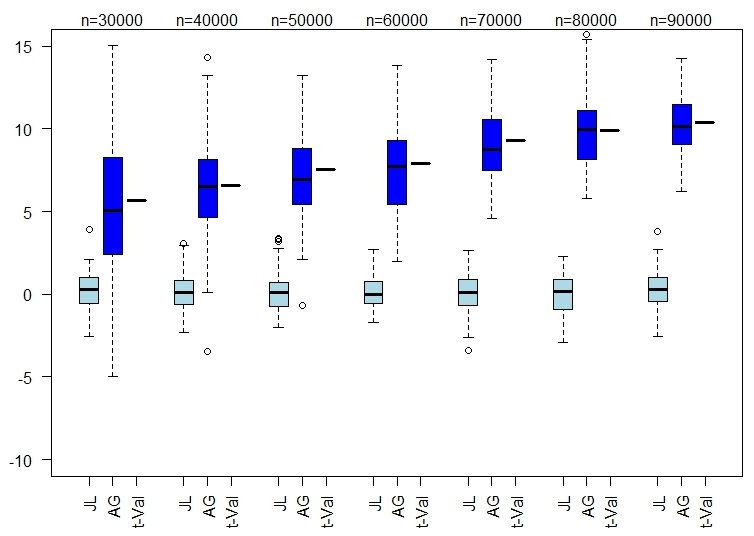}
			\caption{\label{apx_subfig:values-real-coord1} real-life data, coordinate $\beta_1=14.07$}
		\end{subfigure}
	\end{center}
	\caption{\label{apx_fig:t-values-reject} The distribution of the $t$-value approximations from selected experiments on synthetic and real-life data where the null hypothesis should be rejected}
\end{figure}

\begin{figure}[h]
	\begin{center}
		\begin{subfigure}[h]{0.5\textwidth}
			\includegraphics[scale=\SCALE]{decision-synth-coord1.jpeg}
			\caption{\label{apx_subfig:decision-synth-coord1} Synthetic data, coordinate $\beta_1=0.5$}
		\end{subfigure}
		\begin{subfigure}[h]{0.5\textwidth}
			\includegraphics[scale=\SCALE]{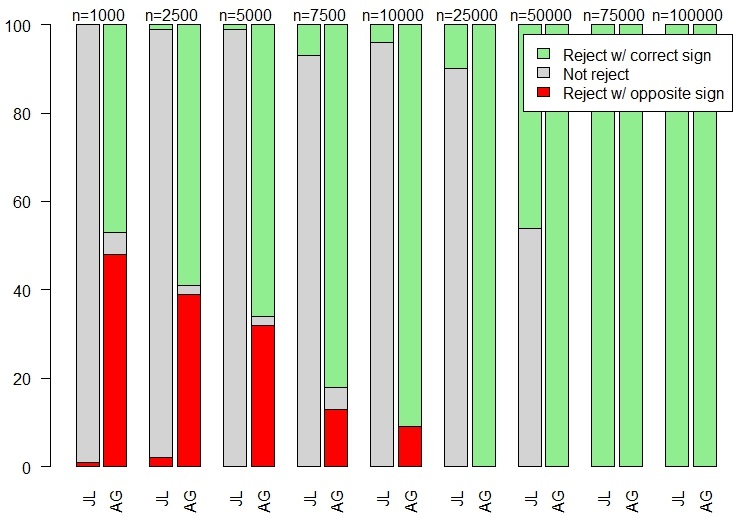}
			\caption{\label{apx_subfig:decision-synth-coord2} Synthetic data, coordinate $\beta_2=-0.25$}
		\end{subfigure}
		\begin{subfigure}[h]{0.5\textwidth}
			\includegraphics[scale=\SCALE]{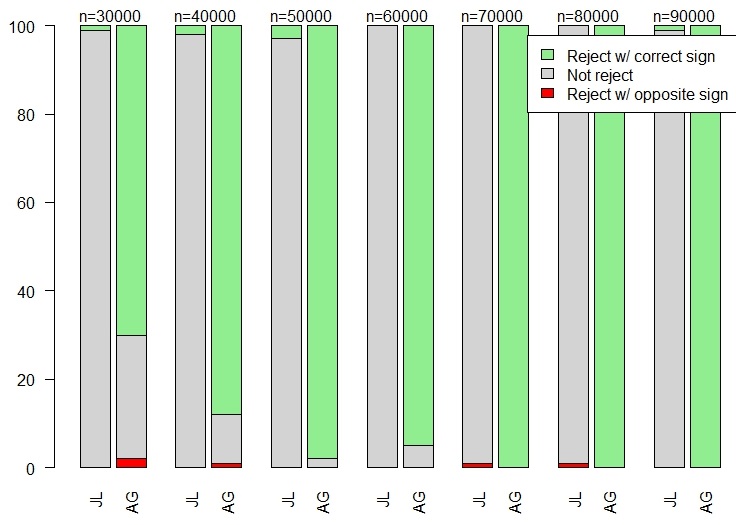}
			\caption{\label{apx_subfig:decision-real-coord1} real-life data, coordinate $\beta_1=14.07$}
		\end{subfigure}
	\end{center}
	\caption{\label{apx_fig:decision-reject} The correctness of our decision to reject the null-hypothesis based on the approximated $t$-value where the null hypothesis should be rejected}
\end{figure}

\begin{figure}[h]
	\begin{center}
		\begin{subfigure}[h]{0.5\textwidth}
			\includegraphics[scale=\SCALE]{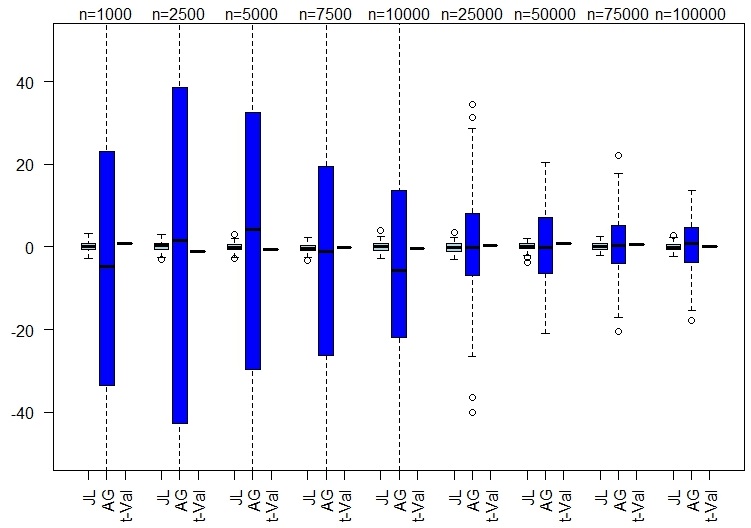}
			\caption{\label{apx_subfig:values-synth-coord3} Synthetic data, coordinate $\beta_3=0$}
		\end{subfigure}
		\begin{subfigure}[h]{0.5\textwidth}
			\includegraphics[scale=\SCALE]{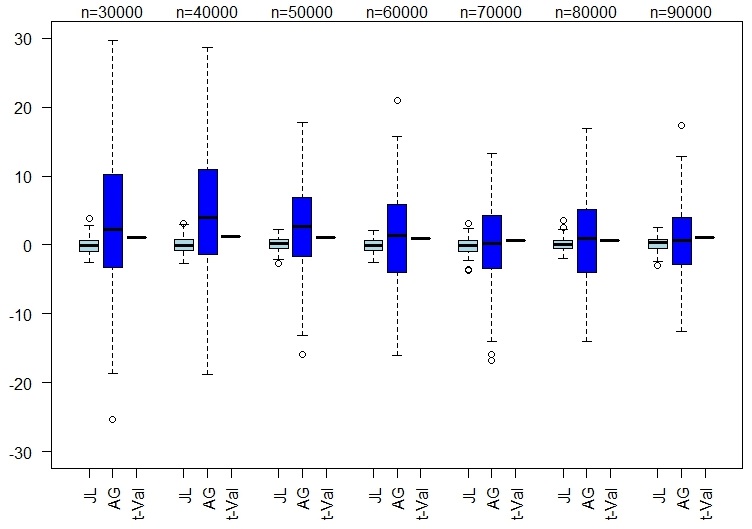}
			\caption{\label{apx_subfig:values-real-coord2} Real-life data, coordinate $\beta_2=0.57$}
		\end{subfigure}
	\end{center}
	\caption{\label{apx_fig:t-values-not-reject} The distribution of the $t$-value approximations from selected experiments on synthetic and real-life data when the null hypothesis is (essentially) true}
\end{figure}

\begin{figure}[h]
	\begin{center}
		\begin{subfigure}[h]{0.5\textwidth}
			\includegraphics[scale=\SCALE]{decision-synth-coord3.jpeg}
			\caption{\label{apx_subfig:decision-synth-coord3} Synthetic data, coordinate $\beta_3=0$}
		\end{subfigure}
		\begin{subfigure}[h]{0.5\textwidth}
			\includegraphics[scale=\SCALE]{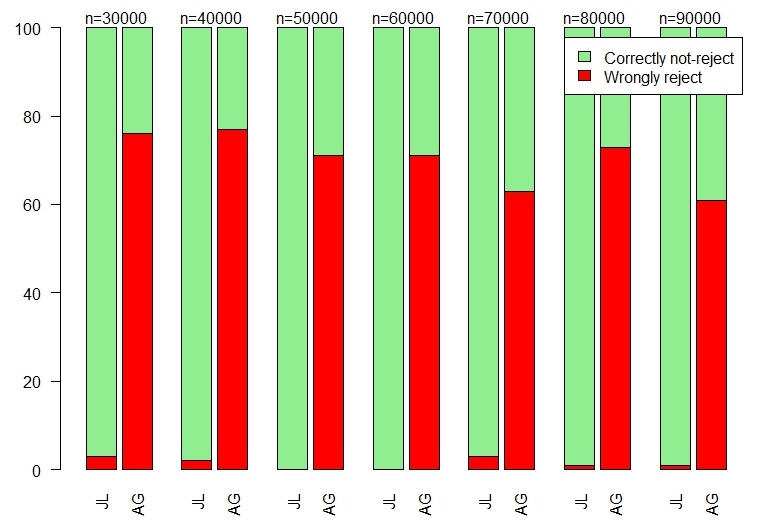}
			\caption{\label{apx_subfig:decision-real-coord2} Real-life data, coordinate $\beta_2=0.57$}
		\end{subfigure}
	\end{center}
	\caption{\label{apx_fig:decision-not-reject} The correctness of our decision to reject the null-hypothesis based on the approximated $t$-value when the null hypothesis is (essentially) true}
\end{figure}

\fi
\end{document}